\definecolor{ForestGreen}{rgb}{0.1333,0.5451,0.1333}
\definecolor{DarkRed}{rgb}{0.65,0,0}
\definecolor{Red}{rgb}{1,0,0}
\newtheorem{theorem}{Theorem}[section]
\newtheorem{hypothesis}[theorem]{Hypothesis}
\newtheorem{lemma}[theorem]{Lemma}
\newtheorem{observation}[theorem]{Observation}
\newtheorem{definition}[theorem]{Definition}
\newtheorem{claim}[theorem]{Claim}
\newcommand{\poly}{\mathrm{poly}\xspace}
\newcommand{\polylog}{\mathrm{polylog}\xspace}
\newcommand{\vol}{\mathrm{vol}}
\newcommand{\CC}{\mathcal{C}}
\newcommand{\MM}{\mathcal{M}}
\renewcommand{\SS}{\mathcal{S}}
\newcommand{\OO}{\mathcal{O}}
\newcommand{\OTIL}{\tilde{\mathcal{O}}}
\newcommand{\IP}{\textbf{Input: }}
\newcommand{\OP}{\textbf{Output: }}
\newcommand{\IC}{{\textsc{ImproveCut}}}
\newcommand{\SC}{{\textsc{Sparsest Cut}}}
\newcommand{\VSC}{{\textsc{Vertex Sparsest Cut}}}
\newcommand{\SSVE}{{\textsc{Small Set Vertex Expansion}}}
\newcommand{\SSE}{{\textsc{Small Set Expansion}}}
\newcommand{\DSH}{{\textsc{Densest $k$-SubHypergraph}}}
\newcommand{\DKS}{{\textsc{Densest $k$-Subgraph}}}
\newcommand{\MMGP}{{\textsc{Min-Max Graph Partitioning}}}
\renewcommand{\AA}{\mathcal{A}}
\newcommand{\STE}{{\textsc{Small Set Terminal Expansion}}}
\newcommand{\eps}{\epsilon}
\newcommand{\Wahlstrom}{Wahlstr\"{o}m}
\newcommand{\defparproblem}[4]{
  \vspace{1mm}
\begin{center}
\noindent\fbox{

  \begin{minipage}{0.95\textwidth}
  \begin{tabular*}{\textwidth}{@{\extracolsep{\fill}}lr} \textsc{#1}  & {\bf{Parameter:}} #3 \\ \end{tabular*}
  {\bf{Input:}} #2  \\
  {\bf{Question:}} #4
  \end{minipage}
 
  }
  \end{center}
 }
\title{Approximating Small Sparse Cuts}
\author{Aditya Anand\thanks{University of Michigan, Ann Arbor, USA} \and Euiwoong Lee\thanks{University of Michigan, Ann Arbor, USA. Supported in part by NSF grant CCF-2236669 and Google} \and Jason Li\thanks{Carnegie Mellon University, Pittsburgh, USA} \and Thatchaphol Saranurak\thanks{University of Michigan, Ann Arbor, USA. Supported by NSF grant CCF-2238138}}
\date{}
\begin{document}

\maketitle
\pagenumbering{gobble}
\begin{abstract}
We study polynomial-time approximation algorithms for (edge/vertex) \SC{} and \SSE{} in terms of $k$, the number of edges or vertices cut in the optimal solution. Our main results are $\OO(\polylog\, k)$-approximation algorithms for various versions in this setting. 

Our techniques involve an extension of the notion of sample sets (Feige and Mahdian STOC'06), originally developed for small balanced cuts, to sparse cuts in general. We then show how to combine this notion of sample sets with two algorithms, one based on an existing framework of LP rounding and another new algorithm based on the cut-matching game, to get such approximation algorithms. Our cut-matching game algorithm can be viewed as a local version of the cut-matching game by Khandekar, Khot, Orecchia and Vishnoi
and certifies an expansion of every vertex set of size $s$ in $\OO(\log s)$ rounds. These techniques may be of independent interest. 

As corollaries of our results, we also obtain an $\OO(\log opt)$-approximation for min-max graph partitioning, where $opt$ is the min-max value of the optimal cut, and improve the bound on the size of multicut mimicking networks computable in polynomial time. 
\end{abstract}

\newpage

\tableofcontents
\newpage


\pagenumbering{arabic}
\section{Introduction}
Given a graph $G$ and a cut $(X, \overline{X})$, the {\em sparsity} of the cut $(X,\overline{X})$ is given as $\frac{|\delta_G(X)|}{\min(|X|, |\overline{X}|)}$, where $\delta_G(X)$ denotes the set of edges with exactly one endpoint in $X$.
\SC{} is the problem whose goal is to find a cut with minimum sparsity.
It is one of the most fundamental and well-studied problems in multiple areas of algorithms. 
In approximation algorithms, this problem has been the focus of the beautiful connection between algorithms, optimization, and geometry via metric embeddings~\cite{lr99, linial1995geometry, arora2005euclidean, arv09, khot2015unique, naor2018vertical}.
Through expander decomposition~\cite{kvv04,sw19}, 
finding sparse cuts is also one of the most fundamental building blocks in designing algorithms for numerous graph problems in many areas, including graph sketching/sparsification~\cite{ackqwz16,js18,cgpssw20}, Laplacian solvers~\cite{st04,ckpprsa17}, maximum flows~\cite{klos14}, \textsc{Unique Games}~\cite{trevisan05} and dynamic algorithms~\cite{ns17,wul17,nsw17}.
Over the years, there has been a lot of progress in approximating sparse cuts~\cite{lr99, arv09, fhl05, krvactual} culminating in the seminal paper of Arora, Rao and Vazirani~\cite{arv09}, which gave an $\OO(\sqrt{\log n})$-approximation for sparsest cut.

A closely related problem, which can be viewed as a generalization of \SC, is \SSE{}, introduced by Raghavendra and Steurer~\cite{rs10}.
 In \SSE{}, we are additionally given a size parameter $s$, and the goal is to find a cut $(Y,\overline{Y})$ with minimum sparsity satisfying $\min(|Y|, |\overline{Y}|) \leq s$.\footnote{For \SSE, we are typically interested in (bi-criteria) approximation algorithms which only violate the size constraint by a constant factor.}
 Closely related to the Unique Games Conjecture, it is one of the central problems in the hardness of approximation literature with applications to graph partitioning, continuous optimization, and quantum computing~\cite{raghavendra2010approximations, arora2015subexponential, minmax, raghavendra2012reductions,  barak2012hypercontractivity, louis2014approximation,bhattiprolu2021framework, ghoshal2021approximation, lee2022characterization}. Using the hierarchical tree decomposition of Racke~\cite{racke08} one obtains an $\OO(\log n)$-approximation for \SSE.

Suppose the optimal cut $(S, 
\overline{S})$ for \SC/\SSE{} in a graph cuts $k$ edges, while separating $|S| = s$ vertices from the rest of the graph (so that it is $\phi = \frac{k}{s}$-sparse). Most approximation algorithms studied for both these problems have an approximation ratio that is a function of $n$, the number of vertices. However, it is natural to expect that for an instance 
 we may have $k \ll n$. Thus a natural question is: can we obtain a \emph{parametric approximation} -  an approximation algorithm whose approximation ratio is a function of $k$ alone? 

 Clearly $k \leq n^2$, and such an approximation can be much better if the sparse cut cuts only a few edges.

 The notion of parametric approximation is closely related to that of Fixed Parameter Tractability (FPT), where given a parameter $\ell$, we seek exact algorithms that run in time $f(\ell) n^{\OO(1)}$. Besides the fact that both these notions seek to improve performance when the parameter $\ell$ is small, it is easy to see that if a minimization problem $P$ is fixed parameter tractable with respect to a parameter $\ell$, then it also admits an $g(\ell)$ approximation in polynomial time for some function $g$. On the other hand, parametric approximations have often been used as the starting point in the design of polynomial kernels -  see for instance the kernels for Chordal Vertex Deletion~\cite{jp18,agrawal18} and $\mathcal{F}$-minor deletion~\cite{fomin2016hitting}.

 The result of Cygan et al.~\cite{cygan20} for \textsc{Minimum Bisection} shows that \SSE{} admits an FPT algorithm running in time $2^{\OO(k\log k)} n^{\OO(1)}$. Combining this with the known $\OO(\log n)$ approximation for \SSE{} and the $\OO(\sqrt{\log n})$ approximation for \SC{} gives an $\OO(k \log k)$ approximation for \SSE{} and an $\OO(\sqrt{k\log k})$ approximation for \SC{} in polynomial time.

 This leads to the natural question: can we obtain an $\OO(\polylog(k))$ approximation for these problems in polynomial time? This question was also raised by \Wahlstrom~\cite{wahlstrom20,wahlstrom22}, who showed that this would have implications in designing quasipolynomial kernels for various cut problems. The work of Feige et al.~\cite{fhl05} shows that one can obtain an $\OO(\sqrt{\log k})$ approximation when $s = \Omega(n)$, but no such guarantee is known for general $s$.

 In this work, we systematically study parametric approximation algorithms for \SC{} and \SSE{} and their vertex analogs \VSC{} and \SSVE{}, whose approximation ratio is a function of $k$, the number of edges/vertices cut. It turns out that these four variants show very different behaviour in terms of algorithms and hardness of approximation in terms of $k$. We also study fast algorithms for some of these problems and obtain algorithms that run in almost-linear time.

\subsection{Our results}

\paragraph{Edge cuts.}We begin by considering approximation algorithms for \SC{} and \SSE{}. For the sake of clarity we define both these problems formally.

\defparproblem{\SC}{Graph $G$}{$\phi$}{Find a set of vertices $S'$ with $|S'| \leq \frac{n}{2}$  such that $|\delta_G(S')| \leq \phi|S'|$, or report that no such  set exists.}
\defparproblem{\SSE}{Graph $G$}{$\phi$,$s$}{Find a set of vertices $S'$ with $|S'| \leq s$ and $|\delta_G(S')| \leq \phi|S'|$, or report that no such  set exists.}

\begin{definition}[$(\alpha,\beta)$-approximation for \SSE]
We say that an algorithm is an $(\alpha, \beta)$-approximation for $\SSE$ if given parameters $\phi,s$, it outputs a set $S' \subseteq V(G)$ with $|S'| \leq \beta s$ and $|\delta_G(S')| \leq \alpha \phi |S'|$, or reports that no $S \subseteq V(G)$ satisfies $|S| \leq s$ and $|\delta_G(S)| \leq \phi |S|$. An $(\OO(\alpha), \OO(1))$-approximation will be simply called an $\OO(\alpha)$-approximation. 
\end{definition}

We show the first $\OO(\log k)$-approximation for \SSE. This improves upon the $\OO(\log n)$-approximation using~\cite{racke08} and answers the question of~\cite{wahlstrom20}, directly improving the result (discussed in detail in~\Cref{sec:mmn}).

\begin{theorem}\label{thm:logksse}
Let $k$ be the smallest cut-size $|\delta_G(S)|$ among all sets $S$ satisfying $|S| \leq s$ and $|\delta_G(S)| \leq \phi|S|$. Then $\SSE$ admits an $(\OO(\log k), \OO(1))$-approximation in polynomial time.
\end{theorem}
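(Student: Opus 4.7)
The plan is to parameterize the algorithm by the unknown quantities $k$ and $s$ (guessed up to a constant factor by enumerating powers of two) and then combine two ingredients: a \emph{sample set}, whose size depends on $k$ rather than on $n$, with an LP rounding scheme whose approximation ratio is governed by the sample set size. I would structure the argument in three steps.

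The first step is the construction of a suitable sample set $R \subseteq V(G)$, generalizing the Feige--Mahdian sampling from small balanced cuts to general small sparse cuts. Since the optimum $S$ has $|S| \leq s$ and $|\delta_G(S)| = k$, I would sample each vertex independently at an appropriate rate (a function of the guessed $k$ and $s$) to obtain a set $R$ of size $\poly(k)$ such that, with inverse-polynomial probability, (i) $R$ contains a useful number of vertices from $S$ and (ii) the ``anchored'' sub-instance determined by $R$ still admits a cut of sparsity close to $\phi$. Boosting success probability by repeating the sampling polynomially many times is cheap, so we may assume the good event holds.

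The second step is to use $R$ as terminals in a metric LP relaxation of \SSE{}. Concretely, one writes the standard LP whose variables encode pairwise distances, but where the only demand pairs driving the objective lie in $R \times R$ together with the cardinality constraint tied to $S$. Because $|R| = \poly(k)$, a region-growing / multicommodity-flow rounding, in the spirit of Leighton--Rao and the Feige--Hajiaghayi--Lee analysis, yields an integral cut with sparsity at most $\OO(\log |R|) = \OO(\log k)$ times the LP optimum. Combined with the guarantee that the LP value is at most $\phi$, this produces a set $S'$ with $|S'| \leq \OO(s)$ and $|\delta_G(S')| \leq \OO(\log k)\, \phi\, |S'|$, matching the required $(\OO(\log k), \OO(1))$-approximation.

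The main obstacle I anticipate lies in the first step: establishing that a sample set of size $\poly(k)$ can be simultaneously \emph{rich enough}, in that the LP restricted to demands in $R$ still certifies a cut of sparsity close to $\phi$, and \emph{concentrated enough}, in that its interaction with the unknown $S$ is bounded in a way that drives the rounding. This decoupling of sample set quality from $n$ is precisely where the paper's extension of the Feige--Mahdian machinery beyond balanced cuts must do its work, and it is the delicate glue between sampling and LP rounding that I expect to be the hardest part of the proof. A secondary technical issue is ensuring that the guessing of $k$ and $s$ and the repetition to boost success probability together remain within polynomial time, which should follow from standard enumeration over $\OO(\log n)$ scales.
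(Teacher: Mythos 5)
Your high-level architecture (a sample set whose relevant parameter scales with $k$, fed into a terminal LP rounding) matches the paper's, but there is a genuine quantitative gap in how the two pieces fit together, and as written your Step 1 fails. A sample set of size $\poly(k)$ cannot work: if you sample at rate $\poly(k)/n$, then the optimal set $S$, which may have $|S| = s \ll n/\poly(k)$, contains \emph{no} sampled vertices with overwhelming probability, so the anchored instance says nothing about $S$ and the terminal sparsity of $S$ is infinite. The paper's sample set necessarily has size $\Theta(n\phi'/\eps^2) = \Theta(nk\log k/s)$ --- a quantity that depends on $n$ and $s$, not on $k$ alone. What \emph{is} $\Theta(k\log k)$ is not $|T|$ but $|S \cap T|$, the number of terminals landing inside the optimum; that is the quantity the whole argument is built around.

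This breaks your Step 2 as well: a Leighton--Rao-style region growing over all demand pairs in $R \times R$ gives ratio $\OO(\log |R|)$, which with the correct sample-set size is $\OO(\log n)$, not $\OO(\log k)$. The paper instead needs an $(\OO(\log t), \OO(1))$-approximation for the \emph{terminal small-set} problem \STE{}, where $t$ is the bound on the number of terminals in the sought set (here $t = \Theta(k\log k)$); this comes from the spreading-constraint LP of Bansal et al.\ and Hasan--Chwa, whose rounding ratio depends on the size bound $t$ rather than on the total number of terminals. Two further points you should be aware of: (i) the sample-set definition must be two-sided --- it must also certify that any \emph{terminal-sparse} set the rounding returns is genuinely sparse and of size $\OO(s)$ in $G$, which is why the paper's Definition~\ref{def:sample_set} has both conditions (a) and (b); and (ii) plain independent random sampling costs an extra $\log n$ factor in the sample-set size, which would degrade the final ratio to $\OO(\log k + \log\log n)$; the clean $\OO(\log k)$ bound uses the deterministic Steiner-decomposition construction of Lemma~\ref{lemma:sample_set}.
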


Note that in particular this implies an $\OO(\log k)$-approximation for \SC. Our next result gives an almost-linear algorithm for \SC{} using a new cut-matching game approach. We show an $\OO(\log^2 k)$-approximation algorithm for \SC{} which runs in time $m^{1 + o(1)}$ using a new cut-matching game approach to construct \emph{small set expanders}, that we discuss in~\Cref{sec:cutmatching}. These techniques may be of independent interest. 

\begin{theorem}\label{thm:cutmatchingedgelogk}
Let $k$ be the smallest cut-size $|\delta_G(S)|$ among all sets $S$ satisfying $|\delta_G(S)| \leq \phi|S|$. Then there is an $\OO(\log^2 k)$-approximation algorithm for \SC{} which runs in time $m^{1 + o(1)}$, where $m$ denotes the number of edges in the graph.
\end{theorem}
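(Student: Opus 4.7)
The plan is to adapt the cut-matching game of Khandekar, Khot, Orecchia and Vishnoi to a \emph{local} version whose number of rounds scales with $\log k$ rather than $\log n$, yielding the claimed $\OO(\log^2 k)$ approximation. Recall the basic framework: we maintain an auxiliary graph $H$ on $V(G)$, initially empty, and in each round a \emph{cut player} proposes a bipartition $(A, \overline{A})$ of $V(G)$ that is sparse in $H$, while a \emph{matching player} either finds a matching $M \subseteq A \times \overline{A}$ routable in $G$ with $\OO(1)$ congestion (which is appended to $H$) or returns a sparse cut of $G$ and halts. The matching player is implemented as one approximate $s$--$t$ max-flow call, so by the recent almost-linear-time max-flow algorithms each round runs in $m^{1+o(1)}$ time.

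The heart of the argument is to design the cut player so that the game terminates within $T = \OO(\log k)$ rounds. The key structural observation is that if $G$ has a $\phi$-sparse cut $(S^*, \overline{S^*})$ with $|\delta_G(S^*)| = k$, then every matching $M_i$ routed in $G$ contributes at most $\OO(k)$ edges across $(S^*, \overline{S^*})$ to $H$; hence after $T$ rounds $H$ has only $\OO(Tk)$ edges crossing this cut. My intended cut player is driven by a potential that tracks only distributions / configurations supported on sets of size $\OO(k)$. Such a potential has log-capacity $\OO(\log k)$, and using standard matrix-multiplicative-weights / random-walk arguments localized to small supports, one can show the potential decreases by a constant factor per round. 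After $T = \OO(\log k)$ rounds either the game has halted with a min-cut from the matching player (which directly provides a cut of $G$ of sparsity $\OO(\phi)$), or $H$ is certifiably a small-set expander with expansion $\Omega(1/\log k)$ on sets of size $\OO(k)$.

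In the latter case, $H$ embeds into $G$ with congestion $\OO(\log k)$ (one matching per round, each routed with $\OO(1)$ congestion), so the small-set expansion of $H$ transfers to $G$ with an additional factor $\OO(\log k)$ loss, certifying expansion $\Omega(\phi/\log^2 k)$ on the relevant set sizes. This rules out any $\phi$-sparse cut of cut-size $k$, so the algorithm correctly reports ``no such cut'' in this branch, giving an $\OO(\log^2 k)$-approximation overall. With $\OO(\log k)$ max-flow calls, the total runtime is $m^{1 + o(1)}$.

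The main obstacle I expect is designing the cut player so that it simultaneously (i) outputs cuts whose smaller side has size $\OO(k)$, (ii) makes multiplicative progress in a potential bounded by $\OO(\log k)$, and (iii) admits an efficient near-linear-time implementation. The standard KKOV cut player operates globally on $V(G)$ via a spectral / random-walk potential of log-capacity $\log n$ and does not respect the size constraint; the technical crux is to design a truly local analogue --- for instance a cut player based on local random walks in the spirit of Andersen--Chung--Lang or Spielman--Teng nibble --- whose analysis closes in $\OO(\log k)$ rounds.
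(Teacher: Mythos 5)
Your second ingredient --- a cut-matching game whose cut player certifies small-set expansion in $\OO(\log k)$ rounds --- is indeed half of the paper's proof (the paper's cut player uses a balanced low-conductance cut finder plus a max-flow \IC{} step, and a local version of the KKOV entropy potential restricted to each small set $S$). But the proposal has a genuine gap in what that certificate buys you. You propose to run the game on $V(G)$ and, in the ``no cut found'' branch, certify that $H$ (hence $G$, after the embedding loss) expands on \emph{sets of size $\OO(k)$}. The cut you must rule out is a $\phi$-sparse set $S^*$ with $|\delta_G(S^*)| = k$, whose size is $s = k/\phi$ --- arbitrarily larger than $k$ when $\phi$ is small. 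Expansion on sets of size $\OO(k)$ says nothing about such an $S^*$, so the algorithm could report ``no sparse cut'' while one exists; the argument is only sound when $\phi = \Omega(1)$. Relatedly, your claim that each matching adds $\OO(k)$ edges across $(S^*,\overline{S^*})$ assumes congestion $\OO(1)$, yet your final expansion bound $\Omega(\phi/\log^2 k)$ requires routing with congestion $1/\phi$ per round; with that congestion a matching can add up to $k/\phi = s$ edges across the cut, and the $\OO(\log k)$ round bound via an ``$\OO(k)$-support potential'' no longer has the right scale.

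The paper closes exactly this gap with its other main tool, \emph{sample sets for sparse cuts}: it computes a terminal set $T$ of size $\Theta(n\phi\log^2 k)$ such that every $\phi'$-sparse set $W$ satisfies $|W\cap T| \approx |W|\cdot|T|/n$. The optimal set $S^*$ then contains only $s' = \Theta(k\log^2 k)$ terminals, the cut-matching game is run on the vertex set $T$ with \emph{terminal} sparsity, and $\OO(\log s') = \OO(\log k)$ rounds suffice; the sample-set guarantee converts the resulting $\OO(\phi\log^2 k)$-terminal-sparse cut back into an $\OO(\phi\log^2 k)$-sparse cut of $G$. Without the sample-set reduction (or some substitute that rescales ``small'' from $\OO(k)$ vertices to $\OO(k\,\polylog k)$ proxies for sets of true size $k/\phi$), the proposal's completeness direction fails. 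The design of the local cut player, which you flag as the ``technical crux'' and leave open, is also substantive but is secondary to this structural issue.
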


We leave as an open problem the question of if we can obtain an $\OO(\polylog(k))$-approximation for \SSE{} in almost-linear time: we remark that even in terms of $n$, to the best of our knowledge, the best known approximation ratio for \SSE{} in almost-linear time is $\OO(\log^4 n)$ by using dynamic programming on tree cut-sparsifiers obtained using~\cite{rst14}.

\paragraph{Vertex cuts.} 
We consider approximation algorithms for \VSC{} and \SSVE{}. Again we define these problems formally. Given a graph $G$, we denote a vertex cut by a triplet $(L,C,R)$ where they form a partition of $V(G)$, and after removing $C$ there are no edges between $L$ and $R$. The vertex sparsity of $(L, C, R)$ is defined to be $\frac{|C|}{|C| + \min(|L|, |R|)}$. 
\defparproblem{\VSC}{Graph $G$}{$\phi$}{Find a vertex cut $(L,C,R)$  with $|R| \geq |L|$ and $|C| \leq \phi|L \cup C|$, or report that no such cut exists.}
\defparproblem{\SSVE}{Graph $G$}{$\phi$,$s$}{Find a vertex cut $(L,C,R)$  with $|R| \geq |L|$, $|C| \leq \phi|L \cup C|$ and $|L| \leq s$ or report that no such cut exists.}

Like in the case of edge cuts, we investigate approximation in terms of $k$, where $k$ is the number of cut vertices in the optimal cut. We begin by observing that \SSVE{} is at least as hard to approximate as a well-known problem of \DKS: 
Given a graph and an integer $k \in \mathbb{N}$, \DKS{} asks to find a subset of $k$ vertices that induce the most number of edges.
Relationships between \DKS{} and problems related to \SSVE{} (e.g., \textsc{Minimum $k$-Union}, \textsc{Bipartite Small Set Vertex Expansion}) were discussed in previous works~\cite{hajiaghayi2006prize,
chuzhoy2015approximation, 
chlamtavc2017minimizing}.

\begin{theorem}\label{thm:DKS}
Suppose \DKS{} is $\alpha$ hard to approximate under some assumption $A$. Then there is no bi-criteria $(\OO(\sqrt{\alpha}), \OO(1))$-approximation for \SSVE{} under $A$.
\end{theorem}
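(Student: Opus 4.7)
The plan is to reduce \DKS{} to \SSVE{} via the standard incidence-graph construction, showing that an $(\OO(\sqrt{\alpha}), \OO(1))$-approximation algorithm for \SSVE{} would produce an $\OO(\alpha)$-approximation for \DKS{} and contradict the hardness assumption $A$. Given a \DKS{} instance $(H,k)$ with optimum value $m^*$ (guessed by enumeration), I would build $G$ with vertex set $V(H) \sqcup E(H)$ and, for each $e = uv \in E(H)$, a middle vertex $v_e$ adjacent to $u$ and $v$. To preclude vertices of $V(H)$ from appearing in $L'$, I would attach a clique of size $N = \Theta(m^*)$ to each $u \in V(H)$: placing $u$ in $L'$ would force all $N$ clique neighbors into $L' \cup C'$, violating the bi-criteria budget. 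Then run the hypothetical \SSVE{} algorithm with $s = m^*$ and $\phi = k/(k+m^*)$.

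For the forward direction, the DKS optimum $S^*$ induces the canonical cut $(L^*, C^*, R^*) = (\{v_e : e \in E_H(S^*)\}, S^*, V(G) \setminus (L^* \cup C^*))$ of vertex sparsity exactly $k/(k+m^*) = \phi$ and with $|L^*| = m^* \leq s$, so the assumed algorithm returns $(L', C', R')$ satisfying $|L'| \leq \OO(m^*)$ and sparsity $\leq \OO(\sqrt{\alpha})\phi$. The clique gadgets force $L' \subseteq E(H)$, so every $v_e \in L'$ has both endpoints in $V(H) \cap C'$; letting $F = L'$, $S = V(H) \cap C'$, and $x = |V(F)|$ (the set of endpoints of edges in $F$), we have $e_H(S) \geq |F|$ along with the structural bounds $|S| \leq |C'| \leq \OO(\sqrt{\alpha}k)$ and $|F|/x \geq \Omega(m^*/(\sqrt{\alpha}k))$, both derived directly from the sparsity inequality.

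To convert this into a \DKS{} solution, I would output a uniformly random $k$-subset of $V(F)$ when $x \geq k$, and $V(F)$ padded arbitrarily to size $k$ when $x < k$. In the first case each edge of $F$ is retained with probability $\Theta(k^2/x^2)$, so the expected number of induced edges is $\Omega(k^2|F|/x^2) \geq \Omega(km^*/(\sqrt{\alpha}x)) \geq \Omega(m^*/\alpha)$ by the upper bound on $x$. In the second case, the combinatorial bound $|F| \leq \binom{x}{2}$ combined with the density lower bound forces $x \geq \Omega(m^*/(\sqrt{\alpha}k))$ and a symmetric calculation again produces $\Omega(m^*/\alpha)$ induced edges; standard derandomization makes this deterministic. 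Either way, the returned $k$-set gives an $\OO(\alpha)$-approximation to \DKS.

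The main obstacle is the uniform extraction analysis: one must verify the $\Omega(m^*/\alpha)$ bound on induced edges over the admissible range $x \in [\Omega(m^*/(\sqrt{\alpha}k)), \OO(\sqrt{\alpha}k)]$, interleaving the density bound from the sparsity constraint with the combinatorial bound $|F| \leq \binom{x}{2}$. Additional care is needed to fix the gadget size $N$ so that spurious clique cuts $L = K^u_N$ violate $|L| \leq s$, and to handle the boundary regime $\sqrt{\alpha}\phi \geq 1/2$ (which only arises when $m^* = \OO(\sqrt{\alpha}k)$, a regime in which \DKS{} is trivially approximable).
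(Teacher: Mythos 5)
Your reduction is the same in outline as the paper's (incidence graph, a clique gadget to force $L'$ onto the edge-vertices, then random subsampling down to $k$ vertices, losing the quadratic factor that turns $\sqrt{\alpha}$ into $\alpha$), but there is a genuine gap in the extraction step. The bi-criteria guarantee only upper-bounds $|L'|$; it gives no lower bound, so the \SSVE{} algorithm may legitimately return a tiny, very dense cluster. Concretely, take $\alpha = \OO(1)$ and $m^* = k^{3/2}$: a cut with $|F| = |L'| = \Theta(k)$ edge-vertices whose endpoint set has $x = \Theta(\sqrt{k}) < k$ vertices has vertex sparsity $\approx \sqrt{k}/k = k^{-1/2} \approx \phi$ and size $|L'| \ll \beta m^*$, so it is a valid output. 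Your second case then yields only $|F| = \Theta(k)$ induced edges against $m^*/\alpha = \Theta(k^{3/2})$. The "symmetric calculation" in that case actually gives $|F| \geq \Omega\bigl(m^{*2}/(\alpha k^2)\bigr)$, and since $m^* \leq \binom{k}{2}$ this is always \emph{at most} $m^*/(2\alpha)$ rather than at least $\Omega(m^*/\alpha)$ --- the inequality points the wrong way. The paper closes exactly this hole by iterating: it re-runs the \SSVE{} algorithm and accumulates cuts until the total number of separated left-vertices is at least $m^*$; only then does it subsample the accumulated cut set (of size $\OO(\sqrt{\alpha}k)$) down to $k$ vertices, which retains $\Omega(m^*/\alpha)$ edges. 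Your single-shot extraction needs this accumulation step (or some substitute for it) to go through.

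A secondary, more cosmetic difference: the paper places a single clique on the entire vertex side $R$, which immediately forces any sparse cut to remove only $R$-vertices and separate only edge-vertices. Your per-vertex private cliques of size $N = \Theta(m^*)$ admit the spurious cut $(K^u_N, \{u\}, \cdot)$ of sparsity $1/(N+1) \ll \phi$ and size $N = \Theta(m^*)$, which is within the $\OO(1)$ size slack unless $N$ is chosen larger than the (unknown a priori) bi-criteria constant times $m^*$; you flag this but it makes the gadget analysis strictly messier than the paper's for no gain.
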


Since \DKS{} is believed to be hard to approximate within a factor $n^{1/4-\eps}$ for any constant $\eps > 0$~\cite{bhaskara2010detecting, jones2023sum}, one already has a strong contrast between \SSE{} and \SSVE{} in terms of $n$. 

Surprisingly, a much stronger hardness can be proved in terms of $k$: we show that there is no $2^{o(k)}$-approximation even when we allow a parameterized running time of $f(k)\poly(n)$, assuming the Strongish Planted Clique Hypothesis (SPCH)~\cite{mrs20}. This is in contrast to the edge version \SSE, where we obtained an $\OO(\log k)$-approximation.

\begin{theorem}\label{thm:DSH_SSVE}
Assuming the SPCH, for $\alpha(k) = 2^{o(k)}$ and any functions $f(k), \beta(k)$, there is no $f(k) \cdot \poly(n)$-time $(\alpha(k), \beta(k))$-approximation algorithm for \SSVE{}. 
\end{theorem}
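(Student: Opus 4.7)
The plan is to reduce from a strong SPCH-based FPT-inapproximability version of \DSH{}. Specifically, I would invoke a hardness statement of the form: for a suitable choice of hypergraph arity $r = r(k)$, no $f(k)\cdot\poly(n)$-time algorithm can distinguish an $r$-uniform instance that contains a $k$-subhypergraph with at least $D$ hyperedges from one in which every subset of size at most $\beta(k)\cdot k$ spans at most $D \cdot 2^{-\Omega(k)}$ hyperedges. This gap, which is exponential in the parameter and robust to a polynomial size blowup, is the raw material that the reduction will transport to \SSVE{}.

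\textbf{Reduction.} Given an $r$-uniform hypergraph $H = (V,E)$, I build a bipartite incidence graph on vertices $\{x_v : v \in V\} \cup \{y_e : e \in E\}$, with an edge $x_v y_e$ iff $v \in e$. Then I attach an independent ``trap'' set $T$ of $N$ fresh vertices, with $N$ chosen far larger than $n+m$, and make every $y_e$ adjacent to every vertex of $T$. The \SSVE{} size parameter is $s := k+D$. The role of $T$ is to force $T \subseteq R$ in any cut with a small enough $L$ side; given this, any $y_e \in L$ must have all its $V$-side neighbors inside $L \cup C$, otherwise a $y_e x_v$ edge escapes to $R$ uncovered. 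This sets up a tight correspondence between well-behaved vertex cuts of $G$ and vertex subsets of $H$.

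\textbf{Completeness and soundness.} For completeness, a $k$-subhypergraph $S \subseteq V$ with at least $D$ internal hyperedges yields the cut
\[
L = \{x_v : v \in S\} \cup \{y_e : e \subseteq S\},\quad C = \{y_e : e \cap S \neq \emptyset,\, e \not\subseteq S\},\quad R = V(G)\setminus(L \cup C),
\]
so $|L| \geq k + D$ while $|C|$ equals the small hyperedge boundary of $S$, giving small vertex sparsity $\phi^*$. Writing $k^* := |C^*|$ for the cut-vertex count of the optimum, for soundness I take any $(L',C',R')$ with $|L'| \leq \beta(k^*)\cdot s$ and $|C'| \leq \alpha(k^*)\phi^*(|L'|+|C'|)$ where $\alpha(k^*) = 2^{o(k^*)}$. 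After arguing $T \subseteq R'$, the set $S' := \{v : x_v \in L' \cup C'\}$ has $|S'| = O(\beta(k^*)\cdot k)$ and contains every hyperedge $e$ with $y_e \in L'$. A short count shows $|L' \cap \{y_e\}_e| \geq \Omega(D \cdot 2^{-o(k^*)})$, so $S'$ witnesses a dense subhypergraph that defeats the \DSH{} hardness, provided $k^* = \Theta(k)$.

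\textbf{Main obstacle.} The principal difficulty is the quantitative calibration. First, I need $k^* = |C^*|$ to be $\Theta(k)$, so the exponential-in-$k$ \DSH{} gap translates into an exponential-in-$k^*$ \SSVE{} gap; this is tuned by choosing the hypergraph arity $r$ and the planted-clique parameters so that the hyperedge boundary of a planted $k$-clique is linear in $k$. Second, the bi-criteria slack $\beta(k^*)$ is allowed to be an arbitrary function, so the \DSH{} hardness must remain exponential even after enlarging the candidate subset by an arbitrary function of $k$; this is exactly the ``robust'' form of SPCH-based DSH hardness, where the $2^{\Omega(k)}$ gap dwarfs any polynomial-in-$k$ size stretch. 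Extracting and citing (or, if necessary, lightly strengthening) the precisely right form of the SPCH $\Rightarrow$ \DSH{} hardness lemma is the main technical subtlety; everything else is the vertex-cut bookkeeping sketched above.
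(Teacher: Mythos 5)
Your high-level route matches the paper's: prove a robust, bi-criteria, FPT-inapproximability for \DSH{} from the SPCH, then transfer it to \SSVE{} via an incidence-graph gadget. However, your gadget and your choice of which side of the vertex cut plays which role both have genuine problems.

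First, the trap set as described is self-defeating. You make every hyperedge-vertex $y_e$ adjacent to every vertex of the huge independent set $T$, and then argue $T \subseteq R$. But then \emph{no} $y_e$ can ever lie in $L$: the edge $y_e t$ for $t \in T$ would go directly from $L$ to $R$. This kills your completeness (your intended $L$ contains $\{y_e : e \subseteq S\}$) and your soundness extraction (which needs $|L' \cap \{y_e\}_e|$ to be large). The paper's gadget does the opposite: it places a clique on the hypergraph-vertex side (the $x_v$'s), so that the separator $C$ is forced to consist of hypergraph vertices and the separated side $L$ consists of hyperedge vertices. If you want a trap set, it should be attached to the $x_v$'s, not the $y_e$'s.

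Second, and independently, your completeness puts $C = \{y_e : e \cap S \neq \emptyset,\ e \not\subseteq S\}$, i.e.\ the \emph{hyperedge boundary} of $S$. The theorem is a statement about the parameter $k = |C|$, so you need $|C| = \Theta(k)$; you acknowledge this and propose to "tune" the arity and planted-clique parameters so the hyperedge boundary is linear in $k$. That is not achievable: in the SPCH-derived instances the hyperedges are determined by a random construction and each planted vertex lies in vastly more than $O(1)$ boundary hyperedges. The paper sidesteps this entirely by taking $C$ to be the $k$ chosen hypergraph vertices themselves (so $|C| = k$ exactly) and $L$ to be the induced hyperedges; the sparsity is then $k/(k+s)$ with $s = 2^k$ induced hyperedges, and the approximation loss in the transfer is $1+o(1)$ because $\alpha(k) \cdot k = o(s)$. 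Finally, you are right that the "robust to a $\beta(k)$ size blowup" form of the \DSH{} hardness is the crux, but it cannot simply be cited: the paper has to reprove the Manurangsi--Rubinstein--Schramm argument with a density/degree-peeling step and a union bound over all sets of size up to $\beta(k) k$, which is where most of the work lies.
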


The $2^k$-approximation is tight; 
using the treewidth reduction technique of Marx et al.~\cite{msr13}, we give a matching upper-bound that gives a $2^k$-approximation in time $2^{2^{\OO(k^2)}} \poly(n)$, under a mild assumption which we call maximality. A vertex cut $(L,C,R)$ is \emph{maximal} if for all partitions of the connected components of $G \setminus C$ into $L',R'$ with $|L'| \leq |R'|$, we have $|L| \geq |L'|$. That is, $L$ and $R$ are chosen to obtain the sparsest cut possible after removing $C$.

\begin{theorem}\label{thm:fpt}
Suppose that there is a maximal vertex cut $(L,C,R)$ with $|C| = k$ that separates $s = |L| \leq |R|$ vertices. Then we can find a cut $(L', C', R')$, with $|C'| \leq k$ and $|L| \geq |L'| \geq \frac{|L|}{2^k}$ in time $2^{2^{\OO(k^2)}} n^{\OO(1)}$. In particular, there is a $(2^k, 1)$-approximation algorithm for \SSVE{} in time $2^{2^{\OO(k^2)}} n^{\OO(1)}$. 
\end{theorem}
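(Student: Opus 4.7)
The plan is to apply the treewidth reduction of Marx et al.~\cite{msr13} to reduce to a bounded-treewidth instance, then run a dynamic program combined with a pigeonhole argument over neighborhood types of connected components of $G \setminus C$ to account for the $2^k$ factor in $|L'|$.

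First, I would enumerate a guess $s' \in \{1,\dots,n\}$ for $|L|$ together with a pair $(u,w) \in V(G) \times V(G)$, intending that $u \in L$ and $w \in R$ in the optimal cut; this yields $\OO(n^3)$ guesses. For each guess, I invoke the treewidth reduction theorem with terminal set $\{u,w\}$ and parameter $k$, obtaining in $2^{\OO(k^2)} n^{\OO(1)}$ time a set $C^* \supseteq \{u,w\}$ containing every vertex of every minimal $(u,w)$-separator of size at most $k$, together with a tree decomposition of the corresponding torso of width $g(k) = 2^{\OO(k^2)}$ in which all such separators are preserved. I then run a dynamic program on this tree decomposition whose state at each bag is a labeling of bag vertices by $\{L',C',R'\}$, tracking the accumulated $|L'|$ measured in the original graph $G$; here we use the fact that each connected component of $G \setminus C^*$ attaches to $C^*$ via a fixed neighborhood and hence must lie entirely on one side of any separator $C' \subseteq C^*$, so its contribution to $|L'|$ can be precomputed and propagated additively through the bag states. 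The DP finds, for every $(u,w)$-separator $C'$ of size $\le k$, the maximum $|L'|$ subject to $u \in L'$, $w \in R'$, and $|L'| \le s'$. Since there are $3^{g(k)} \cdot n^{\OO(1)} = 2^{2^{\OO(k^2)}} \cdot n^{\OO(1)}$ states per bag, the entire procedure fits within the claimed time.

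For correctness, fix the optimal maximal cut $(L,C,R)$ and consider the correct guess $s' = |L|$, $u \in L$, $w \in R$. The connected components of $G \setminus C$ lying in $L$ each have a neighborhood $N(D) \subseteq C$, and since $|C| = k$ there are at most $2^k$ distinct such neighborhoods. By pigeonhole some neighborhood class has components of total size $\ge |L|/2^k$; let $D_u$ be the component of $L$ containing $u$, take $L^\star$ as the union of $D_u$ with this heavy class, and set $R^\star = V \setminus (C \cup L^\star)$. Then $u \in L^\star \subseteq L$, so $|L|/2^k \le |L^\star| \le |L|$, and $R \subseteq R^\star$ gives $|L^\star| \le |R^\star|$. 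Thus $(L^\star, C, R^\star)$ is a valid cut with $|C| \le k$ and $|L^\star| \in [|L|/2^k, |L|]$; since $C$ contains a minimal $(u,w)$-separator of size $\le k$ preserved by the treewidth reduction, the DP finds this cut (or one with at least as large $|L'|$ still bounded by $s'$), giving the theorem's conclusion.

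The main obstacle is the careful implementation of the DP so that $|L'|$ is accumulated with respect to the original graph $G$ rather than the torso: for each connected component of $G \setminus C^*$ we must precompute its size and attachment in $C^*$ and splice this information into the bag transitions. A secondary subtlety is that the pigeonhole argument assumed $C$ itself is captured by the reduction; if $C$ is not a minimal $(u,w)$-separator, one passes to a minimal subseparator $C'' \subseteq C$ of size $\le k$ and reruns the pigeonhole on components of $G \setminus C''$, using that each component of $G \setminus C''$ decomposes into unions of components of $G \setminus C$ so that the bound $|L^\star| \ge |L|/2^k$ persists, with maximality of $(L,C,R)$ ensuring we can always restrict the output to $|L'| \le |L|$.
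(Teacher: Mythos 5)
Your overall architecture --- treewidth reduction on a guessed terminal pair, pigeonholing the components of $G[L]$ into at most $2^k$ neighborhood classes, and a dynamic program (the paper black-boxes this step via Theorem 4 of~\cite{bamo15}) on the bounded-treewidth torso with component sizes as weights --- is exactly the paper's. The gap is in your last paragraph, where you handle the fact that $C$ need not be a minimal $(u,w)$-separator. Passing to a minimal subseparator $C'' \subseteq C$ and ``rerunning the pigeonhole on components of $G \setminus C''$'' does not recover a usable witness: a component of $G \setminus C''$ is a union of components of $G \setminus C$ \emph{together with vertices of $C \setminus C''$}, and through those vertices it can absorb components of $R$. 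Hence the heavy class you would extract from $G \setminus C''$ may contain many $R$-vertices and have size far exceeding $|L|$, so it violates the constraint $|L'| \le s'$, and you have not exhibited any witness of size in $[|L|/2^k,\,|L|]$ that the DP is guaranteed to find. Appealing to maximality ``to restrict the output to $|L'| \le |L|$'' is not an argument here: maximality only compares repartitions of the components of $G \setminus C$ for the \emph{fixed} separator $C$, not cuts induced by other separators such as $C''$.

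The paper closes this gap with two ingredients you are missing. First, it places the guessed source vertex inside a component $A$ of the heavy class and uses the fact that all components of that class have the \emph{same} neighborhood in $C$: this implies that the inclusion-wise minimal separator $T \subseteq C$ separating $A$ from the target automatically separates \emph{every} component of the heavy class from $R$, so the side cut off by the preserved minimal separator still has size at least $|L|/2^k$. Second, for the upper bound one needs the $T$-side to avoid $R$ entirely, which requires $R$ to be connected; this is where maximality is actually used (if $R$ were disconnected, moving all but its largest component into $L$ would contradict maximality), and it is only available after a case split --- when every component of $G \setminus C$ has size at most $3n/4$, the paper instead falls back on the balanced-separator algorithm of~\cite{fm06}. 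Without the uniform-neighborhood argument and the connectivity of $R$, your step ``the DP finds this cut (or one with at least as large $|L'|$)'' is unsupported.
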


We now turn our attention to \VSC, where we do not have the additional small set requirement.
First, we note that~\Cref{thm:fpt} also gives a $g(k)$-approximation for \VSC{} in (non-parameterized) polynomial time for some function $g$. To see this, consider the following algorithm. Whenever $k \geq \frac{\sqrt{\log \log n}}{C}$, for some large enough constant $C$, we use a standard poly-time $\OO(\sqrt{\log n})$-approximation (see for example~\cite{fhl05}), and otherwise we use the above algorithm. It follows that this is a polynomial time algorithm with approximation ratio $g(k) = 2^{{\OO(k^2)}}$.

The natural question is then if one can beat $2^{\poly(k)}$-approximation for \VSC. We (almost) show that this is indeed the case: we obtain an $\OO(\log k + \log \log n\phi)$-approximation in polynomial time, if the optimal cut has sparsity $\phi$ and cuts $k$ vertices.
The additional $\log \log n\phi$ term as compared to the edge version is due to the weaker guarantee on our technique of sample sets in the vertex version (see~\Cref{sec:sample_sets}). We show that at least by using sample sets, we cannot hope to improve this result (\Cref{subsec:sample_set_lower_bound}).

\begin{theorem}\label{thm:logkvertex}
Let $k$ be the minimum possible value of $|C|$ among all vertex cuts $(L,C,R)$ that are $\phi$-sparse. Then $\VSC$ admits an $\OO(\log k + \log \log n\phi)$-approximation in polynomial time.
\end{theorem}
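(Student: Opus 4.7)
The plan is to mirror the algorithmic template behind \Cref{thm:logksse}, using vertex analogs of its two key ingredients: \emph{sample sets} (as developed in \Cref{sec:sample_sets}) and a \emph{terminal-based LP-rounding algorithm}. Concretely, the algorithm has three stages: (i) draw a random polynomial-size sample set $T \subseteq V(G)$ with good structural properties relative to the unknown optimal cut $(L^*, C^*, R^*)$; (ii) enumerate over polynomially many $\OO(\log n)$-size subsets of $T$, one of which correctly represents $T \cap L^*$; and (iii) given the guessed terminals $T_L \subseteq T$ and $T_R = T \setminus T_L$, solve and round an LP relaxation for the terminal-constrained \VSC{} problem. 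The final output is the best cut found across all guesses.

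Stage~(i) is the vertex version of sample sets. In the edge case, one can obtain a sample set whose interaction with the optimum is controlled up to a constant factor; in the vertex setting, however, a vertex in $C^*$ can itself be sampled into $T$, so a single sampling distribution cannot cleanly decide the side of the cut to which each sampled vertex should be assigned. The fix is to bucket by the natural scales of $|L^*|$ and $|C^*|$ consistent with the sparsity $\phi$ and apply a union bound over these scales, which contributes the $\OO(\log\log n\phi)$ term in the final approximation factor. The matching lower bound sketched in \Cref{subsec:sample_set_lower_bound} shows that this loss is intrinsic to the sample-set approach. Stage~(iii) is a vertex-cut analog of the terminal LP used for \Cref{thm:logksse}: I would use a distance-based LP (the vertex analog of the Leighton-Rao relaxation) and round via a R\"acke-style hierarchical decomposition or a region-growing argument tuned for vertex cuts. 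Because the LP value on the terminal-constrained problem is bounded by $k$, standard rounding loses only an $\OO(\log k)$ factor, and combining with stage~(i) yields the claimed approximation.

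The main technical obstacle is stage~(i): constructing and analyzing the vertex sample set with the right quantitative guarantee. One must design a sampling distribution that, with $1/\poly(n)$ probability, outputs a $T$ for which some subset $T_L$ identifies $L^*$ well enough for the LP in stage~(iii) to recover a near-optimal cut, and simultaneously squeezes the sample-set loss down to $\OO(\log\log n\phi)$ rather than the more naive $\OO(\log n\phi)$. Once this is in place, the enumeration in stage~(ii) is straightforward, and the LP rounding in stage~(iii) should proceed largely in analogy to the edge case, modulo care in handling the behavior of $C^*$ inside the LP.
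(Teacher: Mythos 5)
Your high-level template (vertex sample set plus a terminal LP) matches the paper's, but stage (ii) is both unnecessary and breaks the claimed running time, and it signals a misunderstanding of how the sample set is used. The paper never guesses $T \cap L^*$: instead it defines the \emph{terminal sparsity} $\phi_T(L,C,R) = |C| / \min(|(L\cup C)\cap T|, |(R\cup C)\cap T|)$ and runs an LP (\Cref{thm:lpvertex}, with spreading constraints over $T$ and a region-growing rounding) that directly finds a cut with small terminal sparsity, with approximation factor $\OO(\log s')$ where $s' = |L^*\cap T|$ is the number of terminals on the small side --- no knowledge of which terminals lie in $L^*$ is needed. Your enumeration over $\OO(\log n)$-size subsets of a polynomial-size $T$ is $n^{\Theta(\log n)}$ many guesses, hence quasi-polynomial, and in any case $|T \cap L^*| = \Theta(k \cdot \mathrm{polylog})$ is not $\OO(\log n)$, so no single guessed subset of that size could ``represent'' $T\cap L^*$. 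Relatedly, your justification for the $\OO(\log k)$ loss in stage (iii) (``the LP value is bounded by $k$'') is not the operative mechanism; the loss is $\OO(\log s')$ in the number of terminals on the small side, and it is only because the sample set forces $s' = \Theta(k\,(\log k + \log\log n\phi)\log n\phi')$ that $\log s' = \OO(\log k + \log\log n\phi)$.

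Two further points. First, your account of where $\log\log n\phi$ comes from is not right: it is not a union bound over scales of $|L^*|$ and $|C^*|$, but the extra $\log(n\phi'/\epsilon^3)$ factor in the \emph{size} of the vertex sample set (\Cref{lemma:sampleset_vertex}), which arises from the Sauer--Shelah/VC-dimension union bound over the family of candidate sets; this inflates $|L^*\cap T|$ by a $\log(n\phi')$ factor relative to the edge case, and feeding that into the $\OO(\log s')$ LP guarantee produces the $\log\log n\phi$ term. Second, you are missing the final (and genuinely vertex-specific) step: the LP returns a cut that is terminal-sparse, and one must convert this back into a cut that is sparse with respect to actual vertex counts. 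The paper does this by a case analysis on whether $G\setminus C'$ has a component of size greater than $n - s''/\alpha$, using the sample-set condition in the reverse direction to rule out the bad case and then rebalancing the components into two $\Omega(s'')$-size sides. Without this step the argument does not close, since a terminal-sparse vertex cut need not be sparse.
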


Again, we give a fast algorithm using our cut-matching game: we obtain an $\OO(\log^2 k + \log^2 \log n\phi)$-approximation in time $m^{1 + o(1)}$.

\begin{theorem}\label{thm:cutmatchingvertexlogk}
Let $k$ be the minimum possible value of $|C|$ among all vertex cuts $(L,C,R)$ that are $\phi$-sparse. Then $\VSC$ admits an $\OO(\log^2 k + \log^2 \log n\phi)$-approximation in time $m^{1 + o(1)}$.
\end{theorem}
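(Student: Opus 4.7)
The plan is to combine the two ingredients already developed in the paper: the sample-set reduction from \Cref{thm:logkvertex}, which is what is responsible for both the $\log k$ and the $\log \log n\phi$ terms in the polynomial-time vertex algorithm, and the cut-matching-game subroutine from \Cref{thm:cutmatchingedgelogk}, which replaces a polynomial-time sparsest-cut oracle by an $m^{1+o(1)}$-time sequence of balanced-cut / expansion-certifying steps at the cost of one additional logarithmic factor per log incurred. The target bound $\OO(\log^2 k + \log^2 \log n\phi)$ is exactly what one obtains by substituting the cut-matching oracle into the sample-set scheme of \Cref{thm:logkvertex}.

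First, I would reduce \VSC{} to an edge problem on an auxiliary graph $G'$ via the standard vertex-splitting construction: replace each vertex $v$ by $v^{in} \to v^{out}$ joined by a unit-capacity edge, and route every original edge $uv$ as $u^{out} \to v^{in}$ and $v^{out} \to u^{in}$. A $\phi$-sparse vertex cut of $G$ corresponds to an edge cut of $G'$ whose cut edges are all of the ``vertex'' type, and the local/balanced flow primitives underlying \Cref{thm:cutmatchingedgelogk} extend to this vertex-capacitated setting while preserving the $m^{1+o(1)}$ runtime. Next I would apply the vertex sample-set reduction used in the proof of \Cref{thm:logkvertex}: this yields a polynomial family of seed sets such that at least one seed is contained entirely in the interior $L$ of an (approximately) optimal cut, and growing from a correct seed using an approximate \VSC{} oracle recovers a cut of sparsity at most $\OO(\log \log n\phi)$ times the oracle's approximation factor times $\phi$, with the small-set guarantee preserved up to constants.

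The heart of the proof is then to replace the \VSC{} oracle inside the sample-set scheme by a \emph{local} cut-matching game on $G'$ around the seed, exactly as in the proof of \Cref{thm:cutmatchingedgelogk}. We run $\OO(\log k + \log \log n\phi)$ rounds; in each round we either make progress by extracting a balanced sparse vertex cut (which we return or recurse on through the sample-set enumeration) or we embed, via a vertex-capacitated flow, a bipartite matching certifying that every vertex set of size $\le s$ containing the seed expands. Each round runs in $m^{1+o(1)}$ time by the flow primitives of \Cref{thm:cutmatchingedgelogk}, so the total running time is $\OO(\log k + \log\log n\phi) \cdot m^{1+o(1)} = m^{1+o(1)}$. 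The per-round loss of $\OO(\log k + \log\log n\phi)$ inherent to the cut-matching game, when multiplied with the $\OO(\log k + \log\log n\phi)$ loss of the sample-set reduction, gives the claimed $\OO(\log^2 k + \log^2\log n\phi)$ approximation ratio.

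The main obstacle, and what distinguishes this theorem from a mechanical composition, is verifying that the cut-matching game ports cleanly through the vertex-splitting reduction in combination with the sample-set framework. Specifically, the matchings produced in each round must be routable with \emph{unit vertex congestion} so that the final expansion certificate is an honest vertex-expansion certificate in $G$ rather than merely an edge-expansion certificate in $G'$; and the sample-set guarantee must be robust to replacing the exact oracle by one that may return a cut that is only balanced with respect to the current random walk / matching rather than with respect to $L$ itself. Resolving this requires reworking the correctness analysis of the local cut-matching game from \Cref{thm:cutmatchingedgelogk} under vertex capacities and tracking how the weaker vertex sample-set bound propagates, which I expect to follow from an argument essentially parallel to the one used for \Cref{thm:logkvertex}, but with an extra round of bookkeeping to combine the two sources of logarithmic loss multiplicatively rather than additively.
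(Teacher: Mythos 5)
Your high-level ingredients are the right ones --- a vertex sample set plus a vertex-capacitated version of the cut-matching game, with the matching player implemented by (almost-linear-time) flows --- and the paper indeed proves \Cref{thm:cutmatchingvertexlogk} by feeding the sample set $T$ of \Cref{lemma:sampleset_vertex} into the terminal-vertex cut-matching game of \Cref{thm:cutmatchingvertex}. However, your description of how the sample set is used is not what the technique does, and this is a genuine gap rather than loose wording. An $(\eps,\phi')$ sample set is a \emph{single} terminal set $T$ spread over all of $V(G)$ so that every sparse set $W$ satisfies $|W\cap T|\approx |W|\cdot|T|/n$; it is not "a polynomial family of seed sets such that at least one seed is contained entirely in $L$," and there is no seed enumeration or local growth step. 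The reduction works by replacing sparsity with \emph{terminal} sparsity $\phi_T$: since $|L\cap T|=s'=\Theta(k\,\polylog)$, the optimal cut becomes extremely terminal-sparse, one runs the (global, not seeded) cut-matching game on $H$ with $V(H)=T$, and the $\log^2$ in the ratio comes entirely from the game's $\log^2 s'$ loss (one $\log s'$ from the number of rounds, hence the congestion of embedding $H$ into $G$, and one from $H$ being only an $\Omega(1/\log s')$-small-set expander in the near-linear-time cut player). Your accounting, in which a $\OO(\log k+\log\log n\phi)$ loss from the sample set multiplies a per-round loss of the same order, does not match the actual argument: the sample set contributes no multiplicative loss of its own; it only controls $s'$ and hence $\log s'$.

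Two further steps are missing. First, the game returns a cut $(L',C',R')$ that is only \emph{terminal}-sparse, and for vertex cuts one cannot immediately conclude it is sparse: the paper needs a case analysis on whether $G\setminus C'$ has a component of size greater than $n-s''/\alpha$, uses the sample condition in the contrapositive to rule that out, and then repartitions the components into two sides $A,B$ of size $\Omega(s'')$ to certify $\OO(\phi')$ vertex sparsity. Your proposal stops at the terminal-sparse cut. Second, the matching player must be shown to either return a terminal-sparse \emph{vertex} cut or a matching routable with vertex congestion $1/\phi'$ (the paper does this directly with vertex capacities rather than via your splitting gadget, though that difference is cosmetic); you correctly identify this as a point needing care but do not supply the argument.
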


We note that the only regime in which these results are not necessarily an $\OO(\polylog(k))$-approximation is when $k \leq (\log n)^{o(1)}$.
\Cref{table:results} summarizes our results.

\begin{table}[!htbp]
\begin{center}
\begin{tabular}{|c|c|c|c|}
\hline
\textbf{Edge/Vertex} & \textbf{Approximation ratio}                                        & \textbf{Small set} & \textbf{Time} \\ \hline
Edge                 & $\OO(\log k)$                                           & yes                & $\poly(n)$          \\ \hline
Edge                 & $\OO(\log^2 k)$                                         & no                 & almost-linear \\ \hline
Vertex               & $\OO(\log k + \log\log n\phi)$                           & no                 & $\poly(n)$          \\ \hline
Vertex               & $\OO(\log^2 k + \log^2 \log n\phi)$                     & no                 & almost-linear \\ \hline
Vertex               & No $n^\epsilon$-approximation  for some $\epsilon > 0$ & yes                & $\poly(n)$          \\ \hline
Vertex               & No $2^{o(k)}$-approximation                            & yes                & $f(k)\poly(n)$           \\ \hline
Vertex               & $2^k$                                                 &  yes\tablefootnote{Under maximality.}                & $f(k)\poly(n)$          \\ \hline
\end{tabular}
\end{center}
\caption{Summary of our results}
\label{table:results}
\end{table}

\paragraph{Applications.} Next, we describe two applications of our result and techniques for \SSE{} to multicut mimicking networks and \MMGP. 

\paragraph{Multicut mimicking networks.}
We show an improvement in the size of the best multicut mimicking network that can be computed in polynomial time. This follows directly from the reduction in~\cite{wahlstrom22} to \SSE{} and our $\OO(\log k)$-approximation for it. We start by defining multicut mimicking networks.

\begin{definition}
Given a graph $G$ and a set of terminals $T \subseteq V(G)$, let $G'$ be another graph with $T \subseteq V(G')$. We say that $G'$ is a multicut mimicking network for $G$ if for every partition $\mathcal{T} = T_1 \cup T_2 \cup \ldots \cup T_s$  of $T$, the size of a minimum multiway cut for $\mathcal{T}$ is the same in $G$ and $G'$. 
\end{definition}

We show the following theorem using our $\OO(\log k)$-approximation for \SSE. Given a graph $G$ and a set of terminals $T \subseteq V(G)$, let $cap_G(T)$ denote the total degree of the vertices in $T$. 
\begin{theorem}\label{thm:mmn}
Given a terminal network $(G,T)$ with $cap_G(T) = k$, one can find in randomized polynomial time a multicut mimicking network of size $k^{\OO(\log^2 k)}$.
\end{theorem}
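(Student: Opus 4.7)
The plan is to plug our $(O(\log k), O(1))$-approximation for \SSE{} (Theorem~\ref{thm:logksse}) into the reduction framework developed by \Wahlstrom~\cite{wahlstrom22}. That work establishes that a polynomial-time $(\alpha, O(1))$-approximation for \SSE{} --- where $\alpha$ is allowed to depend on the size of the optimal cut --- yields a randomized polynomial-time algorithm that constructs a multicut mimicking network of size $k^{O(\alpha \cdot \log k)}$ for a terminal network $(G,T)$ with $cap_G(T) = k$. Substituting our parametric $\alpha = O(\log k)$ from Theorem~\ref{thm:logksse} then gives the claimed bound of $k^{O(\log^2 k)}$.

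At a high level, the reduction proceeds iteratively: using the \SSE{} oracle one identifies a non-terminal ``bubble'' of small size with small boundary to the rest of the graph (i.e., a small sparse cut hiding away from the terminals). This bubble is then replaced by a smaller gadget that provably preserves every minimum multiway cut value on $T$, and one recurses on the reduced instance. The approximation ratio $\alpha$ controls how large the cuts produced by the oracle are relative to optimum, which in turn controls the total terminal capacity of the sub-instances spawned during recursion, and therefore the branching factor at each level. Combined with a recursion depth of $O(\log k)$, this yields the $k^{O(\alpha \log k)}$ size bound.

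The main step in adapting this to our setting is to verify that our algorithm meets the interface expected by \Wahlstrom's reduction. Since~\cite{wahlstrom22} invokes the \SSE{} approximation as a black box and requires only an $(\alpha, O(1))$-approximation whose $\alpha$ may depend on the cut size, our algorithm plugs in directly. The obstacle, if any, is conceptual rather than technical: one must re-examine the analysis of~\cite{wahlstrom22} to confirm that the exponent scales linearly in $\alpha$ and that no additional factors of $\log n$ creep in when $\alpha$ is parametric in $k$ rather than $n$, since it is precisely the $k$-parametric nature of Theorem~\ref{thm:logksse} (as opposed to, e.g., R\"acke's $O(\log n)$-approximation) that allows us to express the final bound purely as a function of $k$. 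Given the black-box nature of the reduction, no further modification of \Wahlstrom's arguments should be needed.
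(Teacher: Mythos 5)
Your proposal is correct and matches the paper's proof exactly: the paper invokes \Wahlstrom's reduction (stated as \Cref{thm:mmn_sse}) as a black box, which converts any $(\alpha(n,k), \OO(1))$-approximation for \SSE{} into a randomized polynomial-time construction of a multicut mimicking network of size $k^{\OO(\alpha(n,k)\log k)}$, and then substitutes $\alpha(n,k) = \OO(\log k)$ from \Cref{thm:logksse}.
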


This improves the result in~\cite{wahlstrom22} which shows a bound of $k^{\OO(\log^3 k)}$. This in turn improves the best known kernels for many problems in polynomial time, which are based on multicut mimicking networks as considered in~\cite{wahlstrom22}. In particular, {{\sc Multiway Cut}}{} parameterized by the cut-size and {{\sc Multicut}}{} parameterized by the number of pairs and the cut-size together, admit $k^{\OO(\log^2 k)}$ size kernels in polynomial time.

\paragraph{\MMGP:} 

In \MMGP, we are given a graph $G$ and an integer $r$, and the goal is to partition the vertex set into $r$ parts of size $\frac{n}{r}$ each, while minimizing the maximum cut-size among all parts. 
Bansal et al.~\cite{minmax} obtain an $(\OO(\sqrt{\log n \log r}), \OO(1))$-approximation algorithm.
Extending our notion of sample sets for sparse cuts to a weighted setting, combined with LP-rounding techniques from~\cite{hc16} and~\cite{minmax} we obtain the following theorem.

\begin{theorem}
Min-max graph partitioning admits an $(\OO(\log opt), \OO(1))$-approximation algorithm, where $opt$ is the optimal min-max value.
\end{theorem}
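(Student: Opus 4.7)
The plan is to extend the sample set technique used for~\Cref{thm:logksse} to a weighted setting, and then feed it into the iterative LP-rounding framework of Bansal et al.~\cite{minmax} for \MMGP{}, with refinements borrowed from~\cite{hc16}. The target is to replace the $\OO(\sqrt{\log n})$ factor coming from an ARV-style sparsest-cut subroutine in~\cite{minmax} with an $\OO(\log opt)$ factor, in the spirit of our $\OO(\log k)$-approximation for \SSE{}.

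First, I would set up the natural LP relaxation used in~\cite{minmax}: fractional assignments of each vertex to one of $r$ parts, together with spreading-style constraints ensuring each part has size $\Theta(n/r)$ and boundary at most a guessed value $opt$ (obtained by binary search). The candidate ``parts'' of interest are those with edge boundary $\OO(opt)$, which places us squarely in the small-cut regime where a sample set should work.

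Second, I would generalize the Feige--Mahdian-style sample set used inside~\Cref{thm:logksse} from the vertex-size parameter $s$ to an edge-weight parameter $opt$. The desired output is a sample set $U$ of size $\poly(opt)$ (computable in polynomial time) with the property that every candidate part $P$ whose boundary is $\OO(opt)$ is ``represented'' by $U$ in the same quantitative sense as in the proof of~\Cref{thm:logksse}, so that the $\OO(\log opt)$-approximate \SSE{} subroutine can be applied to extract $P$ up to the stated loss. Since in \MMGP{} the cost is edge-measured rather than vertex-measured, the sampling distribution must be biased (for instance, proportional to a vertex's fractional contribution to the LP boundary cost) rather than uniform.

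Third, I would plug this weighted sample set into the iterative rounding scheme of~\cite{minmax,hc16}: at each of the $r$ rounds, use the LP solution and $U$ to identify and peel off a single part of size $\Theta(n/r)$ whose boundary is at most $\OO(\log opt) \cdot opt$. The bicriteria slack on part sizes is absorbed exactly as in~\cite{minmax,hc16}, and a standard charging argument shows that iterating this extraction produces all $r$ parts while blowing up the per-part boundary by only a constant factor. Combined with the $\OO(\log opt)$ per-round loss, this yields the claimed $(\OO(\log opt), \OO(1))$-approximation.

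The main obstacle is the weighted sample set extension: the unweighted analysis used for~\Cref{thm:logksse} exploits the fact that a small-set cut is concentrated on $s$ vertices, so a uniform sample of the right size hits it reliably; with edge-weighted cost, hitting $\OO(opt)$ boundary edges via a biased sample requires a different concentration argument, and one must ensure the sample is simultaneously useful for every candidate part that the rounding might wish to extract. Once this weighted sample set is in place with quantitative guarantees analogous to the unweighted case, the remaining LP-rounding steps slot into the existing pipeline.
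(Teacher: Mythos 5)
Your high-level plan matches the paper's: extend sample sets to a weighted setting and combine with the LP-rounding machinery of~\cite{minmax,hc16}. However, there is a genuine gap at exactly the step you flag as the ``main obstacle,'' and your proposed route to close it points in the wrong direction. First, the weights that must be handled are not edge weights or ``fractional contributions to the LP boundary cost'': the paper goes through the known reduction from \MMGP{} to \emph{Weighted-$\rho$-unbalanced cut}~\cite{minmax}, in which a vertex weight function $y$ arises from the covering/multiplicative-weights outer loop, and the sample set must represent the \emph{vertex measure} $\mu = y$ of every set $W$ that is sparse in the sense $|\delta_G(W)|/\mu(W) \leq \phi'$ (see \Cref{def:sample_set_mu}). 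The cut size itself is still counted exactly; nothing about the edges needs to be sampled. Second, the paper does not obtain the weighted sample set by biased random sampling plus a new concentration argument. It generalizes the deterministic Steiner-$t$-decomposition construction of \Cref{lemma:sample_set} to an arbitrary vertex measure: vertices with $\mu(v) > t$ are placed directly into the terminal set with weight $\lfloor \mu(v)/(t\epsilon)\rfloor$, the decomposition is run on the residual measure, and one weighted representative is chosen per bag (\Cref{lemma:sample_set_mu}). Without this (or an equivalent) construction, your argument does not go through, since a uniform or boundary-biased sample gives no guarantee that $\alpha(W\cap T)$ tracks $\mu(W)$ for \emph{every} $\mu$-sparse $W$ simultaneously.

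A second, smaller omission: you also need the LP-rounding subroutine itself to work with weighted terminals, i.e., a statement like \Cref{thm:sse} that finds $Y$ with $x(Y) \geq \ell/10$, $|Y| \leq 10s$, and $|\delta_G(Y)| \leq \OO(\log \ell)\cdot opt$ where $\ell$ is the total terminal weight of the optimal part. After the sample-set reduction one has $\ell = \Theta(opt\log opt)$, which is what turns the $\OO(\log \ell)$ guarantee into $\OO(\log opt)$; your write-up invokes ``an $\OO(\log opt)$-approximate \SSE{} subroutine'' directly, which conflates the parameter the rounding actually controls ($\ell$, the terminal weight) with the target parameter ($opt$). The iteration over $r$ parts is fine to delegate to~\cite{minmax} as you do, but the two items above are the real content of the proof and remain unestablished in your proposal.
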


\subsection{Technical highlight}
We use two main techniques to obtain our results. The first is that of sample sets for sparse cuts discussed in \Cref{sec:sample_sets}. Feige and Mahdian~\cite{fm06} introduced the concept of sample sets, which are a small set of terminals which represent every set with a (vertex or edge) cut of size at most $k$ in proportion to its size, upto an additive deviation of $\epsilon n$, where $n$ is the number of vertices in the graph. They used it to obtain a $2^{\OO(k)}\poly(n)$ time algorithm for finding balanced separators of size $k$. We extend this concept to sparse cuts, and show that there is a small set of terminals that represents all sparse cuts.  We show essentially tight bounds for such sample sets, and then combine this with two other techniques --- (a) existing LP-based methods, and (b) our new cut-matching game --- to obtain our approximation algorithms. Essentially, sample sets for sparse cuts enable us to convert $\OO(\log s)$-approximation algorithms to $\OO(\log k)$-approximation.

The second main technique we develop is the cut-matching game for obtaining small set expanders, introduced in~\Cref{sec:cutmatching}. The cut-matching game~\cite{krvactual,krv07,osvv08} is a very flexible framework that yields one of the fastest approximations to \SC, and hence is used extensively in designing fast graph algorithms. The cut-matching game approach of~\cite{krv07} shows that there is a cut player strategy so that for any matching player, we can construct an expander in $\OO(\log n)$ rounds. We show that there is a cut player strategy to construct $s$-small set expanders in $\OO(\log s)$ rounds: these are graphs in which every set with size at most $s$ is expanding. We show how to achieve this in time $\tilde{\mathcal{O}}(m)$, where $m$ is the number of edges. Together with the almost-linear time max-flow algorithm of~\cite{chen22} which we use for the matching player strategy, this forms the basis of all our almost-linear time algorithms. 

\subsection{Related work and connections to FPT algorithms}

\paragraph{Vertex cuts.}
Apart from a number of aforementioned works on approximating the edge version of \SSE{} and \SC, there is a large literature on approximating \SSVE{} and \VSC. For \VSC, Feige et al.~\cite{fhl05} used the ARV algorithm to obtain an $\OO(\sqrt{\log n})$-approximation. As we discussed, \SSVE{} is much harder to approximate. Louis and Makarychev~\cite{louis2014approximation} obtained  $\tilde{\OO}(\frac{n}{s} \sqrt{\log n})$ and  
$\tilde{\OO}( 
\frac{n}{s} (\sqrt{\psi \log d} + \psi))$ approximations where $\psi$ is the vertex conductance of the optimal cut and $d$ is the maximum degree. 

\paragraph{Approximation algorithms with respect to other parameters.} Approximation algorithms where the approximation ratio is a function of a parameter rather than the input size have been studied before for various graph partitioning problems. s
For instance, Garg et al. showed an $\OO(\log t)$-approximation algorithm for \textsc{Multicut}

where $t$ is the number of terminal pairs that need to be separated~\cite{gvy96}. Even et al.~\cite{ess98} studied the approximation of  \textsc{Feedback Vertex/Arc Sets} in directed graphs, and obtained $\OO(\log opt \log \log opt)$-approximations, where $opt$ is the size of the optimal feedback set. Even et al.~\cite{enrs99} obtained $\OO(\log opt)$-approximations for many graph partitioning problems using spreading metrics, where $opt$ is the size of the optimal cut. Calinescu et al.~\cite{ckr05} obtained an $\OO(\log k)$-approximation for the \textsc{$0$-Extension} problem with $k$ terminals. Lee~\cite{lee17} obtained an $\OO(\log k)$-approximation for \textsc{$k$-Vertex Separator} and other related problems.
As previously mentioned, Feige et al.~\cite{fhl05} gave an $\OO(\sqrt{\log k})$-approximation for \VSC{} when the optimal solution is balanced, which was an important ingredient in approximating treewidth in polynomial time. 

There are previous results for \SSE{} whose approximation guarantee depends on $s$, the size of the set that we seek to find. Bansal et al.~\cite{minmax} gave an approximation algorithm for \SSE{} in terms of both $n$ and $s$: their algorithm has an approximation ratio of $\OO(\sqrt{\log n \log {\frac{n}{s}}})$.

Hasan and Chwa~\cite{hc16} obtained an $\OO(\log s)$-approximation for \SSE{}. 
Finally, the conductance of the optimal cut $\psi \in (0, 1)$ is another important parameter related to the performance of many algorithms for \SC{} and \SSE; the well-known Cheeger inequality gives an $\OO(\sqrt{\psi^{-1}})$-approximation algorithm. For \SSE, Raghavendra et al. \cite{raghavendra2010approximations} gave an $\OO(\sqrt{ \psi^{-1} \log (n/s) })$-approximation.

\paragraph{Connection to FPT algorithms.}Cygan et al.~\cite{cygan20} showed that \textsc{Minimum Bisection} admits an FPT algorithm that runs in time $2^{\OO(k\log k)}\poly(n)$. In fact, given a size parameter $s < \frac{n}{2}$, their algorithm can find a cut with minimum number of cut-edges among all cuts $(X,\overline{X})$ with $|X| = s$. This essentially means we can solve \SC{} and \SSE{} exactly when $k \leq \frac{\log n}{C\log \log n}$ for some constant $C$. But when $k \geq \frac{\log n}{C \log \log n}$, $\SC$ admits an $\OO(\sqrt{\log n}) = \OO(\sqrt{k \log k})$-approximation, and \SSE{} admits an $\OO(\log n) = \OO(k \log k)$-approximation. Thus FPT algorithms imply some $f(k)$-approximations for the edge versions in polynomial time, but our algorithms significantly improve upon these guarantees.

\section{Overview}
In this section, we describe on a high level our ideas for obtaining approximation algorithms in terms of $k$. For simplicity, we will focus on edge cuts, and only give an overview of two results: how we obtain an $\OO(\log k)$-approximation for \SSE{}, and an $\OO(\log^2 k)$-approximation for \SC{} in almost-linear time. 

\subsection{Approximating \SSE.} In \SSE, given that there exists a set $S$ with $|S| = s$ and $|\delta_G(S)| = k$ (so that $S$ is $\phi = \frac{k}{s}$-sparse), we wish to find a set which has size at most $\OO(s)$ and has sparsity $\phi' = \OO(\phi \log k)$.

First, we may assume $s \gg k \log k$, otherwise, the $\OO(\log s)$-approximation of~\cite{hc16} is already an $\OO(\log k)$-approximation. 

Is there a way to still use (an extension of) an $\OO(\log s)$-approximation algorithm even when $s \gg k \log k$?
Our basic approach is to carefully choose a {\em sample set} $T \subseteq V$ that sparsifies the vertex set in the following sense. 
For $S \subseteq V$, define the {\em terminal sparsity} of $S$ as $\phi_T(S) := \frac{|\delta_G(S)|}{|S \cap T|}$; intuitively, we pretend that the size of the set $S$ is $|S \cap T|$ instead of $|S|$.
Ideally, 
\begin{enumerate}[(i)]
    \item if every $S' \subseteq V$ satisfies $|S' \cap T| \approx |S'| \cdot \frac{|T|}{n}$ (which implies $\phi_T(S') \approx \phi(S') \cdot \frac{n}{|T|}$), and
    \item the $O(\log s)$-approximation algorithm for $\phi(.)$ can be strengthened to an $O(\log t)$-approximation for $\phi_T(.)$ with $t = |S \cap T|$,
\end{enumerate}
one can expect to find a set $S'$ with $|S' \cap T| \leq \OO(t)$ and $\phi_T(S') \leq \OO(\log t \cdot \phi_T(S))$, which implies that $|S'| \leq \OO(s)$ and $\phi(S') \leq \OO(\log t \cdot \phi(S))$. If we could set $t$ to be sufficiently small so that $O(\log t) = O(\log k)$, we will achieve our goal.

Of course, (i) is impossible to have for every $S \subseteq V$ (e.g., when $S \subseteq T$), but one can hope to satisfy it for every sparse cut. The following definition requires that every $S$ sparse in either $\phi(S)$ or $\phi_T(S)$ satisfies $|T \cap S| \approx |S|\cdot\frac{|T|}{n}$; in other words, any set $S$ sparse with respect to $\phi$ should be sparse with respect to $\phi_T$, and vice versa.

\begin{definition}[Simplification of~\Cref{def:sample_set}]
Given a graph $G = (V(G), E(G))$ and parameters $\epsilon, \phi'$, an $(\epsilon, \phi')$ edge sample set for $G$ is a non-empty set of terminals $T \subseteq V(G)$ which has the following property. Consider a set of vertices $W \subseteq V(G)$
which satisfies either (a) $\phi(W) \leq \phi'$ or (b) $\phi_T(W) \leq \frac{n}{|T|}\frac{\phi'}{10}$. Then we have $|\frac{n}{|T|}|W \cap T| - |W|| \leq \epsilon|W|$.
\end{definition}

This definition is a refined notion of the previously defined sample sets (e.g.,~\cite{fm06}), which allows an additive $\epsilon n$ error and preserves only balanced cuts that have $\Omega(n)$ vertices on each side. Based on Steiner decomposition, we prove in Lemma~\ref{lemma:sample_set} that there exists a sample set of size $\min\{n, \Theta(\frac{n \phi'}{\eps^2})\}$. Compute such a sample set $T$ with $\phi' = \OO(\phi \log k) = \OO(\frac{k \log k}{s})$ as our target sparsity and $\eps = 1/10$. Since we assumed $s \gg k \log k$, we must have that $T$ is of size $\Theta(\frac{n \phi'}{\eps^2}) < n$.

Since our target set $S$ is $\phi$-sparse and therefore $\phi'$-sparse, we must have $t := |S \cap T| = \Theta\left(\frac{|T|}{n}|S|\right) =  \Theta(k\log k)$, so that the terminal sparsity of $S$ is $\OO(\frac{1}{\log k})$. 
If the condition (ii) above is true, we are able to find a set $S'$ with $|S' \cap T| = \OO(t)$ and $\phi_T(S') = \OO(\log k \cdot \frac{1}{\log k}) = \OO(1)$. Notice that $\frac{n}{10|T|} \phi' = \Theta(1)$.  By carefully choosing constants one can ensure that $S'$ is $\frac{n}{10|T|}\phi'$-terminal sparse, and hence must satisfy the sample condition above. It implies that $|S'| = \Theta(\frac{n}{|T|}|S' \cap T|) = \OO(s)$. Since $S'$ is $\OO(1)$-terminal sparse, we have 
$\phi(S') = \frac{|\delta_G(S')|}{|S'|} = \OO\left(\frac{|T|}{n}\frac{|\delta_G(S')|}{|S' \cap T|}\right) =  \OO \left(\frac{\phi'|\delta_G(S')|}{|S' \cap T|}\right) = \OO(\phi') = \OO(\phi \log k)$, and it follows that we obtain an $\OO(\log k)$-approximation for the sparsity. 

Thus, it suffices to achieve (ii) above, which can be considered as a  ``terminal version'' of \SSE: Given a set $S$ with $|\delta_G(S)| = k$, $|S \cap T| = t = \Theta(k \log k)$ such that $S$ is $\psi = \OO(\frac{1}{\log k})$-terminal sparse, can we find a set with at most $\OO(t)$ terminals that is $\psi' = \OO(\psi \log t)$-terminal sparse?
Note that this is a generalization of the original \SSE{} problem when $T = V(G)$. 

Hasan and Chwa~\cite{hc16} handled this special case by giving an $\OO(\log s)$-approximation by extending the techniques of~\cite{minmax}. We show that the techniques from~\cite{hc16} and ~\cite{minmax} extend to the more general version with terminals, and obtain an $\OO(\log t)$-approximation, giving an $\OO(\log k)$-approximation for \SSE.

\subsection{Cut-matching game for small set expanders.} 
Our $\OO(\log^2 k)$-approximation algorithm for \SC{} (Theorem~\ref{thm:cutmatchingedgelogk}) is based on our new cut-matching game for constructing small set expanders. Given a graph $G$, the cut-matching game frameworks of~\cite{krvactual,krv07} start with an empty graph $H$ with the same vertex set. Let $\phi$ be a sparsity parameter. In every round, each player takes the following action. 

\begin{itemize}
    \item The cut player finds a bisection $(B, \overline{B})$ and gives it to the matching player. 

    \item The matching player either finds a $\phi$-sparse cut in $G$ (so we are done), or finds a perfect matching $M$ between $B$ and $\overline{B}$ so that $\phi \cdot M \preceq^{\text{flow}} G$, which means that $G$ (with unit capacities) admits a flow simultaneously sending amount $\phi$ between the endpoints of every $e \in M$. The matching $M$ is added to $H$. 
\end{itemize}

Khandekar, Khot, Orecchia and Vishnoi~\cite{krv07} proved that there is a cut player strategy that, against any matching player always returning a matching, ensures after $\OO(\log n)$ rounds that $H$ is an $\Omega(1)$-expander.\footnote{A graph $H$ is ($\Omega(1)$)-expanding when every $S \subseteq V$ with $|S| \leq n/2$ has $\phi(S) \geq \Omega(1)$. Note that this is with respect to sparsity, not conductance.}
This implies that $G$ is an $\Omega(\frac{\phi}{\log n})$-expander as $\phi H \preceq^{\text{flow}} \OO(\log n)G$. Therefore, by executing the strategy of both players, one can either obtain a $\phi$-sparse cut in $G$ or certify that $G$ is a $\Omega(\frac{\phi}{\log n})$-expander. This gives an $\OO(\log n)$-approximation for \SC. However, we note that the cut player strategy of~\cite{krv07}  is not a poly-time strategy, hence this does not directly give us a polynomial time approximation algorithm.

\paragraph{Cut player strategy for small set expanders.}
We show that in a similar framework, there is a cut player strategy that, against any matching player always returning a matching, can ensure in $\OO(\log s)$ rounds that $H$ is a $s$-small set $\Omega(\frac{1}{\log s})$-expander; every $S \subseteq V$ with $|S| \leq s$ has $\phi_H(S) \geq \Omega(\frac{1}{\log s})$. Furthermore, we implement the strategy in near-linear time. Given exponential time, we can even ensure that $H$ is a $s$-small set $\Omega(1)$-expander, thus generalizing the result of~\cite{krv07}.

The basic idea is the following. Let $S$ be a set of at most $s$ vertices in $H$. If we can somehow force the cut player, within $\OO(\log s)$ rounds, to output a bisection $(X, \overline{X})$ which is ``sufficiently unbalanced'' with respect to $S$, we would be done. Concretely, suppose that $||X \cap S| - |\overline{X} \cap S|| \geq \epsilon |S|$ for some constant $\epsilon$, then in this round, at least $\epsilon|S|$ edges must be added across the cut $(S, \overline{S})$ in $H$ via the matching between $X$ and $Y$, and hence $S$ becomes expanding.\footnote{A subset $S \subseteq V$ is ($\Omega(1)$)-expanding when $\phi(S) \geq \Omega(1)$. Note that this is with respect to sparsity, not conductance.}

This motivates our approach. First, we start by finding a $\Omega(1)$ balanced cut $(W, \overline{W})$ in $H$ that is $\frac{1}{C}$-sparse for some large constant $C \gg 1$. If no such cut exists, we show that we can be done using just one more round --- we skip this detail in this overview. Next, we use our key subroutine called \IC, which given a cut $(W, \overline{W})$ in $H$ that is $\frac{1}{C}$-sparse, uses a single max-flow call to obtain another balanced cut $(Q, \overline{Q})$ such that for every subset $S \subseteq V(H)$ that is $\frac{1}{K}$-sparse for some constant $K \gg C$, we have $|\delta_{H[S]}(Q \cap S)| \leq \frac{|S|}{D}$, for some large constant $D$. In other words, the number of edges cut \emph{inside} every significantly sparse set $S$ is at most $\frac{|S|}{D}$.  

The cut player then simply outputs the cut $(X = Q, \overline{X} = \overline{Q})$. (Let us assume that $(Q, \overline{Q})$ is an exact bisection in this overview; the actual framework is slightly more involved.)

Fix a set $S$ of size at most $s$. After some number of rounds, if $S$ is already $\frac{1}{K}$-expanding, we are already done and hence we assume that this is not the case. Now for the cut $(Q, \overline{Q})$ found by the cut player in the current round, if it turns out that $||S \cap Q| - |S \cap \overline{Q}|| \geq \Omega(|S|)$, the matching player is forced to add $\Omega(|S|)$ edges across $(S, V(H) \setminus S)$ in this round, and hence $S$ becomes expanding. Now suppose this does not happen. Hence we may now assume that the cut $(Q, \overline{Q})$ is balanced with respect to $S$.

Also \IC{} guarantees that $|\delta_{H[S]}(Q \cap S)| \leq \frac{|S|}{D}$. Since, $|Q \cap S| = |\overline{Q} \cap S| = \Omega(|S|)$, the sparsity of the cut $Q \cap S$ \emph{inside} $H[S]$ must be $\OO(\frac{1}{D})$. 
Thus the cut $(Q, \overline{Q})$ cuts the set $S$ in a balanced and sparse way. If $S$ was the whole graph, this is exactly what the cut player in previous cut-matching games does; finding a sparse balanced cut and giving it to the matching player!

We show that if this happens for more than $\OO(\log s)$ rounds, then there must exist a round in which the matching player must add $\Omega(|S|)$ edges across the cut $(S, V(H) \setminus S)$ (see~\Cref{lemma:potential}). This analysis is a local version of the potential analysis of~\cite{krv07}. It follows that after $\OO(\log s)$ rounds, there must be some round in which the matching player added $\Omega(|S|)$ edges across the cut $(S, V(H) \setminus S)$ (or find a sparse cut in $G$), and thus we obtain our result.

We remark that the key idea behind the \IC{} subroutine has been used in previous work on related problems. Lang and Rao~\cite{lang2004flow} and Andersen and Lang~\cite{andersen2008algorithm} gave algorithms to improve quotient cuts using maximum flows. Similar ideas were used in local flow algorithms~\cite{orecchia2014flow,henzinger2020local}. Saranurak and Wang~\cite{sw19} used a similar algorithm for expander pruning.

\paragraph{Near-linear time implementation:} We describe the main technical ingredients in making the above algorithm run in time $\tilde{\mathcal{O}}(m)$. First is the task of finding an $\Omega(1)$ balanced cut in $H$ that is $\frac{1}{C}$-sparse, or to certify that there is no such cut. While we cannot hope to solve this problem exactly in polynomial time, it is possible to check if there is an $\Omega(1)$ balanced cut in $H$ that is $\frac{1}{C}$-sparse, or certify that every balanced cut in $H$ is $\Omega(\frac{1}{\log s})$-expanding. But even though this looks like an $\OO(\log s)$-approximation for \SC{} by itself, we can indeed solve this problem as follows. Since our cut-matching game only runs for $\OO(\log s)$ rounds, the maximum degree $\Delta$ in $H$ is only $\OO(\log s)$. We further ensure that the graph $H$ is regular. Then it is enough to find a balanced cut $(W, \overline{W})$ with \emph{conductance} $\frac{|\delta_G(W)|}{\vol(W)} \leq \frac{1}{C\Delta}$, or certify that every balanced cut has conductance at least $\Omega(\frac{1}{C^2\Delta^2})$.

But this is exactly a ``Cheeger-type'' approximation for balanced low conductance cuts. We use the algorithm of~\cite{ov11} to accomplish this: while their result is slightly different from the notion described above, it will be enough for us to obtain our result (see \Cref{thm:bal_sep}).

Once we have the cut player strategy for small set expanders, it is easy to show that one can use a matching player similar to that of the standard cut-matching game~\cite{krvactual} as discussed above and obtain the following result: Given a graph $G$ and a  parameter $\phi'$, either find a $\phi'$-sparse cut, or certify that the graph is a $s$-small set $\Omega(\frac{\phi'}{\log^2 s})$-expander in time $m^{1 + o(1)}$. This gives an $\OO(\log^2 s)$-approximation for \SC. 

It is not difficult to show that this extends to terminal sparsity as well: Given a graph $G$ and a  parameter $\phi'$, either find a $\phi'$-terminal sparse cut, or certify that every set with at most $s$ terminals is $\Omega(\frac{\phi'}{\log^2 s})$-expanding. Armed with this result for the terminal version and the technique of sample sets used similarly as described for \SSE, we can obtain an $\OO(\log^2 k)$-approximation. We show that this entire procedure can be accomplished in time $m^{1 + o(1)}$, and hence we obtain our algorithm for \SC.

\paragraph{Organization of the paper:} We start with preliminaries and definitions in~\Cref{sec:prelim}. In ~\Cref{sec:sample_sets}, we define our new notion of sample sets and give both upper and lower bounds for sample sets for both edge and vertex cuts.~\Cref{sec:sselp} describes how to combine sample sets with LP rounding algorithms to obtain approximation algorithms. In~\Cref{sec:cutmatching} we describe our new cut-matching game for small set expanders, and describes how to use sample sets together with this new cut-matching game to obtain fast parametric approximation algorithms.~\Cref{sec:ssve} proves both our hardness and algorithmic results for \SSVE{}. \Cref{sec:applications} discusses the applications of our results and techniques to \MMGP{} and Multicut Mimicking Networks. In~\Cref{sec:open} we state some open problems that arise naturally from this work.~\Cref{sec:appendix} consists of our generalizations of some existing LP rounding algorithms, which enable us to combine them with the technique of sample sets in~\Cref{sec:sselp}.

\section{Preliminaries and notation}\label{sec:prelim}

We start with basic graph terminology and define the problems which we will consider throughout this paper. All graphs $G$ are connected and undirected unless otherwise stated.

\paragraph{Edge cuts.} We will denote an (edge) cut in a graph $G$ by $(U, \overline{U})$ where $U \subseteq V(G)$ and $\overline{U} = V(G) \setminus U$. For the sake of simplicity, we sometimes also simply refer to this cut as $U$. The set of cut edges is denoted by $\delta_G(U)$. We say that the cut $(U, \overline{U})$ is \emph{$b$-balanced with respect to a set $X \subseteq V(G)$} if min$(|U \cap X|, |\overline{U} \cap X|) \geq b|X|$. A cut is \emph{$b$-balanced} if it is $b$-balanced with respect to $V(G)$. The sparsity of a cut $(U, \overline{U})$ is defined as $\frac{|\delta_{G}(U)|}{\min(|U|, |\overline{U}|)}$. We will denote the sparsity of the cut $(U, \overline{U})$ by $\phi^G(U)$, and omit the superscript when the context is clear. A cut is $\phi'$-sparse if its sparsity is at most $\phi'$, and $\phi'$-expanding otherwise. We say that $G$ is a $\phi'$-expander if every cut is $\phi'$-expanding. We call a cut $(U, \overline{U})$ $\phi'$-sparse with respect to a set $X$ if $|\delta_{G[X]}(U)| \leq \phi'$min$(|U \cap X|, |\overline{U} \cap {X}|)$, where $G[X]$ is the graph induced on the vertex set $X$. 

The \emph{volume} of a set $X$, denoted by $\vol_G(X)$, is the sum of the degrees of the vertices inside $X$. The \emph{conductance} of a cut $(X, \overline{X})$ is defined as $\frac{|\delta_G(X)|}{\min\{\vol_G(X), \vol_G(\overline{X})\}}$.

Given a set of terminals $X \subseteq V(G)$, we define the \emph{terminal-sparsity} of a cut $(U,\overline{U})$ as $\phi^G_X(U) = \frac{|\delta_G(U)|}{\min(|U\; \cap \; X|, |\overline{U}\; \cap \; X|)}$. Notice that this is different from sparsity with respect to $X$, since we count every edge in $\delta_G(U)$ as opposed to only counting the edges $\delta_{G[X]}(U)$. Similarly, we say that $(U,\overline{U})$ is $\phi'$-terminal sparse if $\phi^G_X(U) \leq \phi'$, and $\phi'$-terminal expanding otherwise. We say that $G$ is a $\phi'$-terminal expander if every cut is $\phi'$-terminal expanding. We omit the superscript $G$ when the graph is clear from the context.

\paragraph{Vertex cuts.} We denote a vertex cut by a partition $(L, C, R)$ of $V(G)$ such that there are no edges between $L$ and $R$ after removing $C$. The sparsity of the cut is given as $\phi_G(L,C,R) = \frac{|C|}{\min(|L \;\cup\; C|, |R\; \cup \;C|)}$.

Given a set of terminals $X$, the (vertex) terminal-sparsity of the cut $(L, C, R)$ is defined as $\phi^G_X(L,C,R) = \frac{|C|}{\min(|(L\; \cup\; C)\; \cap\; X|, |(R\; \cup\; C)\; \cap\; X|)}$. Again we omit the superscript when the graph is clear from the context.  For a set $L$, we will denote by $N_G(L)$ the neighbours of the vertices in $L$ which are not in $L$. We will also talk about the vertex sparsity of a set $L$: this will be defined as $\phi^G(L) = \frac{|N_G(L)|}{|L \;\cup\; N_G(L)|}$. Similarly, given a set of terminals we define the vertex terminal sparsity of the set $L$ as $\phi^G_T(L) = \frac{|N_G(L)|}{|(L \;\cup N_G(L))\; \cap T|}$. We say that $L$ is $\phi'$-(vertex) terminal sparse if $\phi^G_T(L) \leq \phi'$, and $\phi'$-(vertex) terminal expanding otherwise. $G$ is a $\phi'$-(vertex) terminal expander, if every set $L \subseteq V(G)$ is $\phi'$-(vertex) terminal expanding. We also remark that for the sake of simplicity, our notation for vertex and edge sparsity is similar, but the difference shall be clear from the context.

\paragraph{Flow embeddings.} Given a flow $f$ on an undirected weighted graph $G$, with capacity $c(e)$ for each edge $e \in E(G)$, the congestion of the flow $f$ is defined as $\max_{e} \frac{f(e)}{c(e)}$. If $G$ is unweighted, we have $c(e) = 1$ for each edge $e$, so that the congestion is $\max_e f(e)$. For graphs $H$ and $G$, we write $H \preceq^{\text{flow}} cG$, or $H$ flow-embeds in $G$ with congestion $c$, if one unit of demand can be routed between the vertices corresponding to every edge of $H$ simultaneously in $G$, with congestion $c$.

\paragraph{Algorithms and runtime.} Given a connected graph $G$ with $n$ vertices and $m$ edges, we say that an algorithm $A$ runs in near-linear time if it runs in time $\tilde{\mathcal{O}}(m)$ where $\tilde{\mathcal{O}}$ hides polylogarithmic factors. We say that $A$ runs in almost-linear time if it runs in time $m^{1 + o(1)}$.

\section{Sample sets for sparse cuts}\label{sec:sample_sets}
In this section, we extend the technique of sample sets in~\cite{fm06}  to the setting of sparse cuts. We begin by defining the notion of sample sets for sparse cuts that we will need. Our notion of sample sets will be a small set of terminals which represent every sparse cut in proportion to its size in the actual graph.

\begin{definition}\label{def:sample_set}
Given a graph $G = (V(G), E(G))$ and parameters $\epsilon, \phi'$, an $(\epsilon, \phi')$ edge (vertex) sample set for $G$ is a non-empty set of terminals $T \subseteq V(G)$ which has the following property. Consider a  set of vertices $W \subseteq V(G)$ which is either (a) a connected component after removing an edge (vertex) cut (b) the union of all connected components except one after removing an edge (vertex) cut. Further suppose that $W$ satisfies either (a) $\phi(W) \leq \phi'$ or (b) $\phi_T(W) \leq \frac{n}{|T|}\frac{\phi'}{10}$. Then we have $|\frac{n}{|T|}|W \cap T| - |W|| \leq \epsilon|W|$.
\end{definition}

We remark that the number $10$ is arbitrary and chosen large enough for clarity and ease of exposition. 

\paragraph{Comparison with sample sets in~\cite{fm06}.}~\cite{fm06} defined an $(\epsilon,k)$ sample: this was a set of terminals $T \subseteq V(G)$, such that for every set $W$ that is either a connected component or the union of all but one connected components after removing an edge/vertex cut, we have $|\frac{n}{|T|}|W \cap T| - |W|| \leq \epsilon n$. They obtained an edge sample set with size $\OO(\frac{k}{\epsilon^2})$ and a vertex sample set of size $\OO(\frac{k}{\epsilon^2} \log \frac{1}{\epsilon})$. However, this sample set is clearly useful only when $W$ itself is large (at least $\epsilon n$). Thus, one can view our notion of sample sets as a strengthening: we preserve all sparse cuts within $\epsilon$ multiplicative deviation, instead of just preserving large sets with a small cut-size.   

\paragraph{Sample sets using random sampling.}Before we proceed further, we note that one can easily obtain a sample set of size $(\min\{\Theta(\frac{n\phi'}{\epsilon^2} \log n), n\})$ with high probability using random sampling for both edge and vertex cuts. We describe this on a high level for edge cuts. Pick a random subset $T \subseteq V(G)$ of $\frac{Cn\phi'}{\epsilon^2}\log n$ terminals. If $\frac{Cn\phi'}{\epsilon^2}\log n > n$, we can simply return the whole vertex set. Consider a $\phi'$-sparse set $W$ with $\delta_G(W) = k'$. The expected number of terminals from $W$ is then $\frac{C\phi'|W|}{\epsilon^2} \log n \geq \frac{Ck'\log n}{\epsilon^2}$, where the inequality follows from the fact that $W$ is $\phi'$-sparse. By a Chernoff bound, the number of terminals is within an $\epsilon$ multiplicative error with probability $e^{-Ck'\log n}$. However, there are only $n^{\OO(k')}$ sets to consider, and applying a union bound on them and on all possible values of $k'$, every such $W$ must satisfy the sample condition with probability at least $1 - \frac{1}{n}$. Similarly, we can show that every set $W$ with $\phi_T(W) \leq \frac{n}{|T|}\frac{\phi'}{10}$ must satisfy the sample condition with probability at least $1 - \frac{1}{n}$. Applying a union bound then gives us our result.

However, we show that one can significantly improve these results, both for edge and vertex cuts, using techniques from~\cite{fm06}.
The rest of this section is organized into three subsections. In the first, we show an upper bound for the size of sample sets for edge cuts. In the second, we show an upper bound for vertex cuts. Finally, we show a matching lower bound for vertex cuts using a connection to Ramsey numbers.

\subsection{Edge cuts -- upper bound}
The next result is our improved upper bound for edge cuts.
\begin{lemma}\label{lemma:sample_set}
For every $\epsilon \in (0,1)$, and any sparsity parameter $\phi'$ there is an $(\epsilon,\phi')$ edge sample set for every graph $G$ of size $\min\{\Theta(n\frac{\phi'}{\epsilon^2}),n\}$. Further, such a set can be computed by a deterministic algorithm which runs in near-linear time.
\end{lemma}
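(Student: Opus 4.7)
The plan is to construct the sample by partitioning $V(G)$ into connected ``cells'' of bounded and nearly uniform size, then picking one terminal per cell. I would set the target cell size to $s := \lfloor \epsilon^2 / \phi' \rfloor$. If $s < 1$, I would simply take $T = V(G)$, for which the sample condition holds trivially with $|T| = n$. Otherwise, the target partition is $V(G) = V_1 \sqcup \cdots \sqcup V_r$ with $r = \Theta(n\phi'/\epsilon^2)$ and each $G[V_i]$ connected with $|V_i|$ equal (or as close as possible) to $s$.

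To produce such a partition deterministically in near-linear time, I would first compute a spanning tree of $G$ in $O(m + n)$ time via BFS/DFS and then run a classical greedy tree-decomposition procedure: root the tree and repeatedly strip off a subtree of exact size $s$ from a deepest node at which this is achievable. Standard analyses of this procedure guarantee cells of size exactly $s$ with at most one residual cell of smaller size. The sample set $T$ is then obtained by choosing an arbitrary representative $\tau_i$ from each cell $V_i$.

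For the analysis I would fix an arbitrary $W$ as in the definition and let $Q$ denote the number of cells that are \emph{split} by $W$, i.e.\ cells $V_i$ with both $W \cap V_i$ and $\overline{W} \cap V_i$ non-empty. The key observation is that since $G[V_i]$ is connected, every split cell contributes at least one edge of $\delta_G(W)$ lying entirely inside $V_i$, so $Q \le |\delta_G(W)|$. For unsplit cells the sampled proxy $(n/|T|)\mathbbm{1}[\tau_i \in W]$ and the true contribution $|W \cap V_i|$ coincide, and for split cells they differ by at most $s$, giving overall $\bigl|\tfrac{n}{|T|}|W \cap T| - |W|\bigr| \le sQ$. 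In case~(a), $Q \le |\delta_G(W)| \le \phi'|W|$ yields error at most $s\phi'|W| = \epsilon^2 |W| \le \epsilon|W|$. In case~(b), $|\delta_G(W)| \le (s\phi'/10)|W \cap T|$ gives $Q \le (\epsilon^2/10)|W\cap T|$, and combining with $s|W \cap T| \le |W| + sQ$ (which follows from the same counting) yields $s|W \cap T| \le |W|/(1 - \epsilon^2/10)$, from which $sQ \le (\epsilon^2/10)\cdot s|W \cap T| \le \epsilon|W|$ follows by elementary algebra.

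The main technical obstacle is obtaining a partition whose cells have \emph{exactly} the same size, since any slack between $n/|T|$ and the individual $|V_i|$ injects an error term proportional to the number of unsplit cells contained in $W$, which could be as large as $|W|/s$ and would dominate the bound. I expect to handle this either by a careful tree-decomposition that makes all cells have size exactly $s$ (possibly by padding $V(G)$ with $O(s)$ auxiliary dummy vertices attached as leaves so that $n$ becomes a multiple of $s$, which affects the final bounds only by constants), or by isolating the single residual cell and using the observation that any $W$ with $\phi'|W| < 1$ forces $|\delta_G(W)| = 0$ and thus $W \in \{\emptyset, V(G)\}$ by connectedness of $G$, for which the condition is immediate. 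The remaining ingredients (spanning tree, greedy tree partition, arbitrary selection of representatives) are all deterministic and run in $O(m+n)$ time, giving the claimed near-linear runtime.
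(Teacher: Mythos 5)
There is a genuine gap at the step you yourself flag as the main obstacle: a partition of a tree (hence of $G$) into connected cells of \emph{exactly} equal size $s$ does not exist in general, and no greedy strip-off procedure can produce one. Consider a star on $n$ vertices with $s=3$: any connected cell with more than one vertex must contain the center, so at most one cell can have size larger than $1$, and the ``deepest node at which a subtree of size exactly $s$ can be stripped'' is only the root, whose subtree is everything. Padding with dummy leaves does not help, since the obstruction is structural, not a divisibility issue; and the residual-cell fallback handles only a single small leftover piece, not the wholesale failure of the partition. Since your entire error analysis (error $\le sQ$ with $n/|T|$ exactly equal to every cell size) hinges on this exact-equality, the argument as written does not go through. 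The case-(a) and case-(b) algebra, and the bound $Q \le |\delta_G(W)|$ via connectivity of split cells, are fine conditional on that partition.

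The paper's proof resolves exactly this difficulty differently: it uses the Steiner-$t$-decomposition of Feige and Mahdian, which only guarantees connected bags of size \emph{between} $t$ and $2t$ (with one exceptional bag), and then compensates for the nonuniform bag sizes by picking a number of terminals \emph{proportional} to each bag's size, namely $\lfloor |V_i|/(t\epsilon)\rfloor$ per bag rather than one. Because each bag then contributes at least about $1/\epsilon$ terminals, the rounding loses only an $\epsilon$-fraction, and every bag is represented in $T$ within a $(1\pm O(\epsilon))$ multiplicative factor of its size; this is what replaces your exact-equality requirement. The price is the choice $t = \Theta(\epsilon/\phi')$ (rather than your $\epsilon^2/\phi'$), so that $|T| \approx n/(t\epsilon) = \Theta(n\phi'/\epsilon^2)$ still matches the claimed bound. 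If you replace ``one representative per equal-size cell'' with ``$\Theta(|V_i|/(t\epsilon))$ representatives per bag of size in $[t,2t]$'' and redo your split/unsplit accounting with multiplicative rather than exact representation, your argument becomes essentially the paper's proof.
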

In fact, our result will be stronger than the notion defined in~\Cref{def:sample_set} in that it will hold for any $\phi'$-sparse set $W$, not just a connected component or the union of all connected components except one after removing an edge cut.

\begin{proof}
Our proof is similar to the construction of the deterministic sample set in~\cite{fm06} for edge separators. We will use the notion of Steiner-$t$-decomposition introduced in~\cite{fm06}.
\begin{lemma}[Lemma 5.2,~\cite{fm06}]\label{lemma:feige}
Given any graph $G$ and a value $1 \leq t \leq n$, one can compute in near-linear time a partition $V_1, V_2 \ldots V_{\ell}$ of $V(G)$ and a partition of the edge set $E_1, E_2 \ldots E_{\ell}, E'$ such that
\begin{enumerate}
    \item $G[E_i]$ is connected for all $i \in [\ell]$.
    \item $V(G[E_i]) = V_i$ or $V(G[E_i]) =  V_i \cup \{u\}$ for some vertex $u$ for each $i \in [\ell]$.
    \item $|V_i| \leq 2t$ for all $i \in [\ell]$.
    \item $|V_i| \geq t$ for all $i > 1$.
\end{enumerate}
\end{lemma}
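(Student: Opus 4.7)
The plan is to use a post-order traversal on a spanning tree of $G$ to greedily carve the vertex set into chunks of size between $t$ and $2t$. First, I would compute a spanning tree $T$ of $G$ in $O(m+n)$ time (by BFS or DFS) and root it at an arbitrary vertex $r$. The remainder of the construction is a single bottom-up sweep on $T$, so the whole algorithm will run in near-linear time.

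For each vertex $v$ processed in post-order I maintain an accumulator $A_v \subseteq V(G)$ together with a set of pending tree edges $F_v \subseteq E(T)$, under the invariant that $F_v$ spans $A_v \cup \{v\}$ as a connected subtree of $T$. When processing $v$, I iterate over its children $c_1, c_2, \ldots$: for each child $c_j$ whose leftover $A_{c_j}$ is non-empty I merge $A_{c_j}$ into $A_v$, merge $F_{c_j}$ into $F_v$, and add the tree edge $(v, c_j)$ to $F_v$. If after the merge $|A_v| \geq t$, I emit a new piece $V_i := A_v$ with edge set $E_i := F_v$ and reset $A_v, F_v$ to empty. After all children are processed I add $v$ to $A_v$; if now $|A_v| \geq t$, I emit the piece $V_i := A_v$, $E_i := F_v$ and reset. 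Otherwise the pair $(A_v, F_v)$ is passed up to $v$'s parent. Any leftover at $r$ is emitted as the designated piece $V_1$, the only piece allowed to have fewer than $t$ vertices. Finally I place all edges not yet assigned to any $E_i$ (all non-tree edges of $G$, plus any tree edges between pieces) into $E'$.

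To verify correctness I would check the invariant on $F_v$: it is preserved by each merge step because $F_{c_j}$ spans $A_{c_j} \cup \{c_j\}$ by induction and the new tree edge $(v, c_j)$ attaches this subtree to $v$, and it trivially holds after a reset. The invariant immediately yields Conditions~1 and~2 of the lemma: an emission in the middle of the child loop produces a piece whose edge set spans exactly $V_i \cup \{v\}$, so $v$ plays the role of the extra vertex $u$, while an emission after $v$ is itself added to $A_v$ produces a piece whose edge set spans exactly $V_i$. Condition~4 is immediate since emission only fires when $|A_v| \geq t$, apart from the final leftover at $r$ which is labeled $V_1$. For Condition~3 I use the inductive bound $|A_{c_j}| < t$ (otherwise the processing of $c_j$ would have emitted it), so each merge grows $|A_v|$ by strictly less than $t$, and the final addition of $v$ grows it by $1$; thus every emitted piece satisfies $|V_i| \leq 2t - 1 < 2t$.

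The main technical obstacle I anticipate is the edge bookkeeping across mid-loop emissions: the tree edges $(v, c_j)$ for each child whose leftover is folded into the current accumulator must be shipped into the correct $E_i$ (the one they actually help connect), rather than being left stranded; this is precisely why $(v, c_j)$ is added to $F_v$ in the same step as $A_{c_j}$ is added to $A_v$. Once that detail is pinned down, the remaining work is routine. The total running time is $O(m+n)$: the spanning tree computation, the post-order pass (whose total accumulator/edge work is proportional to $|E(T)| = n-1$), and the final classification of the remaining edges into $E'$ are each $O(m+n)$.
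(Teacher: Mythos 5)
The paper never proves this lemma itself; it imports it from Feige--Mahdian~\cite{fm06} and points to Claim~3.3 of~\cite{flspw18} for the near-linear running time. Your spanning-tree carving is essentially that standard argument, and its core is sound: the invariant that $F_v$ spans $A_v\cup\{v\}$ as a connected subtree, the bound $|A_c|<t$ on every leftover passed upward, the resulting size bound $|V_i|\le 2t-2$ for mid-loop emissions and $\le t$ for end-of-vertex emissions, the fact that every tree edge is consumed at most once (so $E_1,\dots,E_\ell,E'$ really partition $E(G)$), and the $O(m+n)$ running time all check out.

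There is one boundary case where condition~2, as literally stated, fails in your construction. The leftover at the root can be the singleton $\{r\}$ with $F_r=\emptyset$ (for instance when the final merge at $r$ triggers a mid-loop emission and $r$ is then added to an empty accumulator, or when every child of $r$ passes up an empty leftover); you then output $V_1=\{r\}$, $E_1=\emptyset$, and $V(G[E_1])=\emptyset$ is neither $V_1$ nor $V_1\cup\{u\}$. The same defect appears for every emitted piece when $t=1$, where your claim that an end-of-vertex emission ``spans exactly $V_i$'' breaks because $F_v$ can be empty; that claim is only safe for $t\ge 2$, where $|A_v|\ge t\ge 2$ forces $F_v\ne\emptyset$. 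The repair is routine: fold a singleton root leftover into the last piece emitted at $r$ (its edge set already spans $V_i\cup\{r\}$ and $|V_i|\le 2t-2$, so condition~3 survives), or, if no piece was emitted at $r$, add $r$ together with the still-unassigned tree edge $(r,c)$ to the piece emitted at some child $c$, whose size is at most $t$; the degenerate case $t=1$ is handled by the same folding, e.g.\ assigning each singleton bag its parent tree edge. You should state such a patch, since the lemma demands condition~2 for every $i\in[\ell]$, even though the downstream use in Lemma~\ref{lemma:sample_set} would tolerate the defect (the first bag is always marked bad there).
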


We remark here that Lemma 5.2 of~\cite{fm06} only claims a polynomial run-time, but it is quite easy to make this work in time $\Tilde{\OO}(m)$. For a formal proof, we refer to the proof of Claim 3.3 from Fomin et al.~\cite{flspw18}, which is a generalization of~\Cref{lemma:feige}.

We will refer to each vertex set $V_i$  a \emph{bag} in the decomposition.
Compute a Steiner-$t$-decomposition with $t = \frac{\epsilon}{100\phi'}$.
Clearly, $\frac{100n\phi'}{\epsilon} \geq \ell \geq \frac{50n\phi'}{\epsilon}$. We now construct a sample set $T$ as follows: arbitrarily pick $\lfloor \frac{|V_i|}{t\epsilon} \rfloor$ vertices from the bag $V_i$ for each $i \in \{2,3 \ldots l\}$. Note that we may assume that $V_i$ has at least $\lfloor \frac{|V_i|}{t\epsilon}| \rfloor$ vertices, for if not, we must have $\phi' \geq \frac{\epsilon^2}{200}$, and in that case, we can simply return the whole vertex set as our sample set.

\begin{lemma}
The set $T$ is a $(4\epsilon, \phi')$ sample set for all $\epsilon \in (0,\frac{1}{100})$. 
\end{lemma}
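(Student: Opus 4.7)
The plan is to bound $|\tfrac{n}{|T|}|W \cap T| - |W||$ by partitioning the bags of the Steiner-$t$-decomposition into three classes relative to $W$ and controlling each class separately. Let $n_i := |V_i|$, $\tau_i := \lfloor n_i/(t\epsilon) \rfloor$, and $\alpha := n/|T|$. I will first verify that $|T| \in [(1-O(\epsilon))\tfrac{n}{t\epsilon},\,\tfrac{n}{t\epsilon}]$ and hence $\alpha = t\epsilon \cdot (1 + O(\epsilon))$, using $n_i \in [t, 2t]$ for $i \geq 2$ together with $n_1 \leq 2t$. This requires $n \gtrsim t/\epsilon$; otherwise $\phi' = \Omega(\epsilon^2)$, and one can just return $T = V(G)$, which trivially satisfies the definition.

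For each bag $V_i$ with $i \geq 2$, classify it relative to $W$ as \emph{internal} ($V_i \subseteq W$), \emph{external} ($V_i \cap W = \emptyset$), or \emph{mixed}. Internal bags contribute $\alpha \tau_i \in [n_i(1 - O(\epsilon)), n_i(1 + O(\epsilon))]$ to $\alpha|W \cap T|$ and exactly $n_i$ to $|W|$, so the total error from internal bags is $O(\epsilon) \sum n_i \leq O(\epsilon |W|)$. External bags contribute $0$ to both sums. Each mixed bag contributes at most $O(n_i) = O(t)$ error per bag, so the total mixed error is $O(t \cdot m_W)$, where $m_W$ is the number of mixed bags. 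Finally $V_1$ contributes up to $2t$ vertices but no terminals, an additional $O(t)$.

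The crux is bounding $m_W$. Because $G[E_i]$ is connected on a vertex set containing $V_i$, any mixed bag must contain at least one edge of $E_i$ that crosses $(W,\overline{W})$; disjointness of the $E_i$'s yields $m_W \leq |\delta_G(W)|$. In the $\phi'$-sparsity case this gives $m_W \leq \phi'|W|$, so the mixed-bag error is $O(t\phi'|W|) = O(\epsilon|W|)$ by $t = \epsilon/(100\phi')$; the $V_1$ term is also $O(\epsilon|W|)$ since any nontrivial $\phi'$-sparse cut in a connected graph satisfies $|W| \geq 1/\phi'$. Summing all contributions gives the claimed $4\epsilon|W|$ bound with room for the constants.

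The main obstacle is the terminal-sparse case, where $|\delta_G(W)|$ is not a priori bounded by $|W|$. The plan is to first establish a one-sided bound $|W \cap T| \leq (1 + O(\epsilon))|W|/(t\epsilon)$ by summing $\tau_i$ over bags intersecting $W$ and absorbing the mixed-bag overflow via the hypothesis $|\delta_G(W)| \leq \tfrac{n\phi'}{10|T|}|W \cap T|$; the resulting self-referential inequality has a coefficient of order $\epsilon$ on the right and can be rearranged to isolate $|W \cap T|$. Plugging this back into the hypothesis yields $|\delta_G(W)| = O(\phi'|W|)$, reducing the analysis to the first case. Throughout, degenerate regimes ($T = V(G)$, $W \in \{\emptyset, V(G)\}$, or zero-cut) are handled by direct inspection.
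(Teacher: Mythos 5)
Your proposal is correct and follows essentially the same route as the paper's proof: the internal/external/mixed classification is the paper's good/bad bag dichotomy (mixed bags are exactly those forced to contain a crossing edge of $E_i$), the mixed-bag count is controlled by $|\delta_G(W)|$ in the same way, and the terminal-sparse case is handled identically by first bounding $|W\cap T|$ in terms of $|W|$ and then deducing $|\delta_G(W)|\leq \phi'|W|$ to reduce to the sparse case. The "self-referential inequality" framing is just a repackaging of the paper's direct count of terminals in bad bags, and the constants work out as you indicate.
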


\begin{proof}

 We first show that every bag is represented proportional to its size in the terminal set.
 Observe that from every bag $V_i$, $i > 1$, we pick $\frac{|V_i|}{t\epsilon} \geq \lfloor\frac{|V_i|}{t\epsilon}\rfloor \geq \frac{|V_i|}{t\epsilon} - 1$ vertices. Since $t \leq |V_i|$ for all $i > 1$, this means that $|V_i \cap T| \in [\frac{|V_i|}{t\epsilon}-\epsilon\frac{|V_i|}{t\epsilon}, \frac{|V_i|}{t\epsilon}]$ for all $i > 1$.  Adding, and noting that $|V_1| \leq 2t$ we get $|T| \in [\frac{(n - 2t)}{t\epsilon}(1 - \epsilon), \frac{n}{t\epsilon}]$. This means that $\frac{n}{|T|}|V_i \cap T| \in [|V_i|{(1 - \epsilon)}, \frac{|V_i|}{1 - \epsilon}\frac{n}{n-2t}]$ for all $i > 1$. We know that $t = \frac{\epsilon}{100 \phi'} \leq \frac{\epsilon}{4} n$, which in turn gives ${n-2t} \geq (1 - \frac{\epsilon}{2})n$, and hence we have $|T| \in [(1 - \frac{\epsilon}{2})(1 - \epsilon)\frac{n}{t\epsilon}, \frac{n}{t\epsilon}]$ or equivalently, $|T| \in [(1 - 2\epsilon)\frac{n}{t\epsilon}, \frac{n}{t\epsilon}]$ and $\frac{n}{|T|}|V_i \cap T| \in [|V_i|{(1 - \epsilon)}, \frac{|V_i|}{1 - 2\epsilon}]$ which implies that $\frac{n}{|T|}|V_i \cap T| \in [|V_i|{(1 - \epsilon)}, |V_i|(1 + 3\epsilon)]$ for small enough $\epsilon$, for all $i > 1$.

 Next, we show that every set with $\phi(W) \leq \phi'$ satisfies the sample condition. Fix such a set $W$.

 Observe that the number of edges cut, $|\delta_G(W)|$ is at most $\phi'|W|$. Mark a bag $V_i$ of the decomposition as ``bad'' if there is a cut edge which belongs to the set $E_i$.
 Additionally, we always mark the bag $V_0$ as bad. Clearly, the number of bad bags is at most $\phi'|W| + 1$. The total number of vertices in these bags is at most $2t\phi'|W| + t\leq \frac{\epsilon|W|}{4}$ (note that $t \leq \frac{\epsilon|W|}{100}$ since $|W| \geq \frac{|\delta_G(W)|}{\phi'} \geq \frac{1}{\phi'}$). Thus at least $(1 - \frac{\epsilon}{4})|W|$ vertices of $W$ are present in good bags. Observe that by the definition of a good bag, if $W$ intersects a good bag $V_i$, it must include the whole bag $V_i$, since there is a way to reach every vertex of the bag using the edges $E_i$, and none of the edges in $E_i$ are cut. 

 Let $G'$ be the set of vertices of the good bags contained in $W$. From the preceeding argument, we know that $|G' \cap W| \geq (1 - \frac{\epsilon}{4})|W|$. Since each good bag $V_i$ has at least $\frac{|V_i|}{t \epsilon}(1 - \epsilon)$ vertices of $T$, we must have $|G' \cap T| \geq (1 - \epsilon)(1 - \frac{\epsilon }{4})\frac{|W|}{t\epsilon} \geq (1 - 2\epsilon)\frac{|W|}{t\epsilon}$.
This gives $|W \cap T| \geq |G' \cap T| \geq (1 - 2\epsilon)\frac{|W|}{t\epsilon}$. Noting that $|T| \in [1-2\epsilon, 1]\frac{n}{t\epsilon}$, this gives that $\frac{n}{|T|}|W \cap T| \geq (1 - 2\epsilon)|W|$.

We now show the proof of the other direction that $W$ does not have too many terminals. As before, the number of bad bags is at most $\phi|W| + 1$ and the total number of vertices in bad bags is at most $ \frac{\epsilon|W|}{4}$. Let $B$ be the union of vertices in all the bad bags. Observe that by construction we must have $|B \cap T| \leq \frac{|B|}{t\epsilon}$. Noting that $|T| \geq (1- 2\epsilon)\frac{n}{t\epsilon}$, this gives $\frac{n}{|T|}|B \cap T| \leq \frac{|B|}{1- 2\epsilon} \leq \frac{\epsilon|W|}{4(1 - 2\epsilon)} \leq \frac{\epsilon|W|}{2}$ for small enough $\epsilon$. By construction, we must have $|G' \cap T| \leq \frac{|W|}{t\epsilon}$ which gives $\frac{n}{|T|}|G' \cap T| \leq \frac{|W|}{1 - 2\epsilon} \leq |W|(1 + 3\epsilon)$. Thus, $\frac{n}{|T|}|W \cap T| \leq \frac{n}{|T|}(|G' \cap T| + |B \cap T|) \leq |W|(1 + 3\epsilon) + \frac{\epsilon|W|}{2} \leq |W|(1 + 4\epsilon)$. This completes the proof.

Next, we show that every set with $\phi_T(W) \leq \frac{\epsilon^2}{200}$ satisfies the sample condition (Note that $\frac{\epsilon^2}{200} > \frac{n}{10|T|}\phi'$). We will do this by showing that $\phi_T(W) \leq \frac{\epsilon^2}{200}$ implies $\phi(W) \leq \phi'$ -  since we already showed that any set with $\phi(W) \leq \phi'$ satisfies the sample condition, we would be done. Again, we have $\delta_G(W) \leq \phi_T(W)|W \cap T| \leq \frac{\epsilon^2}{200}|W \cap T|$, and this upper bounds the number of bad bags. Define $G'$ and $B$ as before. Recall that from bag $V_i$ we pick at most $\frac{|V_i|}{t\epsilon} \leq \frac{2}{\epsilon}$ terminals. This means that the total number of terminals in bad bags $|T \cap B| \leq \frac{2}{\epsilon} \cdot \frac{\epsilon^2}{200}|W \cap T| \leq \epsilon|W \cap T|$. Thus we must have $|W \cap T \cap B| \leq \epsilon|W \cap T|$. This implies that $|G' \cap T| \geq (1 - \epsilon)|W \cap T|$. Since $G'$ is a collection of bags, as in the previous analysis, we must have $|G' \cap T| \in [\frac{|G'|}{t\epsilon}(1 - \epsilon),\frac{|G'|}{t\epsilon}]$. This gives 
$$|W| \geq |G'| \geq t\epsilon|G' \cap T| \geq t\epsilon (1 - \epsilon)|W \cap T| \geq t\epsilon(1 - \epsilon)\frac{200}{\epsilon^2}\delta_G(W) \geq \frac{2(1 - \epsilon)}{\phi'}\delta_G(W) \geq \frac{1}{\phi'}\delta_G(W).$$

But this means that $W$ is $\phi'$-sparse in $G$, and the previous analysis now directly proves the result. \end{proof}


Since $T$ clearly has size $\Theta(\frac{n\phi'}{\epsilon^2})$ and the algorithm runs in near-linear time, we are done.\end{proof}

\subsection{Vertex cuts -- upper bound}
We next show our result for vertex cuts. We will use the following standard Chernoff bounds.
\begin{lemma}[Chernoff Bounds]
Let $X = X_1 + X_2 \ldots X_n$ be the sum of $n$ identically distributed indicator random variables with mean $p$, and suppose that for any subset $S \subseteq [n]$, we have $\Pr[\bigwedge\limits_{i \in S} X_i = 1] \leq p^{|S|}$. Let $E[X] = \mu = np$.

Then we have
\begin{itemize}
\item $\Pr[X > (1 + \delta)\mu] \leq e^{\frac{-\delta^2}{2 + \delta}\mu}$ for all $\delta > 0$. In particular, for $\delta \in (0,1)$, we have $\Pr[X > (1 + \delta)\mu] \leq e^{\frac{-\delta^2}{3}\mu}$ and for $\delta > 1$ we have $\Pr[X > (1 + \delta)\mu] \leq e^{\frac{-\delta}{3}\mu}$.
\item $\Pr[X < (1- \delta)\mu] \leq e^{\frac{-\delta^2 \mu}{3}}$ for $\delta \in (0,1)$.
\end{itemize}

\end{lemma}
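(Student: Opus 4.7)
The plan is to follow the classical Chernoff--Cram\'er moment generating function (MGF) argument, with the only nonstandard step being to replace the use of independence by the hypothesis $\Pr[\bigwedge_{i\in S}X_i=1]\leq p^{|S|}$, which is a one-sided negative-correlation condition rather than full joint independence.

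For the upper tail, I would apply Markov's inequality to $e^{tX}$ with $t>0$ to obtain $\Pr[X\geq(1+\delta)\mu]\leq e^{-t(1+\delta)\mu}\cdot E[e^{tX}]$. The key identity is $e^{tX_i}=1+(e^t-1)X_i$ (valid since $X_i\in\{0,1\}$), which after expanding the product gives $E[e^{tX}]=\sum_{S\subseteq[n]}(e^t-1)^{|S|}\,\Pr[\bigwedge_{i\in S}X_i=1]$. Crucially, the coefficients $(e^t-1)^{|S|}$ are all nonnegative for $t>0$, so the hypothesis can be applied term by term to yield $E[e^{tX}]\leq\sum_{k=0}^{n}\binom{n}{k}(e^t-1)^k p^k=(1+(e^t-1)p)^n\leq e^{(e^t-1)\mu}$ using $1+x\leq e^x$. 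Optimizing at $t=\ln(1+\delta)$ produces the classical bound $\bigl(e^\delta/(1+\delta)^{1+\delta}\bigr)^\mu$, from which the three stated forms $e^{-\delta^2\mu/(2+\delta)}$, $e^{-\delta^2\mu/3}$ for $\delta\in(0,1)$, and $e^{-\delta\mu/3}$ for $\delta>1$ follow by elementary estimates of $\delta-(1+\delta)\ln(1+\delta)$ split into the two regimes.

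The lower tail will be the main obstacle. With $t>0$, the analogous expansion $e^{-tX}=\sum_S(-(1-e^{-t}))^{|S|}\prod_{i\in S}X_i$ has alternating signs, so the hypothesis cannot be applied term by term to upper bound the MGF. My plan is to invoke the Panconesi--Srinivasan strategy: establish $E[e^{-tX}]\leq\prod_{i}E[e^{-tX_i}]$ by a one-variable-at-a-time conditioning argument that only ever applies the hypothesis to subproducts with nonnegative coefficients, thereby avoiding the alternating-sign cancellation problem. Once this factorization is in place, $E[e^{-tX}]\leq(1-p(1-e^{-t}))^n\leq e^{-p(1-e^{-t})\mu}$, and optimizing $t$ produces the stated bound $e^{-\delta^2\mu/3}$ for $\delta\in(0,1)$. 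Combining the two tails proves the lemma.
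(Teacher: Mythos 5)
Your upper-tail argument is correct and complete in outline: for $t>0$ the expansion $E[e^{tX}]=\sum_{S\subseteq[n]}(e^t-1)^{|S|}\Pr[\bigwedge_{i\in S}X_i=1]$ has nonnegative coefficients, the hypothesis applies term by term, and the rest is the classical optimization. (The paper cites this lemma without proof, so there is no in-paper argument to compare against; it is used there for indicators of membership in a uniformly random fixed-size subset.)

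The lower tail is where your plan has a genuine gap, and it is not repairable from the stated hypothesis. The Panconesi--Srinivasan factorization $E[e^{-tX}]\leq\prod_iE[e^{-tX_i}]$ for the lower tail is derived from the \emph{complementary} condition $E[\prod_{i\in S}(1-X_i)]\leq(1-p)^{|S|}$: writing $e^{-tX_i}=e^{-t}+(1-e^{-t})(1-X_i)$, the product expands with nonnegative coefficients in the variables $1-X_i$, and that is the condition one needs. The stated ``1-condition'' does not imply it, and in fact the lower-tail bound is false under the 1-condition alone. Concretely: with probability $\alpha=\frac{1}{n-1}$ set all $X_i=0$, and otherwise set $X_i=1$ exactly on a uniformly random subset of size $n/2$. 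Then $p=(1-\alpha)/2$ and $\Pr[\bigwedge_{i\in S}X_i=1]=(1-\alpha)\prod_{j=0}^{|S|-1}\frac{n/2-j}{n-j}\leq p^{|S|}$, since each factor $\frac{n-2j}{n-j}$ with $j\geq 1$ is at most $\frac{n-2}{n-1}=1-\alpha$; yet $\Pr[X<(1-\delta)\mu]=\Pr[X=0]=\frac{1}{n-1}$, which vastly exceeds $e^{-\delta^2\mu/3}=e^{-\Theta(n)}$. So no conditioning scheme can establish the MGF factorization you want from the given hypothesis. The honest fix is to add the symmetric hypothesis $\Pr[\bigwedge_{i\in S}X_i=0]\leq(1-p)^{|S|}$ (which holds in the paper's application, since sampling without replacement gives negatively associated indicators); with it, your own expansion argument, applied in the variables $1-X_i$, yields $E[e^{-tX}]\leq(1-(1-e^{-t})p)^n$ and the stated bound.
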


We need the following well-known result about set families with small VC-dimension.
\begin{lemma}[Sauer-Shelah Lemma,~\cite{sauer72,shelah72}]
Let $\SS$ be a family of sets on a universe $U$ with VC-dimension $d$, and let $T \subseteq U$ be given. Then the size of the set family $r = \{S \cap T \mid S \in \SS\}$ is at most $\sum_{i = 0}^{d-1} \binom{|T|}{i} \leq d\binom{|T|}{d}$.

\end{lemma}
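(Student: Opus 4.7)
The plan is to prove this via Pajor's shatter function lemma. Passing to the trace family $r = \{S \cap T : S \in \SS\}$, I view $r$ as a family on the universe $T$ and note that it has VC-dimension at most $d$: any subset of $T$ shattered by $r$ is a fortiori shattered by $\SS$. So it suffices to show the following general bound: for any family $\mathcal{F}$ of subsets of a finite universe $U$ whose VC-dimension is strictly less than $d$, one has $|\mathcal{F}| \leq \sum_{i=0}^{d-1}\binom{|U|}{i}$.

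Pajor's lemma asserts that any finite family $\mathcal{F}$ of subsets of $U$ shatters at least $|\mathcal{F}|$ distinct subsets of $U$. Granting this, since $\mathcal{F}$ shatters no subset of size $\geq d$, the number of subsets it shatters is at most $\sum_{i=0}^{d-1}\binom{|U|}{i}$, which then bounds $|\mathcal{F}|$. The final estimate $\sum_{i=0}^{d-1}\binom{|T|}{i} \leq d\binom{|T|}{d}$ is a routine binomial manipulation (each $\binom{|T|}{i}$ with $i < d$ is dominated by $\binom{|T|}{d}$ once $|T| \geq 2d$; for smaller $|T|$ the trivial bound $|\mathcal{F}| \leq 2^{|T|}$ already suffices).

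I would prove Pajor's lemma by induction on $|U|$. The base case $|U|=0$ is immediate: $\mathcal{F}$ is either empty or $\{\emptyset\}$, shattering $0$ or $1$ subset respectively. For the inductive step, fix $x \in U$, let $U' = U \setminus \{x\}$, and split $\mathcal{F}$ into $\mathcal{F}_0 = \{F \in \mathcal{F} : x \notin F\}$ and $\mathcal{F}_1 = \{F \in \mathcal{F} : x \in F\}$. Set $\mathcal{F}'_0 = \mathcal{F}_0$ and $\mathcal{F}'_1 = \{F \setminus \{x\} : F \in \mathcal{F}_1\}$, both viewed as families on $U'$, so that $|\mathcal{F}'_j| = |\mathcal{F}_j|$. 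The induction hypothesis yields, for each $j \in \{0,1\}$, a collection $A_j$ of subsets of $U'$ shattered by $\mathcal{F}'_j$ with $|A_j| \geq |\mathcal{F}_j|$. Any $B \in A_0 \cup A_1$ is still shattered by $\mathcal{F}$ (realising every trace on $B$ from within whichever of $\mathcal{F}_0, \mathcal{F}_1$ applies), and any $B \in A_0 \cap A_1$ gives two distinct subsets $B$ and $B \cup \{x\}$ of $U$ shattered by $\mathcal{F}$ (choosing preimages from both $\mathcal{F}_0$ and $\mathcal{F}_1$ for the $x \notin \cdot$ and $x \in \cdot$ traces, respectively). Hence $\mathcal{F}$ shatters at least $|A_0 \cup A_1| + |A_0 \cap A_1| = |A_0| + |A_1| \geq |\mathcal{F}|$ subsets of $U$, completing the induction.

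The main conceptual step is to replace the direct counting goal by Pajor's stronger statement about shattered subsets, because the latter is exactly the invariant preserved by splitting on a single element. Everything else is bookkeeping; the only mild subtlety is the final binomial inequality, which is routine.
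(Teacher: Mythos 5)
The paper does not prove this lemma at all: it is stated as the classical Sauer--Shelah result with citations to Sauer and Shelah, and is used as a black box. Your proof via Pajor's shattering lemma is a correct, standard, self-contained derivation. The reduction to the trace family on $T$ is right (a subset of $T$ shattered by $r$ is shattered by $\SS$, so the trace family inherits the dimension bound), and your inductive proof of Pajor's lemma --- splitting on an element $x$, applying the induction hypothesis to $\mathcal{F}_0$ and the projection of $\mathcal{F}_1$, and counting $|A_0 \cup A_1| + |A_0 \cap A_1| = |A_0| + |A_1|$ shattered sets --- is the usual argument and is carried out correctly, including the verification that $B \cup \{x\}$ is shattered when $B \in A_0 \cap A_1$. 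You also correctly read the paper's convention (the bound $\sum_{i=0}^{d-1}\binom{|T|}{i}$ is only valid if ``VC-dimension $d$'' is taken to mean that no set of size $d$ is shattered, which is how the paper uses it). One honest caveat you already flag: the paper's trailing inequality $\sum_{i=0}^{d-1}\binom{|T|}{i} \leq d\binom{|T|}{d}$ is literally false when $|T|$ is comparable to $d$ (e.g.\ $|T|=d=3$ gives $7 \not\leq 3$); it holds once $|T| \geq 2d$, which is the regime in which the paper applies it, and the trivial bound $2^{|T|}$ covers the rest for the purpose of bounding $|r|$. That is a defect of the paper's statement, not of your proof.
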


\begin{lemma}\label{lemma:sampleset_vertex}
For every $\epsilon \in (0,1)$ and $\phi' < \frac{1}{2}$, there is an $(\epsilon,\phi')$ vertex sample set for every graph $G$ of size $\min\{\Theta(n\frac{\phi'}{\epsilon^2}\log(\frac{n\phi'}{\epsilon^3})),n\}$. Further, there is a randomized algorithm that computes such a set with constant probability in near-linear time.
\end{lemma}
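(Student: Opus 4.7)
My plan is to form $T$ by sampling each vertex independently with probability $p = C \tfrac{\phi'}{\epsilon^2}\log(\tfrac{n\phi'}{\epsilon^3})$ for a sufficiently large absolute constant $C$, returning $T = V$ in degenerate regimes where $p \geq 1$ or where $n\phi'/\epsilon^3 \leq 1$ (in the latter case no nontrivial $\phi'$-sparse set exists, since vertex sparsity with $\phi' < 1/2$ forces $|W| \geq 1/(2\phi') > n/2$). A Chernoff bound shows $|T| = \Theta(pn)$ with constant probability, matching the claimed size, and lets us treat $\tfrac{n}{|T|}$ as $(1 \pm \OO(\epsilon))/p$ throughout.

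For correctness, I would fix a relevant set $W$ arising as a connected component or union of all-but-one component of $G \setminus C$ for some vertex cut $(L, C, R)$. When $\phi(W) \leq \phi'$, the hypothesis $\phi' < 1/2$ forces $|W| \geq |N_G(W)|/(2\phi')$. A multiplicative Chernoff bound yields
$$ \Pr\bigl[\,||T \cap W| - p|W|| > (\epsilon/2)\, p|W|\,\bigr] \leq 2\exp(-\Omega(\epsilon^2 p|W|)) \leq 2\exp\bigl(-\Omega(|N_G(W)| \cdot \log(n\phi'/\epsilon^3))\bigr), $$
which together with the concentration of $|T|$ implies the desired $|\tfrac{n}{|T|}|T\cap W| - |W|| \leq \epsilon |W|$.

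The hard step is performing a union bound over all relevant $(C,W)$ pairs, grouped by $k' = |N_G(W)|$. A naive enumeration of $\binom{n}{k'} \cdot \OO(n)$ pairs would force the sampling rate to carry a $\log n$ factor rather than the sharper $\log(n\phi'/\epsilon^3)$, so this is where the main technical work lies. My plan is to combine two savings. First, for any fixed cut $C$, only $\OO(n\phi')$ components of $G \setminus C$ can themselves be $\phi'$-sparse, because each consumes $\geq 1/(2\phi')$ vertices from the budget of $\leq n$; this shrinks the per-$C$ count from $\OO(n)$ to $\OO(n\phi')$. Second, I would invoke the Sauer-Shelah lemma (already recorded in the preliminaries) applied to the range space of $\phi'$-sparse boundaries restricted to $T$, exploiting the fact that each $W$ is uniquely determined by $N_G(W)$ together with a single interior vertex. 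Together these reduce the count of distinguishable $W$-patterns with $|N_G(W)| = k'$ from $\binom{n}{k'}\cdot \OO(n)$ to roughly $(n\phi'/\epsilon^3)^{\OO(k')}$, which is absorbed by the Chernoff exponent $\Omega(k' \log(n\phi'/\epsilon^3))$. Summing over $k'$ keeps total failure probability strictly below $1$.

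To extend the argument to the terminal-sparsity clause $\phi_T(W) \leq \tfrac{n}{|T|}\tfrac{\phi'}{10}$, I would argue contrapositively: a $W$ with small $\phi_T(W)$ but $\phi(W) > \phi'$ would force $|T \cap W|$ to significantly exceed $p|W|$, contradicting the concentration already established for the $\phi$-sparse case applied to the same $W$. This reduces the terminal-sparse case to the handled $\phi$-sparse one. Finally, the runtime is immediate since the construction is simply independent Bernoulli sampling, taking $\OO(n)$ time, well within the near-linear budget. The principal obstacle remains the Sauer-Shelah-style enumeration: tightening the count of distinguishable $W$-patterns down to $(n\phi'/\epsilon^3)^{\OO(k')}$ is where one must exploit both the connected component structure of $G \setminus C$ and the lower bound on component size that $\phi'$-sparsity enforces.
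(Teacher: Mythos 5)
Your overall strategy is the same as the paper's: independent random sampling at rate $\Theta(\tfrac{\phi'}{\epsilon^2}\log\tfrac{n\phi'}{\epsilon^3})$, a Chernoff bound whose exponent is $\Omega(k'\log\tfrac{n\phi'}{\epsilon^3})$ via $|W|\geq |N_G(W)|/(2\phi')$, and a Sauer--Shelah/VC-dimension argument (the paper cites the bound of $\OO(k')$ on the VC-dimension of this family from Lemma 3.4 of Feige--Mahdian) to replace the naive $\binom{n}{k'}$ union bound by $(\tfrac{n\phi'}{\epsilon^3})^{\OO(k')}$. So the plan is on target, but two steps you leave open are exactly where the paper has to do real work, and as stated your versions of them do not quite go through.

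First, you cannot apply Sauer--Shelah ``to the range space restricted to $T$'' and then union bound over the resulting patterns: a pattern $r=W\cap T$ is itself a function of the random sample, so there is no fixed event whose probability a Chernoff bound controls. The paper resolves this with the standard symmetrization (ghost-sample) trick: draw a second sample $T'$, show that a violating $W$ would with constant probability also exhibit a discrepancy between $|W\cap T|$ and $|W\cap T'|$, condition on $X=T\cup T'$, and only then apply Sauer--Shelah to the traces $\{W\cap X\}$, for which the deviation events are well-defined given $X$. Second, your contrapositive for the terminal-sparsity clause cannot simply reuse ``the concentration already established for the $\phi$-sparse case applied to the same $W$'': that concentration was proved only for $\phi'$-sparse $W$, since the Chernoff exponent relied on $|W|\geq k'/(2\phi')$, which fails precisely for the non-sparse $W$ you are trying to rule out. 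The paper instead runs a second, separate double-sampling argument over the family of terminal-sparse-but-not-$\phi'$-sparse sets, showing such a set would have many terminals in $T$ yet few in $T'$, which is unlikely after conditioning on $X$. Your auxiliary observation that each fixed $C$ yields only $\OO(n\phi')$ sparse components is not needed once the VC bound is in hand.
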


\begin{proof}
The proof follows the VC-dimension union bound recipe for $\epsilon$-nets and is along the lines of Lemma 2.6 from~\cite{fm06}. We also refer to the proof of Theorem 5.3.4 of~\cite{Har08}. Before we proceed further with the proof, we note that by Lemma 3.4 of~\cite{fm06}, the family of sets $W \subseteq V(G)$ with $|N_G(W)| = k'$ which is either a connected component or the union of all connected components except one after removing $N_G(W)$, has VC-dimension at most $dk'$, for some constant $d$. Also, recall that if a set $W$ is $\phi'$-sparse, we have $\frac{|N_G(W)|}{|W \cup N_G(W)|} \leq \phi'$, but since $\phi' < \frac{1}{2}$, we also have $|N_G(W)| \leq |W|$ so that in fact $|N_G(W)| \leq 2\phi'|W|$.

Pick a random subset $T \subseteq V(G)$ of terminals with $|T| = C \cdot \frac{n \phi'}{\epsilon^2}{\log \frac{n\phi'}{\epsilon^3}}$ for some large constant $C$ that shall be chosen later. If $C \cdot \frac{n \phi'}{\epsilon^2}{\log \frac{n\phi'}{\epsilon^3}} > n$ we can simply return the whole vertex set as the terminal set, so we assume that this does not happen. We will now show that $T$ is a $(20\epsilon, \phi')$-sample set.

We define families of sets $\SS_1^{k'}$ and $\SS_2^{k'}$ for each $k' \in \{1,2\ldots n\}$ as follows. Let $\SS_1^{k'}$ be the set of all sets $W$ which are (a) $\phi'$-sparse, (b) are either a connected component or the union of all connected components except one after removing a vertex cut and (c) satisfy $|N_G(W)| = k'$. Let $\SS_2^{k'}$ be the set of all sets $W$ which are (i) $\frac{\epsilon^2}{10C\log(\frac{n\phi}{\epsilon^3})}$-terminal sparse, (ii) are either a connected component or the union of all connected components except one after removing a vertex cut and (iii) satisfy $|N_G(W)| = k'$.

If $T$ is not a $(20\epsilon, \phi')$ sample set, one of the following two cases must happen. The first is that there exists a $k' \in [n]$ and a set $W \in \SS_1^{k'}$ that violates the sample condition, so that $|\frac{n}{|T|}|W \cap T| - |W|| \geq 20\epsilon|W|$. The second case is that there exists a  $k'\in [n]$ and a set $W \in \SS_2^{k'}$ that violates the sample condition. We will bound the probability of both these ``bad'' events. Henceforth, we fix a value $k'$ and omit the superscripts and simply denote the set families by $\SS_1$ and $\SS_2$. First, let $E$ be the event that there exists a set $W \in \SS_1$ which violates the sample condition: $|\frac{n}{|T|}|W \cap T| - |W|| \geq 20\epsilon|W|$. Now pick another set of random terminals $T'$ with $|T'| = \frac{|T|(1-\epsilon)}{\epsilon}$. Let $F_X$ be the event that for a fixed set $X$ we have  $|X \cap T'| \in (1-\epsilon, 1+ \epsilon) \frac{|X|}{n} |T'|$. Note that $\mathbb{E}[|X \cap T'|] = \frac{|X|}{n} |T'|$, so by a Chernoff bound we must have $\Pr[F_X] \geq 1 - 2e^{-\frac{\epsilon^2}{3}\frac{|X|}{n}|T'|}$ for any fixed set $X$. For any $W \in \SS_1$, since $W$ is $\phi'$-sparse, $\frac{|W|}{n}|T'| \geq {|W|\phi'}{\frac{C(1-\epsilon)}{\epsilon^3}}\log(\frac{n\phi'}{\epsilon^3}) \geq \frac{Ck'}{4\epsilon^3}\log(\frac{n\phi'}{\epsilon^3})$. It follows that we must have $\Pr[F_W] \geq \frac{1}{2}$ for a large enough choice of $C$.

Now let $E'$ be the event that there exists a set $W \in \SS_1$ with $N_G(W) = k'$ which (i) violates the sample condition and (ii) also satisfies $F_W$. Clearly, $\Pr[E'] \geq \Pr[E'|E]\Pr[E]$, and thus $\Pr[E] \leq \frac{\Pr[E']}{\Pr[E'|E]}$. The denominator is at least $\Pr[F_W]$ for any $W$ that violates the sample condition, and thus $\Pr[E] \leq 2\Pr[E']$.

We now bound $\Pr[E']$. For the sake of analysis, imagine that we pick $X = T \cup T'$ together, and then randomly select a subset of $X$ of size $|T|$ and assign it as $T$, and the rest as $T'$. We can write

$$ \Pr[E'] = \sum_{X} \Pr[E'|X]\Pr[X] $$

Henceforth, we fix $X$ and bound $\Pr[E'|X]$. Define $R = \{W \cap X \mid W \in \SS_1\}$. By the Sauer-Shelah Lemma, $|R| \leq dk' \binom{|X|}{dk'}$. Define $E''$ to be the event that there exists an $r \in R$ that satisfies $r \cap T \notin (1-10\epsilon, 1+10\epsilon)\ell\epsilon$ and $r \cap T' \in (1-\epsilon, 1+\epsilon)\ell$ for some $|T| = \frac{Cn\phi'}{\epsilon^3}\log\frac{n\phi'}{\epsilon^3} \geq \ell \geq \frac{Ck'}{4\epsilon^3}\log\frac{n\phi'}{\epsilon^3}$. Observe that $E'$ implies $E''$, since we can set $\ell = {|W|\frac{|T'|}{n}}$ in $E''$. Thus it is enough to bound $\Pr[E''|X]$.

Fix  $r \in R$ and a value of $\ell$. First, observe that we may assume $|r| \geq (1 - 5\epsilon)\ell$. For if not, we must have $|T' \cap r| \leq |r| \leq (1 - 5\epsilon)\ell$ and hence $\Pr[E''] = 0$. 
Now there are two cases. First suppose that $|r| \geq (1+5\epsilon)\ell$. Then $\mu = \mathbb{E}[|T' \cap r|] = \frac{|T'|}{|T| + |T'|}|r| = (1- \epsilon)|r| \geq (1+ 3\epsilon)\ell$ (for small enough $\epsilon$). But in order to have $\Pr[E'']$ non-zero, we must have $|T' \cap r| \in (1-\epsilon, 1+ \epsilon)\ell$. But this means that $|T' \cap r|$ deviates from its mean $\mu$ by a multiplicative $\epsilon$ (for otherwise $E''$ is impossible).  Using the standard Chernoff Bound this probability is at most $e^{-\frac{\epsilon^2}{3}\ell} \leq e^{-{C}{k'\log \frac{n\phi'}{\epsilon^3}}}$, where the last inequality follows from the fact that $\ell \geq \frac{Ck'}{4\epsilon^3}\log \frac{n\phi'}{\epsilon^3}$.

Alternatively, suppose that  $|r| \in (1-5\epsilon, 1+5\epsilon)\ell$. This means that $\mathbb{E}[|T \cap r|] = \frac{|T|}{|T| +|T'|} |r| \in (1-5\epsilon, 1+5\epsilon) \epsilon \ell $. 
But we have $|T \cap r| \notin  (1-10\epsilon, 1+ 10\epsilon) \ell\epsilon$, hence we must have a multiplicative deviation of $2\epsilon$ from the mean $\mu$. 
For a multiplicative deviation of $2\epsilon$ from this mean, using the Chernoff bound, the probability is at most $2e^{\frac{-\epsilon^3}{3} \ell}$. Since $\ell \geq \frac{Ck'}{4\epsilon^3} \log(\frac{n\phi'}{\epsilon^3})$  this probability is at most $2e^{-\frac{C}{12}k'\log{\frac{n\phi'}{\epsilon^3}}}$.

Since the VC-dimension is at most $dk'$, by the Sauer-Shelah Lemma, $|R|$ is at most $dk' {|X| \choose dk'}$. Using the fact that ${|X| \choose dk'} \leq (\frac{|X|e}{dk'})^{dk'}$, we obtain that this is at most $(\frac{n\phi'}{\epsilon^3})^{10dk'}$. Together, after applying a union bound on these sets, and on the $\frac{Cn\phi'}{\epsilon^3}\log{\frac{n\phi'}{\epsilon^3}}$ choices of $\ell$, the failure probability is at most $10\frac{Cn\phi'}{\epsilon^3}\log{\frac{n\phi'}{\epsilon^3}}e^{-\frac{C}{12}k' \log \frac{n\phi'}{\epsilon^3}} (\frac{n\phi'}{\epsilon^3})^{10dk'} < \frac{1}{100k'^2}$ for a large enough choice of $C$.

 For the other case, let $E_2$ be the event that there is a set $U \in \SS_2^{k'}$ which is not $\phi'$-sparse. We will bound the probability of the event $E_2$. Recall that we already showed the sample set guarantee for all $\phi'$-sparse sets, so if $E_2$ does not occur, every such set $U$ will satisfy the sample set guarantee. First, we start by noting that for any such $U$ which is $\frac{\epsilon^2}{10C\log(\frac{n\phi'}{\epsilon^3})}$-terminal sparse, we must have $|(U \cup N_G(U)) \cap T| \geq \frac{10Ck'}{\epsilon^2} \log(\frac{n\phi'}{\epsilon^3})$, which in turn implies $|U \cap T| \geq \frac{9Ck'}{\epsilon^2} \log(\frac{n\phi'}{\epsilon^3})$ since $|N_G(U)| = k'$.
 
 Now suppose that $U$ is not $\phi'$-sparse. Again, we pick a second set of terminals $T'$ as before. After picking the set $T'$, we have $\mathbb{E}[U \cap T'] = \frac{C\phi'|U|}{\epsilon^3} \log(\frac{n\phi'}{\epsilon^3}) \leq \frac{Ck'}{\epsilon^3} \log(\frac{n\phi'}{\epsilon^3})$ where the inequality holds since we assumed that $U$ is not $\phi'$-sparse. This in turn means that with probability at least $\frac{1}{2}$ we have $|U \cap T'| \leq \frac{2Ck'}{\epsilon^3} \log(\frac{n\phi'}{\epsilon^3})$. As in the previous analysis, we have $\Pr[E_2] \leq 2\Pr[E_2']$, where $E_2'$ is the event that there exists such a set $U$  which satisfies $|U \cap T| \geq 9\frac{Ck'}{\epsilon^2} \log(\frac{n\phi'}{\epsilon^3})$ and $|U \cap T'| \leq \frac{2Ck'}{\epsilon^3} \log(\frac{n\phi'}{\epsilon^3})$.

We now bound $\Pr[E_2']$. We analyze this in a similar way, imagine picking $X = T \cup T'$ together and randomly deciding which elements form $T$ and which form $T'$. We will bound $\Pr[E_2'|X]$ for a fixed $X$. Define $R_2 = \{W \cap X \mid W \in \SS_2\}$.  By the Sauer-Shelah Lemma, $|R_2| \leq dk' {|X| \choose dk'}$. Define $E_2''$ to be the event that there exists an $r \in R_2$ that satisfies $r \cap T \geq 9\frac{Ck'}{\epsilon^2}\log(\frac{n\phi'}{\epsilon^3}) $ and $|r \cap T'| \leq \frac{2Ck'}{\epsilon^3} \log(\frac{n\phi'}{\epsilon^3})$. As in the previous case, we have $\Pr[E_2'] \leq \Pr[E_2'']$, so it is enough to bound $\Pr[E_2'']$.

Again, there are two cases. First, suppose that $|r| \geq \frac{5Ck'}{\epsilon^3} \log(\frac{n\phi'}{\epsilon^3})$. Then $\mathbb{E}[|T' \cap r|] = \frac{|T'|}{|X|}|r| = (1-\epsilon)|r| \geq \frac{5Ck'}{2\epsilon^3} \log(\frac{n\phi'}{\epsilon^3})$. It follows that if $E_2''$ occurs then $|T' \cap r|$ deviates from its mean by more than a multiplicative factor of  $\frac{1}{10} \geq \epsilon$, and hence this happens with probability at most $2e^{-\frac{\epsilon^2}{3}\frac{Ck'}{\epsilon^3}\log(\frac{n\phi'}{\epsilon^3})}$.

Otherwise, we must have $|r| \leq \frac{5Ck'}{\epsilon^3} \log(\frac{n\phi'}{\epsilon^3})$. Now we obtain $\mathbb{E}[|T \cap W|] \leq \frac{5Ck'}{\epsilon^2} \log(\frac{n\phi'}{\epsilon^3})$. However, for $E_2''$ to occur we must have $|T \cap r| \geq \frac{9Ck'}{\epsilon^2} \log \frac{n\phi'}{\epsilon^3}$. This implies that $|T \cap r|$ deviates from its expectation by a multiplicative factor $\delta > 1$. By the Chernoff bound, we have that this happens with probability at most $e^{-\frac{\delta\mu}{3}} \leq e^{-\frac{4Ck'}{3\epsilon^2} \log(\frac{n\phi'}{\epsilon^3})}$.

 By the Sauer-Shelah Lemma  $|R_2|$ is at most $dk' {|X| \choose dk'}$. Using the fact that ${|X| \choose dk'} \leq (\frac{|X|e}{dk'})^{dk'}$, we obtain that this is at most $(\frac{n\phi'}{\epsilon^3})^{10dk'}$. Together, after applying a union bound on these sets, the failure probability is at most $10e^{-\frac{C}{3}k' \log \frac{n\phi'}{\epsilon^3}} (\frac{n\phi'}{\epsilon^3})^{10dk'} < \frac{1}{100k'^2}$ for a large enough choice of $C$.
 
Finally, combining both the cases and applying a union bound on all sizes $k'$ between $1$ and $n$, we obtain that the probability of failure is at most $\frac{1}{10}$, and hence $T$ is a $(20\epsilon, \phi')$ sample set with probability at least $\frac{9}{10}$.
\end{proof}

\subsection{Vertex cuts -- lower bound}\label{subsec:sample_set_lower_bound}
A natural question is if the $\log(\frac{n\phi'}{\epsilon^3})$ factor in the size of the sample set is avoidable for vertex cuts, just as in the case of the edge version. We show that this is not the case. The next lemma shows that this result for vertex sample sets is essentially tight using a connection with Ramsey numbers.
\begin{lemma}
There exists a family of graphs $\mathcal{F}$ and parameters $\phi$ such that for any constant $\epsilon \in (0,1)$ any $(\epsilon, \phi)$ vertex sample set $T$ must satisfy $T = \Omega(N\phi \log({N\phi}))$ for an $N$-vertex graph from $\mathcal{F}$.
\end{lemma}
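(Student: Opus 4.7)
The plan is to exhibit, for any $m=N\phi$ sufficiently large, a graph $G$ on $N$ vertices together with the parameter $\phi$ such that no $(\epsilon,\phi)$ vertex sample set has fewer than $\Omega(m\log m)$ terminals. The Ramsey connection comes from fixing an explicit Ramsey graph $R$ on $m$ vertices whose clique and independence numbers are both $O(\log m)$, and using $R$ to index a family of $m$ overlapping ``petals'' $W_1,\ldots,W_m$ in $G$. Each $W_i$ will be a connected subgraph of size $\Theta(1/\phi)$ that becomes a $\phi$-sparse component once a designated constant-sized vertex cut is removed, and the overlap pattern between petals (disjoint versus sharing a small specified number of vertices) is dictated by whether the corresponding pair is a non-edge or an edge of $R$. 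The construction is arranged so that the union of petals indexed by any independent set of $R$ is itself $\phi$-sparse, since petals in an independent set share no identifications and therefore have the same constant-sized neighborhood as any single $W_i$.

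Given a putative sample set $T$ with $t=|T|=o(m\log m)$, the multiplicative condition forces $|W_i\cap T|$ to lie in the integer window $(1\pm\epsilon)\,t/m$, which contains at most $O(t/m)=o(\log m)$ integer values. I colour each $i\in[m]$ by the value of $|W_i\cap T|$; by pigeonhole, some colour class $C$ has $|C|\geq m/o(\log m)$. The Ramsey property of $R$ restricted to $C$ yields an independent set $I\subseteq C$ of size $\Omega(\log m)$ (the induced subgraph of $R$ on $C$ still has clique number $O(\log m)$, so its independence number is at least of order $\log|C|/\log\log m=\Omega(\log m)$). By construction, $W_I=\bigcup_{i\in I}W_i$ is $\phi$-sparse, has size $\Theta(|I|/\phi)$, and satisfies $|W_I\cap T|=|I|\cdot k$ where $k$ is the shared colour value.

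Comparing this against the sample condition for $W_I$ itself -- which demands $|W_I\cap T|\in(1\pm\epsilon)\,|I|\cdot t/m$ -- gives a first-order match, so the quantitative contradiction must come from the global constraint $\sum_{i\in[m]}|W_i\cap T|=|T\cap(\bigcup_iW_i)|+(\text{overcounts from identifications along edges of }R)$. The Ramsey graph $R$ has $\Theta(m^2)$ edges and non-edges, forcing both the overlap-induced overcount and the independence-witnessed undercount to be of order $t$. Averaging the rigid equality $|W_i\cap T|=k$ over $i\in I$ against the freedom outside of $I$ then produces a constant-factor violation of the sample condition for either $W_I$ or for some petal outside of $I$, provided the hidden constant in $o(m\log m)$ is sufficiently small.

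The main obstacle is engineering the petal gadgets so that (i) each individual $W_i$ and each Ramsey-independent union $W_I$ simultaneously admit vertex cuts of size sublinear in their own sizes, certifying $\phi$-sparsity, and (ii) the averaging argument in the previous paragraph yields a constant-sized discrepancy rather than a vanishing one. A natural implementation is to attach each $W_i$ to a single hub vertex through a constant number of connector vertices, route all pairwise identifications dictated by $R$ through a separate shared pool of bridge vertices, and tune the bridge sizes so that edges of $R$ carry precisely the amount of sample mass that, via the Ramsey structure, forces the predicted contradiction. The two Ramsey bounds on $R$ are both used in an essential way: the bound on clique number realises the independent set $I$ used above, while the bound on independence number guarantees that no sample set can simply ``ignore'' a large subset of petals and still satisfy the sample condition on the rest.
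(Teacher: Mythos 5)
There is a genuine gap. Your argument finds an independent set $I$ of the Ramsey graph on which $|W_i\cap T|$ is constant, but as you yourself note, the resulting count $|W_I\cap T|=|I|\cdot k$ is a ``first-order match'' with what the sample condition demands for $W_I$ — so at this point there is no contradiction at all. Everything then hinges on the final averaging/overcounting step, which is never carried out: you assert that comparing the global sum $\sum_i|W_i\cap T|$ against the overlap structure of $R$ ``produces a constant-factor violation \ldots provided the hidden constant \ldots is sufficiently small,'' but no inequality is derived, and it is not clear one can be. Indeed, nothing in your construction rules out a sample set of size $\Theta(N\phi)$ obtained by putting roughly $t/m$ terminals in each petal uniformly; such a set would satisfy every individual and every union condition up to $(1\pm\epsilon)$, and your pigeonhole/Ramsey extraction would simply confirm consistency rather than refute it. The construction itself is also left unspecified at the critical points (the bridge/hub gadgets, and why a union of petals over an independent set of $R$ is itself $\phi$-sparse as a vertex cut — vertex sparsity of a union does not follow from sparsity of the pieces without a concrete neighborhood computation).

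The paper's proof uses Ramsey in a structurally different and decisive way. It takes $H$ to be the vertex--edge incidence graph of $K_n$ (left vertices are edges of $K_n$, right vertices are vertices of $K_n$, plus a clique on the right side implicitly via the lower-bound argument), sets $k=\lfloor\log n/2\rfloor$ and $\phi=2/(k+1)$, and — this is the key move — applies Ramsey not to a fixed auxiliary graph but to the graph $G'$ obtained from $K_n$ by \emph{deleting exactly those edges whose incidence vertices lie in $T$}. Since $n\ge R(k,k)$, $G'$ contains a $k$-clique or a $k$-independent set; in either case the corresponding set $U$ of $\binom{k}{2}$ left vertices is exactly $\phi$-sparse (cut by the $k$ right vertices), and its intersection with $T$ is either empty (clique case) or all of $U$ (independent-set case). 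Both extremes violate the multiplicative condition $|\frac{N}{|T|}|U\cap T|-|U||\le\epsilon|U|$ outright when $|T|\le\frac12\binom{n}{2}$, with no averaging needed. If you want to salvage your approach, you need a mechanism of this all-or-nothing type: the Ramsey dichotomy must be applied to a structure that depends on $T$, so that one of the two outcomes forces $|W\cap T|$ to be wildly disproportionate for some sparse $W$, rather than merely hoping that near-proportional counts accumulate into a contradiction.
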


\begin{proof}
Consider the complete graph $K_n$ on $n$ vertices. Now construct the vertex-edge incidence graph of this graph. Formally, consider the bipartite graph $H$ with vertex set $L \cup R$. The set $R$ contains a vertex $x_w$ for every vertex $w \in V(K_n)$, and the set $L$ contains a vertex $y_{uv}$ all pairs of vertices $u,v \in V(K_n)$. For every vertex $y_{uv}$, add the edges $(x_u, y_{uv})$ and $(x_v, y_{uv})$ to $E(H)$. Let $k = \lfloor \frac{\log n}{2} \rfloor$, $\phi = \frac{k}{{k \choose 2} + k} = \frac{2}{k+1}$ and choose $s = {k \choose 2}$. We will let $R(k,k)$ denote the (Ramsey) number such that any graph on $R(k,k)$ vertices must have either a clique of size $k $ or an independent set of size $k$. 
It is well known that $R(k,k) \leq 4^k$ while a recent breakthrough lowered this upper bound to $(4-\epsilon)^k$ for some absolute constant $\epsilon > 0$~\cite{CGMS23}. 

Let $T$ be a $(\epsilon, \phi)$ sample set for $H$. Notice that $H$ has $N = {n \choose 2} + n$ vertices. We will show that $|T| = t \geq \frac{1}{2}{{n \choose 2}}$. Clearly $\frac{1}{2}{n \choose 2} =  \Omega(N\phi \log (N\phi))$ since $\phi = \OO(\frac{1}{\log n})$.
For the sake of contradiction suppose that $t \leq \frac{1}{2}{n \choose 2}$.  In $K_n$, $T$ corresponds to a set of vertices and edges. Let $G'$ be the resulting graph after removing from $K_n$ the ``edges'' of $T$. Since $G'$ has $n \geq 4^k \geq R(k,k)$ vertices, it follows that there is a set of $k$ vertices in $G'$ that form either an independent set of size $k$ or a clique of size $k$. In $H$, this corresponds to a set $Y \subseteq R$ of $k$ right vertices. Observe that after removing $Y$, the ${k \choose 2}$ left vertices (call them $U$) corresponding to the edges of $K_n$ between vertices of $Y$, are disconnected from the rest of the graph. Also note that $H \setminus (U \cup Y)$ is connected. Thus $U$ is the union of all connected components except one after removing $Y$. Further, $U$ has sparsity $\frac{k}{{k \choose 2} + k} = \phi$. It follows that $U$ must satisfy $|\frac{N}{|T|}{|U \cap T|} - |U| | \leq \epsilon|U|$. 

There are two cases. First, suppose that $Y$ corresponds to a clique in $G'$. This means that no vertex of $U$ is in $T$. But since $|\frac{N}{|T|}{|U \cap T|} - |U| | \leq \epsilon|U|$ this cannot happen for any $T$ with $|T| > 0$ and hence we obtain a contradiction. So now suppose that $Y$ corresponds to an independent set in $G'$. But this means every vertex of $U$ is in $T$, and hence $|U \cap T| = |U|$. Since $|T| \leq \frac{1}{2}{n \choose 2} \leq \frac{N}{2}$, $U$ cannot satisfy $|\frac{N}{|T|}{|U \cap T|} - |U| | \leq \epsilon|U|$, and we have a contradiction again.    
\end{proof}

\section{\SSE~and \VSC~via LP rounding}\label{sec:sselp}

\label{section:sse}

In this section, we will use an extension of an existing LP-based algorithm combined with sample sets for sparse cuts in order to obtain an $\OO(\log k)$-approximation algorithm for small set expansion and an $\OO(\log k + \log \log n\phi)$-approximation algorithm for \VSC.

\subsection{\SSE}

We start by defining another problem which we call \STE.
\defparproblem{\STE}{Graph $G$ and set of terminals $T \subseteq V(G)$}{$\phi,s$}{Find a set of vertices $S'$ with $|S' \cap T| \leq s$, and $|\delta_G(S')| \leq \phi|S' \cap T|$, or report that no such set exists.}

\begin{definition}[$(\alpha,\beta)$-approximation]
We say that an algorithm is an $(\alpha, \beta)$-approximation for $\STE$ if given parameters $\phi,s$ it outputs a set $S' \subseteq V(G)$ with  $|S' \cap T| \leq \beta s$ and $|\delta_G(S')| \leq \alpha \phi |S' \cap T|$, or outputs that there is no set $S \subseteq V(G)$ with $|S \cap T| \leq s$ and $|\delta_G(S)| \leq \phi |S \cap T|$.

\end{definition}
When $T = V(G)$ we get the \SSE{} problem. The following result from~\cite{hc16} gave an $\OO(\log s)$-approximation for \SSE.

\begin{theorem}[see~\cite{hc16}]\label{thm:logs}
\SSE{} admits an $(\OO(\log s), \OO(1))$-approximation.
\end{theorem}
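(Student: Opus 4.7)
The plan is to solve a linear programming relaxation based on a spreading semi-metric and then round by standard region growing, in the style of Leighton-Rao~\cite{lr99}, Garg-Vazirani-Yannakakis~\cite{gvy96}, and Even-Naor-Rao-Schieber~\cite{enrs99}. I would introduce distance variables $d(u,v)\in [0,1]$ for every pair $\{u,v\}\subseteq V(G)$, forced to satisfy the triangle inequality, together with a spreading constraint
\[
\sum_{v \in V(G)} d(u,v) \;\geq\; n - Cs \qquad \text{for every } u \in V(G),
\]
where $C$ is a sufficiently large absolute constant. This constraint immediately caps the ball $B_u(1/2):=\{v: d(u,v)\leq 1/2\}$ at size $\OO(s)$ for every root $u$. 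The LP objective is $\min \sum_{\{u,v\}\in E(G)} d(u,v)$. For any feasible solution $S$ with $|S|\leq s$ and $|\delta_G(S)|\leq \phi|S|$, setting $d(u,v)=1$ if exactly one of $u,v$ lies in $S$ and $0$ otherwise is LP-feasible (for $C$ chosen appropriately) and attains objective value at most $\phi|S|$, so the LP is a genuine relaxation with value at most $\phi s$.

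The rounding is the classical region-growing argument restricted to radius $\rho \in (0,1/2)$. Picking a root $u$, weighted so that the LP mass of its incident edges is substantial, and integrating $|\delta_G(B_u(\rho))|$ against $\rho$ over a $\Theta(1)$-length interval yields the standard ``good radius'' conclusion: there exists $\rho^\ast$ with
\[
|\delta_G(B_u(\rho^\ast))| \;\leq\; \OO(\log |B_u(1/2)|) \cdot (\text{LP length of edges inside } B_u(1/2)).
\]
Because $|B_u(1/2)| \leq \OO(s)$ by spreading, the logarithm becomes $\OO(\log s)$ rather than the $\OO(\log n)$ one gets from vanilla region growing; this is the key gain. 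Outputting $S':= B_u(\rho^\ast)$ already gives $|S'|=\OO(s)$, and the displayed bound translates into sparsity $\OO(\phi \log s)$ provided the LP mass captured by the ball is a constant fraction of the total LP value.

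The main obstacle, and the main technical work, is the last clause: ensuring that the root $u$ produces a ball capturing a constant fraction of the LP mass corresponding to the optimum $S$. If $u$ lies far from every cluster carrying significant LP weight, $B_u(\rho^\ast)$ may collect very little cut and the sparsity bound degrades. The standard fix is to weight the random choice of $u$ by its contribution to the LP objective, or to iterate: if the returned ball fails the desired sparsity condition, delete it from the graph, reoptimize the LP on the residual instance, and recurse. Since each successful iteration removes an $\Omega(1)$ fraction of the residual LP mass, polynomially many rounds suffice, after which either some ball provides the promised $(\OO(\log s), \OO(1))$-approximation or the algorithm certifies that no $\phi$-sparse set of size at most $s$ exists. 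The remaining bookkeeping (lower bounding the denominator, handling the trivial case of ball size $1$) is standard and has appeared in essentially this form in prior region-growing analyses.
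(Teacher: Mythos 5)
There is a genuine gap, and it starts with the LP itself: the spreading constraint $\sum_{v} d(u,v) \geq n - Cs$ imposed \emph{for every} $u \in V(G)$ is not satisfied by the intended integral solution. For $u \notin S$ the cut metric gives $\sum_v d(u,v) = |S| \leq s$, which is far below $n - Cs$ whenever $s = o(n)$ — precisely the interesting regime. So the program is not a relaxation and its optimum can exceed $\phi s$ arbitrarily (indeed it is typically infeasible together with the objective you want). This is exactly why the known $\OO(\log s)$-approximations (Bansal et al.\ and Hasan--Chwa, which is the route the paper takes in \Cref{thm:sse}) introduce fractional membership variables $y_v$ and write the spreading constraint in the modulated form $\sum_u \min\{d(u,v), y_v\} \geq (n-s)y_v$, so that it is vacuous for vertices with $y_v = 0$ and still caps $|Ball_r(v)|$ at $\OO(s)$ for the relevant centers. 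Without the $y_v$'s there is no valid way to encode ``$S$ is small'' as a constraint at every vertex.

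Even granting a corrected LP, the rounding sketch does not close the argument. Region growing bounds $|\delta_G(B_u(\rho^*))|$ against the LP \emph{edge mass} inside the ball, but the guarantee you need is $|\delta_G(S')| \leq \OO(\phi\log s)\,|S'|$, i.e.\ the cut must be charged against the \emph{number of vertices} of the output, and the output must end up with $\Omega(s)$ vertices (a ball capturing a constant fraction of the LP mass can still contain very few vertices, in which case its sparsity is unbounded). The ``delete and reoptimize'' loop does not resolve this: one must accumulate balls until $\Omega(s)$ vertices are collected while simultaneously certifying that each accepted ball contributes cut at most proportional to its vertex count. The paper's proof (following \cite{hc16,minmax}) does this with a potential $f(C) = x(C) - \frac{\ell}{10s}|C| - \frac{\ell}{\OO(LP\log\ell)}|\delta_G(C)|$, accepting only clusters with $f(C) > 0$ and iterating until $|Y| \geq s$ or $x(Y) \geq \ell/4$; moreover its $\OO(\log s)$ factor comes not from a good-radius integration but from a CKR-style random-permutation carving in which the spreading constraint limits to $\OO(s)$ the number of centers that can separate any fixed edge, making the harmonic sum $\OO(\log s)$. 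Your sketch would need both of these ingredients (or substitutes for them) spelled out before it constitutes a proof.
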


As described in the introduction, our goal will be to use an ``$\OO(\log s)$-type''-approximation algorithm for the terminal version, so we need an algorithm for \STE{}. The ideas remain similar to \SSE. We follow the ideas of~\cite{hc16} with minor changes to obtain this. For completeness' sake, we give a simple and concise algorithm for \STE\; based on ideas similar to that in~\cite{hc16}.

This is deferred to the Appendix (\Cref{thm:sse}). In fact,~\Cref{thm:sse} is a stronger result, and the approximation algorithm for \STE{} will follow directly as a corollary.

\begin{lemma}[Follows from~\Cref{thm:sse}]\label{cor:sse}
\STE{} admits an \\$(\OO(\log s), \OO(1))$-approximation.
\end{lemma}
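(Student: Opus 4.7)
The plan is to prove Corollary~\ref{cor:sse} by adapting the LP-rounding framework of Hasan and Chwa~\cite{hc16} for \SSE{} to the terminal-weighted variant \STE. The key conceptual shift is that every appearance of ``vertex count'' in the \SSE{} analysis should be replaced by ``terminal count'', so that the $\log s$ factor arises from the number of terminals we are allowed to include, not from $n$ or $|T|$.

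First I would write an LP relaxation for \STE{} whose variables $d_{uv}$ form a semi-metric on $V(G)$, with objective $\min \sum_{(u,v) \in E(G)} d_{uv}$ and a spreading-type constraint requiring that, after choosing a root $r \in V(G)$, we have $\sum_{t \in T} d_{rt} \ge |T| - s$. This captures the intuition that an integral cut with at most $s$ terminals on the inside must place at least $|T|-s$ terminals at distance $1$ from the root. The LP optimum lower-bounds $\phi s$ for any \STE{} solution of sparsity $\phi$ and terminal count $s$, matching the \SSE{} template.

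Next I would round using a ball-growing procedure: pick a root $r$ (deterministically as the best LP centre, or uniformly at random), sample a radius $\rho$ from a carefully tuned density on $[0, 1/2]$, and output $S' = V(G) \setminus B(r,\rho)$ where $B(r,\rho) = \{v : d_{rv} \le \rho\}$. The standard region-growing analysis should give $\Pr[e \in \delta(S')] = \OO(\log s) \cdot d_e / \mathrm{LP}$ for each edge $e$, provided the density is normalized so that the quantity doubling at each scale is the \emph{terminal mass} enclosed in the ball, rather than its vertex volume. Markov's inequality combined with conditioning on the size of $S' \cap T$ then yields the claimed $(\OO(\log s), \OO(1))$ bicriteria bound.

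The main obstacle I anticipate is verifying that the region-growing potential, re-indexed by terminals, still gives a bound of $\OO(\log s)$ rather than $\OO(\log |T|)$ or $\OO(\log n)$. The $\log s$ saving arises because the ball can be cut off once it contains at most $s$ terminals, so the relevant scales are $1, 2, 4, \ldots, s$, giving $\OO(\log s)$ overall. Carrying this out cleanly in the weighted (terminal) setting is exactly what the more general Theorem~\ref{thm:sse} in the appendix is designed to accomplish, and Corollary~\ref{cor:sse} then follows as the special case in which the vertex weights are chosen to be the indicator of $T$.
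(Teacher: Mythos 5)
Your closing paragraph captures the paper's actual proof of this lemma exactly: \Cref{thm:sse} is the terminal-weighted generalization, and \Cref{cor:sse} is obtained by instantiating the weights $x$ as the indicator of $T$ (so $K = |T|$, $\ell = |S \cap T|$, and the guarantees $x(Y) \geq \ell/10$, $x(Y) \leq 3\ell$, $|\delta_G(Y)| \leq \OO(\log \ell)\, opt$ translate directly into the $(\OO(\log s), \OO(1))$ bicriteria bound). That part of your proposal is correct and is all the paper does for this lemma.

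However, the rounding scheme you sketch for the underlying theorem is not the paper's, and as described it has a genuine gap. You propose a single-root LP with one spreading constraint $\sum_{t \in T} d_{rt} \geq |T| - s$ and a region-growing cut $S' = V(G) \setminus B(r,\rho)$. First, the orientation is backwards: that constraint forces terminals to be \emph{far} from $r$, so the small-terminal side is the ball itself, not its complement. More seriously, single-root region growing bounds $|\delta(B(r,\rho))|$ against the \emph{LP edge volume} inside the ball, not against the \emph{terminal count} $|B(r,\rho) \cap T|$; nothing in your LP forces any terminal mass to lie near the root, so the chosen ball may contain almost no terminals and the terminal-sparsity guarantee fails (and seeding the potential with $LP/s$ to get the ratio $\log s$ does not repair this, since LP volume and terminal mass are unrelated quantities). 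This conversion from ``cut versus volume'' to ``cut versus terminal count'' is precisely the hard step. The paper's \Cref{thm:sse} handles it with per-vertex spreading constraints for both vertex count and terminal weight (which bound $|Ball_r(v)|$ by $\OO(s)$ and its terminal mass by $\OO(\ell)$), a random-threshold, random-permutation clustering in the style of \cite{minmax} and \cite{hc16} (the $\OO(\log \ell)$ factor comes from a harmonic sum over the $\OO(\ell)$ competing centers that can cut a fixed edge, not from doubling scales of a single ball), and an expectation argument on $f(C) = x(C) - \frac{\ell}{10s}|C| - \frac{\ell}{200\,LP\log\ell}|\delta_G(C)|$ that extracts a cluster whose terminal mass simultaneously pays for its size and its cut, iterated until enough terminal mass is collected. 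Your sketch would need all of these ingredients to go through.
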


Henceforth, we will assume that there indeed exists a set of vertices $S$  with $|S \cap T| =  s$ and $\delta_G(S) =  \phi |S|$ (if not, it is easy to see that our algorithm will detect that this is the case). Let us denote by $k = \phi s$ the cut-size of $S$.

We now combine the notion of sample sets with~\Cref{cor:sse} to obtain an $\OO(\log k)$-approximation for \SSE{} in polynomial time.

\begin{proof}[Proof of~\Cref{thm:logksse}]

We proceed as follows. Choose an arbitrary constant $\epsilon \in (0,\frac{1}{100})$ and construct an $(\epsilon, \phi')$ sample set $T$ of size $\min\{n,\frac{dn\phi'}{\epsilon^2}\}$ where $d$ is the constant hidden in the big-$\Theta$ notation of \Cref{lemma:sample_set} and $\phi' = \frac{ck \log k}{s} = c\phi \log k$, for some large enough constant $c$ which we will choose later. Henceforth we assume that $\frac{dn\phi'}{\epsilon^2} < n$ since otherwise we would have $s = \OO({k\log k})$ and we can simply set the terminal set to be the whole vertex set to obtain an $\OO(\log s) = \OO(\log k)$-approximation algorithm from~\Cref{thm:logs}.

Now recall that we are promised that there exists a set $S$ of size $s$ with expansion $\phi$. The sample set guarantee means that $\frac{n}{|T|}|S \cap T| = \Theta(|S|)$. This in turn implies that $|S \cap T| = \Theta(|S|\frac{|T|}{n})$ which gives $|S \cap T| = \Theta(|S|{dc\phi\log k}) = \Theta(dck\log k) =  s'$ (say).

Run the algorithm $A$ for \STE{} from~\Cref{thm:logs} with the parameter $s'$.

Notice that $|S \cap T| = \Theta(dck\log k)$, hence $\phi_T(S) \leq \OO(\frac{1}{dc\log k})$. Then the algorithm must return a set $U$ that is $\OO(\log s')\phi_T(S) = \OO(\log k \cdot \frac{1}{dc\log k}) = \OO(\frac{1}{dc})$-terminal sparse, and $|U \cap T| \leq \OO(s')$. Choose $c$ large enough such that $\phi_T(U) \leq \frac{\epsilon^2}{100d} \leq \frac{n}{10|T|}\phi'$.

The sample set guarantee from \Cref{def:sample_set} implies that $|U| \in[1-\epsilon, 1+ \epsilon]\frac{n}{|T|} |U \cap T|$.  Since $\frac{n}{|T|} = \OO(\frac{s}{k \log k})$ and $|U \cap T| \leq \OO(s') = \OO(k \log k)$, we must have $|U| \leq \OO(s)$. Also, $U$ is $(\frac{\epsilon^2}{100d})$-terminal sparse, so by the sample set guarantee, $U$ must be $\OO(\frac{\epsilon^2}{100d}\frac{|T|}{n})$-sparse in $G$. But $\frac{|T|}{n} = \OO(\phi') = \OO(d\phi \log k)$ and hence it must be the case that $U$ is $\OO(\phi \log k)$-sparse in $G$.\end{proof}

\subsection{\VSC}
In this section we obtain an $\OO(\log k + \log \log n\phi)$-approximation for \VSC, proving~\Cref{thm:logkvertex}. Again, we start with the following theorem which approximates terminal sparsity. The proof again uses techniques from~\cite{hc16} and~\cite{minmax}, except that now there is no small set guarantee: the result is only for \VSC. We remark that our algorithm is also similar to the algorithm of~\cite{lee17}.
\begin{restatable}{theorem}{thmlpvertex}\label{thm:lpvertex}
Given a graph $G$ with a set of terminals $T$, suppose that there is a vertex cut $(L, C, R)$ with $|R \cap T| \geq |L \cap T| = s$, $|C| = k$ and terminal sparsity $\phi = \frac{k}{s + |C \cap T|}$. Further suppose that $s \geq k \log s$ (so that $\phi \leq \OO(\frac{1}{\log s})$). Then there is a polynomial time algorithm that finds a cut $(L',C',R')$ with terminal sparsity $\phi' = \frac{|C'|}{\min\{|(L' \cup C') \cap T|, |(R' \cup C') \cap T|\}} \leq \OO(\phi \log s)$. 
\end{restatable}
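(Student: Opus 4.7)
The plan is to follow the LP-rounding template of \cite{hc16} and \cite{minmax} (used in \Cref{thm:sse} for the edge terminal version) but adapted to vertex cuts. I would set up an LP whose variables are $y_v \in [0,1]$ for each vertex $v$ (a fractional indicator of $v \in C$) together with a nonnegative semi-metric $d$ on $V(G)$. The key constraints are: (i) for every edge $(u,v) \in E(G)$, $d_{uv} \le y_u + y_v$ (so that if neither endpoint is in $C$, they must be at distance zero); (ii) triangle inequalities for $d$; (iii) a terminal spreading constraint of the form $\sum_{t \in T} d(t_0,t) \ge \Omega(s)$ (or equivalently, an $\ell_1$-style constraint averaging over roots) that prevents the LP from collapsing all terminal mass near a single root; and (iv) a normalization of the objective so that minimizing $\sum_v y_v$ corresponds to terminal sparsity. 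Plugging the integral cut $(L,C,R)$ into this LP shows $\mathrm{OPT}_{\mathrm{LP}} \le O(\phi)$.

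After solving the LP in polynomial time, the rounding is a region-growing procedure. Pick a root terminal $t_0$ (iterating over all choices, or by a probabilistic argument), draw a random radius $r \in [0, \tfrac12]$ from an exponential-type distribution, and define $L' = \{v : d(v,t_0) < r\}$, $C' = \{v : d(v,t_0) = r\}$, and $R' = V(G) \setminus (L' \cup C')$. This is a valid vertex cut because the constraint $d_{uv} \le y_u + y_v$ forces any edge crossing the boundary to touch a vertex in $C'$. The terminal spreading constraint guarantees that for some root $t_0$, the event ``the ball has terminal mass $\Theta(s)$ on each side'' happens with constant probability, yielding the required balance $|R' \cap T| \ge |L' \cap T| = \Theta(s)$.

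The main obstacle is the region-growing estimate in the vertex-cut setting. For edge cuts one charges a cut edge $(u,v)$ to $d_{uv}$; here one instead must charge the event $v \in C'$ directly to $y_v$. The correct way to do this is to choose the radius density proportional to $e^{-cr}/(\text{ball terminal volume})$ so that, by a logarithmic-derivative identity, the expected ``$y$-mass on the boundary'' is bounded by $O(\log s)$ times the total $y$-mass inside the ball divided by the terminal mass inside the ball. This is exactly the vertex analogue of the Garg--Vazirani--Yannakakis region-growing lemma, and was used in similar form by Lee \cite{lee17} for $k$-vertex separator. The $\log s$ factor comes from integrating the radius over the interval $[0,\tfrac12]$ where the terminal volume grows at most polynomially.

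Combining these pieces, for some root $t_0$ and radius $r$ the produced cut $(L',C',R')$ simultaneously satisfies $\min\{|(L'\cup C')\cap T|,|(R'\cup C')\cap T|\} = \Omega(s)$ and $|C'| \le O(\log s)\cdot \mathrm{OPT}_{\mathrm{LP}} \cdot s = O(\phi s \log s)$, giving terminal sparsity $O(\phi \log s)$ as claimed. The assumption $s \ge k \log s$ (equivalently $\phi = O(1/\log s)$) is used to ensure that the fractional $y$-mass is small enough that the exponential radius distribution does not degenerate, so that the bound on $|C'|$ from region growing is meaningful. The detailed LP and rounding appear in \Cref{sec:appendix}, mirroring the proof of \Cref{thm:sse} with the edge/vertex modifications described above.
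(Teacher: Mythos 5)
Your high-level plan (LP relaxation with fractional cut indicators, a metric, spreading constraints, then randomized region growing charged to the $y$-mass) is the same family of argument as the paper's, but the specific rounding you propose --- growing a \emph{single} ball around one root $t_0$ with a GVY-type radius distribution --- has two genuine gaps. First, balance: nothing in your argument guarantees that a single ball of radius at most $1/2$ around any root contains $\Omega(s)$ terminals while also leaving $\Omega(s)$ terminals outside. An average-type spreading constraint $\sum_t d(t_0,t)\ge\Omega(s)$ is compatible with LP solutions in which the fractional ``$L$-side'' is split into many far-apart fragments each containing only $o(s)$ terminals, so every single ball is too small, and it is also compatible with a ball that swallows almost all of $T$. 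The paper's rounding avoids this precisely by growing balls around \emph{all} terminals with $y_v\in[\delta,2\delta]$ in a random order (a CKR-style clustering), taking the union of clusters, and then explicitly splitting the resulting clusters into two groups each containing a constant fraction of the captured terminals; the per-vertex spreading constraint $\sum_{u\in T}\min\{d(u,v),y_v\}\ge(t-s)y_v$ is what caps each individual ball at $O(s)$ terminals.

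Second, the $O(\log s)$ factor itself. In your region-growing estimate the logarithm is the log of a volume ratio, and with a single ball whose terminal volume can range from the seed value up to $|T|$, you would naturally get $O(\log|T|)$ or $O(\log n)$, not $O(\log s)$; your remark that ``the terminal volume grows at most polynomially'' does not explain why the ratio is $s^{O(1)}$. In the paper's analysis the $\log s$ arises differently: the ball-volume observation shows at most $O(s)$ terminals can ever be responsible for cutting a fixed vertex $u$, and the random-permutation argument then yields a harmonic sum $\sum_{i\le O(s)}1/i=O(\log s)$ charged to $b_u$. To repair your proof you would need either to import that per-ball terminal bound (which requires the stronger per-vertex spreading constraints) or to replace the single-ball rounding with the iterative/multi-cluster accumulation the paper uses. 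As written, the proposal does not establish either the balance of the output cut or the claimed $\log s$ dependence.
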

 We postpone the proof of this theorem to the appendix. We are now ready to prove ~\Cref{thm:logkvertex}.

\begin{proof}[Proof of~\Cref{thm:logkvertex}]

We proceed as follows. Suppose there exists a vertex cut $(L,C,R)$ with $|C| = k$, and $|L \cup C| = \frac{k}{\phi} = s + k$, so that $|L| = s$. Let $\phi' = \lambda\phi(\log k + \log \log {n\phi})$ where $\lambda$ is a large constant that will be chosen later. Choose $\epsilon = \frac{1}{100}$ and construct a $(\epsilon, \phi')$ vertex sample set $T$ of size $\min\{n,\frac{Dn\phi'}{\epsilon^2}\log(\frac{n\phi'}{\epsilon^3})\}$ where $\phi' = \lambda\phi(\log k + \log \log {n\phi})$ and $D$ is the constant hidden in the big-Theta notation of~\Cref{lemma:sampleset_vertex}. Henceforth, we assume $\frac{Dn\phi'}{\epsilon^2}\log(\frac{n\phi'}{\epsilon^3}) < n$, for if not, then it follows that $s = \OO(k(\log k + \log \log {n\phi})\log({n\phi'}))$ - then applying the result of~\Cref{thm:lpvertex} with the terminal set $T = V(G)$ clearly gives us an $\OO(\log k + \log \log {n\phi})$-approximation.

Observe that by the guarantee of the sample sets, we must have $|L \cap T| = \Theta(\lambda Dk(\log k + \log \log {n\phi})\log {n\phi'})$ (we omit the dependence on $\epsilon$ henceforth). It follows that $L$ is $\OO(\frac{1}{\lambda D(\log k + \log \log {n\phi})\log n\phi'})$-terminal sparse. Now we run the algorithm from~\Cref{thm:lpvertex} to obtain another vertex cut $(L',C',R')$ with $|R' \cap T| \geq |L' \cap T|$ which is $\psi$-terminal sparse where $$\psi = \OO(\frac{1}{\lambda D(\log k + \log \log {n\phi})\log n\phi'} \log |L \cap T|) = \OO(\frac{1}{\lambda D \log n\phi'}).$$ Choose $\lambda$ large enough so that $\psi \leq \frac{\epsilon^2}{10D \log \frac{n\phi'}{\epsilon^3}} \leq \frac{n}{10|T|}\phi'$. Let $s' = |L' \cap T|$ and $s'' = \frac{n}{|T|}s'$. 

There are two cases. Fix a constant $\alpha > 1$ which will be chosen later. First, suppose no component in $G \setminus C'$ has size $>n - \frac{s''}{\alpha}$. Then it is clear that we can partition $V(G) \setminus C'$ into $A \cup B$ such that both $A$ and $B$ have $\Omega(s'')$ vertices.

Now suppose there is indeed a component $X$ which has size $> n - \frac{s''}{\alpha}$. We will obtain a contradiction. Clearly, it must be the case that $V(G) \setminus (C' \cup X)$ has size $< \frac{s''}{\alpha}$. But then $V(G) \setminus (C' \cup X)$ is the set of all but one component after removing a vertex cut. Also clearly, either $L \subseteq V(G) \setminus (C' \cup X)$ or $R \subseteq V(G) \setminus (C' \cup X)$, and hence $V(G) \setminus (C' \cup X)$ contains at least $s'$ terminals. Then since $V(G) \setminus (C' \cup X)$ is $\frac{n}{10|T|}\phi'$-terminal sparse, $V(G) \setminus (C' \cup X)$ must satisfy the sample condition. But this means $|V(G) \setminus (C' \cup X) \cap T| \leq 2 \frac{|T|}{n}\frac{s''}{\alpha} \leq \frac{2s'}{\alpha}$, which is a contradiction for $\alpha > 2$.

Thus we must in fact have that every component in $G \setminus C'$ has size $\leq n - \frac{s''}{\alpha}$, and hence we obtain a partition of $V(G) \setminus C'$ into $A \cup B$ such that both $A$ and $B$ have $\Omega(s'')$ vertices. Since $(L',C',R')$ was $\frac{n}{10|T|} \phi'$-terminal sparse, we must have $|C'| \leq \frac{n}{10|T|}\phi'|(L' \cup C') \cap T| \leq \OO(\frac{n}{10|T|}\phi'|L' \cap T|) = \OO(\frac{n}{|T|}\phi's') = \OO(s''\phi')$. It follows that the vertex cut $(A,C,B)$ must be $\OO(\phi')$-sparse. Hence we obtain an $\OO(\frac{\phi'}{\phi}) = \OO(\log k + \log \log {n\phi})$-approximation.\end{proof}

\section{\SC{} via a cut-matching game for small set expanders}\label{sec:cutmatching}

In this section, we develop a cut-matching game for small set expanders. This will be our main ingredient to obtain an $\OO(\log^2 k)$-approximation for \SC{} and an $\OO(\log^2 k + \log^2 \log n\phi)$-approximation for \VSC{} in time $m^{1 + o(1)}$, proving theorems~\ref{thm:cutmatchingedgelogk} and~\ref{thm:cutmatchingvertexlogk}. We will use the analysis of~\cite{krv07}. On a high level, the framework of~\cite{krv07} constructs an expander using a two-player cut-matching game in $\OO(\log n)$ rounds. We give a cut matching game that constructs \emph{$s$-small set expanders}: these are graphs where every set of size at most $s$ is expanding. We show that there is a cut-matching game that can construct such expanders in only $\OO(\log s)$ rounds. 

We begin by reviewing the cut-matching framework of~\cite{krv07}. The cut-matching game is a two-player game, with a cut player $\CC$ and a matching player $\MM$. The cut and matching players take alternate turns. The game proceeds in rounds, where each round is a turn of the cut player followed by the matching player. Initially, the game starts with an edgeless graph $H$ on $n$ vertices. In each round, the cut player $\CC$ outputs a bisection $(B, \overline{B})$ with $|B| = |\overline{B}| = \frac{n}{2}$. The matching player then chooses a perfect matching from $B$ to $\overline{B}$, which is then added to the graph $H$. We have the following theorem from~\cite{krv07}. 

\begin{theorem}[Follows from~\cite{krv07}]\label{thm:kkov}
There is a (exponential time) cut player strategy so that for any matching player strategy, the graph $H$ is an $\Omega(1)$-expander after $\OO(\log n)$ rounds.

\end{theorem}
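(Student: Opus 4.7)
The plan is to adapt the potential-function argument of Khandekar, Khot, Orecchia, and Vishnoi. Associate to every $i \in V(H)$ a probability distribution $p_i^{(t)} \in \mathbb{R}^n$, initialized as $p_i^{(0)} = e_i$. Whenever the matching player adds a matching $M_t$, for each $(i,j) \in M_t$ replace both $p_i$ and $p_j$ by their average. Writing $u_i^{(t)} := p_i^{(t)} - \mathbf{1}/n$ and
\[
\Phi_t \;:=\; \sum_{i=1}^n \|u_i^{(t)}\|^2,
\]
the invariant $\sum_i p_i^{(t)} = \mathbf{1}$ holds, and an elementary computation gives $\Phi_0 = n-1$ and $\Phi_t - \Phi_{t+1} = \tfrac12 \sum_{(i,j)\in M_t}\|u_i-u_j\|^2$. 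So the cut player must produce a bisection $(B,\overline B)$ for which this sum is $\Omega(\Phi_t)$ against every perfect matching between $B$ and $\overline B$.

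Allowing exponential time, such a bisection exists. I would take $v$ to be a unit vector maximizing $\sum_i \langle u_i, v\rangle^2$ (the top eigenvector of $\sum_i u_i u_i^\top$), whose Rayleigh quotient is at least $\Phi_t/n$. A careful split of $V(H)$ into two equal halves based on the order statistics of the scalars $\langle u_i, v\rangle$ (handling outliers separately to keep the split an exact bisection) produces $(B,\overline B)$ whose mean projections onto $v$ differ by $\Omega(\sqrt{\Phi_t/n})$. Then for any perfect matching $M$ across this bisection, Jensen's inequality yields
\[
\sum_{(i,j)\in M}\|u_i-u_j\|^2 \;\geq\; \sum_{(i,j)\in M}\langle u_i-u_j,v\rangle^2 \;\geq\; \frac{2}{n}\Bigl(\sum_{(i,j)\in M}|\langle u_i-u_j,v\rangle|\Bigr)^2 \;=\; \Omega(\Phi_t),
\]
so $\Phi_{t+1} \leq (1-\Omega(1))\Phi_t$. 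After $T = O(\log n)$ rounds this forces $\Phi_T \leq 1/(8n)$.

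To deduce expansion of $H$, fix any cut $(S,\overline S)$ with $s := |S| \leq n/2$ and track the cross-mass $m_S^{(t)} := \sum_{i\in S} p_i^{(t)}(\overline S)$, starting from $0$. A case analysis on matching edges (edges with both endpoints inside $S$ or both inside $\overline S$ leave $m_S$ unchanged) shows that every matching edge crossing $(S,\overline S)$ changes $m_S$ by at most $1/2$, so $m_S^{(T)} \leq \tfrac12 |E_H(S,\overline S)|$. On the other hand, Cauchy--Schwarz applied to $u_i = p_i - \mathbf{1}/n$ gives
\[
\Bigl|m_S^{(T)} - \tfrac{s(n-s)}{n}\Bigr| \;\leq\; \sqrt{s(n-s)\,\Phi_T} \;\leq\; \tfrac12 \cdot \tfrac{s(n-s)}{n},
\]
which forces $m_S^{(T)} \geq \tfrac{s(n-s)}{2n} \geq s/4$. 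Combining the two bounds, $|E_H(S,\overline S)| \geq s/2$, so $H$ is indeed an $\Omega(1)$-expander.

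The main obstacle is the cut-player step: with only polynomial time, one can guarantee a per-round decrease of merely $\Omega(\Phi_t/\log n)$, which yields the weaker $O(\log^2 n)$-round bound of~\cite{krv07}. Obtaining $O(\log n)$ rounds as claimed requires the cut player to find an (essentially) optimal bisection against the matching-player's worst response, and this is where the exponential running time is spent. The technically delicate ingredient is the one-dimensional order-statistics estimate for the scalars $\langle u_i, v\rangle$: one must exhibit an $\Omega(\sqrt{\Phi_t/n})$ gap in mean projection between the two halves of an \emph{exact} bisection, which is robust against pathological distributions (e.g.\ one heavy outlier and many near-zero coordinates) and leverages both $\sum_i \langle u_i, v\rangle = 0$ and $\|u_i\| \leq 1$.
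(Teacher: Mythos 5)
There is a genuine gap, and it sits exactly at the step you flag as "technically delicate": the claim that one can always find an exact bisection whose two halves have mean projections onto the top eigenvector differing by $\Omega(\sqrt{\Phi_t/n})$. This is false, and already fails at $t=0$. Initially $\sum_i u_iu_i^\top = I - \mathbf{1}\mathbf{1}^\top/n$, so \emph{every} admissible direction $v$ captures only $\sum_i \langle u_i,v\rangle^2 = 1 = \Theta(\Phi_0/n)$ of the potential. Writing $x_i = \langle u_i,v\rangle$, the mean-projection gap of any bisection is at most $\tfrac{2}{n}\sum_i|x_i| \le \tfrac{2}{\sqrt n}\bigl(\sum_i x_i^2\bigr)^{1/2} = \Theta\bigl(\tfrac{1}{\sqrt n}\sqrt{\Phi_0/n}\bigr)$, a factor $\sqrt n$ short of what you need. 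Consequently your chain of inequalities bottoms out at a certified decrease of $O(\sum_i x_i^2) = O(\Phi_t/n)$ rather than $\Omega(\Phi_t)$: the actual decrease in round one is $\Theta(n)$, but no single-direction projection argument can certify more than $O(1)$ of it. The obstruction is not the running time of finding the best direction; it is structural — one direction can only see a $1/n$ fraction of an isotropically spread potential — so exponential time does not rescue this route. (The rest of your write-up — the averaging dynamics, the identity $\Phi_t-\Phi_{t+1}=\tfrac12\sum_{M_t}\|u_i-u_j\|^2$, and the deduction of $\Omega(1)$-expansion from $\Phi_T \le 1/(8n)$ via cross-mass — is fine.)

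The actual argument behind this theorem (from Khandekar--Khot--Orecchia--Vishnoi, which the paper only cites and sketches) uses a different potential and a different cut player. The cut player ignores the embedding vectors entirely and instead solves (exactly, hence in exponential time) a balanced sparsest-cut problem \emph{on the current graph $H$}: it finds a cut $(A,\overline A)$ with $n/4 \le |A| \le |\overline A|$ of sparsity at most $1/100$ and extends it to a bisection; if no such cut exists, $H$ is essentially already an expander and one more round finishes. The progress measure is the \emph{entropy} potential $\sum_{u}\sum_{v} -p(u,v,t)\log p(u,v,t)$, which increases by $\Omega(n)$ per round and is capped at $n\log n$, giving $O(\log n)$ rounds. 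This is precisely the analysis that the paper localizes in its Lemma~\ref{lemma:potential}. If you want to keep an $\ell_2$-style potential, you should expect only the $O(\log^2 n)$-round guarantee of the KRV cut player, as you yourself note; to get $O(\log n)$ you need to switch to the sparse-cut-in-$H$ strategy and the entropy argument.
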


On a high level, this strategy is quite simple: in each round, the cut player $\CC$ finds a cut $(A, \overline{A})$ with $\frac{n}{4} \leq |A| \leq |\overline{A}|$ in $H$ with expansion (sparsity) at most $\frac{1}{100}$. This cut is then extended to an arbitrary bisection $(B, \overline{B})$ such that $A \subseteq B$.

Now we briefly recall how the cut-matching game is used to approximate \SC. As discussed in the Overview, this essentially involves implementing a matching player strategy. Throughout, we assume that a parameter $\phi'$ is given. The right value of $\phi'$ will be found by binary search. 
The matching player $\MM$ then, in each round, given a bisection $(B, \overline{B})$ by the cut player, outputs either (i) a $\phi'$-sparse cut in $G$, or (ii) a perfect matching $M$ matching vertices of $B$ to those of $V(H)\setminus B$ such that $M \preceq^{flow} \frac{G}{\phi'}$ (one unit of demand across each edge of $M$ can be routed simultaneously in $G$ with congestion $\frac{1}{\phi'}$). The matching $M$ is added to $E(H)$.

It follows that if the matching player did not return a $\phi'$-sparse cut in any of these rounds, $H \preceq^{flow} \OO(\frac{\log n}{\phi'}) G$. Since $H$ must be a $\Omega(1)$-expander using~\Cref{thm:kkov}, we obtain that $G$ must be a $\Omega(\frac{\phi'}{\log n})$-expander. This gives an $\OO(\log n)$-approximation.

We use a modified cut-matching game to get an $\OO(\log^2 s)$-approximation for (terminal) sparse cuts. In contrast to the~\cite{krv07} cut-matching game which guarantees that the graph $H$ is expanding after $\OO(\log n)$ rounds, we give a cut player strategy to construct a \emph{small set expander} that ensures that every \emph{set of size at most s} is expanding after $\OO(\log s)$ rounds. We remark that there is a slight technical difference in our cut-matching game. In each round, the cut player, instead of returning a single bisection $(B, \overline{B})$, outputs a collection $\mathcal{C}$ of constantly many pairs of disjoint sets $(X,Y)$ with $|X| = |Y|$, where $X \cup Y$ may not be the whole vertex set $V(H)$. The matching player then finds a perfect matching between $X$ and $Y$ for every $(X,Y) \in \mathcal{C}$. Under this setting, we show the following results.

\begin{theorem}\label{thm:cutplayerpoly}
There is a near-linear time cut player strategy so that after $\OO(\log s)$ rounds, for any matching player strategy, every set with at most $s$ vertices in $H$ has expansion $\Omega(\frac{1}{\log s})$.
\end{theorem}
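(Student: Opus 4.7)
The plan is to implement the cut-matching strategy sketched in the overview of \Cref{sec:cutmatching}, using the \IC{} subroutine as the core primitive. In each round, the cut player proceeds in two steps on the current auxiliary graph $H$. First, it finds an $\Omega(1)$-balanced cut $(W,\overline{W})$ in $H$ with sparsity $\phi_H(W)\le 1/C$ for a sufficiently large constant $C$, or else certifies that no such cut exists (this case is handled separately via a single additional round). Second, it calls \IC{} on $(W,\overline{W})$, which uses a single max-flow computation to produce a balanced cut $(Q,\overline{Q})$ of $V(H)$ with the following local guarantee: for every $S\subseteq V(H)$ with $|S|\le s$ that is not already $(1/K)$-expanding (for a constant $K\gg C$), we have $|\delta_{H[S]}(Q\cap S)|\le |S|/D$ for some large constant $D$. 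The cut player then outputs a constant-size collection $\mathcal{C}$ of disjoint equal-sized pairs $(X,Y)$ derived from $(Q,\overline{Q})$.

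The expansion analysis is a local version of the KRV potential argument. Fix $S\subseteq V(H)$ with $|S|\le s$. In any round one of two cases holds: either $(Q,\overline{Q})$ is unbalanced with respect to $S$, meaning $\bigl||S\cap Q|-|S\cap\overline{Q}|\bigr|\ge \Omega(|S|)$, in which case the perfect matchings added by the matching player across the output pairs contribute $\Omega(|S|)$ edges across $(S,V(H)\setminus S)$, immediately certifying expansion of $S$; or $(Q,\overline{Q})$ is balanced with respect to $S$, in which case the \IC{} guarantee says that the cut it induces on $S$ is both balanced and sparse inside $H[S]$. Mimicking the flow/entropy potential of \cite{krv07} but restricted to the vertices of $S$, the second situation can occur in at most $\OO(\log|S|)=\OO(\log s)$ rounds before the matching player is forced to add $\Omega(|S|)$ crossing edges. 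Hence after $\OO(\log s)$ rounds every $S$ with $|S|\le s$ has expansion $\Omega(1)$ in $H$; the $1/\log s$ slack in the theorem absorbs the approximation error of the near-linear balanced-cut step below.

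For the near-linear running time, two observations are key. Each round adds only $\OO(n)$ edges to $H$, so after $\OO(\log s)$ rounds the maximum degree of $H$ is $\Delta=\OO(\log s)$; by adjusting the matchings we can additionally enforce that $H$ is approximately regular. Consequently the first step is equivalent to finding an $\Omega(1)$-balanced cut of conductance $\OO(1/(C\Delta))$ in $H$, which is delivered in near-linear time by the approximate balanced separator algorithm of \Cref{thm:bal_sep} (it returns either such a cut or a certificate that every balanced cut has conductance $\Omega(1/(C^2\Delta^2))$, the latter being the degenerate ``already small-set expanding'' case that we dispatch in a single extra round). The \IC{} subroutine is implemented via a single max-flow call on $H$, which by the almost-linear-time max-flow algorithm of \cite{chen22} runs in $\tilde{\OO}(|E(H)|)=\tilde{\OO}(n\log s)$ time. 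Combined, each round takes $\tilde{\OO}(n)$ time and the entire game runs in $\tilde{\OO}(n)$.

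The main obstacle is the local potential argument. Unlike in \cite{krv07}, where a single global potential on all $n$ vertices suffices, here we need the argument to apply simultaneously for every small $S$, using only the sparse-balanced-inside-$S$ guarantee supplied by \IC{}. The right approach is to define, for each $S$ separately, a KRV-style flow potential supported on $S$ and show that the local sparse-balanced cut produced by \IC{} makes this potential contract geometrically, so that it collapses in $\OO(\log|S|)$ rounds and forces a large matching across $(S,\overline{S})$. Ensuring that a single output $(Q,\overline{Q})$ of \IC{} simultaneously satisfies the sparse-inside-$S$ property for every $S$ that is not yet expanding is the other subtle point, and is handled by the max-flow/min-cut LP-dual structure of the single flow computation.
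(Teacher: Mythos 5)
Your proposal follows essentially the same route as the paper: find a balanced low-conductance cut via \Cref{thm:bal_sep} (exploiting $\Delta = \OO(\log s)$ regularity so that sparsity and conductance differ by only a $\Delta$ factor), feed it to \IC{} to get a cut that is simultaneously sparse inside every non-expanding small set, output a constant-size collection of pairs, and run a local entropy/flow potential argument per set $S$ to bound the number of ``balanced and sparse inside $S$'' rounds by $\OO(\log s)$. The case analysis (unbalanced with respect to $S$ forces $\Omega(|S|)$ crossing matching edges; the degenerate overlapping-set output of the balanced-separator routine is dispatched in one extra round) also matches the paper's Lemmas on the collection $\mathcal{C}$ and the potential.

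There is one genuine gap in the running-time claim. You implement \IC{} with a single exact max-flow call via \cite{chen22}, but that algorithm runs in $m^{1+o(1)}$ time, which is almost-linear, not near-linear; the theorem explicitly promises a near-linear time cut player. The paper resolves this by replacing the exact $s$-$t$ min-cut inside \IC{} with the $(1+\eps)$-fair min-cuts of Li et al.~\cite{li2023near}, which are computable in $\OTIL(m/\eps^3)$ time, and then checking that the flow-saturation argument in the soundness proof of \IC{} (namely, that an excess of $\Omega(|S|/\rho)$ saturated flow cannot escape a $\frac{1}{\rho^{101}}$-sparse set $S$) survives when each cut edge is only $\frac{1}{1+\eps}$-saturated. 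Without this substitution your per-round cost is $m^{1+o(1)}$ and you only obtain \Cref{thm:cutmatchingedgelogk}-style almost-linear bounds, not the near-linear cut player of \Cref{thm:cutplayerpoly}. The rest of the argument, including the simultaneous-for-all-$S$ guarantee of \IC{} that you correctly identify as resting on saturation of the min-cut, is sound.
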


We also show how to improve this result if we are allowed exponential time cut player strategies. We do not use this result for approximating sparsest cut, but just obtain this as a result for the cut-matching game itself, generalizing the result of~\cite{krv07}. 

\begin{theorem}\label{thm:cutplayer}
There is a (exponential time) cut player strategy so that after $\OO(\log s)$ rounds, for any matching player strategy, every set with at most $s$ vertices in the graph $H$ has expansion $\Omega(1)$.
\end{theorem}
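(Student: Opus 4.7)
The plan is to follow the same high-level framework sketched in the overview for \Cref{thm:cutplayerpoly}, but to leverage unbounded computation to eliminate the $\OO(1/\log s)$ loss incurred by the approximate balanced-cut subroutine. Specifically, in each round the cut player will (a) compute, via brute force, a truly optimal $\Omega(1)$-balanced cut $(W,\overline{W})$ in $H$ of sparsity $\le 1/C$ whenever one exists, and (b) feed $(W,\overline{W})$ into the \IC{} subroutine to obtain a refined balanced cut $(Q,\overline{Q})$ guaranteeing that every set $S\subseteq V(H)$ with $\phi_H(S)\le 1/K$ satisfies $|\delta_{H[S]}(Q\cap S)|\le |S|/D$, for constants $C\ll K\ll D$. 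The output of the cut player for that round will be (a constant number of pairs derived from) $(Q,\overline{Q})$.

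The first step is to handle the case in which $H$ admits no $\Omega(1)$-balanced $1/C$-sparse cut. In that case I would argue, as in the polynomial-time analysis but without the Cheeger loss, that a single additional carefully chosen pair $(X,Y)$ sent to the matching player suffices to force every small set to become $\Omega(1)$-expanding in $H$: if no balanced sparse cut exists then every small set $S$ with $|S|\le s$ that is still not expanding must already have a large matched interface, and a single bisection between any two roughly equal halves forces the remaining non-expanding sets to receive $\Omega(|S|)$ new crossing edges. This is where exponential time directly buys us $\Omega(1)$ instead of $\Omega(1/\log s)$, because we can detect the ``no balanced sparse cut'' condition exactly.

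The main body of the argument is the potential analysis, which I would carry out as a local version of the \cite{krv07} potential, exactly along the lines of \Cref{lemma:potential}. Fix any $S\subseteq V(H)$ with $|S|\le s$. In any round where the cut player outputs $(Q,\overline{Q})$, either (i) $||Q\cap S|-|\overline{Q}\cap S||\ge \varepsilon|S|$, in which case the matching player is forced to place $\Omega(|S|)$ edges across $(S,V(H)\setminus S)$ in that single round and $S$ becomes $\Omega(1)$-expanding, or (ii) $(Q,\overline{Q})$ is balanced with respect to $S$. In case (ii), the \IC{} guarantee $|\delta_{H[S]}(Q\cap S)|\le|S|/D$ combined with balancedness implies that the cut induced on $S$ is $\OO(1/D)$-sparse inside $H[S]$, which is precisely the hypothesis driving the KRV potential decrease localized to $S$. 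Running the potential argument verbatim on $S$ shows that case (ii) can recur at most $\OO(\log|S|)=\OO(\log s)$ times before either $S$ has been made $\Omega(1)$-expanding by accumulated matched edges, or one of those rounds actually falls into case (i). Since this holds for every $S$ of size at most $s$ simultaneously (the argument is per-$S$, with no union bound needed), $\OO(\log s)$ rounds suffice.

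The main obstacle I expect is adapting the KRV potential argument to behave \emph{locally}, i.e.\ showing the potential-decrease step really only depends on what the cut does restricted to $S$ rather than on global balancedness in $H$; this requires carefully redefining the projection-based potential so that the quantities being tracked lie entirely within $H[S]\cup M|_S$ and so that the matching player's freedom outside $S$ does not hurt the bound. Once this local potential analysis goes through (as in \Cref{lemma:potential}), combining it with the ``no balanced sparse cut'' base case from the first paragraph and the $\OO(1)$-guarantee of exact (exponential-time) balanced-cut computation yields the claimed $\Omega(1)$-expansion for all sets of size $\le s$ after $\OO(\log s)$ rounds.
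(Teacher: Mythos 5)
Your overall architecture matches the paper's: an exact (exponential-time) balanced sparse cut, refined by \IC{}, followed by the per-set trichotomy (unbalanced w.r.t.\ $S$ forces $\Omega(|S|)$ crossing edges; balanced-and-sparse w.r.t.\ $S$ feeds the local potential of \Cref{lemma:potential} and can recur only $\OO(\log s)$ times). That part is sound, and you correctly identify that the only place exponential time is needed is to avoid the Cheeger-type quadratic loss in the balanced-cut subroutine, which is exactly why the paper's near-linear version only achieves $\Omega(1/\log s)$.

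The genuine gap is your base case, i.e.\ the rounds in which $H$ admits no $\Omega(1)$-balanced sparse cut. You claim that ``a single bisection between any two roughly equal halves forces the remaining non-expanding sets to receive $\Omega(|S|)$ new crossing edges.'' This is false: for an arbitrary bisection $(B,\overline{B})$, a small non-expanding set $S$ may be split so that $|S\cap B|\approx|S\cap\overline{B}|$, and the matching player can then match $S\cap B$ entirely into $S\cap\overline{B}$, adding \emph{zero} edges leaving $S$. (Concretely, $H$ could be an expander with a small weakly-attached clump $S$; there is no balanced sparse cut, yet $S$ is far from expanding and an arbitrary bisection does nothing for it.) The paper closes this case with a specific device you are missing: when no $\frac{b}{4}$-balanced cut of conductance $\le\psi$ exists, take the \emph{most balanced} cut $Y'$ of conductance $\le\psi$, and argue (\Cref{lemma:bal_sep}) that any set $Z$ of conductance $\le\psi/3$ must satisfy $|Y'\cap Z|\ge 0.6|Z|$ --- otherwise $Y'\cup Z$ would be a strictly more balanced cut of conductance $\le\psi$, a contradiction. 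Placing $Y'$ entirely inside one side of the emitted bisection then guarantees $||B\cap S|-|\overline{B}\cap S||\ge 0.2|S|$ for every still-sparse small $S$, which is what actually forces the matching player to make $S$ $\Omega(1)$-expanding in that round. Without this ``most balanced sparse cut overlaps every sparse set'' argument (or an equivalent substitute), the proof does not go through. A secondary, smaller omission: the argument runs through conductance rather than sparsity, so one must keep $H$ regular (the paper's extra pair $(P_1,P_2)$ exists for this reason) and translate between the two via the degree $\Delta$; in the exponential-time branch the conductance threshold loses only one factor of $\Delta$, which is what preserves the constant.
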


Finally, we also show how to use this version of cut-matching game to approximate small sparse cuts. We will need this for the slightly more general case of terminal expansion.

\begin{theorem}\label{thm:cutmatching}
Given a parameter $\phi'$, a graph $G$ and a set of terminals $T$, one can find in almost-linear time either (a) a $\phi'$-terminal sparse cut or (b) conclude that every set with $\leq s$ terminals is $\frac{\phi'}{\log^2 s}$-terminal-expanding.

\end{theorem}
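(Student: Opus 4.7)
}
The plan is to run the modified cut-matching game on the terminal set $T$, combining the near-linear time cut player strategy of Theorem~\ref{thm:cutplayerpoly} with a matching player implemented via the almost-linear time max-flow algorithm of Chen et al. Concretely, I will maintain a virtual graph $H$ whose vertex set is $T$, starting edgeless. In each of $R = \OO(\log s)$ rounds, the cut player outputs a constant-size collection of disjoint pairs $\{(X_i, Y_i)\}$ with $X_i, Y_i \subseteq T$ and $|X_i| = |Y_i|$. The matching player's job is then, for each pair, either to certify a $\phi'$-terminal sparse cut in $G$ or to produce a perfect matching $M_i$ between $X_i$ and $Y_i$ that flow-embeds into $G$ with congestion $1/\phi'$; the matchings $M_i$ are added to $H$.

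For the matching player, I would solve a single-commodity max-flow instance in $G$ per pair: add a source $s$ connected to each $x \in X_i$ with capacity $1$ and a sink $t$ connected to each $y \in Y_i$ with capacity $1$, with edge capacities $1/\phi'$ on $E(G)$. If a feasible fractional flow of value $|X_i|$ exists, standard rounding (e.g.\ via a bipartite matching on the flow-decomposition, or integrality of max-flow with integer capacities after rescaling) yields an integral matching $M_i$ with the required embedding. If the max-flow is strictly less than $|X_i|$, then the min-cut gives a set $S \subseteq V(G)$ with $|\delta_G(S)| < \phi' \cdot \min(|S \cap X_i|, |\overline{S} \cap Y_i|) \le \phi' \cdot |S \cap T|$, which is a $\phi'$-terminal sparse cut that we return immediately. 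This step uses one max-flow call on $G$, which by Chen et al.\ runs in $m^{1+o(1)}$ time per pair; with constantly many pairs per round and $\OO(\log s)$ rounds, the total time remains $m^{1+o(1)}$.

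If the algorithm never halts with a sparse cut, after $R = \OO(\log s)$ rounds Theorem~\ref{thm:cutplayerpoly} guarantees that every $S \subseteq T$ with $|S| \le s$ satisfies $\phi^H(S) \ge \Omega(1/\log s)$, i.e.\ $|\delta_H(S)| \ge \Omega(|S|/\log s)$. By construction, every edge of $H$ was routed in $G$ with congestion $\le 1/\phi'$, and over $R$ rounds the total congestion is at most $\OO(\log s)/\phi'$; thus $H \preceq^{\text{flow}} \OO(\log s / \phi') \cdot G$. For any set $U \subseteq V(G)$ with $|U \cap T| \le s$, the edges of $H$ leaving $U \cap T$ must be routed through the cut $\delta_G(U)$ in $G$, so
\[
|\delta_G(U)| \ge \frac{\phi'}{\OO(\log s)} \cdot |\delta_H(U \cap T)| \ge \frac{\phi'}{\OO(\log s)} \cdot \Omega\!\left(\frac{|U \cap T|}{\log s}\right) = \Omega\!\left(\frac{\phi'}{\log^2 s}\right) |U \cap T|,
\]
which, together with a symmetric argument for $\overline{U}$, gives exactly the $\phi'/\log^2 s$-terminal expansion required.

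The main obstacle I expect is reconciling the cut player's output (which per Theorem~\ref{thm:cutplayerpoly} is a collection of disjoint pairs, not a bisection of all of $T$) with the flow-embedding accounting, and in particular ensuring that a failed max-flow really translates into a $\phi'$-\emph{terminal} sparse cut rather than merely a sparse cut in $G$. The key point to verify is that by choosing $|X_i| = |Y_i|$ as required by the cut-matching framework, any min-cut separating part of $X_i$ from part of $Y_i$ isolates at least $\Omega(|X_i|)$ terminals on one side, so $|\delta_G(S)|/|S \cap T|$ is genuinely bounded by $\phi'$. A secondary technical point is handling non-integral flows: this is resolved by an additional cycle-canceling/rounding step on the flow support, which runs in near-linear time and does not affect the overall $m^{1+o(1)}$ bound.
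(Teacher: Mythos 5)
Your proposal matches the paper's proof essentially step for step: the same matching player via a single $s$-$t$ max-flow with unit source/sink capacities and edge capacities $1/\phi'$ (the min-cut yielding a $\phi'$-terminal sparse cut because the saturated flow out of the source side is bounded by its number of terminals), the same use of Theorem~\ref{thm:cutplayerpoly} to certify $H$ is an $s$-small-set $\Omega(1/\log s)$-expander after $\OO(\log s)$ rounds, and the same congestion-$\OO(\log s)/\phi'$ flow-embedding argument giving the $\phi'/\log^2 s$ terminal expansion. The only cosmetic difference is your explicit rounding remark for non-integral flows, where the paper simply assumes $1/\phi'$ is an integer.
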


The next theorem is for the vertex version.

\begin{theorem}\label{thm:cutmatchingvertex}
Given a parameter $\phi' \leq \frac{1}{2}$, a graph $G$ and a set of terminals $T$, one can find in almost-linear time either (a) a $\phi'$-terminal vertex sparse cut or (b) conclude that every set with $\leq s$ terminals is $\frac{\phi'}{\log^2 s}$-terminal-vertex expanding. Formally, for every set $L$ with at most $s$-terminals, we have $|N_G(L)| \geq \frac{\phi'}{\log^2 s} |(L \cup N_G(L)) \cap T|$.

\end{theorem}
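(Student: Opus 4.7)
The plan is to reduce \Cref{thm:cutmatchingvertex} to its edge analogue \Cref{thm:cutmatching} via the standard vertex-splitting construction. I would build a split graph $G'$ as follows: replace each $v \in V(G)$ with two copies $v_{in}$ and $v_{out}$ joined by a (directed) edge of capacity $1$, and replace each edge $(u,v) \in E(G)$ with the two edges $(u_{out}, v_{in})$ and $(v_{out}, u_{in})$ of capacity $\infty$ (which, for implementation, is replaced by a sufficiently large polynomial in $n$). Take the terminal set of the split instance to be $T' = \{t_{in} : t \in T\}$, so in particular $|T'| = |T|$.

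Under this construction, for any $L \subseteq V(G)$ the set $\tilde{S}_L := \{v_{in}, v_{out} : v \in L\} \cup \{v_{in} : v \in N_G(L)\} \subseteq V(G')$ has outgoing directed boundary consisting exactly of the unit-capacity edges $(v_{in}, v_{out})$ for $v \in N_G(L)$, of total capacity $|N_G(L)|$; all putative infinite-capacity crossings are internal to $\tilde{S}_L$ by the definition of $N_G(L)$. Since $|\tilde{S}_L \cap T'| = |L \cap T| + |N_G(L) \cap T| = |(L \cup N_G(L)) \cap T|$, the edge-terminal sparsity of $\tilde{S}_L$ in $G'$ equals the vertex-terminal sparsity $\phi_T^G(L)$. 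In the reverse direction, because any cut in $G'$ that uses an infinite-capacity edge has unbounded capacity, the minimum cuts produced by max-flow subroutines inside \Cref{thm:cutmatching} avoid all infinite edges and therefore arise from honest vertex cuts in $G$.

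With this dictionary in place, I would invoke \Cref{thm:cutmatching} on $(G', T')$ with the same sparsity parameter $\phi'$ and size parameter $s$. If it returns a $\phi'$-terminal edge sparse cut, the finite-capacity reasoning above extracts a $\phi'$-terminal vertex sparse cut in $G$. Otherwise, \Cref{thm:cutmatching} certifies that every subset of $V(G')$ with at most $s$ terminals is $\frac{\phi'}{\log^2 s}$-terminal edge expanding; applied to $\tilde{S}_L$ for an arbitrary $L \subseteq V(G)$ with $|(L \cup N_G(L)) \cap T| \leq s$, this immediately yields $|N_G(L)| \geq \frac{\phi'}{\log^2 s} \cdot |(L \cup N_G(L)) \cap T|$, which is exactly the conclusion required.

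The main obstacle is ensuring that \Cref{thm:cutmatching} actually goes through on the capacitated, effectively directed graph $G'$ rather than only on simple unweighted graphs. The cut player of \Cref{thm:cutplayerpoly} manipulates only the growing witness graph $H$ and is capacity-oblivious, so it is unaffected. The matching player, however, routes demands inside $G'$ by max-flow; here we rely on the almost-linear-time max-flow of~\cite{chen22}, which handles arbitrary polynomially bounded capacities and directed instances, so replacing $\infty$ by a large enough polynomial preserves both the $m^{1+o(1)}$ runtime and the correctness of the small-set expansion certificate. A secondary point to verify is that the pairs of copies $(v_{in}, v_{out})$ do not inflate the size $s$ in a problematic way: because the terminal set $T'$ places only one copy per terminal and terminal counts are preserved under the correspondence $L \mapsto \tilde{S}_L$, the parameter $s$ may be passed through unchanged.
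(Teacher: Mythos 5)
There is a genuine gap. Your reduction hinges on the split graph $G'$ being \emph{directed}: you explicitly compute the ``outgoing directed boundary'' of $\tilde{S}_L$ and note that all infinite-capacity crossings are avoided only because of edge orientation. But \Cref{thm:cutmatching} (and the entire cut-matching machinery behind it --- the flow embedding $H \preceq^{\text{flow}} \OO(\log s/\phi')G$, the matching player's min-cut analysis, and the final expansion certificate) is stated and proved for undirected graphs. If you symmetrize $G'$, the correspondence breaks: for $u \in N_G(L)$ and $v \in L$ with $(u,v)\in E(G)$, the edge $(u_{out}, v_{in})$ has $v_{in} \in \tilde{S}_L$ and $u_{out} \notin \tilde{S}_L$, so an infinite-capacity edge crosses the cut and the undirected terminal sparsity of $\tilde{S}_L$ is not $\phi^G_T(L)$. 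If you keep $G'$ directed, then part (b) of \Cref{thm:cutmatching} no longer delivers what you need: the certificate works by routing $\Omega(|S\cap T'|/\log s)$ units of (undirected) matching demand across the cut with bounded congestion, and in a directed graph those demand paths may enter $\tilde{S}_L$ through the infinite-capacity incoming edges $(u_{out}, v_{in})$ rather than leave through the unit edges $(v_{in}, v_{out})$, so no lower bound on $|N_G(L)|$ follows. Making a cut-matching game work on directed graphs is a separate, nontrivial extension that you would have to carry out; asserting that the max-flow solver of~\cite{chen22} handles directed capacitated instances does not by itself transfer the theorem.

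The paper takes a different and simpler route that sidesteps directedness entirely: it keeps $G$ undirected and modifies only the matching player, putting capacity $1/\phi'$ on \emph{vertices} in the flow problem. A failed flow yields a $\phi'$-terminal-sparse vertex cut directly, and a successful run embeds $H$ into $G$ with bounded \emph{vertex} congestion; a set $L$ that is vertex-sparse in $G$ but whose terminals are edge-expanding in $H$ then gives a contradiction. Two secondary points you should also repair if you pursue your route: the theorem's hypothesis is $|L \cap T| \leq s$, while your argument only covers sets with $|(L \cup N_G(L)) \cap T| \leq s$; the paper handles this by running the game with parameter $2s$ and using the assumption $\phi' \leq \tfrac12$ (whose role your writeup never explains) to deduce $|(L\cup C)\cap T| \leq 2s$ for any non-expanding cut.
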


The main goal of the rest of this section will be to prove theorems \ref{thm:cutplayerpoly}, \ref{thm:cutplayer}, \ref{thm:cutmatching} and \ref{thm:cutmatchingvertex}.

\subsection{Cut player}
The goal of this section will be to prove~\Cref{thm:cutplayerpoly} and \Cref{thm:cutplayer}. In this entire section, we will assume that $\rho$ is a large enough constant, which will be chosen based on the analysis - we do not attempt to optimize $\rho$.

The game starts similarly with the empty graph $H$ on the vertex set $V(H)$. Before we proceed, we note that as we will show later, our cut-matching game will run for at most $d \log s$ rounds for some constant $d$ not depending on $\rho$. Also, throughout this section, we indicate all dependencies on $\rho$ in big-O and big-$\Omega$ notations: hence we will assume that $\rho$ can be chosen bigger than any constant hidden in asymptotic notations. Throughout the algorithm, after every round, we will ensure that the graph $H$ remains regular. 

First, we will need the following two results which give a Cheeger-type approximation for finding balanced sparse cuts. On a high level, both these algorithms either find a balanced low conductance cut in $H$, or find a small set that overlaps all low conductance cuts significantly.

\begin{theorem}[\cite{ov11}]{\footnote{The actual theorem is a little different, but this slightly more precise version follows from the proof.}}\label{thm:bal_sep}
Given parameters $b \in (0,\frac{1}{2})$, $\psi$ and a regular graph $H$, there exist constants $d$ and $d'$ (which may depend on $b$) and an algorithm that runs in time $\OTIL(\frac{m}{\psi})$ that with high probability either 
\begin{enumerate}
\item Outputs a $\Omega_b(1)$ balanced cut with conductance at most $d'\sqrt{\psi}$, or
\item Outputs a set $Y'$ with $|Y'| \leq \frac{bn}{4}$ such that $\frac{|Y' \cap Z|}{|Z|} \geq \frac{5}{8} - \frac{b}{4}$ for every set of vertices $Z$ with $|Z| \leq \frac{n}{2}$ that has \emph{conductance} smaller than $d\psi$.
\end{enumerate}
\end{theorem}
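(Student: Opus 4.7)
The plan is to invoke the balanced sparse cut algorithm of Orecchia and Vishnoi~\cite{ov11} as a near black-box, with the additional task of extracting an explicit witness set whenever no balanced sparse cut is found. The OV framework maintains a heat-kernel-based embedding of the vertices and iteratively sharpens it by incorporating matchings produced by approximate max-flow computations; in each round the embedding either exposes a balanced sparse cut (case~1) or confirms that a certain residual graph has large spectral gap. The $\OTIL(m/\psi)$ runtime comes from computing matrix exponentials to additive precision via repeated squaring combined with random projections, and the Cheeger-type square-root loss is the source of the $d'\sqrt{\psi}$ conductance bound in case~1.

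My approach would be to run this algorithm while tracking a ``pruning set'' $Y'$ that accumulates the smaller sides of all unbalanced sparse cuts the algorithm discovers along the way. Concretely, whenever the OV routine finds a sparse cut that is not $\Omega_b(1)$-balanced, I would add its smaller side to $Y'$ and recurse on the larger side. This is analogous to the expander decomposition procedure obtained by iterating Cheeger's inequality, but accelerated via OV's framework. The algorithm terminates when either (a) a balanced sparse cut of conductance $d'\sqrt{\psi}$ is found, which we return in case~1, or (b) the current remainder graph is certified to have conductance at least $d\psi$ everywhere, at which point the accumulated $Y'$ is returned in case~2. The size cap $|Y'| \leq bn/4$ is enforced as a stopping condition: if $|Y'|$ ever crossed this threshold, then the union of the trimmed pieces together with the final trimmed side would itself form a $\Omega_b(1)$-balanced cut whose conductance can be shown to be $\OO(\sqrt{\psi})$, putting us in case~1 instead.

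The key claim to verify is the overlap property in case~2: for every cut $Z$ with $|Z| \leq n/2$ and conductance smaller than $d\psi$, we must have $|Y' \cap Z| / |Z| \geq 5/8 - b/4$. This follows from the observation that such a $Z$ restricted to the final remainder $V(H) \setminus Y'$ would otherwise violate the $d\psi$ conductance guarantee certified on the remainder --- so a sufficiently large fraction of $Z$ must already have been absorbed into $Y'$. A quantitative accounting using the Cheeger-style gap between $\sqrt{\psi}$ and $\psi$ within OV's potential-function analysis, together with the fact that we stopped trimming as soon as $|Y'| \leq bn/4$ was almost tight, pins down the specific constant $5/8 - b/4$.

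The main obstacle will be extracting the witness set with the precise overlap constant from OV's proof, since their original theorem only asserts qualitative expansion of the remainder rather than a quantitative overlap with every low-conductance $Z$. Recovering this bound requires tracking, inside the OV potential framework, how much of each hypothetical low-conductance cut $Z$ must ``leak'' into $Y'$ across the rounds in order to avoid being exposed as a candidate sparse cut; this is conceptually standard in expander pruning but has to be executed within OV's matrix exponential iterates rather than in a combinatorial flow setting. A secondary issue is preserving the $\OTIL(m/\psi)$ runtime through the iterative pruning, which I expect can be handled by charging each iteration's cost against the edges incident to the newly-removed portion of $Y'$ so that total work telescopes.
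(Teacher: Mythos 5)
You should first know that the paper does not actually prove this statement: it is imported verbatim from~\cite{ov11} with a footnote asserting that "this slightly more precise version follows from the proof" there. The only in-paper analogue is \Cref{lemma:bal_sep}, the exponential-time version, whose overlap constant $0.6$ is obtained by a \emph{maximality} argument ($Y'$ is the most balanced cut of conductance $\leq \psi$, and if $|Z\cap Y'|<0.6|Z|$ then $Z\cup Y'$ would be a strictly more balanced cut of conductance $\leq\psi$), not by appealing to expansion of the remainder. Your overall plan --- run the Orecchia--Vishnoi iteration, accumulate the small sides of unbalanced sparse cuts into $Y'$, and stop when either a balanced cut emerges or the remainder is certified --- is the right source and matches how such statements are usually extracted.

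However, the central argument you offer for the overlap property has a genuine quantitative gap. You deduce the overlap from the claim that $Z\setminus Y'$ would otherwise violate the conductance lower bound certified on the remainder. That argument can never certify an overlap fraction above $1/2$: if, say, $55\%$ of $Z$ lies in $Y'$, the remaining $45\%$ can have essentially \emph{all} of its edges going into $Z\cap Y'$, so $Z\setminus Y'$ has near-zero volume in $H[V\setminus Y']$ and no conductance bound on the remainder says anything about it. Concretely, $\vol_{H[V\setminus Y']}(Z\setminus Y')\geq \Delta|Z\setminus Y'|-\Delta|Z\cap Y'|-|\delta_H(Z)|$, which is only bounded away from zero when $|Z\cap Y'|<(1/2-\Omega(1))|Z|$. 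But the theorem requires overlap $\frac{5}{8}-\frac{b}{4}>\frac{1}{2}$ for every $b<\frac12$, and the paper's cut-player analysis genuinely needs overlap at least $0.6$ (it extends $Y'$ to a bisection $B\supseteq Y'$ and must conclude $||B\cap S|-|\overline{B}\cap S||=\Omega(|S|)$, which fails if the overlap is only $1/2$). So the constant cannot be "pinned down" by the remainder-expander route; it has to come from inside the OV potential analysis (tracking how much of the probability mass of any hypothetical low-conductance cut is forced onto the removed set before the potential can stop decreasing), or from a maximality argument as in \Cref{lemma:bal_sep}. You do flag this as "the main obstacle," which is to your credit, but as written the proof of the one claim that distinguishes this statement from the published theorem of~\cite{ov11} is the part that does not go through.
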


A similar but stronger result can be obtained easily, if we allow exponential time.

\begin{lemma}\label{lemma:bal_sep}
Given parameters $b \in (0,\frac{1}{2})$, $\psi$ and a regular graph $H$, there exists an algorithm that runs in exponential time that either 
\begin{enumerate}
\item Outputs a $\frac{b}{4}$-balanced cut with conductance at most $\psi$, or
\item Outputs a set $Y'$ with $|Y'| \leq \frac{b}{4}n$ such that $\frac{|Y' \cap Z|}{|Z|} \geq 0.6$ for every set of vertices $Z$ with $|Z| \leq \frac{n}{2}$ that has \emph{conductance} smaller than $\frac{\psi}{3}$.
\end{enumerate}
\end{lemma}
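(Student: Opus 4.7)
\textbf{Proof proposal for Lemma~\ref{lemma:bal_sep}.} The plan is to exploit the exponential time budget by brute-force enumeration combined with a simple greedy covering. First, iterate over all $2^n$ candidate subsets of $V(H)$ and test whether any $(b/4)$-balanced cut has conductance at most $\psi$; if so, output it and conclude case (1). So assume no such cut exists. Then every cut $Z$ with $|Z| \leq n/2$ and conductance at most $\psi/3$ must satisfy $|Z| < bn/4$, since otherwise $Z$ itself would be a $(b/4)$-balanced cut of conductance at most $\psi$, contradicting the failure of case (1).

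Now build $Y'$ greedily. Initialize $Y' = \emptyset$, and while there exists a set $Z$ with $|Z| \leq n/2$, conductance at most $\psi/3$, and $|Z \setminus Y'| \geq 0.4 |Z|$, set $Y' \leftarrow Y' \cup Z$. Each iteration can be performed by enumerating all subsets. Upon termination, every low-conductance cut $Z$ with $|Z| \leq n/2$ satisfies $|Z \setminus Y'| < 0.4 |Z|$, i.e.\ $|Y' \cap Z|/|Z| > 0.6$, which is precisely the covering guarantee of case (2). Since $|Y'|$ strictly grows in each iteration and is bounded by $n$, the loop terminates in at most $n$ steps, each of which requires exponential time for the enumeration. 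So the overall runtime is exponential, as allowed.

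The heart of the argument is the size bound $|Y'| \leq bn/4$. Suppose for contradiction that this fails, and let $Z_1, Z_2, \ldots$ denote the sets added in order. Choose the smallest $j$ with $|Y_j| > bn/4$, where $Y_j := Z_1 \cup \cdots \cup Z_j$. Since $|Y_{j-1}| \leq bn/4$ and $|Z_j| < bn/4$, we have $|Y_j| < bn/2 \leq n/2$ (using $b < 1$). The growth rule $|Z_i \setminus Y_{i-1}| \geq 0.4 |Z_i|$ telescopes to $\sum_{i=1}^j |Z_i| \leq 2.5 |Y_j|$. Using $\Delta$-regularity of $H$ and the fact that $|Z_i| \leq n/2$, each $|\partial Z_i| \leq (\psi/3) \Delta |Z_i|$, and the subadditivity $|\partial Y_j| \leq \sum_{i=1}^j |\partial Z_i|$ gives
\[
\phi(Y_j) \;=\; \frac{|\partial Y_j|}{\Delta |Y_j|} \;\leq\; \frac{(\psi/3)\, \Delta \sum_i |Z_i|}{\Delta |Y_j|} \;\leq\; \frac{(\psi/3)(2.5)\Delta |Y_j|}{\Delta |Y_j|} \;=\; \frac{5\psi}{6} \;<\; \psi.
\]
But then $Y_j$ is a $(b/4)$-balanced cut of conductance strictly less than $\psi$, contradicting the failure of case (1). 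Hence $|Y'| \leq bn/4$ throughout, completing the proof.

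The main obstacle is the size bound above; it requires the $0.4$-coverage rule to control the cumulative size of the added sets, together with subadditivity of cut-boundaries under union and the $(b/4)$-balance contradiction that ties everything back to case (1). The remaining ingredients -- brute-force enumeration of cuts, and termination of the greedy loop -- are routine given the exponential-time budget.
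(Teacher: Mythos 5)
Your proof is correct, and it takes a mildly but genuinely different route from the paper's. The paper does not build $Y'$ greedily: it takes $Y'$ to be a \emph{single} most-balanced cut of conductance at most $\psi$ (i.e., the largest $b' < b/4$ for which a $b'$-balanced cut of conductance $\le \psi$ exists), and then argues that any $Z$ of conductance at most $\psi/3$ with $|Z \cap Y'| < 0.6|Z|$ yields a strictly more balanced cut $Z \cup Y'$ of conductance at most $\psi$ --- using exactly the same two ingredients you use (subadditivity of $\delta_H$ under union and $\vol_H(Z \setminus Y') \ge 0.4\,\vol_H(Z)$, so the union's conductance is at most $\max(\psi, \tfrac{\psi}{3 \cdot 0.4}) \le \psi$) --- contradicting maximality of $b'$. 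Your version replaces the extremal/maximality contradiction with an iterative covering whose cumulative size is controlled by the telescoping bound $\sum_i |Z_i| \le 2.5|Y_j|$, and derives the contradiction directly from the failure of case (1) once $|Y_j|$ exceeds $bn/4$. The two arguments are essentially equivalent in content; yours is slightly longer but makes explicit the step (that any $\psi/3$-conductance set with $|Z| \le n/2$ must have $|Z| < bn/4$ once case (1) fails) which the paper uses implicitly when it writes $|Y' \cup Z| \le \frac{bn}{4} + \frac{bn}{4}$, and the greedy-union-with-charging style is the one that generalizes to the pruning/decomposition settings cited elsewhere in the paper. No gaps.
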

\begin{proof}
Check if there is a $\frac{b}{4}$-balanced cut with conductance at most $\psi$. If yes, return it. Otherwise, find the largest $b' < \frac{b}{4}$ for which there is a $b'$-balanced cut with conductance at most $\psi$. Suppose this cut is $(Y', \overline{Y'})$ with $|Y'| \leq |\overline{Y'}|$. Now we claim that for any $\leq \frac{\psi}{3}$ conductance cut $(Z, \overline{Z})$, with $|Z| \leq |\overline{Z}|$ it must be the case that $|Z \cap Y'| \geq 0.6|Z|$.

Suppose this is not the case. Consider the union $Z \cup Y'$. We will show that $Z \cup Y'$ has conductance at most $\psi$, contradicting the fact that $Y'$ is a most balanced cut with conductance at most $\psi$. First note that $|\delta_H(Z \cup Y')| \leq |\delta_H(Z)| + |\delta_H(Y')|$. We also have $\vol_H(Z \cup Y') = \vol_H(Y') + \vol_H(Z \setminus Y')$. Since $|Y' \cup Z| \leq \frac{bn}{4} + \frac{bn}{4} = \frac{bn}{2} \leq \frac{n}{2}$, we must have $\vol_H(Y' \cup Z) \leq \vol_H(V(H) \setminus (Y' \cup Z))$. Thus the conductance of $Z \cup Y'$ is  $$\frac{|\delta_H(Z \cup Y')|}{\vol_H(Z \cup Y')} \leq  \frac{|\delta_H(Z)| + |\delta_H(Y')|}{\vol_H(Z \setminus Y') + \vol_H(Y')} \leq \min\left(\frac{|\delta_H(Y')|}{\vol_H(Y')}, \frac{|\delta_H(Z)|}{\vol_H(Z \setminus Y')}\right)$$

The first term is at most $\psi$, since $Y'$ has conductance at most $\psi$. The second term is at most $\frac{\psi}{3\cdot 0.4} \leq \frac{\psi}{1.2} \leq \psi$, since $\vol_H(Z \setminus Y') \geq 0.4\vol_H(Z)$. This means that $Z \cup Y'$ has conductance at most $\psi$, a contradiction.\end{proof}

Before we describe the cut player strategy, we need a key subroutine which we describe next. Let $S \subseteq V(H)$ with $|S| \leq \frac{|V(H)|}{2}$ be any set of vertices of $H$, which is $\frac{1}{\rho^{101}}$-sparse in $H$.
We now describe a procedure, which we call {\textsc{ImproveCut}}, that when given a sparse cut in $H$, uses a single maxflow call to obtain a cut which is sparse within $S$, for any such $S$. We describe this procedure first by using exact $s$-$t$ min-cut for the sake of clarity: it will subsequently be clear that one can use a slightly relaxed notion of min-cuts as well. This is necessary to get a near-linear time cut-player, since the best known exact $s$-$t$ max-flow/min-cut algorithm~\cite{chen22} runs in almost-linear time.

\begin{algorithm}[H]\label{alg:improvecut}
\caption{{\textsc{ImproveCut}}}
\begin{algorithmic}[!htbp]
\State \IP A cut $(W,\overline{W})$ in $H$ that is $\Omega(1)$ balanced. Also, $\phi_H(W) \leq  \frac{1}{100\rho^3}$.
\State
\OP  A cut $(Q, \overline{Q})$ that is $\Omega(1)$ balanced such that $|\delta_{H[S]}(Q \cap S)| \leq \frac{|S|}{100\rho}$ for every $S$ that is $\frac{1}{\rho^{101}}$-sparse in $H$. 

\\

\State Consider the graph obtained by gluing the edges $\delta_H(W)$ to the induced subgraph $H[W]$. In this graph, for every edge $e = \{u,v\} \in \delta_H(W)$ with $u \in W$,  create a new vertex $x_e$. Remove the edge $\{u,v\}$ and add the edge $\{u,x_e\}$. Denote the resulting graph as $H'$.\\

\State We set up a flow problem on the modified graph $H'$. Add two new vertices $s,t$ and add the edges $(s, x_e)$ for each vertex $x_e$, each with capacity $1$. Add edges $(v,t)$ for every $v \in W$, each with capacity $\frac{1}{\rho^2}$. Each edge of $E(H')$ has capacity $1$.

\State Find a maximum $s$-$t$ flow and a corresponding min-cut $M$, where $M$ is the set of vertices reachable from $s$ in the min-cut. Let $R = M \setminus (\{x_e \mid e \in \delta_H(W)\} \cup \{s\}$)
and let $\overline{Q} = \overline{W} \cup R$, ${Q} = V(H) \setminus \overline{Q}$.

\State \Return $(Q, \overline{Q})$

\end{algorithmic}
\end{algorithm}

\begin{lemma}
The procedure \IC{} is sound. That is, $(Q,\overline{Q})$ is $\Omega(1)$ balanced and \\$\delta_{H[S]}(Q \cap S) \leq \frac{|S|}{100\rho}$ for every $S$ that is $\frac{1}{\rho^{101}}$-sparse in $H$.

\end{lemma}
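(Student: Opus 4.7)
The plan is to prove the two halves of the claim separately, with balance being the easy warm-up and the sparsity bound inside every sparse $S$ forming the technical core, proved by a swap argument against the optimality of the min-cut $M$.

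\textbf{Balance.} Since $\overline{Q} \supseteq \overline{W}$, the $\Omega(1)$-balance of $(W,\overline{W})$ gives $|\overline{Q}| = \Omega(n)$ for free. For $|Q| = |W|-|R|$, I would combine two cut-capacity bounds: the trivial $s$-$t$ cut $\{s\}$ has capacity $|\delta_H(W)| \leq |W|/(100\rho^3)$, while every edge $(v,t)$ with $v \in R$ lies in the min-cut and contributes $1/\rho^2$, so $|R|/\rho^2$ is a lower bound on the min-cut value. Hence $|R| \leq |W|/(100\rho)$, and $|Q| = (1 - o(1))|W| = \Omega(n)$.

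\textbf{Sparsity via a swap argument.} Fix a set $S$ with $|\delta_H(S)| \leq |S|/\rho^{101}$. The idea is to use the alternative $s$-$t$ cut
\[
  M' \;=\; M \,\cup\, (W \cap S) \,\cup\, \{x_e : e \in \delta_H(W) \cap E(H[S])\}
\]
and exploit $\mathrm{cap}(M') \geq \mathrm{cap}(M)$. I would first decompose $\delta_{H[S]}(Q \cap S)$ into (a) edges $e = (u,v) \in \delta_H(W) \cap E(H[S])$ whose $W$-endpoint $u$ satisfies $u \notin R$, and (b) edges of $E(H[W \cap S])$ with one endpoint in $R$ and the other in $W \cap S \setminus R$. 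For each type (a) edge, both $u$ and $x_e$ lie in $M'$ while $u \notin M$, so exactly one of $(s,x_e), (x_e,u)$ was cut by $M$ and neither is cut by $M'$, a capacity decrease of $\geq 1$. For each type (b) edge, both endpoints lie in $R \cup (W \cap S) \subseteq M'$ while exactly one lies in $M$, giving a further decrease of $1$. In total the decrease is at least $|\delta_{H[S]}(Q \cap S)|$.

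\textbf{Bounding the increase and conclusion.} The potential capacity increases going from $M$ to $M'$ come from three sources: (i) new $(v,t)$ edges for $v \in W \cap S \setminus R$, totaling at most $|S|/\rho^2$; (ii) edges of $E(H[W])$ from $W \cap S$ to $W \setminus (R \cup S)$, which cross $\partial S$ in $H$ and are therefore bounded by $|\delta_H(S)| \leq |S|/\rho^{101}$; and (iii) edges $(x_e,u)$ for $e \in \delta_H(W)$ with $u \in W \cap S$, $v \notin S$, $x_e \notin M$, which can become newly cut when $u$ is moved into $M'$ — these edges are also contained in $\delta_H(S)$, so their count is at most $|S|/\rho^{101}$. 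Combining with minimality of $M$ gives
\[
  |\delta_{H[S]}(Q \cap S)| \;\leq\; \frac{|S|}{\rho^2} + \frac{2|S|}{\rho^{101}} \;\leq\; \frac{|S|}{100\rho}
\]
for $\rho$ sufficiently large, which is the desired bound. The main technical subtlety is the bookkeeping in source (iii): a naive swap that includes $W \cap S$ but omits $\{x_e : e \in \delta_H(W) \cap E(H[S])\}$ would incur an extra $(x_e,u)$ cut for \emph{every} type (a) edge, potentially wiping out the gain we extracted from those edges; adding those $x_e$ to $M'$ by hand is precisely what turns the type (a) contribution into a genuine net decrease.
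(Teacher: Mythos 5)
Your proof is correct, but it takes a genuinely different route from the paper's. The paper argues on the primal (flow) side: after normalizing the min-cut so that no $(x_e,u)$ edge is cut, the edges of $\delta_{H[S]}(M)$ must all be saturated by a max-flow, and the $\geq \frac{|S|}{100\rho}$ units of flow entering $S \cap Q$ would have to exit either through the $(v,t)$ edges (total capacity $|S|/\rho^2$) or through $\delta_H(S)$ (at most $|S|/\rho^{101}$ edges), which is impossible. You instead argue on the dual (cut) side: you exhibit the explicit competitor cut $M' = M \cup (W\cap S) \cup \{x_e : e \in \delta_H(W)\cap E(H[S])\}$ and charge the capacity decrease ($\geq |\delta_{H[S]}(Q\cap S)|$, via your types (a) and (b)) against the capacity increase ($\leq |S|/\rho^2 + 2|S|/\rho^{101}$, via your sources (i)--(iii)), then invoke minimality of $M$. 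Your case analysis is complete — each $x_e$ has degree two in $H'$, so your bookkeeping of the $(s,x_e)$ and $(x_e,u)$ edges exhausts all possibilities, and your observation that omitting the $x_e$'s from $M'$ would cancel the type (a) gain is exactly the right subtlety. What the paper's version buys is brevity and the fact that it transfers almost verbatim to the approximate "fair cut" setting used for the near-linear-time implementation (where one only has a flow saturating a constant fraction of each cut edge); what your version buys is that it avoids the "WLOG the min-cut cuts $(s,x_e)$ rather than $(x_e,u)$" normalization and works mechanically with any minimum cut. The balance argument is the same in both (min-cut value $\leq |\delta_H(W)| \leq n/(100\rho^3)$ bounds $|R| \leq n/(100\rho)$), though your "$(1-o(1))|W|$" should really just read $|Q| \geq |W| - n/(100\rho) = \Omega(n)$ since $\rho$ is a fixed constant.
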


\begin{proof}

Notice that since $\phi_H(W) \leq \frac{1}{100\rho^3}$ the number of edges in the cut $\delta_H(W)$ is at most $\frac{n}{100\rho^3}$. Hence the max flow is always at most $\frac{n}{100\rho^3}$, and thus the number of edges $(v,t)$ which can be cut for some $v \in V(H')$ is at most $\frac{n}{100\rho}$. This means that both $Q$ and $V(H) \setminus Q$ have $\Omega(n) - \frac{n}{100\rho} = \Omega(n)$ vertices.

 Suppose for contradiction that $Q$ is not sparse in $S$. In particular, suppose that $|\delta_{H[S]}(Q \cap S)|  \geq \frac{|S|}{100 \rho}$. Observe that we can assume that the min-cut $M$ does not cut any edges of the form $(x_e,u)$ where $u \in W$, for otherwise, we can construct another mincut $M'$ which cuts the edge $(s, x_e)$ instead of the edge $(x_e,u)$. This means that $|\delta_{H[S]}(Q)| = |\delta_{H[S]}(M)|$. Since the min-cut is saturated in any max-flow, the edges in $\delta_{H[S]}(M)$ must all be saturated in any max-flow. If $|\delta_{H[S]}(M)| \geq \frac{|S|}{100\rho}$, since the total capacity of the edges from $S$ to the vertex $t$ (the edges $(v,t)$ for $v \in S$) is at most $\frac{|S|}{\rho^2}$, there is an excess flow of $\frac{|S|}{100\rho} - \frac{|S|}{\rho^2} \geq \frac{|S|}{200\rho}$ which must leave the set $S \cap Q$. However $|\delta_H(S)| \leq \frac{|S|}{\rho^{101}}$, and hence this is impossible. Thus we conclude $|\delta_{H[S]}(Q \cap S)| = |\delta_{H[S]}(M)| \leq \frac{|S|}{100 \rho}$.\end{proof}

 To obtain an algorithm with nearly linear running time,  we modify \IC{} to  use the notion of fair minimum cuts of Li et al.~\cite{li2023near} instead of an exact min-cut algorithm. An $s$-$t$ $\alpha$-fair min cut in the graph $H$ is an $s$-$t$ cut $Z$ so that there exists a feasible $s$-$t$ flow that saturates $\frac{1}{\alpha}$ fraction of each cut edge $\delta_H(Z)$. The result of~\cite{li2023near} shows that there exists an algorithm that runs in time $\OTIL(\frac{m}{\epsilon^3})$ and computes a $(1 + \epsilon)$- fair $s$-$t$ min-cut $Z$. It is easy to check that the above proof still works by replacing the exact minimum cut $M$ with the fair min-cut $Z$ for a small enough choice of $\epsilon$ and similarly guarantees $|\delta_{H[S]}(Z)| \leq \frac{|S|}{100\rho}$.

Equipped with this result, we are now ready to describe the cut player strategy in each round. Since our strategies for both the near-linear time cut player and the exponential time cut player are similar, we describe both of them together.

\begin{algorithm}[H]
\caption{Cut player strategy}
\begin{algorithmic}[!htbp]

\State \textbf{Input:} A regular graph $H$ with degree $\Delta$.
\State \textbf{Output:} A collection $\mathcal{C}$ of constantly many pairs of disjoint vertices $(X,Y)$ with $|X| = |Y|$.


\State We start by setting $\psi = \frac{1}{10^4d'^2\rho^8\Delta^2}$ or $\psi = \frac{1}{10^4d'^2\rho^8\Delta}$, $b = \frac{1}{100}$ and invoking  \Cref{thm:bal_sep} or \Cref{lemma:bal_sep} on the graph $H$ (according to whether we use near-linear time or exponential time respectively).
\If{\Cref{thm:bal_sep} or \Cref{lemma:bal_sep}  return a set $Y'$ such that $\frac{|Y' \cap Z|}{|Z|} \geq 0.6$ for each $Z$ that has conductance at most $d\psi$ or $\frac{\psi}{3}$ respectively}
\State $Q \leftarrow Y'$

\ElsIf{\Cref{thm:bal_sep} or \Cref{lemma:bal_sep} return a $\Omega_b(1)$ balanced cut $(W, \overline{W})$}
\State $(Q,\overline{Q}) \leftarrow \IC(W,\overline{W})$
\State Suppose that  $\alpha n = |Q| \leq |\overline{Q}|$. Find a covering $\mathcal{R}$ of $\overline{Q}$ into $\OO(1)$ many sets of size $|Q|$ each as follows - partition $\overline{Q}$ into sets of size $|Q|$ except for one set which may have size less than $Q$, and then arbitrarily extend the last set to a set of size $|Q|$.
\State Add into $\mathcal{C}$ the pair of sets $(Q, R)$ for each $R \in \mathcal{R}$. Also add an arbitrary partition $(P_1, P_2)$ of $V(H) \setminus (R \cup Q)$ with $|P_1| = |P_2|$ into $\mathcal{C}$ (This part is just to make sure $H$ remains regular). 
\EndIf

\State (In both cases) Arbitrarily extend the set $Q$ to obtain a bisection $(B, \overline{B})$ with $Q \subseteq B$ and add it to the collection $\mathcal{C}$.

\end{algorithmic}
\end{algorithm}

Notice that the collection $\CC$ consists of pairs of sets $(X,Y)$ with $|X| = |Y|$. Some of these pairs are bisections with $X \cup Y = V(H)$, while some are not. Broadly, the goal of the cut-player is to simultaneously ``make progress'' towards making any arbitrary set $S \subseteq V(H)$ with $|S| \leq \frac{V(H)}{2}$ expanding. This can be achieved in one of two ways: by either including in $\CC$ a cut that is unbalanced with respect to $S$, so that the matching player makes $S$ expanding in this round. Alternatively, this can be done by including in $\CC$ a cut that is balanced and sparse with respect to $S$: then one can use a potential analysis similar to~\cite{krv07} to show that this cannot keep happening, so that $S$ becomes expanding after only a few iterations. The next lemma formalizes this notion and shows that the collection $\CC$ indeed does satisfy such a property.

\begin{lemma}\label{lemma:collection}
The collection $\mathcal{C}$ output by the cut player satisfies one of the following two conditions.
\begin{enumerate}
 \item For every subset $S$ of $V(H)$ with $|S| \leq \frac{|V(H)|}{2}$ that is $\frac{1}{\rho^{101}}$-sparse in $H$, we have
 that at least one of two following statements holds.
 
 \begin{itemize}
     \item $\mathcal{C}$ includes a bisection $(B', \overline{B'})$ such that $B' \cup \overline{B'} = V(H)$ so that there is a subset $W \subseteq B'$ such that $W$ is balanced and sparse  with respect to $S$. More precisely, the cut $(W, \overline{W})$ in $H$ is such that (a) $|W \cap S|, |W' \cap S| = \Omega(|S|)$ and (b) $W$ is  $\frac{1}{100}$- sparse in $H[S]$. Concretely, $|\delta_{H[S]}(W)| \leq \frac{1}{100} \min(|W \cap S|, |W' \cap S|)$.
     \item There exists a pair of sets $(X,Y) \in \mathcal{C}$ such that $||X \cap S| - |Y \cap S|| = \Omega(|S|)$, so that any perfect matching of $X$ to $Y$ must add at least $\Omega(|S|)$ edges across $(S, V(H) \setminus S)$. 
     
 \end{itemize}

 \item $\mathcal{C}$ has exactly one pair of sets $(B, \overline{B})$ which is a bisection, and $B$ contains a set $Y' \subseteq V(H)$ with $|Y'| \leq \frac{|V(H)|}{2}$ such that $\frac{|Y' \cap Z|}{|Z|} \geq 0.6$ for every set of vertices $Z$ that has \emph{conductance} smaller than $\frac{1}{\rho^{100} \Delta^2}$ or $\frac{1}{3\rho^{100} \Delta}$, depending on whether we use the exponential time or near-linear time algorithm, where $\Delta$ is the degree of the current (regular) graph $H$.

\end{enumerate}

\end{lemma}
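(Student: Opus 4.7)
The plan is to argue on the two possible outcomes of the balanced-separator subroutine inside the cut player, showing that the overlap branch yields statement~2 of the lemma and the balanced-cut branch yields statement~1.

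In the overlap branch, the subroutine (\Cref{thm:bal_sep} or \Cref{lemma:bal_sep}) returns a set $Y'$ with $|Y'| \leq bn/4 \leq |V(H)|/2$, the algorithm sets $Q = Y'$, and the only pair added to $\mathcal{C}$ is the bisection $(B, \overline{B})$ containing $Y'$. The overlap property $|Y'\cap Z|/|Z| \geq 0.6$ is inherited from the subroutine's guarantee, which applies to all $Z$ of conductance less than $d\psi$ (respectively $\psi/3$). With the values of $\psi$ chosen by the algorithm and $\rho$ taken larger than any absolute constant, these thresholds dominate $\tfrac{1}{\rho^{100}\Delta^2}$ and $\tfrac{1}{3\rho^{100}\Delta}$ respectively; this is a routine calculation that establishes statement~2 in its entirety.

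In the balanced-cut branch, the subroutine returns an $\Omega(1)$-balanced cut $(W, \overline{W})$ of conductance at most $d'\sqrt{\psi}$ (resp.\ $\psi$). Since $H$ is $\Delta$-regular, multiplying by $\Delta$ converts this to sparsity at most $\tfrac{1}{100\rho^3}$, which meets the precondition of \IC. The resulting cut $(Q, \overline{Q})$ is still $\Omega(1)$-balanced and satisfies
\[
  |\delta_{H[S]}(Q \cap S)| \;\leq\; \tfrac{|S|}{100\rho}
\]
for every $\tfrac{1}{\rho^{101}}$-sparse $S \subseteq V(H)$ with $|S| \leq |V(H)|/2$. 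Fix such an $S$. If $\min(|Q\cap S|,\, |\overline{Q}\cap S|) \geq |S|/\rho$, set $(B', \overline{B'}) := (B, \overline{B})$ and $W := Q$; then $W \subseteq B'$, both $|W \cap S|$ and $|\overline{W} \cap S|$ are $\Omega(|S|)$, and $\tfrac{|S|}{100\rho} \leq \tfrac{1}{100}\min(|Q\cap S|, |\overline{Q}\cap S|)$, giving the first bullet of statement~1. Otherwise one side contains more than $(1 - 1/\rho)|S|$ of $S$. If $|Q \cap S|$ is the large side, every non-last piece $R_i$ of the cover $\mathcal{R}$ lies inside $\overline{Q}$, so $|R_i \cap S| \leq |S|/\rho$ and $||Q \cap S| - |R_i \cap S|| \geq (1 - 2/\rho)|S| = \Omega(|S|)$. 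If $|\overline{Q} \cap S|$ is the large side, averaging over the $O(1)$ pieces of $\mathcal{R}$ produces an $R_i$ with $|R_i \cap S| \geq (1-1/\rho)|S|/|\mathcal{R}| = \Omega(|S|)$, while $|Q \cap S| \leq |S|/\rho$, yielding the same conclusion. In either subcase the pair $(Q, R_i) \in \mathcal{C}$ witnesses the second bullet of statement~1.

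The main technical nuisance is handling the extended last piece of $\mathcal{R}$, whose extension vertices are drawn from $Q$: in the first unbalanced subcase one must use a \emph{non-last} $R_i$, which always exists when $|\overline{Q}| > |Q|$, and the boundary case $|Q| = |\overline{Q}|$ trivializes because $\mathcal{R}$ then consists of the single unextended piece $R_1 = \overline{Q}$. All remaining constants — the $\Omega(1)$ balance of $(W, \overline{W})$ and of $(Q, \overline{Q})$, the cover size $|\mathcal{R}| = O(1)$, and the slack between $1/\rho$ and the averaging denominator — can be absorbed by taking $\rho$ large enough.
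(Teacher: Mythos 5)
Your proof is correct and follows essentially the same route as the paper's: branch on the outcome of the balanced-separator subroutine (the overlap set $Y'$ gives condition 2), feed the balanced sparse cut into \IC{} after converting conductance to sparsity, and then case-split on how $(Q,\overline{Q})$ partitions $S$. The only immaterial differences are your choice of case thresholds ($1/\rho$ versus the paper's $0.6$ and $(1-\alpha/3)$) and that in the $Q$-heavy subcase you certify imbalance via a pair $(Q,R_i)$ rather than via the bisection $(B,\overline{B})$ as the paper does.
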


\begin{proof}
If~\Cref{lemma:bal_sep} or~\Cref{thm:bal_sep} returns a set $Y'$ as in above, then note that this set overlaps by at least $60\%$ every set $Z$ with conductance at most $\frac{1}{3\rho^{100} \Delta}$ if we use the exponential time algorithm or every set $Z$ with conductance at most $\frac{1}{\rho^{100} \Delta^2}$ if we use the near-linear time algorithm, and we are done as this satisfies condition $2$.

Otherwise, in both cases, we obtain a cut $W$ which is $\Omega(1)$ balanced and has conductance at most $\frac{1}{100 \rho^3\Delta}$. Since the maximum degree in the graph $H$ is $\Delta$, it must be the case that this cut has sparsity at most $\frac{1}{100\rho^3}$.

Recall that in this case, the cut player uses the \IC{} procedure to obtain another cut $(Q, \overline{Q})$.
Now if $|Q \cap S| \geq 0.6|S|$, it follows that the bisection $(B, \overline{B})$ with $Q \subseteq B$ satisfies  $||B \cap S| - |\overline{B} \cap S|| \geq \Omega(|S|)$, and hence we satisfy condition $1$.

Similarly, suppose that $|\overline{Q} \cap S| \geq (1 - \frac{\alpha}{3})|S|$. Notice that since $|Q| = \alpha n$ and $|\overline{Q}| = (1 - \alpha)n$, the number of $R \in \mathcal{R}$ is at most $\frac{1 - \alpha}{\alpha} + 1 = \frac{1}{\alpha}$. Then it follows that there exists an $R \in \mathcal{R}$ such that $|R \cap S| \geq {(1 - \frac{\alpha}{3})} \alpha |S| > \frac{\alpha}{2} |S|$. Since in this case $|Q \cap S| \leq \frac{\alpha}{3}|S|$  it follows that the pair of sets $(Q,R)$ satisfies condition $1$.

 Finally, if none of the above happens, it must be the case that both $|Q \cap S|, |\overline{Q} \cap S| = \Omega(|S|)$. Thus the cut $(Q, \overline{Q})$ is both balanced and sparse inside $S$ since \IC{} guarantees that $|\delta_{H[S]}(Q \cap S)| \leq \frac{|S|}{100\rho} \leq \frac{1}{100} \min(|W \cap S|, |W' \cap S|)$.  In this case, the bisection $(B, \overline{B})$ added by the cut player with $Q \subseteq B$ satisfies condition $1$.\end{proof}

 Broadly, our next lemma shows that if we keep finding bisections $(B, \overline{B})$ that contain such balanced sparse cuts $(Q, \overline{Q})$ inside $S$ for too many rounds, then the matching player must add a matching that makes the set $S$ $\Omega(\frac{1}{\rho^{101}})$-expanding in $H$. The proof is similar to the potential analysis in~\cite{krv07}.

\begin{lemma}\label{lemma:potential}
Let $S$ be a set of size $s$ in $H$. Then, if the cut player keeps finding bisections $(B, \overline{B})$ such that there is a cut $(Q,\overline{Q})$ with $Q \subseteq B$ in $H$ that is both balanced and $\frac{1}{100}$-sparse with respect to $S$ for more than $\OO(\log s)$ rounds, there must exist a round in which at least $\Omega(\frac{s}{\rho^{101}})$ edges were added by the matching player across $(S, V(H) \setminus S)$.
Formally, \IC{} can find a cut $(Q,\overline{Q})$ satisfying $|Q \cap S|, |\overline{Q} \cap S| = \Omega(s)$ and $|\delta_{H[S]}(Q)| \leq \frac{1}{100} \min(|Q \cap S|, |\overline{Q} \cap S|)$ for at most $\OO(\log s)$ rounds before the matching player adds a matching that makes $S$ $\Omega(\frac{1}{\rho^{101}})$-expanding.
\end{lemma}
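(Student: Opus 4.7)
The plan is to execute a local variant of the KRV-style potential analysis from~\cite{krv07}, specialized to the set $S$. I argue by contrapositive: suppose that in every round the matching player adds fewer than $cs/\rho^{101}$ edges across $(S, V(H)\setminus S)$ for a small enough absolute constant $c$; I will show that the cut player can supply balanced $\tfrac{1}{100}$-sparse cuts with respect to $S$ for at most $\OO(\log s)$ rounds before $H[S]$ must become an $\Omega(1)$-expander, contradicting the existence of such a cut.

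\textbf{Setup.} Assign each $v \in S$ a vector $u_v \in \mathbb{R}^S$ with $u_v = e_v - \tfrac{1}{s}\mathbf{1}_S$ initially, so $\sum_{v \in S} u_v = 0$. After each round, for every matching edge $(x,y)$ with $x,y \in S$ replace $u_x, u_y$ by $(u_x+u_y)/2$; edges with at most one endpoint in $S$ leave the vectors unchanged. The $u_v$'s are then the centered endpoint distributions of the lazy random walk on $H[S]$ started at $v$, and $H[S]$ is exactly the accumulated union of within-$S$ matching edges (since $H$ began empty). Define $\Phi_t := \sum_{v \in S} \|u_v\|^2$, so $\Phi_0 = s - 1$.

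\textbf{Per-round drop.} Let $M^{in}_t$ denote the internal matching in round $t$. The standard averaging identity gives
\[
\Phi_t - \Phi_{t+1} = \tfrac{1}{2}\sum_{(x,y) \in M^{in}_t}\|u_x - u_y\|^2.
\]
A short counting argument using $|B\cap S|+|\overline{B}\cap S|=s$ shows that if fewer than $cs/\rho^{101}$ matching edges cross $(S,\overline{S})$ in round $t$ then $|M^{in}_t|\geq s/2 - cs/(2\rho^{101})$, the bisection $(B,\overline{B})$ is $(1-\OO(1/\rho^{101}))$-balanced within $S$, and since $Q\subseteq B$ with $|Q\cap S|=\Omega(s)$, the internal matching has $\Omega(s)$ edges crossing $(Q\cap S,\overline{Q}\cap S)$. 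To turn this into a constant-factor drop, I exploit the $\tfrac{1}{100}$-sparsity of $(Q\cap S, \overline{Q}\cap S)$ in $H[S]$: because $H[S]$ has a sparse cut, the lazy walk on $H[S]$ traps mass on each side, so the top principal direction of $\{u_v\}$ aligns, up to constants, with the normalized cut-indicator $\hat\chi$ of $(Q\cap S, \overline{Q}\cap S)$. A spectral Cheeger-style argument then yields $\sum_{v\in S}(u_v\cdot\hat\chi)^2 \geq \Omega(\Phi_t)$; Cauchy--Schwarz applied to the $\Omega(s)$ cut-crossing edges of $M^{in}_t$, exactly as in the classical KRV potential step, converts this into $\sum_{(x,y)\in M^{in}_t}\|u_x-u_y\|^2 \geq \Omega(\Phi_t)$, so $\Phi_{t+1} \leq (1-\Omega(1))\,\Phi_t$.

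\textbf{Conclusion and main obstacle.} Iterating the drop for $T=\Theta(\log s)$ rounds gives $\Phi_T < 1/s$, which is incompatible with $H[S]$ admitting a $\tfrac{1}{100}$-sparse balanced cut: by the same spectral Cheeger fact in reverse, $\Phi_T$ this small certifies rapid mixing of the lazy walk on $H[S]$, forcing $H[S]$ to be an $\Omega(1)$-expander, and the hypothesized cut in round $T+1$ is the desired contradiction. The main obstacle is the variance-alignment step: unlike the cut player of~\cite{krv07}, which is explicitly designed to pick a bisection aligned with the top principal direction of the $u_v$'s, our cut player only outputs a graph-theoretic sparse cut in $H[S]$. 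Making the implication ``sparse cut in $H[S]$ $\Rightarrow$ the top principal direction of $\{u_v\}$ is nearly a cut-indicator'' quantitative enough to yield a constant-factor drop, while simultaneously accounting for the $\OO(s/\rho^{101})$ per-round leakage from matching edges exiting $S$, is where the real technical work will lie, and ultimately controls the precise constant in the $\Omega(1/\rho^{101})$ expansion guarantee.
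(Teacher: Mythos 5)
Your high-level plan (a local, $S$-restricted KRV-style potential argument, with the leakage of at most $cs/\rho^{101}$ matched-outside vertices per round absorbed as an error term) is the right shape, but the specific potential you chose creates a genuine gap exactly at the step you flag as ``the main obstacle,'' and I do not believe that step can be repaired. To get $\OO(\log s)$ rounds out of the $\ell_2$ potential $\Phi_t=\sum_{v\in S}\|u_v\|^2$ (which starts at $s-1$), you need a \emph{constant-factor multiplicative} drop per round, and for that you need $\sum_{v}(u_v\cdot\hat\chi)^2\geq\Omega(\Phi_t)$. But the only thing the $\tfrac{1}{100}$-sparsity of $(Q\cap S,\overline{Q}\cap S)$ in $H[S]$ gives you about the walk is an $\ell_1$ statement: since each matching edge moves at most $\tfrac12$ unit of mass in the single round it is used, the total probability mass that has ever crossed the cut is at most $\tfrac{1}{200}|Q\cap S|$. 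Converting ``cross-pairs have nearly disjoint supports, hence $\ell_1$-distance $\Omega(1)$'' into an $\ell_2$ bound via Cauchy--Schwarz loses a factor of $s$, yielding only an \emph{additive} $\Omega(1)$ drop per round --- which would require $\Omega(s)$ rounds, not $\OO(\log s)$. And the alignment claim itself is false in general: the cut returned by \IC{} is a purely combinatorial object with no reference to the vectors $u_v$; the residual variance can be concentrated on a sublinear set of still-unmixed vertices (e.g.\ vertices repeatedly matched outside $S$ using the leakage budget) or along a different sparse cut of $H[S]$, in which case $\sum_v(u_v\cdot\hat\chi)^2$ can be an $O(1/s)$ fraction of $\Phi_t$. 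This is precisely why the random-projection cut player of KRV, which \emph{is} designed around the embedding, still only achieves an $\Omega(1/\log n)$ fractional drop and hence $\OO(\log^2 n)$ rounds.

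The paper avoids this by using the KKOV entropy potential $\Phi(t)=\sum_{u\in S}\sum_{v\in S}-p(u,v,t)\log p(u,v,t)$, which increases \emph{additively} by $\Omega(s)$ per round against a cap of $s\log s$. There the $\ell_1$ mass-trapping fact is exactly what is needed: writing $W^*=Q\cap S$, at least half the vertices $u\in W^*$ have ``problematic mass'' $q_u(t)\leq\tfrac{1}{25}$ (mass outside $W^*$ plus mass on vertices matched outside $S$), so at least $\tfrac{21}{25}$ of $u$'s mass sits on vertices $v\in W^*$ that are matched across to $\overline{Q}\cap S$ with $p(u,v)>2p(u,M_t(v))$, and each such averaging step gains $\Omega(p(u,v))$ entropy; summing gives $\Omega(1)$ entropy gain for $\Omega(s)$ vertices. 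If you want to keep your contrapositive framing you can, but you should replace the $\ell_2$ potential and the spectral-alignment step with this entropy accounting (or otherwise supply a genuinely new proof of the constant-factor $\ell_2$ drop, which I do not think exists for an adversarially chosen sparse cut).
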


\begin{proof}

As in~\cite{krv07}, we will consider a random walk. This time, however, the random walk will only involve vertices from $S$. For $u,v \in S$, let $p(u,v,t)$ denote the amount of mass which was initially at $u$ that is currently at $v$ after $t$ rounds of the cut-matching game. Initially, $p(u,v,0) = 1$ if and only if $u = v$, and $0$ otherwise. We define the random walk as follows: let $M_t$ be the matching  (corresponding to the bisection $(B, \overline{B})$) output by the matching player in round $t$. For every vertex $a \in S$ that is matched in $M_t$ to another vertex $b \in S$, we average the distributions of $a$ and $b$. That is, we set $p(u,a,t) = \frac{p(u,a,t-1) + p(u,b,t-1)}{2}$, and set $p(u,b,t) = \frac{p(u,a,t-1) + p(u,b,t-1)}{2}$ for every vertex $u \in S$.

We now define a potential function based on these probability distributions - it is simply the sum of the entropies of the distributions induced by each vertex $u \in S$. Formally, we define

$$\Phi(t) = \sum_{u \in S} \sum_{v \in S} -p(u,v, t) \log p(u,v,t)$$

Clearly, $\Phi(0) = 0$, and $\Phi(t) \leq s \log s$ for any $t$ since the entropy of a distribution on $s$ outcomes is at most $\log s$, and $\Phi(t)$ is the sum of $s$ such entropies. We now show that the potential increases by a lot in each round, provided that there is $Q \subseteq B$ such that the cut $(Q, \overline{Q})$ is $\Omega(1)$ balanced and $\frac{1}{100}$-sparse with respect to $S$.

\begin{lemma}
$\Phi(t) - \Phi(t-1) \geq \Omega(s)$ for any $t \geq 1$.

\end{lemma}

\begin{proof}
Fix a round $t$. Recall that the matching player, in this round, finds a perfect matching $M_t$, between vertices of the bisection $(B, \overline{B})$. In particular, $M_t$ matches vertices of $Q$ to some vertices of $\overline{Q}$, saturating $Q$. We may assume that $M_t$ matches at most $\frac{s}{\rho^{100}}$ vertices of $S$ to vertices outside $S$, otherwise $S$ is already $\Omega(\frac{1}{\rho^{101}})$-expanding after this round, and we are done. Also recall that $|Q \cap S| = |Q \cap \overline{S}| = \Omega(s)$. Without loss of generality we assume $|Q \cap S| \leq |\overline{Q} \cap S|$. Let $W^* = Q \cap S$, and let $w^* = |W^*|$. We know that the sparsity of the cut $W^*$ in the graph $H[S]$ is $\phi^{H[S]}(W^*) \leq \frac{1}{100}$.

For a vertex $u \in W^*$, let us denote the total mass that was initially on $u$ that is, at the beginning of round $t$, either (i) outside $W^*$  or (ii) On some vertex $v \in W^*$ which is matched to some vertex outside $S$ in $M_t$, by $q_u(t)$. Informally, this mass is the ``problematic mass'' - it does not help us increase the potential. We first claim that $\sum_{u \in W^*}{q_u(t)} \leq \frac{w^*}{200} + \frac{s}{\rho^{100}} $. This follows from the fact that $W^*$ is a $\frac{1}{100}$-sparse cut in $H[S]$. Since in every round, at most $\frac{1}{2}$ mass can go out of $W^*$ through each edge of $W^*$, the total mass that went outside $W^*$ in the rounds prior to $t$ is at most $\frac{w^*}{200}$. Also,  there are at most $\frac{s}{\rho^{100}}$ vertices of $W^*$ which are matched to some vertex outside $S$ by $M_t$. The total mass on these vertices is at most $\frac{s}{\rho^{100}}$, since the algorithm maintains the invariant that every vertex has total mass $1$. Therefore it must be the case that $\sum_{u \in W^*}{q_u(t)} \leq \frac{w^*}{200} + \frac{s}{\rho^{100}} \leq \frac{w^*}{100} $, where the last inequality follows from the fact that $w^* = \Omega(s)$.

By averaging, there exist at least $\frac{w^*}{2}$ vertices $u$ for which $q_u(t) \leq \frac{1}{25}$. Fix such a vertex $u$. We say that a vertex $v \in W^*$ is ``good'' for  $u$ if (a) $v$ is matched to some vertex $M_t(v) \in S$ and (b) $p(u, v) >  2p(u,M_t(v))$. We now show that the total mass of $u$ that is on the bad vertices must be small.

There are two types of ``bad'' vertices - the ones which violate condition (a) and the ones which violate condition (b). Let us denote by $B_1$ the set of vertices $v \in W^*$ matched to some vertex outside $S$ by $M_t$ , and by $B_2$ the set of vertices $v \in W^*$ which are matched to some vertex $M_t(v) \in S$ but having  $p(u,v) \leq 2p(u, M_t(v))$. Observe that $p(u,M_t(v))$ contributes to $q_u(t)$, since $M_t(v)$ is outside $W^*$. This means that the total mass on the vertices $B_1 \cup B_2$ can be bounded by $2q_u(t)$, simply from the definition of $q_u(t)$. The total mass of $u$ inside $W^*$ is at least $1 - q_u(t) \geq 1 - \frac{1}{25}$. Out of these, at most $2q_u(t) \leq \frac{2}{25}$ mass is on the ``bad'' vertices. This implies that the total mass on the ``good'' vertices is at least $1 - \frac{3}{25} = \frac{22}{25}$.

Finally, for every matching edge matching $a,b \in S$ such that $2p(u,a) > p(u,b)$, the entropy of the distribution of $u$ must increase at least by $\Omega(p(u,a))$. This is easy to prove but is already proved in~\cite{krv07} so we skip its proof here. Since the total mass on the good vertices is $\Omega(1)$, the entropy of the distribution for $u$ increases by $\Omega(1)$ after this round. Thus, the overall increase in potential is at least $\Omega(s)$ since there are at least $\Omega(s)$ vertices $u$ for which $q_u(t) \leq \frac{1}{25}$.\end{proof}

Since the potential increases by $\Omega(s)$ in each round and the total potential is at most $s\log s$, such a cut $(Q,\overline{Q})$ can only be found for $\OO(\log s)$ many rounds. \end{proof}
We are now in a position to prove Theorems~\ref{thm:cutplayer} and~\ref{thm:cutplayerpoly}.

\begin{proof}{Proof of \Cref{thm:cutplayer}, \Cref{thm:cutplayerpoly}}

The above analysis implies that after $\OO(\log s)$ rounds, for any set $S$ of size at most $s$ if \IC{} keeps returning cuts which are balanced and sparse with respect to $S$, then the matching player must add $\Omega(\frac{|S|}{\rho^{101}})$ edges across $(S, V(H) \setminus S)$. Then after these $\OO(\log s)$ many rounds, there are three cases for any such set $S$ based on the properties of the cut player guaranteed in~\Cref{lemma:collection}. We will assume that after these $\OO(\log s)$ rounds, $|\delta_H(S)| \leq \frac{|S|}{\rho^{101}}$, for otherwise we are done (for both~\Cref{thm:cutplayer} and~\Cref{thm:cutplayerpoly}) since $S$ is already expanding (recall that $\rho$ is a constant).

The first case is when in every round, the cut-player returns a cut which is balanced and sparse with respect to $S$. In this case, the matching player must have added $\Omega(\frac{|S|}{\rho^{101}})$ edges across $(S, V(H) \setminus S)$ in some round. Therefore if this happens, we are done, as after this round $S$ would have become $\Omega(\frac{1}{\rho^{101}})$-expanding.

The second case is when in some round, the collection $\mathcal{C}$ output by the cut player includes a pair of sets $(X,Y)$ with $||X \cap S| - |Y \cap S|| \geq \Omega(|S|)$. This again forces the matching player to add $\Omega(|S|)$ edges across $(S, V(H) \setminus S)$ in this round, which makes $S$ expanding. Finally, in the third case, the cut player could return a set $Y'$ which overlaps all $\frac{1}{\rho^{100} \Delta^2}$ (or $\frac{1}{3\rho^{100} \Delta}$ if exponential time) conductance cuts by more than $0.6$ fraction where $\Delta$ is the current degree of the graph $H$. We analyze this case in the next paragraph.

 First, we analyze the exponential time cut player: we will show that $S$ must be $\frac{1}{\rho^{101}}$-expanding after the last round in which the cut player finds the set $Y'$. Suppose $S$ is $\frac{1}{\rho^{101}}$-sparse just before this last round. Then its conductance is at most $\frac{1}{\rho^{101} \Delta}$. Recall that the set $Y'$ overlaps by 60\% every set $S$ which has conductance at most $\frac{1}{3\rho^{100} \Delta} \geq \frac{1}{\rho^{101} \Delta}$. Thus we must have $|Y' \cap S| \geq 0.6|S|$. But the cut player, in this round, outputs a bisection $(B, \overline{B})$ such that $Y' \subseteq B$. This means that $S$ becomes $\Omega(1)$-expanding after this round, and this concludes the proof of~\Cref{thm:cutplayer}.

 We proceed similarly for the near-linear time cut player. Suppose that $S$ is $\frac{1}{\rho^{101} \Delta}$-sparse before the last round. Then $S$ has conductance at most $\frac{1}{\rho^{101} \Delta^2}$. Recall that the set $Y'$ overlaps by 60\% every set $S$ which has conductance at most $\frac{1}{\rho^{100} \Delta^2} \geq \frac{1}{\rho^{101} \Delta^2}$. Thus we must have $|Y' \cap S| \geq 0.6|S|$. But the cut player, in this round, outputs a bisection $(B, \overline{B})$ such that $Y' \subseteq B$. This means that $S$ becomes $\Omega(1)$-expanding after this round. Since we never run the game for more than $\OO(\log s)$ rounds, and in each round the matching player can increase the degree of each vertex by only a constant, it follows that $\Delta = \OO(\log s)$, and therefore $S$ must be $\Omega(\frac{1}{\rho^{101} \log s})$-expanding. Finally, since $\rho$ is just a (large enough) constant which we can set accordingly using our analysis, we obtain our result in~\Cref{thm:cutplayerpoly}. This concludes the proof.\end{proof}

\subsection{Matching player: fast algorithms for \SC{} and \VSC}

Next, we describe the matching player strategy that helps us approximate sparse cuts. Here, we are given a graph $G$ together with a parameter $\phi'$, and a set of terminals $T$. The graph $H$ that the cut-matching game operates on now has the vertex set $V(H) = T$.

\begin{algorithm}\label{alg:matching}
\caption{Matching player strategy}
\begin{algorithmic}
\State Given a collection $\mathcal{C}$ of constantly many pairs of disjoint sets, for each pair $P_1, P_2 \subseteq V(H)$ with $|P_1| = |P_2|$ given by the cut player, find either
\State \begin{enumerate}
 \item a perfect matching $M$ of vertices of $P_1$ to $P_2$ that can be flow embed in $G$ with congestion $\frac{1}{\phi'}$ or 
 
 \item a $\phi'$-terminal sparse cut in $G$.

\end{enumerate}

\end{algorithmic}
\end{algorithm}

Next, we show that the matching player strategy can be implemented using a single max flow. This is very similar to the matching player in~\cite{krvactual}.

\begin{lemma}
The matching player strategy can be implemented using a single max flow call.
\end{lemma}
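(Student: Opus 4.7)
I follow the standard KRV matching-player reduction, adapted to the terminal setting. Since $|\mathcal{C}| = O(1)$, I process each pair $(P_1, P_2) \in \mathcal{C}$ with one max-flow call; using the almost-linear-time max-flow of Chen et al.\ (or the fair $(1+\epsilon)$-min-cut of Li et al.\ already used inside \IC{}), the total per-round cost is $m^{1+o(1)}$.

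Fix a pair $(P_1, P_2)$. I build a flow network on $V(G) \cup \{s, t\}$ with unit-capacity edges $s \to u$ for each $u \in P_1$ and $v \to t$ for each $v \in P_2$, and capacity $1/\phi'$ on every edge of $G$ (rescaled to integers if needed). I run a single $s$--$t$ max flow. If the flow value equals $|P_1|$, every source and sink edge is saturated, so integer flow decomposition produces exactly $|P_1|$ unit $s$--$t$ paths, each starting at some $u \in P_1$ (the $s \to u$ edge having unit capacity) and ending at some $v \in P_2$. Reading off the endpoint pairs yields a perfect matching $M$ of $P_1$ to $P_2$, and the supporting paths use every $G$-edge at most $1/\phi'$ times, certifying $M \preceq^{\text{flow}} G/\phi'$.

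Otherwise, let $(A, B)$ be the resulting $s$--$t$ min cut and set $A' = A \cap V(G)$, $B' = B \cap V(G)$. The cut value equals
\[
  |P_1 \cap B'| + |P_2 \cap A'| + \tfrac{1}{\phi'}\,|E_G(A', B')|,
\]
and is strictly less than $|P_1|$. Combined with $|P_1| = |P_1 \cap A'| + |P_1 \cap B'|$, this rearranges to $|E_G(A', B')| < \phi'(|P_1 \cap A'| - |P_2 \cap A'|) \leq \phi'\,|T \cap A'|$ (using $P_1 \subseteq T$); a symmetric bound gives $|E_G(A', B')| < \phi'\,|T \cap B'|$. Hence $(A', B')$ is $\phi'$-terminal sparse, and I output it instead of a matching.

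\textit{Main obstacle.} The one subtle point is extracting an integer perfect matching in the good case. Keeping source/sink capacities at $1$ and using integer (or rational-scaled-to-integer) capacities ensures that integer flow decomposition of the $|P_1|$-unit max flow directly yields both the matching and its routing in $G$, avoiding any Birkhoff--von~Neumann rounding. Replacing the exact min cut by a fair $(1+\epsilon)$-approximate one degrades the sparsity bound by only a $(1+\epsilon)$ factor, which is absorbed into constants.
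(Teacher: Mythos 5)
Your proof is correct and follows essentially the same construction as the paper: one $s$--$t$ max-flow per pair with unit source/sink capacities and $1/\phi'$ edge capacities, extracting the matching from an integral path decomposition when the flow saturates, and reading a $\phi'$-terminal sparse cut off the min cut otherwise. Your explicit cut-value rearrangement is just a cosmetic variant of the paper's flow-saturation argument, so there is nothing substantive to distinguish the two.
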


\begin{proof}
We setup a flow problem as follows. We create two new vertices $s,t$ and add edges from $s$ to each vertex of $P_1$ with capacity $1$. Similarly, we add edges from $t$ to each vertex of $P_2$ with capacity $1$. We retain each edge of $G$ with capacity $\frac{1}{\phi'}$. We now find a maximum flow between $s$ and $t$. For simplicity, we will assume that $\frac{1}{\phi'}$ is an integer (this assumption is also standard in previous cut-matching games).

If the maximum flow is less than $|P_1|$, the minimum cut in this graph will correspond to a $\phi'$-sparse cut in $G$. This can be seen as follows.  Suppose $(W, \overline{W})$ is the cut in $G$ corresponding to the $s-t$ min cut. First note that $W, \overline{W} \neq \emptyset$ as the min-cut is strictly has weight strictly less than $|P_1|$. Observe that in any max-flow the flow is saturated across the minimum cut. But the total flow out of the set $W$ is at most $|W \cap T|$. It follows that $|W \cap T| \geq \frac{|\delta_G(W)|}{\phi'}$. One can show similarly that $|\overline{W} \cap T| \geq \frac{|\delta_G(\overline{W})|}{\phi'}$. It follows that $W$ is a $\phi$-terminal sparse cut in $G$.

Otherwise, the flow path decomposition of the maximum flow  gives rise to a perfect matching $M$. We return this matching $M$. Using the almost-linear time max-flow algorithm of~\cite{chen22}, we can accomplish this in time $m^{1 + o(1)}$.\end{proof}

We now proceed to prove \Cref{thm:cutmatching}.

\begin{proof}[Proof of \Cref{thm:cutmatching}]
We run the cut-matching game with the matching player as described above, with parameter $s$, and the graph $H$ initialized as the empty graph with the vertex $V(H) = T$. If ever the matching player returns a $\phi'$-terminal sparse cut, then we are done. Otherwise, the cut player guarantees than in $\OO(\log s)$ rounds the graph $H$ is a $s$-small set $\Omega(\frac{1}{\log s})$-expander. But since each matching flow embeds in $G$ with congestion $\frac{1}{\phi'}$, it must be the case that $H$ itself flow embeds in $G$ with congestion $\frac{\OO(\log s)}{\phi'}$. Now consider a set $S$ with at most $s$-terminals in $G$, and let $S_T = S \cap T$. Since $S_T$ is $\Omega(\frac{1}{\log s})$-expanding in $H$, it means that $|\delta_H(S_T)|  = \Omega(\frac{|S_T|}{\log s})$. But the edges $\delta_H(S_T)$ flow embed in $G$ with congestion $\frac{\OO(\log s)}{\phi}$. This means that one can send a flow of value  $\Omega(\frac{|S_T|}{\log s})$ from $S$ to outside $S$ in $G$ with congestion $\frac{\OO(\log s)}{\phi}$. This implies that $S$ must be $\Omega(\frac{\phi}{\log^2 s})$-terminal expanding in $G$.\end{proof}

Next, we prove our result for the vertex terminal sparse cuts using a simple modification to the matching player. The matching player strategy is almost exactly the same as before, except that for every pair $P_1, P_2$ given the cut player we want to find a perfect matching $M$ vertices $P_1$ to $P_2$ that embeds in $G$ with \emph{vertex congestion} $\frac{1}{\phi'}$ or a $\phi'$-(vertex) terminal sparse cut. Accordingly, we change the flow problem so that every \emph{vertex} has capacity $\frac{1}{\phi'}$. 

If the maximum flow is less than $|P_1|$, the minimum (vertex) cut in this graph will correspond to a $\phi'$-terminal sparse vertex cut in $G$. This can be seen as follows. Suppose $(L',C',R')$ is the vertex min-cut in $G$ corresponding to the $s-t$ vertex min cut. First note that $L', R' \neq \emptyset$ as the min-cut has weight strictly less than $|P_1|$. Observe that in any max-flow the flow is saturated across the minimum cut. But the total flow out of the set $L' \cup C'$ is at most $|(L' \cup C') \cap T|$. It follows that $|(L' \cup C') \cap T| \geq \frac{|C'|}{\phi'}$. One can show similarly that $|(R' \cup C') \cap T| \geq \frac{|C'|}{\phi'}$. This shows that $(L',C',R')$ is $\phi'$- terminal sparse.

Otherwise, the flow path decomposition of the maximum flow gives rise to a perfect matching $M$. We return this matching $M$. Again using the almost-linear time max-flow algorithm of~\cite{chen22} we can accomplish this in time $m^{1 + o(1)}$.

\begin{proof}[Proof of~\Cref{thm:cutmatchingvertex}]
We run the cut-matching game with the matching player as described above and the parameter $2s$ instead of $s$. Again, the graph $H$ is initialized as the empty graph with the vertex $V(H) = T$. If ever the matching player outputs a $\phi'$-terminal sparse vertex cut, we are done. Otherwise, after $\OO(\log s)$ rounds, the cut-matching game certifies that $H$ is a $2s$-small set $\frac{1}{\log s}$-expander, and $H$ embeds in $G$ with vertex congestion $\OO(\frac{\log s}{\phi'})$.

Now consider a vertex cut $(L,C,R)$ in $G$, with $|L \cap T| = s$. Suppose this cut is not $\Omega(\frac{\phi'}{\log^2 s})$-terminal expanding. In particular, we have $|C \cap T| \leq |C| \leq \OO(\frac{\phi'}{\log^2 s})|(L \cup C) \cap T| \leq \frac{1}{2} |(L \cup C) \cap T|$ since $\phi' \leq \frac{1}{2}$. This gives $|C \cap T| \leq |L \cap T|$ and hence $|(L \cup C) \cap T| \leq 2s$. By the guarantee of the cut-matching game, $(L \cup C) \cap T$ must be $\Omega(\frac{1}{\log s})$-edge expanding in $H$. This means $|\delta_H((L \cup C) \cap T)| \geq \Omega(\frac{s}{\log s})$. But the edges $\delta_H((L \cup C) \cap T)$ embed in $G$ with vertex congestion $\OO(\frac{\log s}{\phi'})$. This means that we can send $|\delta_H((L \cup C) \cap T)| = \Omega(\frac{s}{\log s})$ units of flow from $L \cup C$ to the rest of the graph with vertex congestion only $\OO(\frac{\log s}{\phi'})$, and this in turn means that ${|C|} \geq \OO(\frac{\phi'}{\log^2 s})|(L \cup C) \cap T|$, which shows that $(L,C,R)$ must be $\Omega(\frac{\phi'}{\log^2 s})$-terminal expanding, which is a contradiction.\end{proof}

We are now ready to prove~\Cref{thm:cutmatchingedgelogk} and~\Cref{thm:cutmatchingvertexlogk}.

\begin{proof}[Proof of~\Cref{thm:cutmatchingedgelogk}]
Suppose that there is a $\phi = \frac{k}{s}$-sparse cut $S$ of size $k$ that separates $s$ vertices. We first guess $\phi$, $\log k$ upto factor $2$. Let $\phi' = D\phi\log^2 k$ for some large constant $D$ to be fixed later. Next, construct a $(\epsilon, D\phi'\log^2 k)$ sample set $T$ with $\epsilon = \frac{1}{100}$, and a constant $D$ which will be chosen later. Note that $T$ has size $ \min\{n,\Theta(n\phi')\} = \min\{n, \lambda n\phi'$\} for some constant $\lambda$. We assume that $\lambda nD\phi' \log^2 k < n$, for otherwise we must have $s = \OO(k\log^2 k)$ and we can simply run our cut-matching game with the terminal set as the entire vertex set, so that~\Cref{thm:cutmatching} will give us an $\OO(\log^2 s)$-approximation which is already an $\OO(\log^2 k)$-approximation. It follows that $|S \cap T| = s' =  \Theta(\lambda Dk \log^2 k)$. Thus $S$ has terminal sparsity $\OO(\frac{1}{\lambda D\log^2 k})$. Running the cut-matching game algorithm of~\Cref{thm:cutmatching} then gives us an $\OO(\frac{1}{\lambda D\log^2 k} \log^2(s')) = \OO(\frac{1}{\lambda D})$-terminal sparse cut, call it $S'$. Choose $D$ large enough so that $S'$ is $\frac{1}{10\lambda} \leq \frac{n}{10|T|}\phi'$-terminal sparse, which in turn means that $S'$ must satisfy the sample set condition. By the guarantee of sample sets, it follows that $|S'| = \Omega(\frac{n}{|T|} |S' \cap T|)$. Since $S'$ is $\frac{n}{10|T|}\phi'$-terminal sparse, it follows that this cut must be $\OO(\phi') = \OO(\phi \log^2 k)$-sparse.\end{proof}

\begin{proof}[Proof of~\Cref{thm:cutmatchingvertexlogk}]
Suppose that there is a $\phi = \frac{k}{k + s}$-sparse cut $(L, C, R)$ with $|R| \geq |L| = s$ and $|C| = k$ of size $k$ that separates $s$ vertices. Let $\epsilon = \frac{1}{100}$. We first guess $\phi$, $k$, $s$ upto factor $2$. Let $\phi' =  D\phi (\log^2 k + \log^2 \log n\phi)$ for some large constant $D$ which will be chosen later. Next, construct a $(\epsilon, \phi')$ (vertex) sample set $T$ with $\epsilon = \frac{1}{100}$  (Notice that this can be done even when we guess $s,\phi'$ upto factor $2$, we skip this minor detail). Note that $T$ has size $\min\{n,\OO(n\phi' \log n\phi')\} = \min\{n, \lambda n\phi' \log {n\phi'}\}$, say. We assume that $\lambda n\phi' \log {n\phi'} < n$, for otherwise we have $s =  \OO(k(\log^2 k + \log^2 \log{n\phi})\log {n\phi'})$, and by running our cut-matching game with the vertex set as the terminal set,~\Cref{thm:cutmatchingvertex} would give us an $\OO(\log^2 s) = \OO(\log^2 k + \log^2 \log {n}{\phi})$-approximation. Therefore it follows that $|L \cap T| = s' =  \Theta(\lambda kD(\log^2 k + \log^2 \log n\phi) \log n\phi')$. Thus $S$ has terminal sparsity $\OO(\frac{1}{\lambda D(\log^2 k + \log^2 \log n\phi)\log n\phi'})$.

Now we run the algorithm from~\Cref{thm:cutmatchingvertex} to obtain another vertex cut $(L',C',R')$ with $|R' \cap T| \geq |L' \cap T|$ which is $\psi$-terminal sparse where $$\psi = \OO(\frac{1}{\lambda D(\log^2 k + \log^2 \log {n\phi})\log n\phi'} \log^2 |L \cap T|) = \OO(\frac{1}{\lambda D \log n\phi'}).$$ Choose $\lambda$ large enough so that $\psi \leq \frac{\epsilon^2}{10D\log(n\phi')} \leq \frac{n}{10|T|}\phi'$. Let $s' = |L' \cap T|$ and $s'' = \frac{n}{|T|}s'$. 

There are two cases. Fix a constant $\alpha > 1$ which will be chosen later. First, suppose no component in $G \setminus C'$ has size $>n - \frac{s''}{\alpha}$. Then it is clear that we can partition $V(G) \setminus C'$ into $A \cup B$ such that both $A$ and $B$ have $\Omega(s'')$ vertices.

Now suppose there is indeed a component $X$ which has size $> n - \frac{s''}{\alpha}$. We will obtain a contradiction. Clearly, it must be the case that $G \setminus (C' \cup X)$ has size $< \frac{s''}{\alpha}$. Also either $L' \subseteq V(G) \setminus (C' \cup X)$ or $R' \subseteq V(G) \setminus (C' \cup X)$, and hence $V(G) \setminus (C' \cup X)$ contains at least $s'$ terminals. But then since $G \setminus (C' \cup X)$ is the set of all but one component after removing a vertex cut, and further $G \setminus (C' \cup X)$ is $\psi \leq \frac{\epsilon^2}{10d} \leq \frac{n}{10|T|}\phi'$-terminal sparse, $G \setminus (C' \cup X)$ must satisfy the sample condition. But this means $|V(G) \setminus (C' \cup X) \cap T| \leq 2 \frac{|T|}{n}\frac{s''}{\alpha} \leq \frac{2s'}{\alpha}$, which is a contradiction for $\alpha > 2$.

Thus we must in fact have that every component in $G \setminus C'$ has size $> n - \frac{s''}{\alpha}$, and hence we obtain a partition of $V(G) \setminus C'$ into $A \cup B$ such that both $A$ and $B$ have $\Omega(s'')$ vertices. It follows that since the vertex cut $(A,C,B)$ is $\OO(\frac{n}{10|T|}\phi')$-terminal sparse, it must be $\OO(\phi')$-sparse. Hence we obtain an $\OO(\frac{\phi'}{\phi}) = \OO(\log^2 k + \log^2 \log {n\phi})$-approximation.\end{proof}

\section{\SSVE}\label{sec:ssve}

In this section we study the approximability of \SSVE{} and prove~\Cref{thm:DKS,thm:DSH_SSVE,thm:fpt}.

\subsection{Hardness}
We start by showing two hardness results for \SSVE. The first is a reduction from \DKS{} to \SSVE{} which shows $n^\epsilon$ hardness for some $\epsilon > 0$ assuming standard hardness results for \DKS{}. The second is a reduction from \DSH{}, a generalization of \DKS, which shows a strong hardness of approximation result, that \SSVE{} is hard to approximate beyond a factor $2^{o(k)}$ even using FPT algorithms, that is, algorithms running in time $f(k)n^{\OO(1)}$. This is in sharp contrast to \SSE, for which we obtained an $\OO(\log k)$-approximation (\Cref{thm:logksse}).

\begin{proof}[Proof of~\Cref{thm:DKS}]
Given an instance $(G,k)$ of \DKS{}, consider the graph $H$ with vertex set $L \cup R$. The set $R$ contains a vertex $x_w$ for every vertex $w \in V(G)$, and the set $L$ contains a vertex $y_{uv}$ for every edge $e = \{u,v\} \in E(G)$. For every vertex $y_{uv}$, add the edges $(x_u, y_{uv})$ and $(x_v, y_{uv})$ to $E(H)$. Finally, add a clique on the vertices $R$ to $E(H)$.

First, suppose that the \DKS{} objective is $\geq \ell$. Let $R' \subseteq R$ be the vertex set corresponding to the \DKS{} solution in $G$, and $L'$ be the set corresponding to the edges inside $G[R']$. Observe that $|L'| \geq \ell$. Clearly, $(L', R', V(H) \setminus (L' \cup R'))$ is a vertex cut of size $k$ in $H$ which separates $\ell$ vertices.

Now suppose there is a vertex cut in $H$ cutting $k\tau$ vertices which separates at least $\ell \tau$ vertices for some $\tau > 1$. Observe that we may assume that such a cut only cuts vertices of $R$, since the vertices of $R$ form a clique. Let $L' \subseteq L$ be the set of vertices separated. Sample each vertex in the cut with probability $\frac{1}{2\tau}$ into a set $W \subseteq R$. It follows that the expected number of vertices separated from $L$ after deleting $W$ is at least $\frac{\ell}{4\tau^2}$ since each edge has a $\frac{1}{4\tau^2}$ chance that both its endpoints are in $W$. Using standard concentration inequalities, it must be the case that with constant probability, we obtain a cut in $H$ cutting $\leq k$ right vertices that separates $\Omega(\frac{\ell}{\tau})$ left vertices.

We now show that if we have a $(\beta, \gamma)$ bi-criteria approximation algorithm for \SSVE{}, then we obtain an $\OO(\beta^2\gamma)$ randomized approximation algorithm for \DKS{}. First, construct the graph $H$ as above. Guess the optimal value $opt$ for the \DKS{} instance. One can assume $opt = \Omega(k)$, since one can always find $k$ vertices that induce at least $k/2$ edges. 
Run the $(\beta,\gamma)$-approximation to \SSVE{} with set size $opt$. If it returns a set with size $< opt$, re-run the algorithm till the total number of vertices separated is at least $opt$. Since in each iteration the obtained cut must have sparsity at most $\beta \cdot \frac{k}{k + opt}$, it follows that we obtain a set of size $\OO(\tau \cdot opt)$ which is separated from the rest of the graph by a cut of size at most $\OO(\beta k\tau)$ for some $\gamma \geq \tau > 1$. We now use random sampling as in above to obtain a set of size $k$ vertices that separates $\Omega(\frac{opt}{\beta^2\tau})$ vertices. It follows that we obtain an $\OO(\beta^2 \tau \leq \beta^2 \gamma)$-approximation for \DKS. Setting $\gamma = \OO(1)$ and $\beta = \sqrt{\alpha}$, we obtain the hardness for \SSVE.\end{proof}

We next obtain a strong hardness of approximation in terms of $k$ based on \DSH{} defined below. 

\begin{definition}[\DSH{}]
Given a bipartite graph $G$ with vertex set $L \cup R$ as the bipartition, find a set $S$ of $k$ vertices in $R$ such that maximum number of vertices in $L$ have neighbours only in $S$.   
\end{definition}

Notice that this definition is equivalent to the standard definition in terms of hyperedges: Each vertex of $L$ represents a hyperedge, and each vertex of $R$ represents a hypergraph vertex. Let an $(\alpha, \beta)$-approximation algorithm for \DSH{} be the one that returns $S \subseteq R$ with $|S| \leq \beta k$ that contains the neighbor set of at least $(1/\alpha) \cdot (|S|/k) \cdot opt$ left vertices (hyperedges), so the ratio between the number of hyperedges and the number of vertices is at least $(1/\alpha)$ times that of the optimal solution.

\begin{lemma}\label{lemma:DSH}
An $f(k) \poly(n)$ time 
$(\alpha, \beta)$-approximation algorithm for \SSVE{} implies an 
$f(k) \poly(n)$ time $(\frac{\alpha }{1 - (\alpha - 1)(k/s)}, \frac{\alpha \beta}{1 - (\alpha - 1)(k/s)})$-approximation for \textsc{Densest} $k$-\textsc{Sub}-\textsc{Hypergraph} for the instance when the optimal has $k$ vertices inducing $s$ hyperedges. 

\end{lemma}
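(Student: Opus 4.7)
The plan is to reduce \DSH{} to \SSVE{} via a graph construction analogous to the one used in the proof of Theorem~\ref{thm:DKS}. Given a bipartite \DSH{} instance $G$ with bipartition $L \cup R$, I would build the graph $H$ with vertex set $L \cup R$ by keeping all bipartite edges of $G$ and additionally placing a clique on $R$ (padding $R$ with a few isolated vertices if needed to ensure the balance condition $|R'| \geq |L'|$ below). The intended correspondence is: a candidate \DSH{} solution $S \subseteq R$ of $k$ vertices, together with the left vertices $T_S := \{\ell \in L : N_G(\ell) \subseteq S\}$ whose neighbourhoods lie entirely in $S$, maps to the vertex cut $(T_S, S, V(H) \setminus (T_S \cup S))$ in $H$. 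If the \DSH{} optimum has $k$ vertices inducing $s$ hyperedges, this produces a vertex cut with $|L| = s$, $|C| = k$, and sparsity $\phi = k/(k+s)$, which is a valid \SSVE{} solution for parameters $(\phi,s)$ (which we guess by iterating over the polynomially many values of $s$).

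Next, I would apply the assumed $f(k)\poly(n)$ time $(\alpha,\beta)$-approximation for \SSVE{} with parameters $(\phi,s)$, obtaining a cut $(L',C',R')$ with $|L'| \leq \beta s$ and $|C'| \leq \alpha\phi\,|L' \cup C'|$. The central structural step is to argue that, without loss of generality, $L' \subseteq L$ and $C' \subseteq R$. The clique on $R$ forces this: since $R$ is a clique, it cannot be split non-trivially between $L'$ and $R'$ (else a clique edge would cross the cut), so one side, say $L'$, contains no vertices of $R$; and any vertex in $L \cap C'$ can be relocated to $R'$ without increasing $|C'|$ or decreasing $|L'|$ (because $L$ is independent in $G$, so such a vertex cannot supply connectivity between $L'$ and $R'$). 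After this normalization, every $\ell \in L'$ has $N_G(\ell) \subseteq C'$, so taking $S := C'$ gives $L' \subseteq T_S$ and hence $|T_S| \geq |L'|$. This $S$ is then returned as our \DSH{} solution.

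Finally, I would compute the approximation ratios. Let $s' := |L'|$ and $k' := |C'| = |S|$. The sparsity bound $k' \leq \alpha \frac{k}{k+s}(s' + k')$ rearranges to
\[
  k' \leq \frac{\alpha k s'}{s - (\alpha-1)k} \;=\; \frac{\alpha k \, (s'/s)}{1 - (\alpha-1)(k/s)}.
\]
Dividing by $k$ and using $s' \leq \beta s$ yields $|S|/k \leq \frac{\alpha \beta}{1 - (\alpha-1)(k/s)}$, which is the $\beta'$ claimed. For the density ratio, the hyperedges we recover number at least $|T_S| \geq s'$, so
\[
  \frac{|S|/k \cdot s}{\text{\#hyperedges}} \;\leq\; \frac{k' s}{k s'} \;\leq\; \frac{\alpha}{1 - (\alpha-1)(k/s)},
\]
which is the $\alpha'$ claimed. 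Iterating the SSVE call over the $\poly(n)$ guesses of $s$ keeps the total runtime in $f(k)\poly(n)$.

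The main obstacle is the WLOG normalization of the \SSVE{} output to $L' \subseteq L$ and $C' \subseteq R$; everything else is algebraic manipulation. The clique gadget on $R$ is precisely what makes this normalization go through, mirroring the role it plays in the \DKS{} reduction earlier, and one must also be mindful that the balance constraint $|R'| \geq |L'|$ requires $|R| \geq |L|$ in $H$, which can always be arranged by padding $R$ with isolated (non-clique) vertices that play no role in either direction of the reduction.
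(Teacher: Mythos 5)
Your proposal is correct and follows essentially the same route as the paper's proof: the identical reduction (keep the bipartite incidence structure and add a clique on $R$), the same normalization that the cut vertices lie in $R$ and the separated side lies in $L$, and the same algebraic rearrangement of the sparsity bound $|C'| \leq \frac{\alpha k}{k+s}(|L'|+|C'|)$ into the claimed $(\alpha', \beta')$ guarantees. You in fact spell out the ``WLOG'' normalization in more detail than the paper does, which is a welcome addition rather than a deviation.
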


\begin{proof}
The reduction is almost the same as the reduction from \DKS{}. Given an instance ($G$, $k$, $L$, $R$) of \DSH{} with vertex set $L \cup R$, add a clique among the vertices $R$. Call the resulting graph $H$.

Suppose there is a \DSH{} solution that separates $s$ vertices. It is clear that $H$ has a vertex cut $(L', C', R')$ with $|R'| \geq |L'| \geq s$ and $|C'| = k$. Now suppose there is an algorithm that finds a vertex cut $(L'',C'',R'')$ in $H$ with $|R''| \geq |L''|$,  $|L''| \leq \beta s$, and $\frac{|C''|}{|L''| + |C''|} \leq \frac{\alpha \cdot k}{k + s}$.
Again, it is clear that we may assume that $C''$ only cuts vertices of $R$, and that $L'' \subseteq L$. 
Since $|C''| \leq \frac{\alpha k}{k + s} |L''| / (1 -  \frac{\alpha k}{k + s}) = \frac{\alpha k}{s - (\alpha - 1)k} |L''| \leq k \cdot \frac{\alpha \beta}{1 - (\alpha - 1)(k/s)}$, we obtain a $(\frac{\alpha}{1 - (\alpha - 1)(k/s)}, \frac{\alpha \beta}{1 - (\alpha - 1)(k/s)})$-approximation for \DSH.\end{proof}

Therefore, when $\alpha k = o(s)$, an $(\alpha, \beta)$-approximation for \SSVE{} implies a $(\alpha(1 + o(1)), \alpha \beta(1 + o(1)))$-approximation for \DSH. 

Manurangsi, Rubinstein, and Schramm~\cite{mrs20} introduced the Strongish Planted Clique Hypothesis (SPCH) and proved that assuming the SPCH, there is no $f(k) \poly(n)$-time algorithm for \DSH{} that finds $k$ vertices that induce at least $(1/\alpha(k)) opt$ hyperedges, for any function $\alpha(k) = 2^{o(k)}$. We slightly generalize their result to show that there is no $f(k) \poly(n)$-time $(\alpha(k), \beta(k))$-approximation algorithm for \DSH{} for any $\alpha(k) = 2^{o(k)}, \beta(k)$, and $f(k)$.

To formally state the SPCH and our result, let $\mathcal{G}(n,1/2)$ be the distribution of an Erd\H{o}s-Renyi random graph, and $\mathcal{G}(n,1/2, n^{\delta})$ be the distribution of a graph where a $n^{\delta}$-size random clique is planted in an Erd\H{o}s-Renyi random graph. 

\begin{hypothesis} [Strongish Planted Clique Hypothesis]
There exists a constant $\delta \in (0, 1/2)$ such that no $n^{o(\log n)}$-time algorithm $\mathcal{A}$ can satisfy both of the following.
\begin{itemize}
    \item (Completeness) $\Pr_{G \sim \mathcal{G}(n, 1/2, n^{\delta})}[\mathcal{A}(G) = 1] \geq 2/3.$
    \item (Soundness) $\Pr_{G \sim \mathcal{G}(n, 1/2)}[\mathcal{A}(G) = 1] \leq 1/3.$
\end{itemize}
\end{hypothesis}

\begin{theorem}\label{thm:DSH}
Assuming the SPCH, for any function $f(k), \beta(k)$, and $\alpha(k) = 2^{o(k)}$, there is no $f(k) \cdot \poly(n)$-time $(\alpha(k), \beta(k))$-approximation algorithm for \DSH{} when $s = 2^k$. 
\end{theorem}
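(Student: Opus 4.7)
The plan is to adapt the Manurangsi-Rubinstein-Schramm (MRS) reduction \cite{mrs20}, which establishes single-criterion $2^{o(k)}$-inapproximability of \DSH{} under the SPCH, to the bi-criteria setting by strengthening only the soundness analysis. Recall that MRS maps a graph $G$ drawn from either $\mathcal{G}(n, 1/2, n^\delta)$ or $\mathcal{G}(n, 1/2)$ to a \DSH{} instance $(L, R)$ with target size $k$ satisfying: in the completeness case (planted clique), some set of $k$ vertices in $R$ covers $s = 2^k$ hyperedges, so the optimal density is $2^k/k$; in the soundness case (random graph), with high probability no $k$-subset of $R$ covers more than $2^{o(k)}$ hyperedges. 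We will retain the same reduction verbatim and upgrade only the soundness conclusion.

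The upgraded soundness claim is: with high probability over $G \sim \mathcal{G}(n, 1/2)$, every subset $S' \subseteq R$ with $|S'| \leq \beta(k) \cdot k$ covers at most $\beta(k) \cdot 2^{o(k)}$ hyperedges, i.e., the density of every such $S'$ is at most $2^{o(k)}/k$. Combined with the completeness density $2^k/k$, this immediately rules out $(\alpha(k), \beta(k))$-approximations with $\alpha(k) = 2^{o(k)}$: in the completeness case any such approximation must return a set with density at least $(1/\alpha(k)) \cdot 2^k/k = 2^{(1-o(1))k}/k$, which exceeds every density achievable in the soundness case. An $f(k) \poly(n)$-time algorithm with these approximation guarantees therefore solves the planted clique distinguishing problem in the same running time, and, after tuning the MRS reduction so that $k$ is suitably chosen relative to $n$, contradicts the SPCH as in MRS.

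The central new step is the soundness bound for $\beta(k) k$-subsets. For each fixed $S'$ of that size, MRS's construction determines the coverage of each hyperedge via an essentially independent set of random edges of $G$, so the number of hyperedges covered by $S'$ concentrates exponentially (in $k$) around its expectation $\beta(k) \cdot 2^{o(k)}$. A union bound over the $\binom{|R|}{\beta(k) k} \leq 2^{O(\beta(k) k \log |R|)}$ subsets of size $\beta(k) k$ then finishes the argument. The main obstacle is the quantitative accounting: one must verify that for any admissible $\beta(k)$ the MRS tail bound is strong enough to dominate this expanded union-bound exponent while still leaving a $2^{o(k)}$ density gap between completeness and soundness. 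We expect this to go through by appropriately calibrating the parameters of the MRS reduction (in particular the hyperedge arity and the size of $|R|$) as a function of $\beta(k)$, so that the tail exponent in the concentration bound remains dominant against $\beta(k) k \log |R|$.
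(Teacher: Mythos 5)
Your high-level plan coincides with the paper's: keep the MRS reduction essentially intact, preserve the completeness case, and strengthen soundness so that it rules out \emph{every} set of size up to $\beta(k)\,k$ rather than only sets of size exactly $k$. The gap is in how you propose to prove the strengthened soundness. You assert that for a fixed $S'\subseteq V_H$ the number of induced hyperedges ``concentrates exponentially around its expectation'' because the hyperedges are determined by ``essentially independent'' random edges of $G$. This is not true and is exactly the difficulty. An $r$-set of $V_H$-vertices is a hyperedge iff the union of the corresponding random subsets of $V_G$ forms a clique in $G$; distinct $r$-subsets of $S'$ have heavily overlapping unions, so the hyperedge indicators are strongly positively correlated through shared edges of $G$. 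Bounding the upper tail of such a count is a correlated subgraph-count problem for which Chernoff-type bounds do not apply, and the expectation you quote ($\beta(k)\cdot 2^{o(k)}$) is also unjustified --- per potential hyperedge the probability of appearing in $\mathcal{G}(n,1/2)$ is governed by $2^{-\Omega((\delta \ell r)^2)}$ via the size of the union, which is a very different quantity. Deferring the ``quantitative accounting'' to a final calibration step leaves unproved precisely the lemma on which the whole theorem rests.

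The paper avoids concentration entirely. Given a candidate $S$ with $|S|\le \beta k$ and density at least $\binom{k}{r}/(\alpha k)$, it first greedily prunes $S$ to a subset in which every vertex has degree at least $k/(2(\alpha k)^{1/r})$ in the co-occurrence graph $G'[S]$ (two $V_H$-vertices adjacent iff they share a hyperedge). It then invokes Lemma 7 of~\cite{mrs20} --- small families of the random subsets have large unions --- to conclude \emph{deterministically} that $T=\cup_{s\in S}s$ must contain at least $\Omega(\delta^2 |S| \ell^2 k/(\alpha k)^{1/r})$ edges of $G$, an event whose probability under $\mathcal{G}(n,1/2)$ is $2^{-\Omega(\delta^2 |S|\ell^2 k/(\alpha k)^{1/r})}$. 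Finally it union bounds not only over the $N^{|S|}$ choices of $S$ but also over the $2^{|S|^2}\le N^{|S|}$ possible co-occurrence patterns $G'[S]$, and chooses $\ell=\Theta(\log n/(r\,g(k)))$ and $r=k/g(k)$ for a slowly growing $g$ so that the tail exponent dominates $|S|\ell\log(kn)$ whenever $\alpha(k)=2^{o(k)}$. If you want to salvage your route, you would need to supply a genuine upper-tail bound for the correlated hyperedge count with failure probability $2^{-\omega(\beta k\log N)}$; the paper's structural extraction is, in effect, the substitute for that missing bound. (Separately, note the completeness value delivered by the construction is $\binom{k}{r}$ hyperedges on $k$ vertices with $r=o(k)$, not literally $2^k$.)
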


Since $k \cdot \alpha(k) = o(s)$, together with \Cref{lemma:DSH}, it rules out $(\alpha'(k), \beta'(k))$-approximation for \SSVE{} for any $\alpha'(k) = 2^{o(k)}$ and $\beta'(k)$ as well, finishing the proof of \Cref{thm:DSH_SSVE}. 

\begin{proof}
Given a value of $k$ (as a parameter growing arbitrarily slower than $n$) and $\delta$,
we give a reduction from SPCH to \DSH.
Given an instance $G = (V_G, E_G)$ of SPCH with $n$ vertices, our reduction produces a hypergraph $H = (V_H, E_H)$.
Our reduction is parameterized by $\ell, r, N$, where $r = r(k) = o(k)$ and $\ell = o(\log n / r)$ will be determined later (so $\ell$ will be greater than any function of $k$), and $N = 100kn^{(1 - \delta)\ell}$. 
The hypergraph $H$ will be $r$-uniform. 
The reduction is as follows. 

    \begin{itemize}
        \item $|V_H| = N$. Each vertex $s \in V_H$ corresponds to a random subset of $V_G$ where each $v \in V_G$ is independently included with probability $\ell / n$. 

        \item A $r$-set $\{ s_1, \dots, s_r \} \subseteq V_H$ becomes a hyperedge if $\cup_{i \in [r]} s_i$ forms a clique in $G$. 

        \item We will let $N = 100kn^{(1-\delta)\ell}$. 

        \item Reduction time: at most $N^r = n^{O(\ell r)}$. 
        
    \end{itemize}

    For completeness, if $G$ contains a clique $C$ of size $n^{\delta}$, one vertex $s_i$ of $H$, is a subset of $C$ with probability $n^{(\delta-1) \ell}$. The expected number of such vertices is $100k$, so with a probability at least $0.9$, the number of such vertices is at least $k$. In $H$, these $k$ vertices induce $\binom{k}{r}$ hyperedges by construction.

    For soundness, assume that there exists $S \subseteq V_H$ such that 
    $t := |S| \leq \beta k$ and $S$ induces at least $(1/\alpha) (|S|/k) \binom{k}{r}$ hyperedges, where $\alpha = \alpha(k)$ and $\beta = \beta(k)$ are the functions specified in the theorem. 
    We will show that such $S$ cannot exist with high probability. 

    Consider a graph $G'$ whose vertex set is $V_H$ and $(s, s')$ is an edge if they belong to a same hyperedge of $H$. 
We want to say that each vertex in $s \in S$ has a large degree in $G'[S]$. (This is an analogue of Observation 6 of~\cite{mrs20}.)

\begin{claim}
There exists $S' \subseteq S$ such that the degree of every $s \in S'$ in $G'[S']$ is at least $\frac{k}{2(\alpha k)^{1/r}}$. 
\end{claim}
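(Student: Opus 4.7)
The plan is to prove the claim via an iterative peeling argument. Set the threshold $D := \frac{k}{2(\alpha k)^{1/r}}$, initialize $S_0 := S$, and as long as there is some $s \in S_i$ with $\deg_{G'[S_i]}(s) < D$, remove $s$ to form $S_{i+1}$. Let $S'$ be the set that remains when no further removal is possible. By construction, every vertex of $S'$ has degree at least $D$ in $G'[S']$, so it suffices to show that $S'$ is nonempty.

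To argue $S' \neq \emptyset$ I double-count the hyperedges of $H[S]$. The key geometric observation is: whenever a vertex $s$ is removed from the current set $S_i$, every hyperedge of $H[S_i]$ containing $s$ must have its remaining $r-1$ vertices lying inside $N_{G'[S_i]}(s)$, which has size less than $D$. Hence the removal of $s$ destroys at most $\binom{D}{r-1}$ hyperedges of $H[S_i]$. Summing over the at most $t := |S| \leq \beta k$ possible removals, at most $t \binom{D}{r-1}$ hyperedges can be destroyed overall. On the other hand, by hypothesis $|E(H[S])| \geq (t/k)\binom{k}{r}/\alpha = t\binom{k-1}{r-1}/(r\alpha)$, where I use $\binom{k}{r} = (k/r)\binom{k-1}{r-1}$. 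If $S'$ were empty, all these hyperedges would have been destroyed, forcing
\[
\binom{D}{r-1} \;\geq\; \binom{k-1}{r-1}/(r\alpha).
\]

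The principal obstacle is showing that our choice of $D$ violates this inequality, yielding a contradiction. Applying $\binom{D}{r-1} \leq D^{r-1}/(r-1)!$ together with $\binom{k-1}{r-1} \geq (k-1)^{r-1}/(r-1)! \cdot (1-o(1))$ (valid since $r = o(k)$), the desired violation reduces to $D < (k-1)/(r\alpha)^{1/(r-1)} \cdot (1-o(1))$, and substituting $D = k/(2(\alpha k)^{1/r})$ this becomes, after raising to the $r(r-1)$-st power, $r^r \alpha < 2^{r(r-1)} k^{r-1} \cdot (1 - o(1))$. This is precisely ensured by the paper's choice of parameters: $\alpha = 2^{o(k)}$ together with $r$ chosen large enough that $(r-1)\log k$ dominates $o(k)$, which is the regime in which the rest of the reduction operates. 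Once the contradiction is derived, $S'$ must be nonempty and the claim is proved. This is the direct $r$-uniform-hypergraph analogue of Observation~6 of~\cite{mrs20} for the graph case.
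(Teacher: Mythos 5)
Your proposal is a genuinely different argument from the paper's. The paper does not peel by a fixed degree threshold; it peels by \emph{density}: a vertex is removed only if its removal does not decrease the hypergraph density of $H[S]$, so the surviving set is automatically nonempty (the density never drops below its initial positive value) and every survivor has hyperedge-degree exceeding the density $\binom{k}{r}/(\alpha k)$, from which the bound on the $G'$-degree $d$ follows pointwise via (number of hyperedges containing $s$) $\leq \binom{d}{r-1}$. Your route instead fixes the target degree $D$ up front, peels below it, and rules out total annihilation by a global count of destroyed hyperedges. The trade-off is that the paper's version is unconditional in the parameters, whereas yours proves the claim only under the side condition $\binom{D}{r-1} < \binom{k-1}{r-1}/(r\alpha)$, which after simplification is roughly $r^{r}\alpha < k^{r-1}$, i.e.\ $\alpha k \lesssim g(k)^{r}$. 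That condition does hold in the regime where the reduction is run (it is weaker than the $\alpha k < 2^{k/g(k)}$ needed for the soundness union bound), so your proof is acceptable in context, but you should state the condition explicitly rather than deferring it to ``the paper's choice of parameters,'' since the claim as written is parameter-free.

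One estimate in your verification is wrong as stated: $\binom{k-1}{r-1} \geq \frac{(k-1)^{r-1}}{(r-1)!}(1-o(1))$ fails badly when $r^2 = \omega(k)$, which is exactly the regime here ($r = k/g(k)$ with $g$ slowly growing makes the true ratio $\approx e^{-r^2/(2k)}$ exponentially small). Use instead $\binom{k-1}{r-1} \geq \frac{(k-r)^{r-1}}{(r-1)!} \geq \frac{(k/2)^{r-1}}{(r-1)!}$ for $r \leq k/2$; the lost factor $2^{r-1}$ is exactly absorbed by the factor $2$ in the definition of $D$, and the required inequality cleanly becomes $r^{r}\alpha < k^{r-1}$ with no $(1-o(1))$ fudge. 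With that repair and the side condition made explicit, your argument is correct.
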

\begin{proof}
        Initially, the {\em hypergraph density} of $H[S]$, which is simply the number of hyperedges divided by the number of vertices, is at least $(1/\alpha) (|S|/k) \binom{k}{r} / |S| \geq \binom{k}{r}/(\alpha k)$. Greedily, if removing $s \in S$ (and all its incident hyperedges) does not decrease the hypergraph density, do it until there is no such vertex.

        Now we want to show that every $s \in S$ has a large degree. Consider $s \in S$ whose degree in $G'[S]$ is $d$. Then, the number of hyperedges of $H[S]$ containing $s$ is at most $\binom{d}{r - 1} \leq \binom{d}{r}$.
        (Note that a hyperedge induces a $r$-clique in $G'$ by construction.)         
        Since removing $s$ from $S$ will decrease the hypergraph density we have,
        \begin{align*}
        &\ \frac{\binom{k}{r}}{\alpha k} < \binom{d}{r} \\
        \Rightarrow &\ \frac{(k/2)^r}{\alpha k} < d^r 
        \qquad \mbox{(assuming }r \leq k/10 \mbox{ that will be ensured later)} 
        \\
        \Rightarrow &\ d > \frac{k}{2
        (\alpha k)^{1/r}},
        \end{align*}
        which proves the claim.\end{proof}
        
For notational simplicity, redefine $S$ to be the set satisfying $|S| \leq \beta k$ and $\deg_{G'[S]}(s) \geq \frac{k}{2(\alpha k)^{1/r}}$ for every $s \in S$. Now, we show that in the graph $G$, the vertices in $(\cup_{s \in S} s)$ must have many edges. In order to show this, we use the following lemma~\cite{mrs20} directly.
        \begin{lemma}[Lemma 7 of~\cite{mrs20}]
            If $N \leq 1000\ell n^{(1-\delta)\ell}$ and $20 \leq \ell$, for any $M \subseteq V_H$ with $|M| \leq n^{0.99\delta} / \ell$, we have $|\cup_{s \in M} s| \geq 0.01\delta|M|\ell$.
            \label{lemma:mrs_lem7}
        \end{lemma}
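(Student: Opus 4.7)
The plan is to prove the lemma by a union bound over pairs $(M, U)$, where $M \subseteq V_H$ is a candidate ``bad'' subset of size $m$ and $U \subseteq V_G$ is a putative ``small universe'' of size $u := \lceil 0.01\delta\, m\ell\rceil$ that is required to contain $\bigcup_{s\in M}s$. Since $\bigl|\bigcup_{s\in M}s\bigr| < u$ is precisely the event that some such $U$ exists,
\[
\Pr\Bigl[\exists M,\ |M|=m,\ \bigl|\textstyle\bigcup_{s\in M}s\bigr|<u\Bigr]\ \le\ \binom{N}{m}\binom{n}{u}\cdot \Pr\bigl[s_1,\ldots,s_m\subseteq U\bigr]
\]
for any fixed $U$ with $|U|=u$; summing over $m=1,\ldots,\lfloor n^{0.99\delta}/\ell\rfloor$ then yields the claim.

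For the inner probability, the key bound is $\Pr[s_i\subseteq U]\le (u/n)^{\ell}$, which is immediate when $s_i$ is a uniform size-$\ell$ subset via $\binom{u}{\ell}/\binom{n}{\ell}\le(u/n)^{\ell}$; in the Bernoulli model this is recovered by conditioning on $|s_i|=t_i$ (yielding $\binom{u}{t_i}/\binom{n}{t_i}\le(u/n)^{t_i}$) and invoking Chernoff concentration of $\sum_i|s_i|$ around $m\ell$. Independence across $i$ then gives $\Pr\bigl[\bigcap_i\{s_i\subseteq U\}\bigr]\le (u/n)^{m\ell}$, up to an $e^{-\Omega(m\ell)}$ additive error. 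Plugging in the standard estimates $\binom{N}{m}\le(Ne/m)^{m}$ and $\binom{n}{u}\le(ne/u)^{u}$, and noting $u\log(ne/u)-u\log(n/u)=u$, the log of the overall upper bound simplifies to
\[
m\log(Ne/m)\;+\;u\;-\;(m\ell-u)\log(n/u)\;+\;O(m\ell).
\]

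Substituting the two hypotheses, $\log N\le(1-\delta)\ell\log n+O(\log\ell)$ and $\log(n/u)\ge(1-0.99\delta)\log n - O(\log(1/\delta))$ (the latter uses $m\ell\le n^{0.99\delta}$), the coefficient of $m\ell\log n$ in the dominant part works out to
\[
(1-\delta)\;-\;(1-0.01\delta)(1-0.99\delta)\;=\;(1-\delta)\;-\;\bigl(1-\delta+0.0099\,\delta^{2}\bigr)\;=\;-0.0099\,\delta^{2},
\]
which is strictly negative. Hence the per-$m$ failure probability is at most $n^{-\Omega(\delta^{2} m\ell)}\cdot e^{O(m\ell)}$; since $\ell\ge 20$ and $n$ is taken large, this is easily $o(1/n)$, and a geometric sum over $m$ closes the argument.

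The most delicate step is the derivation of the $(u/n)^{m\ell}$ bound on $\Pr[s_1,\ldots,s_m\subseteq U]$, because the direct Bernoulli estimate $(1-\ell/n)^{n-u}\approx e^{-\ell}$ per subset lacks the crucial $(u/n)^{\ell}$ decay and would make the coefficient of $m\ell\log n$ positive. The resolution is the size-conditioning described above, but one must verify that the additive $e^{-\Omega(m\ell)}$ error from Chernoff and the $O(m\ell)$ slack in the exponent do not overwhelm the narrow $-0.0099\delta^{2}$ savings. This closes provided $\delta^{2}\ell\log n$ dominates the absolute constants, which is ensured by $\ell\ge 20$, fixed $\delta\in(0,1/2)$, and $n\to\infty$.
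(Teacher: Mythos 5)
This lemma is not proved in the paper at all; it is imported verbatim as Lemma~7 of~\cite{mrs20}, so there is no in-paper argument to compare against. Your union bound over pairs $(M,U)$ with $u=\lceil 0.01\delta m\ell\rceil$ is exactly the standard (and, as far as I can tell, the original) proof strategy, and your bookkeeping is right: the exponent coefficient $(1-\delta)-(1-0.01\delta)(1-0.99\delta)=-0.0099\delta^2$ is correct, and the argument closes cleanly \emph{provided} $\Pr[s_i\subseteq U]\le(u/n)^{\ell}$ holds, i.e.\ in the model where each $s_i$ is a uniformly random $\ell$-subset of $V_G$. Note that this is also the model implicitly required by the completeness step of the reduction (the claim $\Pr[s_i\subseteq C]=n^{(\delta-1)\ell}$ is the uniform-subset computation $\binom{n^\delta}{\ell}/\binom{n}{\ell}$, not the Bernoulli one).

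Your patch for the Bernoulli model (each $v$ included independently with probability $\ell/n$, as literally written in the reduction) does not work, and cannot be made to work. The decomposition you propose gives
$\Pr[\cap_i\{s_i\subseteq U\}]\le (u/n)^{(1-\eps)m\ell}+\Pr[\sum_i|s_i|<(1-\eps)m\ell]$, and the second, additive term is $e^{-\Theta(\eps^2 m\ell)}$ --- a quantity independent of $n$. It must then be multiplied by $\binom{N}{m}\approx n^{(1-\delta)\ell m}$ in the union bound over $M$, and $n^{(1-\delta)\ell m}\cdot e^{-O(m\ell)}\to\infty$; no choice of constants rescues this, because the obstruction is $\log n$ versus $O(1)$ in the exponent, not the size of $0.0099\delta^2$. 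Indeed the statement is simply false in the Bernoulli model: $\Pr[s=\emptyset]=(1-\ell/n)^n\ge e^{-2\ell}$, so the expected number of empty sets is at least $N e^{-2\ell}\ge n^{(1-\delta)\ell}e^{-2\ell}\ge n^{0.4\ell}\gg n^{0.99\delta}/\ell$ (using $\delta<1/2$ and $\ell\ge 20$), and by concentration one can w.h.p.\ choose $M$ of the maximal allowed size consisting entirely of empty sets, giving $|\cup_{s\in M}s|=0$. So you should either state the lemma for uniform $\ell$-subsets (where your proof is complete) or observe that the Bernoulli construction needs to be conditioned/modified; the ``delicate step'' you flag is not a technicality to be absorbed into constants but the point where the two models genuinely diverge.
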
 
        (Note that this is only over the  randomness of selecting $V_H$ and not the randomness of $G$.)

Let $T = \cup_{s \in S} s \subseteq V_G$, and consider the set of edges that should appear in $G[T]$. For each $v \in T$, fix an arbitrary $s \in S$ that contains $v$, and let $s_1, \dots, s_d$ be the neighbors of $s$ in $G'[S]$ (so that $d \geq k/(2(\alpha k)^{1/r})$). 
        By Lemma~\ref{lemma:mrs_lem7}, $| \cup_{i \in [d]} s_i | \geq 0.01 \delta \ell k/(2(\alpha k)^{1/r})$, which implies that the degree of $v$ in $G[T]$ is at least $0.01 \delta \ell k/(2(\alpha k)^{1/r})$. 
        Therefore, we can conclude that the total number of edges that must appear in $G[T]$ is at least $0.01 |T| \delta \ell k / (4(\alpha k)^{1/r})$.
        The probability of all these edges appearing is $2^{-0.01 |T| \delta \ell k / (4(\alpha k)^{1/r})}
        \leq 2^{-0.0001 \delta^2 |S| \ell^2 k / (4(\alpha k)^{1/r})}$, again using Lemma~\ref{lemma:mrs_lem7} for $T$. 

        We union bound over all choices of $S$ ($N^{|S|}$ choices for a given size) and all choices for $G'[S]$ ($2^{|S|^2} \leq N^{|S|}$ choices since $N$ is larger than any function of $|S|$.
        \begin{align*}
        & \sum_{p=1}^{\beta k}
        N^{2p} \cdot 
        2^{-0.0001 \delta^2 p \ell^2 k / (4(\alpha k)^{1/r})} \\
        \leq & \sum_{p=1}^{\beta k}
        \bigg( N^{2} \cdot 
        2^{-0.0001 \delta^2  \ell^2 k / (4(\alpha k)^{1/r})} \bigg)^p \\
        = & \sum_{p=1}^{\beta k}
        \bigg( 2^{2(1-\delta)\ell \log (100kn) -0.0001 \delta^2  \ell^2 k / (4(\alpha k)^{1/r})} \bigg)^p \\
        = & \sum_{p=1}^{\beta k}
        \bigg( 2^{2(1-\delta) \log (100kn) -0.0001 \delta^2  \ell k / (4(\alpha k)^{1/r})} \bigg)^{\ell p}.
        \end{align*}

        We set the parameters $\ell$ and $r$ depending on $k, \alpha, \beta, f, \delta$. As long as $\ell k / (4(\alpha k)^{1/r}) > 1000000 \log n / \delta^2$, the above probability becomes at most $0.01$. Choose $\ell = (1000000/\delta^2)\log n / (r \cdot g(k)),$ where $g$ is a (slowly) growing function to be determined shortly, which implies that 
    \[
    \ell k / (4(\alpha k)^{1/r}) > 1000000 \log n / \delta^2 \Leftrightarrow 
    k/(4(\alpha k)^{1/r}) > rg(k)
    \Leftrightarrow 
    \alpha k < (k/(4rg(k))^r. 
    \]
    Letting $r = k / (g(k))$ will ensure that as long as $\alpha k < 2^{k/(g(k))}$, the desired set $S$ does not arise from $G(n, 1/2)$ with probability at least $0.99$. Since $\alpha = 2^{o(k)}$, one can always choose a growing function $g$ satisfying this. 

    Therefore, an $(\alpha, \beta)$-approximation $\mathcal{A}$ for \DSH{} 
    implies an algorithm that correctly decides whether a given graph is a random graph or a planted graph with probability at least $0.9$. 
    In terms of running time, since $H$ has size at most $O(N^{r})$, if $\mathcal{A}$ runs in time $f(k) \cdot N^{c \cdot r}$ for some absolute constant $c$, then its running time in terms of planted clique is 
    \[
    f(k) n^{(1 - \delta)\ell r}
    \leq 
    f(k) n^{1000000((1 - \delta)/\delta^2)\log n / g(k)},
    \]
    so for large enough $k$, the existence of $\mathcal{A}$ refutes the SPCH.\end{proof}

\subsection{Algorithm}

For our next result, we prove~\Cref{thm:fpt} which shows that we can match this lower bound for \SSVE{} using the treewidth reduction technique of Marx et al.~\cite{msr13}, under a mild assumption. Previously, this technique was used by van Bevern et al.~\cite{bamo15} to show the fixed-parameter tractability of vertex bisection when the number of connected components in the optimal bisection is a constant.

For our result, we will need the following theorem from~\cite{msr13}.

\begin{theorem}[Treewidth reduction, see Lemma 2.11 from ~\cite{msr13}]\label{thm:twred}
Given a graph $G$ and two vertices $s$ and $t$, one can in time $\OO(f(k)(n + m))$ find a set of vertices $U$, such that the graph $H$ obtained from $G$ by contracting each connected component of $V(G) \setminus U$ has treewidth at most $2^{\OO(k^2)}$. Further, $U$ contains every inclusion-wise minimal $s$-$t$ separator of size at most $k$. 
\end{theorem}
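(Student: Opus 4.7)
The plan is to build on the classical theory of \emph{important separators} from parameterized complexity, which underlies the result of Marx, O'Sullivan, and Razgon~\cite{msr13}. I would begin by defining the notion of important $s$-$t$ separators: an $s$-$t$ separator $S$ with $|S| \leq k$ is called important if it is inclusion-wise minimal and no $s$-$t$ separator $S'$ with $|S'| \leq |S|$ strictly dominates $S$, meaning the $t$-side of $S'$ is a strict superset of the $t$-side of $S$. A standard branching argument (branching on whether a neighbor of the current $s$-component lies in the separator) shows that the number of important $s$-$t$ separators of size at most $k$ is at most $4^k$, and they can all be enumerated in $O(4^k (n+m))$ time via an iterated max-flow/min-cut computation. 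I would then set $U$ to be the union of these important separators together with $\{s,t\}$, so $|U| = f(k)$ for some explicit function $f$.

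Next I would show that $U$ contains every inclusion-wise minimal $s$-$t$ separator of size at most $k$. Given such a separator $S$, it is either important itself (in which case $S \subseteq U$ by construction), or it is dominated by some important separator $S^\ast$. In the latter case, the vertices of $S$ that lie in $S^\ast$ are already in $U$, and the remaining vertices form a minimal separator of smaller cardinality in the residual instance obtained by contracting the $s$-side of $S^\ast$ into $s$; by induction on $k$, these vertices also lie in $U$.

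Finally I would bound the treewidth of the contracted graph $H$. The key structural observation is that each connected component of $V(G) \setminus U$ interacts with $U$ only through a bounded set of ``interface'' vertices, because otherwise one could build a small minimal $s$-$t$ separator straddling the component that is not contained in $U$, contradicting the previous step. Moreover, the collection of important separators is arranged in a near-laminar fashion induced by the domination order. I would construct a tree decomposition of $H$ whose tree structure refines this laminar family of important separators, and whose bags collect together the interface vertices of all separators that are incomparable at a given level, yielding bags of size $2^{\OO(k^2)}$.

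The main technical obstacle is establishing the treewidth bound of $2^{\OO(k^2)}$: the combinatorics of how up to $4^k$ important separators of size at most $k$ can overlap and nest has to be analyzed carefully, and one has to argue that despite there being exponentially many separators, only polynomially (in $k$) many can be mutually ``active'' at any node of the decomposition, contributing $\OO(k^2)$ vertices per bag. Once the structural bound is in hand, the running time $\OO(f(k)(n+m))$ follows from the linear-time branching enumeration of important separators together with a standard bottom-up linear-time construction of the tree decomposition along the laminar family.
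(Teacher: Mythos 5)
This theorem is not proved in the paper at all; it is imported verbatim as Lemma~2.11 of Marx, O'Sullivan and Razgon~\cite{msr13}, so the only fair comparison is against their argument. Measured against that, your proposal has a genuine gap at its core. The claim that $U$, taken to be the union of all important $s$-$t$ separators of size at most $k$ (plus $\{s,t\}$), contains every inclusion-wise minimal $s$-$t$ separator of size at most $k$ is false. Take $G$ with edges $s\mbox{--}a$, $a\mbox{--}b$, $b\mbox{--}t$, $s\mbox{--}c$, $c\mbox{--}t$. Both $\{a,c\}$ and $\{b,c\}$ are minimal $s$-$t$ separators of size $2$, but only $\{b,c\}$ is important (it has the strictly larger $s$-side $\{s,a\}$), so $a$ participates in a minimal separator yet lies in no important separator. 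Your inductive repair does not rescue this: in the example $a$ lies \emph{inside} the $s$-side of the dominating important separator $S^{\ast}=\{b,c\}$, so after contracting that side into $s$ the vertex $a$ vanishes and $S\setminus S^{\ast}=\{a\}$ is not a separator of any residual instance. This is exactly why~\cite{msr13} do not take a one-shot union of important separators; their covering of the set $C$ of all vertices in small minimal separators is a recursion that peels off a maximal sequence of pairwise non-crossing minimal separators, splits the graph into the strips between consecutive ones, and recurses with a strictly smaller \emph{excess} parameter $|S|-\lambda$ (where $\lambda$ is the minimum separator size), which is what drives both the covering and the $f(k)(n+m)$ running time.

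The treewidth part has the same problem in a different guise. You invoke a ``near-laminar'' arrangement of important separators, but important separators are in general \emph{not} laminar --- two of them can cross --- and taming this is precisely the hard content of the treewidth reduction theorem, not a detail to be deferred. In~\cite{msr13} the bound $2^{\OO(k^2)}$ on the treewidth of the torso arises from the recursion just described: the recursion depth is $\OO(k)$ (the excess drops each level), each level contributes bags built from $2^{\OO(k)}$-many boundary vertices, and the tree decompositions of the strips are glued along the nested separators. Without the strip recursion you have neither a laminar family to hang a tree decomposition on nor a mechanism that produces the $k^2$ in the exponent. So while your first ingredient (the $4^k$ bound on important separators and their linear-time enumeration) is correct and is indeed used in~\cite{msr13}, the two load-bearing steps --- covering all minimal separators and bounding the treewidth of the contraction --- do not go through as written.
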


\begin{proof}[Proof of~\Cref{thm:fpt}]

There are two cases. For the first, assume that every component in $G \setminus C$ has size at most $\frac{3n}{4}$. In this case, $C$ is a balanced separator, and we can simply find a balanced separator of size $k$ using directly the result of~\cite{fm06} in time $2^{\OO(k)} n^{\OO(1)}$.

For the second case, when at least one component in $G \setminus C$ has size at least $\frac{3n}{4}$ - 
notice that without loss of generality, we can assume that $R$ is connected, for otherwise we can move all except the largest component in $R$ to $L$, and obtain a cut which is sparser, contradicting the maximality of $(L,C,R)$.

Group the connected components $C_1, C_2....C_{\ell}$ of $G[L]$ by their neighbourhood in $C$. Observe that the number of groups is at most $2^k$. It follows that there exists at least one group of components, say $\AA$ so that $|\bigcup_{A \in \mathcal{A}} A| \geq \frac{|L|}{2^k}$ and each $A \in \AA$ has the same neighbourhood in $C$. Consider an arbitrary component $A \in \AA$. Let $s$ be a vertex of $A$ and $t$ be a vertex of $R$. Consider the inclusion-wise minimal separator $T \subseteq C$ that separates $s$ from $t$. Clearly, $T$ must separate $A$ from $R$. Since every component in $\AA$ has the same neighbourhood in $C$, it follows that $T$ in fact separates each component of $\AA$ from $R$.

We now proceed as follows. First, we guess the vertices $s \in A$ and $t \in R$. Next, we invoke \Cref{thm:twred} to obtain the set $U$ and the graph $H$ with treewidth at most $2^{\OO(k^2)}$. We assign a weight to each vertex of $H$: every vertex of $U$ has weight $1$, and every vertex corresponding to a connected component of $G \setminus U$ has weight equal to the number of vertices in that component. It is now clear that the problem reduces to finding a set with weight at least $\frac{|L|}{2^k}$ and at most $|L|$, that is separated from the rest of the graph by a separator of weight $k$ in $H$. Using Theorem 4 from~\cite{bamo15}, this can be done in time $2^{\OO(\textbf{tw}(H))} \poly(n)$, completing the proof. \end{proof}

Note that~\Cref{thm:fpt} also gives a $2^k$-approximation in time $2^{2^{\OO(k^2)}}\poly(n)$ for \SSVE, when we are promised that the optimal cut is maximal. We remark that maximality is a natural assumption when we want to find the sparsest cut in the graph.

\section{Applications}\label{sec:applications}

\subsection{Multicut mimicking networks}\label{sec:mmn}

In this section, we show an improvement in the size of the best multicut mimicking network that can be computed in polynomial time. This follows directly from the reduction in~\cite{wahlstrom22} to \SSE{}. We start by defining multicut mimicking networks.

\begin{definition}
Given a graph $G$ and a set of terminals $T$, let $G'$ be another graph with $T \subseteq V(G')$. We say that $G'$ a multicut mimicking network for $G$ if for every partition $\mathcal{T} = T_1 \cup T_2 \cup \ldots T_s, s \geq 1$  of $T$, the size of a minimum multiway cut for $\mathcal{T}$ is the same in $G$ and $G'$. 
\end{definition}

In other words, a multicut mimicking network preserves the size of the minimum multiway cut for every partition of the terminal set (We recall that a multiway cut for a terminal set is a cut that separates all terminals from each other). Let us denote the total number of edges incident on vertices of $T$ by $cap_G(T)$, and let $cap_G(T) = k$.

\begin{theorem}\cite{wahlstrom22}]\label{thm:mmn_sse}
Let $A$ be an approximation algorithm for \SSE{} with a bi-criteria approximation
ratio of $(\alpha(n,k), \OO(1))$, where $n$ is the number of vertices and $k$ is the number of edges cut in the optimal cut. Let $(G, T)$ be a terminal
network with $cap_G(T) = k$. Then there is a multicut mimicking network for $G$ of size $k^{\OO(\alpha(n,k) \log k)}$ and it can be found in randomized polynomial time.
\end{theorem}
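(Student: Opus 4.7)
The plan is to prove this via a recursive decomposition of $G$ driven by algorithm $A$: repeatedly find a sparse cut using the $(\alpha, O(1))$-approximation to partition the instance into smaller subinstances, and handle the base case (no sparse cut exists) by a structural bound. First, I would recast \SSE{} in terms of terminal capacity rather than vertex count by subdividing each edge incident to a terminal (or passing to a terminal-weighted analogue in the spirit of \STE{} from \Cref{sec:sselp}); an $(\alpha, O(1))$-approximation for vanilla \SSE{} should transfer to one that finds cuts balanced with respect to $\operatorname{cap}(\cdot)$. This normalization is crucial because the target bound $k^{O(\alpha \log k)}$ is stated in terms of $k = \operatorname{cap}_G(T)$ rather than $|V(G)|$, so the recursion must make progress on terminal capacity rather than vertex count.

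In the recursion step, I would run $A$ on the current subnetwork $(G', T')$ with $\operatorname{cap}_{G'}(T') = k'$ and a sparsity parameter $\phi \approx 1/\alpha$ chosen so that a returned cut has edge-count bounded by $k'/C$ for a constant $C$. If $A$ returns a cut $(X, \overline{X})$, add the edges of $\delta(X)$ as new ``boundary terminals'' on each side and recurse independently on the two sub-networks. Each child inherits terminal capacity at most a constant fraction of $k'$ (the bulk of $\operatorname{cap}(T')$ is balanced by the cut, and the new boundary contributes only $|\delta(X)| \ll k'$), so the recursion tree has depth $O(\log k)$. The mimicking network of $G'$ is assembled by gluing the two child networks along the shared $\delta(X)$ vertices, which preserves every multiway cut for every terminal partition by standard arguments about cuts factoring across a balanced separator.

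The base case triggers when $A$ certifies that no sufficiently sparse cut exists, i.e., the subnetwork is an $\Omega(1/\alpha)$-terminal-expander in the appropriate sense. I would establish here the key structural claim that such a terminal-expander admits a mimicking network of size $(k')^{O(\alpha)}$: in an expander, every min multiway cut must localize near the terminals because extending the cut away from terminals quickly becomes prohibitively expensive, and enumerating all ``local cut shapes'' within $O(\alpha)$-depth neighborhoods of the terminals yields the claimed size bound. The exponent $\alpha$ arises from the $1/\alpha$ slack in the expansion guarantee, which controls how far a min cut can penetrate into the expander. Standard tools such as important separators, confluent representations of min cuts, or the classical covering-by-$s$-$t$-mincuts approach should make this enumeration precise.

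Combining the two phases, each internal node of the recursion also contributes a $k^{O(\alpha)}$ blowup because the boundary-terminal configurations that must be preserved at the gluing interface must themselves be enumerated; multiplying across the $O(\log k)$ recursion levels yields total size $\bigl(k^{O(\alpha)}\bigr)^{O(\log k)} = k^{O(\alpha \log k)}$. Randomized polynomial running time follows because $A$ is invoked $\operatorname{poly}(n)$ times and the gluing operations are elementary. The principal obstacle will be the expander base case: one has to show that the expansion certificate controls not merely the sparsity of individual cuts but the combinatorial family of all min multiway cuts for every partition of $T$ simultaneously. This is precisely the place where $\alpha$ enters the final exponent, so any improvement in the expander structural bound would propagate directly to the theorem statement.
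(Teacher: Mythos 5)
First, a point of calibration: the paper does not prove this statement at all --- it is imported verbatim from~\cite{wahlstrom22} (it is Wahlstr\"{o}m's reduction from mimicking networks to \SSE{}), and the present paper only plugs its new $\OO(\log k)$-approximation into it to get \Cref{thm:mmn}. So your attempt is really a reconstruction of Wahlstr\"{o}m's argument, and judged on that basis it has the right overall shape (reduce along sparse cuts measured by terminal capacity, glue, and handle the terminal-expanding case structurally) but a genuine gap exactly where the theorem is hard.

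The gap is the base case. You assert that a terminal-expanding instance admits a mimicking network of size $(k')^{\OO(\alpha)}$ by enumerating ``local cut shapes'' near the terminals; this is not a routine enumeration, and in fact it is where the entire quasipolynomial factor $k^{\OO(\alpha \log k)}$ originates in Wahlstr\"{o}m's proof: in the unbreakable (expanding) case he builds a multiway cut-cover by a recursion over partitions of the terminal set, and it is \emph{that} recursion which contributes the $\log k$ in the exponent, with $\alpha$ entering because the approximate \SSE{} oracle only certifies expansion weakened by a factor $\alpha$. If the expanding case really had $k^{\OO(\alpha)}$-size mimicking networks as you claim, then running your scheme with exact sparse cuts would give polynomial-size multicut mimicking networks, which is a well-known open problem; so the claimed base-case bound cannot be taken for granted. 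Relatedly, your accounting of the recursive phase is off in two ways: gluing mimicking networks along the cut edges (promoted to terminals on both sides) preserves multiway cuts with sizes \emph{adding}, so there is no need to ``enumerate boundary-terminal configurations'' and no multiplicative $k^{\OO(\alpha)}$ blowup per level; and the claimed $\OO(\log k)$ recursion depth relies on the cut being balanced with respect to $cap(\cdot)$, which the $(\alpha,\OO(1))$ \SSE{} guarantee does not provide --- the returned set may carry a tiny fraction of the capacity, so one only gets polynomially many reduction steps, which is fine in the additive accounting but breaks your multiplicative derivation of $k^{\OO(\alpha\log k)}$. To repair the write-up you would either have to reprove Wahlstr\"{o}m's cut-cover lemma for unbreakable instances or simply cite~\cite{wahlstrom22}, as the paper does.
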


Since at the time no $\polylog(k)$-approximation for \SSE{} was known,~\Cref{thm:mmn_sse} by itself could only be used to show the existence of a multicut mimicking network.~\cite{wahlstrom22} however used the approximation algorithm for \SSE{} by Bansal et al.~\cite{minmax} who gave an approximation of $\OO(\sqrt{\log n \log {\frac{n}{s}}})$, where $s$ is the bound on the set size and obtained a multicut mimicking network of size $k^{\OO(\log^3 k)}$ in polynomial-time using a more careful analysis. 

From our results in~\Cref{section:sse}, we obtain an algorithm with $\alpha(n,k) = \OO(\log k)$. Clearly,~\Cref{thm:mmn} follows from~\Cref{thm:mmn_sse} and the fact that $\alpha(n,k) = \OO(\log k)$. This improves the result to $k^{\OO(\log^2 k)}$.

This improves the best known kernels for many problems which are a consequence of multicut mimicking networks as considered in~\cite{wahlstrom22}. In particular, {{\sc Multiway Cut}} parameterized by the cut-size and {{\sc Multicut}} parameterized by the number of pairs and the cut-size together, admit $\OO(k^{\OO(\log^2 k)})$ size kernels in polynomial time.

\subsection{Min-max graph partitioning using weighted sample sets}
In min-max graph partitioning, we are given a graph $G$ and an integer $r$. The goal is to partition the graph into $r$ components of size $\frac{n}{r}$ each such that the maximum cut-size among the $r$ components is minimized. Bansal et. al~\cite{minmax} gave an $\OO(\sqrt{\log n \log r})$-approximation, and~\cite{hc16} obtained an $\OO(\log \frac{n}{r})$-approximation using their approximation for \SSE. We show that we can obtain an $\OO(\log opt)$-approximation for min-max graph partitioning, where $opt$ is the min-max objective. This follows from an $\OO(\log opt)$-approximation for Weighted-$\rho$-unbalanced cut.~\cite{minmax} shows that an $\alpha$-approximation for Weighted-$\rho$-unbalanced cut essentially implies an $\alpha$-approximation for min-max graph partitioning, so henceforth we will restrict our attention to Weighted-$\rho$-unbalanced cut.

\begin{definition}[Weighted-$\rho$-unbalanced cut \cite{minmax}] Given a graph $G$, a weight function $y: V(G) \rightarrow \mathbb{Q^{+}}$, parameters $\tau, \rho \in (0,1)$, find a set $S$ with $|S| \leq \rho n$ such that $y(S) \geq \tau y(V(G))$ such that $\delta_G(S)$ is minimum possible, or report that no such set exists.

\end{definition}

Let us denote the minimum possible $\delta_G(S)$ for some set $S$ satisfying the above by $opt$. We now define an $(\alpha, \beta, \gamma)$-approximation for this problem as in~\cite{minmax}.

\begin{definition}[Approximate Weighted-$\rho$-unbalanced cut] Given $(G,y,\tau,\rho)$, suppose that there exists a set $S$ with $|S| \leq \rho n$ such that $y(S) \geq \tau y(V(G))$ and $\delta_G(S) = opt$. Find a set $S'$ with $|S'| \leq \beta\rho n$ such that $y(S') \geq \frac{\tau y(V(G))}{\gamma}$ such that $|\delta_G(S')| \leq \alpha \;opt$.
\end{definition}

\begin{theorem}\label{thm:unbalanced}
There is a $(\alpha = \OO(\log opt), \beta = \OO(1), \gamma = \OO(1))$-approximation for weighted-$\rho$- unbalanced cut in polynomial time.      
\end{theorem}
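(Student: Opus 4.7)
The plan is to adapt the ``sample set plus terminal $\OO(\log t)$-algorithm'' template used for \SSE{} in~\Cref{thm:logksse} to the weighted setting. Suppose the optimal solution $S$ satisfies $|S| = s \leq \rho n$, $y(S) \geq \tau y(V(G))$, and $|\delta_G(S)| = opt$. We want to find $S'$ with $|S'| = \OO(s)$, $y(S') = \Omega(\tau y(V(G)))$, and $|\delta_G(S')| = \OO(opt \log opt)$.

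First, I would extend the notion of sample set from~\Cref{def:sample_set} to preserve vertex count and total weight simultaneously. A \emph{weighted $(\epsilon, \phi')$ sample set} is a set $T \subseteq V(G)$ so that every $W$ satisfying the hypotheses of~\Cref{def:sample_set} (either $\phi'$-sparse or sufficiently terminal sparse, and arising as one or all-but-one component after an edge cut) obeys both $|W \cap T| \in (1 \pm \epsilon)\cdot |W|\cdot |T|/n$ and $y(W \cap T) \in (1 \pm \epsilon)\cdot y(W)\cdot y(T)/y(V(G))$. Such a set of size $\OO(n\phi'/\epsilon^2)$ can be constructed by refining the Steiner-decomposition argument of~\Cref{lemma:sample_set}: compute a Steiner-$t$-decomposition and then sample from each bag $V_i$ so that $|V_i \cap T|/|T|$ approximates $|V_i|/n$ and $y(V_i \cap T)/y(T)$ approximates $y(V_i)/y(V(G))$. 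The ``good bag'' vs.\ ``bad bag'' accounting from the proof of~\Cref{lemma:sample_set} carries over verbatim with weights inserted on both sides of each inequality.

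Next, I would need a terminal version of weighted $\rho$-unbalanced cut: given $(G, y, \tau, \rho, T)$, find $S$ with $|S \cap T| \leq \rho|T|$, $y(S \cap T) \geq \tau y(T)$, and $|\delta_G(S)|$ small. The LP rounding of~\cite{hc16} and~\cite{minmax} can be extended exactly as in~\Cref{thm:sse} and~\Cref{cor:sse}: use variables $x_v \in [0,1]$, add terminal-indexed constraints $\sum_{v \in T} x_v \leq \rho|T|$ and $\sum_{v \in T} y(v) x_v \geq \tau y(T)$, keep the metric/spreading constraints on cut edges, and round by the level-set technique of~\cite{minmax}. This yields an $\OO(\log t)$-bicriteria approximation where $t = |S \cap T|$ in the optimum.

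Finally, I would combine the ingredients. Guess $s$ and $opt$ up to constant factors, set $\phi' = C(opt \log opt)/s$ for a large enough constant $C$ and $\epsilon = 1/100$, and construct a weighted $(\epsilon, \phi')$ sample set $T$. If $\OO(n\phi'/\epsilon^2) \geq n$, then $s = \OO(opt \log opt)$ and we may set $T = V(G)$ and run the terminal algorithm directly to get an $\OO(\log s) = \OO(\log opt)$-approximation. Otherwise, the sample condition gives $|S \cap T| = \Theta(s|T|/n) = \Theta(opt \log opt)$ and $y(S \cap T) = \Theta(\tau y(T))$, so the terminal algorithm produces $S'$ with $|S' \cap T| = \OO(opt \log opt)$, $y(S' \cap T) = \Omega(\tau y(T))$, and terminal sparsity $\OO(\log(opt \log opt)/\log opt) = \OO(1)$. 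For $C$ sufficiently large the terminal sparsity of $S'$ drops below the $\tfrac{n}{10|T|}\phi'$ threshold of~\Cref{def:sample_set}, so $S'$ itself satisfies the sample condition, giving $|S'| = \OO(s) = \OO(\rho n)$, $y(S') = \Omega(\tau y(V(G)))$ and $|\delta_G(S')| = \OO(\phi'|S'|) = \OO(opt \log opt)$.

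The main obstacle is the weighted sample set construction: one must ensure the Steiner decomposition does not concentrate weight in a few bags, which can be handled by splitting any bag whose weight exceeds a fixed $y(V(G))/|T|$ fraction, or by using a combined ``size + weight'' measure to drive the decomposition. Extending the LP rounding of~\cite{hc16,minmax} to the terminal setting closely mirrors the proof of~\Cref{thm:sse} and should introduce no essentially new difficulties.
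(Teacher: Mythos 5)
Your overall template --- guess $s$ and $opt$, build a sparsity-preserving sample set, run a terminal version of the LP rounding of~\cite{hc16,minmax}, and transfer the guarantees back via the terminal-sparsity trigger of the sample condition --- is exactly the paper's strategy. But there is a genuine gap in your weighted sample set. As you define it, $T$ is an \emph{unweighted} subset and you ask that $y(W \cap T) \in (1\pm\epsilon)\, y(W)\, y(T)/y(V(G))$ for every sparse $W$. Such a set of size $\OO(n\phi'/\epsilon^2)$ does not exist in general: if a single vertex $v_0$ carries half the total weight, then $v_0$ must lie in $T$ (else any sparse $W \ni v_0$ loses half its weight), and for a sparse $W_1 \ni v_0$ the condition forces $y(V(G))/2 \leq y(W_1 \cap T) \leq (1+\epsilon)\,y(W_1)\,y(T)/y(V(G)) \approx (1+\epsilon)\,y(T)/2$, i.e.\ $y(T) \geq y(V(G))/(1+\epsilon)$. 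So $T$ must capture essentially all of the weight, hence (when the remaining weight is spread uniformly) almost all vertices. Splitting heavy \emph{bags} does not help because you cannot split a heavy \emph{vertex}. The paper's fix (\Cref{def:sample_set_mu}, \Cref{lemma:sample_set_mu}) is to make the sample set itself weighted: each terminal $t$ gets a weight $\alpha_t \geq 1$, heavy vertices are placed in $T$ directly with weight $\lfloor \mu(v)/(t\epsilon)\rfloor$, and the condition becomes $\frac{K}{\alpha(T)}\alpha(W\cap T) \approx \mu(W)$ with the size bound stated in terms of total terminal weight $\alpha(T)$, not cardinality. This is the missing idea.

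A secondary structural difference: you ask the sample set to preserve cardinality \emph{and} weight simultaneously, and you ask the terminal subroutine to control only $|S'\cap T|$ and $y(S'\cap T)$. The paper avoids the double preservation entirely: its LP (\Cref{thm:sse}) keeps the global constraint $\sum_v y_v \leq s$ on the original vertex set and directly outputs $|Y| \leq 10s$, so the weighted sample set only has to preserve the measure $y$ (converting $y$-sparsity $opt/\ell$ into $\alpha$-terminal-sparsity), while the LP handles the cardinality and the terminal-weight lower bound $\alpha(Y) \geq \alpha(S)/10$. If you adopt the paper's division of labor, your third paragraph goes through essentially verbatim once the sample set is repaired; if you insist on simultaneous preservation of $|W\cap T|$ and the weight, you would also need to verify that the two $\epsilon$-net-type requirements can be met by a single set of the claimed size, which your sketch does not address.
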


Before we prove this theorem, we will need a generalization of sample sets as defined in~\Cref{def:sample_set}. Given a graph $G$ and a measure $\mu$ on the vertices, we define the \emph{$\mu$-sparsity} of a set $W$ as $\phi_\mu(S) = \frac{|\delta_G(S)|}{\mu(S)}$. Given a weight function $w: V(G) \rightarrow \mathbb{Q^{+}}$, we write $w(X) = \sum_{v \in X} w(v)$ for any subset $X \subseteq V(G)$.

\begin{definition}\label{def:sample_set_mu}
Given a graph $G = (V(G), E(G))$ with a measure $\mu$ on the vertices, parameters $\epsilon, \phi'$, an $(\epsilon, \phi')$ (edge) sample set for $G$ is a set of terminals $T \subseteq V(G)$ with a weight function $\alpha$ which assigns a weight $\alpha_t \geq 1$ to each terminal $t$. Suppose that $\mu(V(G)) = K$. Consider a  set of vertices $W \subseteq V(G)$ which  satisfies either (a) $\phi_\mu(W) \leq \phi'$ or (b) $\phi_\alpha(W) \leq \frac{K}{10\alpha(T)}\phi'$. Then we must have $|\frac{K}{\alpha(T)}\alpha(W \cap T) - \mu(W)| \leq \epsilon \mu(W)$.
\end{definition}

\begin{lemma}\label{lemma:sample_set_mu}
There is an $(\epsilon,\phi')$ edge sample set for every graph $G$ of total weight \\ $\alpha(V(G)) = \alpha(T) = \Theta\left(\mu(V(G)\frac{\phi'}{\epsilon^2}\right)$. Further, such a set can be computed in polynomial time.
\end{lemma}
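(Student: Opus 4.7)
The strategy is a weighted adaptation of the Steiner-$t$-decomposition construction used in Lemma~\ref{lemma:sample_set}. First I would set the scale $t = \Theta(\epsilon/\phi')$ and compute a \emph{weighted} Steiner-$t$-decomposition of $G$: a partition of $V(G)$ into connected bags $V_1,\dots,V_\ell$ (together with a matching edge partition $E_1,\dots,E_\ell,E'$) such that each $G[E_i]$ is connected and spans $V_i$ (up to one extra vertex), $\mu(V_i)\leq 2t$, and $\mu(V_i)\geq t$ for all $i>1$. This can be produced in polynomial time by running the standard DFS-based chop-off procedure underlying Lemma~\ref{lemma:feige}, but summing $\mu$-weight along subtrees rather than vertex count.

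Once the bags are computed, I would pick a single arbitrary terminal $v_i\in V_i$ from each bag $V_i$ with $i>1$ and assign it weight $\alpha_{v_i} := \mu(V_i)/(t\epsilon)$, which is at least $1$ since $\mu(V_i)\geq t\geq t\epsilon$. The terminals from $V_1$ can be given weight $0$ (or we fold $V_1$ into an adjacent bag). Summing over bags gives $\alpha(T)=\mu(V(G))/(t\epsilon)\cdot(1\pm O(\epsilon))= \Theta(K\phi'/\epsilon^2)$, as required, and the scaling factor in Definition~\ref{def:sample_set_mu} satisfies $K/\alpha(T)\in[1-O(\epsilon),\,1+O(\epsilon)]\cdot t\epsilon$.

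The main verification is the sample property for a $\mu$-sparse set $W$ with $\phi_\mu(W)\leq \phi'$. Call a bag $V_i$ \emph{bad} if one of the edges in $E_i$ (or an edge of $E'$ adjacent to $V_i$) is cut by $W$, and always mark $V_1$ as bad; otherwise the bag is \emph{good}. Since $|\delta_G(W)|\leq \phi'\mu(W)$, the number of bad bags is at most $\phi'\mu(W)+1$, and their total $\mu$-weight is at most $2t(\phi'\mu(W)+1)=O(\epsilon\mu(W))$, using $\mu(W)\geq 1/\phi'$. By connectivity of $G[E_i]$, each good bag is either contained in $W$ or disjoint from $W$, so the $\mu$-weight of good bags contained in $W$ is $(1\pm O(\epsilon))\mu(W)$. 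The $\alpha$-weight $\alpha(W\cap T)$ is exactly $\sum_{V_i\subseteq W,\,i \text{ good}}\mu(V_i)/(t\epsilon)$, so scaling by $K/\alpha(T)\approx t\epsilon$ recovers $\mu(W)$ up to an additive $O(\epsilon\mu(W))$ error, giving the sample condition.

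For the second case of the definition, where only $\phi_\alpha(W)\leq \tfrac{K}{10\alpha(T)}\phi'$ is assumed, I would mimic the corresponding part of the proof of Lemma~\ref{lemma:sample_set}: bound the $\alpha$-weight of $W\cap T$ inside bad bags (each terminal contributes at most $2t/(t\epsilon)=2/\epsilon$), use this to argue that $|\delta_G(W)|$ is a small fraction of $\alpha(W\cap T)$, and then deduce $\phi_\mu(W)\leq \phi'$, reducing to case (a). The main technical obstacle is the weighted Steiner decomposition when some vertex $v$ carries weight $\mu(v)>2t$: such a vertex cannot be packed into a bag of weight $\leq 2t$. I would handle this by treating each such $v$ as its own singleton bag with terminal weight $\mu(v)/(t\epsilon)$; since a singleton bag has no internal edges, it is never ``bad'' from internal cuts, only from incident cut edges, and the counting argument above goes through with essentially the same constants. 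This same modification makes the resulting algorithm run in polynomial (in fact near-linear) time, as in Lemma~\ref{lemma:sample_set}.
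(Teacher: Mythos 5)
Your proposal is correct and follows essentially the same route as the paper: a $\mu$-weighted Steiner-$t$-decomposition with $t=\Theta(\epsilon/\phi')$, one representative terminal per bag of weight $\approx\mu(V_i)/(t\epsilon)$, the bad-bag counting argument for $\mu$-sparse sets, and a reduction of the terminal-sparse case to the $\mu$-sparse case. Your treatment of heavy vertices as singleton bags is equivalent to the paper's device of promoting each vertex with $\mu(v)>t$ directly to a terminal of weight $\lfloor\mu(v)/(t\epsilon)\rfloor$ and running the decomposition on the residual measure $\mu'$ that zeroes them out.
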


\begin{proof}
Our proof is similar to the construction of the determinstic sample set in~\cite{fm06} for edge separators. We will use the notion of Steiner-$t$-decomoposition introduced in~\cite{fm06}.
\begin{lemma}[Follows from Lemma 5.2,  ~\cite{fm06} and Claim 3.3~\cite{flspw18}]
Given any graph $G$, a measure $\mu'$ on the vertices, and a value $1 \leq t \leq n$. Suppose every vertex $v \in V(G)$ satisfies $\mu'(v) \leq t$. Then one can compute in polynomial time a partition $V_1, V_2 \ldots V_{\ell}$ of $V(G)$ and a partition of the edge set $E_1, E_2 \ldots E_{\ell}, E'$ such that
\begin{enumerate}
    \item $G[E_i]$ is connected for each $i \geq 1$
    \item $V(G[E_i]) = V_i \cup \{u\}$ for some vertex $u$.
    \item $\mu'(V_i) \leq 2t$ for all $i \in [\ell]$.
    \item $\mu'(V_i) \geq t$ for all $i > 1$.
\end{enumerate}
\end{lemma}

We will refer to each vertex set $V_i$ as a \emph{bag} in the decomposition.

We now compute a sample set as follows. Fix $t = \frac{\epsilon}{100\phi'}$. First, for any vertex $v$ with $\mu(v) > t$, which we shall henceforth call a \emph{high vertex}, pick $v$ into the sample set $T$, and assign it a weight $\alpha_v = \lfloor \frac{\mu(v)}{t\epsilon} \rfloor$. Let $V_h$ denote the set of high vertices. Define a new measure $\mu'$ as follows.
  \[
    \mu'(v) = \left\{\begin{array}{lr}
        \mu(v), & \text{if } v \notin V_h\\
        0 & \text{otherwise.} \\
        \end{array}\right\}
  \]
Compute a Steiner-$t$-decomposition for $\mu'$ with $t = \frac{\epsilon}{100\phi'}$. Recall that $\mu(V(G)) = K$, and let $\mu'(V(G)) = K'$. We now add to the sample set $T$ as follows: arbitrarily pick $1$ vertex from the bag $V_i$ and assign it a weight of $\lfloor \frac{\mu'(V_i)}{t\epsilon} \rfloor$ for each $i > 1$.

\begin{lemma}
The set $T$ is a $(4\epsilon, \phi')$ sample set for all $\epsilon \in (0,\frac{1}{100})$.
\end{lemma}

\begin{proof}
We first show that every bag is represented proportional to its measure in the terminal set. Observe that from every bag $V_i$, we pick one vertex with weight $\frac{\mu'(V_i)}{t\epsilon} \geq \lfloor \frac{\mu'(V_i)}{t\epsilon} \rfloor \geq \frac{\mu'(V_i)}{t\epsilon} - 1$. Since $t \leq \mu'(V_i)$ for all $i > 1$, this means that $\alpha(V_i) \in [\frac{\mu'(V_i)}{t\epsilon}-\epsilon\frac{\mu'(V_i)}{t\epsilon}, \frac{\mu'(V_i)}{t\epsilon}]$ for all $i > 1$. Similarly, we also obtain $\alpha(v_h) \in  [\frac{\mu(v_h)}{t\epsilon}-\epsilon\frac{\mu(v_h)}{t\epsilon}, \frac{\mu(v_h)}{t\epsilon}]$ for every vertex $v_h \in V_h$.

Adding and noting that $\mu'(V_1) \leq 2t$  we obtain $\alpha(T \setminus V_h) \in [\frac{(K' - 2t)}{t\epsilon}(1 - \epsilon), \frac{K'}{t\epsilon}]$. Similarly, we get $\alpha(V_h) \in [\frac{(K-K')}{t\epsilon}(1 - \epsilon), \frac{K - K'}{t\epsilon}]$. Together, we obtain $\alpha(V(G)) = \alpha(T) \in [\frac{K - 2t}{t\epsilon}(1 - \epsilon), \frac{K}{t\epsilon}]$. We know that $t = \frac{\epsilon}{100 \phi'} \leq \frac{\epsilon}{4} K$, which in turn gives ${K-2t} \geq (1 - \frac{\epsilon}{2})K$, and hence we have $\alpha(T) \in [(1 - \frac{\epsilon}{2})(1 - \epsilon)\frac{K}{t\epsilon}, \frac{K}{t\epsilon}]$ or equivalently, $\alpha(T) \in [(1 - 2\epsilon)\frac{K}{t\epsilon}, \frac{K}{t\epsilon}]$. This means that  $\frac{K}{\alpha(T)}\alpha(V_i) \in [\mu'(V_i){(1 - \epsilon)}, \frac{\mu'(V_i)}{1-2\epsilon}]$ and hence $\frac{K}{\alpha(T)}\alpha(V_i) \in [\mu'(V_i){(1 - \epsilon)}, {\mu'(V_i)}{(1+3\epsilon)}]$ for small enough $\epsilon$, for all $i > 1$. Similarly we also obtain $\frac{K}{\alpha(T)} \alpha(v_h) \in [\mu(V_h)(1 - \epsilon), \mu(V_h)(1 + 3\epsilon)]$ for each $v_h \in V_h$.

First, consider the case where we are given a set $W$ with $\phi_\mu(W) \leq \phi'$.  Observe that the number of edges cut, $\delta_G(W)$ is at most $\phi'\mu(W)$. Mark a bag $V_i$ of the decomposition as ``bad'' if there is a cut edge which has both endpoints in $E_i$. Additionally, we always mark the bag $V_0$ as bad. Clearly, the number of bad bags is at most $\phi'\mu(W) + 1$. The total $\mu$-measure in these bags excluding the vertices in $V_h$ is at most $2t\phi'\mu(W) + t\leq \frac{\epsilon\mu(W)}{4}$, since $t = \frac{\epsilon}{100\phi'} \leq \frac{\epsilon \delta_G(W)}{100\phi'} \leq \frac{\epsilon \mu(W)}{100}$. Thus at least $(1 - \frac{\epsilon}{4})\mu(W)$ measure of $W$ is either in good bags or the vertices $V_h$. Observe that by the definition of a good bag, if $W$ intersects a good bag $V_i$, it must include the whole bag $V_i$, since there is a way to reach every vertex of the bag using the edges $E_i$, and none of the edges in $E_i$ are cut edges.

Let $G'$ be the set of vertices of the good bags and the vertices $V_h$ contained in $W$. From the preceeding argument, we know that $\mu(G') \geq (1 - \frac{\epsilon}{4})\mu(W)$.
Recall that every good bag $V_i$ satisfies $\frac{K}{\alpha(T)}\alpha(V_i) \in [\mu'(V_i){(1 - \epsilon)}, {\mu'(V_i)}{(1+3\epsilon)}]$, and every vertex $v_h \in V(h)$ 
satisfies $\frac{K}{\alpha(T)} \alpha(v_h) \in [\mu(V_h)(1 - \epsilon), \mu(V_h)(1 + 3\epsilon)]$. It follows that $W$ must satisfy $\frac{K}{\alpha(T)} \alpha(W) \geq (1 - \epsilon)(1 - \frac{\epsilon}{4})\mu(W) \geq (1 - 2\epsilon)\mu(W)$. On the other hand,  we have $\frac{K}{\alpha(T)}\alpha(G') \leq \mu'(V_i)(1 + 3\epsilon)$ and  $\frac{K}{\alpha(T)} \alpha(v_h) \leq \mu(V_h)(1 + 3\epsilon)$. Let $B$ be the union of vertices in all the bad bags. Observe that by construction we must have $\alpha(B) \leq \frac{\mu(B)}{t\epsilon}$. Noting that $\alpha(T) \geq (1- 2\epsilon)\frac{K}{t\epsilon}$, this gives $\frac{K}{\alpha(T)}\alpha(B) \leq \frac{\mu(B)}{1- 2\epsilon} \leq \frac{\mu(W)\epsilon}{4(1 - 2\epsilon)} \leq \frac{\mu(W)\epsilon}{2}$ for small enough $\epsilon$. Together, we obtain $\frac{K}{\alpha(T)}\alpha(W) \leq \frac{K}{\alpha(T)}\alpha(G')  + \frac{K}{\alpha(T)}\alpha(B) + \frac{K}{\alpha(T)} \alpha(V_h) \leq \mu(W)(1 + 3\epsilon + \frac{\epsilon}{2}) \leq \mu(W)(1 + 4\epsilon)$.

Now consider the case when we are given a set $W$ with $\phi_\alpha(W) \leq \frac{\epsilon^2}{200}$ (Note that $\frac{\epsilon^2}{200} \geq \frac{K}{10\alpha(T)}\phi'$). Again, we have $\delta_G(W) \leq \phi_T(W)\alpha(W) \leq \frac{\epsilon^2}{200}\alpha(W)$, and this upper bounds the number of bad bags. Define $G'$ and $B$ as before. Recall that each bag $V_i$ satisfies $\alpha(V_i) \leq \frac{\mu'(V_i)}{t\epsilon} \leq \frac{2}{\epsilon}$. This means that the total weight of terminals in bad bags $\alpha(B) \leq \frac{2}{\epsilon} \cdot \frac{\epsilon^2}{200}\alpha(W) \leq \epsilon \alpha(W)$. Thus we must have $\alpha(W \cap B) \leq \epsilon \alpha(W)$. This implies that $\alpha(G')\geq (1 - \epsilon)\alpha(W)$. Since $G'$ is a collection of bags (and high vertices), as in the previous analysis, we must have $\frac{K}{\alpha(T)}\alpha(G') \in [\mu(G')(1 - \epsilon), \mu(G')(1 + 3\epsilon)]$. This gives 
$$\mu(W) \geq \mu(G') \geq \frac{K}{\alpha(T)}\frac{\alpha(G')}{1 + 3\epsilon} \geq \frac{K}{\alpha(T)}(1 - \epsilon) \frac{\alpha(W)}{1 + 3\epsilon}$$

It follows that
$$\mu(W) \geq \frac{K}{\alpha(T)}\alpha(W)(1 - 4\epsilon) \geq \frac{K}{2\alpha(T)} \alpha(W) \geq \frac{K}{\alpha(T)} \frac{100}{\epsilon^2}\delta_G(W).$$

Finally, note that $\frac{K}{\alpha(T)} \geq t\epsilon$, so that we get $\mu(W) \geq \frac{1}{\phi'}\delta_G(W)$, and hence $W$ is $\phi'$-sparse. Since we already proved the condition for any $\phi'$-sparse set, this concludes the proof of the lemma.\end{proof}

As shown already, we have $\alpha(T) \leq \frac{K}{t\epsilon} = \OO(\frac{K\phi'}{\epsilon^2}) = \OO(\frac{\mu(V(G))\phi'}{\epsilon^2}).$\end{proof}

\begin{proof}[Proof of \Cref{thm:unbalanced}]
We will again combine our notion of sample sets with the LP-rounding framework. First, suppose that we are given a graph $G$ together with a set of terminals $T$ and integral weights $\alpha$ on the terminals, with $\alpha(v) \geq 1$ whenever $\alpha(v) \neq 0$ for any vertex $v$. Now further suppose that there exists a set $S \subseteq V(G)$ with $|S| = s$, $\alpha(S) = \ell$ and $|\delta_G(S)| = opt$. Then we design an algorithm that finds a set $S'$ with $|S'| \leq 10 s$, $\alpha(S') \geq \frac{\ell}{10}$ and $|\delta_G(S')| \leq \OO(opt \log \ell)$. The proof combines the ideas behind Theorem 13 of~\cite{hc16} and Theorem 2.1 of~\cite{minmax}. We postpone the proof to the appendix (see \Cref{thm:sse}).

Now we proceed as follows. Let $(S, V(G) \setminus S)$ denote the optimal partition where $S$ has size $s \leq \rho n$. First, we guess the size $s$ by trying all choices from $1$ to $\rho n$ (In fact is sufficient to guess $s$ up to factor $2$). Similarly, we also guess the value $y(S)$, suppose it is equal to $\ell$. It follows that $S$  is $\phi = \frac{opt}{\ell}$-sparse with respect to the measure $y$. Choose $\epsilon = \frac{1}{100}$ and construct an $(\epsilon, \phi' = C\phi \log opt)$ sample set $T$ with total weight $\alpha(T) = \Theta(y(V(G))\phi \log opt)$ where $C$ is a large enough constant that will be chosen later.
 The sample set guarantee implies that $\alpha(S) = \Theta({Copt \log opt})$. Using the result from~\Cref{thm:sse}, it follows that we obtain a set $S'$ with $|S'| \leq 10 s'$, $\alpha(S') \geq \frac{\alpha(S)}{10}$ with $|\delta_G(S')| \leq \OO(\log opt) opt$. This set has terminal sparsity $\phi_\alpha(S) = \OO(\frac{1}{C})$. Now we choose $C$ large enough so that $\phi_T(S') \leq \frac{\epsilon^2}{200} \leq \frac{K}{10\alpha(T)}\phi'$.

Then $S'$ must satisfy the sample condition. Using the guarantee of sample sets, since $\alpha(S') \geq \frac{C}{10} opt \log opt$ it must be the case that $y(S') \geq \Omega(\ell)$. This completes the proof.\end{proof}

\section{Discussion and Open Problems}\label{sec:open}

\paragraph{Fast $O(\polylog k)$-approximate \SSE.} We show $O(\polylog k)$-approximation algorithms for \SSE{} that run in polynomial time based on LPs in \Cref{sec:sselp}. Via a new cut-matching game, we speed up the running time to almost linear time to solve the easier problem, \SC{}, where we do not require the output set to be small (see \Cref{sec:cutmatching}). A natural question is whether there is an almost-linear time $O(\polylog k)$-approximation algorithm for \SSE{}.

\paragraph{$O(\polylog k)$-approximate \VSC.}
Can we remove the $\poly(\log\log n\phi)$ term in \Cref{thm:logkvertex,thm:cutmatchingvertexlogk} for approximating \VSC{} and obtain $O(\polylog k)$ approximation algorithms? We have shown a fundamental barrier of our current approach based on sample sets in \Cref{subsec:sample_set_lower_bound}.

\bibliographystyle{plain}
\bibliography{references}

\appendix

\section{Appendix}\label{sec:appendix}

In this section, we extend an $\OO(\log s)$-approximation for \SSE{} to the terminal version and vertex terminal version.

\begin{theorem}\label{thm:sse}

Suppose we are given a graph $G$ and a set of terminals $T \subseteq V(G)$, and integral weights $x$ such that $x(V(G)) = K$,  $x$ is non-zero only on vertices of $T$, and $x(v) \geq 1$ for each $v \in T$. Suppose there exists a set with $|S| = s$, $x(S) = \ell$ and $|\delta_G(S)| = opt$. Then we can find a set $Y$ satisfying $|Y| \leq 10 s$, $3\ell \geq x(Y) \geq \frac{\ell}{10}$ with $|\delta_G(Y)| \leq \OO(\log \ell) opt$.

\end{theorem}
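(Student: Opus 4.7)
The plan is to follow the LP rounding framework of Hasan and Chwa~\cite{hc16} (with the spreading-metric component from Bansal et al.~\cite{minmax}), adapted to the weighted/terminal setting. First I would guess $opt$, $s$, and $\ell$ up to constant factors by trying all dyadically spaced values, so below we may assume each is known within a factor of $2$. Next, I would solve a convex relaxation whose variables are a fractional indicator $y_v \in [0,1]$ for each $v \in V(G)$ together with a metric $d$ on $V(G)$. The relaxation enforces $\sum_{uv \in E} d(u,v) \leq opt$ (cut capacity), $\sum_v y_v \leq s$ (size), $\sum_v x_v y_v \geq \ell$ (weight), and $y_v - y_u \leq d(u,v)$ (linking the fractional indicator to the metric), together with a weighted \emph{spreading} constraint requiring that for every vertex $u$ with $y_u > 0$ and every radius $r$, the $x$-mass inside the ball $B_d(u,r)$ be at least comparable to $r \cdot \ell$ after appropriate normalization. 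The integral optimum $y = \mathbf{1}_S$, paired with any $\ell_1$ cut metric certifying $|\delta_G(S)| = opt$, is feasible: because $x$ is integral and $x(v) \geq 1$ on $T$, the instance is essentially equivalent to running \SSE on a split-terminal graph where each $v \in T$ is replaced by $x_v$ unit-weight copies, and the spreading bound for the integral solution follows directly.

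The rounding is the standard two-stage ball-growing procedure: sample a random center $u$ with probability proportional to $x_u y_u$ and an independent threshold $\tau$ from an appropriate distribution on $[0,1]$, then output $Y = \{v : d(u,v) \leq \tau\}$. From the size and weight constraints one immediately obtains $\mathbb{E}[|Y|] = O(s)$ and $\mathbb{E}[x(Y)] = \Omega(\ell)$. The cut bound $\mathbb{E}[|\delta_G(Y)|] = O(\log \ell) \cdot opt$ is obtained by partitioning over $O(\log \ell)$ distance scales and using the weighted spreading constraint to charge each scale's contribution back to the LP objective $\sum_{uv \in E} d(u,v)$; here the $\log \ell$ factor (rather than $\log n$ or $\log s$) arises precisely because the spreading constraint is measured in terms of $x$-weighted mass, whose total relevant scale is $\ell$. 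Standard conditioning on the good events $\{|Y| \leq 10 s\}$ and $\{\ell/10 \leq x(Y) \leq 3 \ell\}$, together with a repetition argument, converts the expected guarantees into the deterministic output required by the theorem.

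The main technical obstacle will be formulating the weighted spreading constraint so that both (i) the integral optimum remains feasible and (ii) the ball-growing analysis goes through with the $x$-weighted measure replacing the uniform measure in the analysis of~\cite{hc16,minmax}. Because $x$ is integral, bounded below by $1$ on $T$, and zero off of $T$, these two requirements can be reconciled by literally viewing the problem as \SSE on the split-terminal graph referenced above, so the Hasan--Chwa calculations transfer with essentially only notational changes and with the parameter $s$ of that analysis replaced by the total optimal weight $\ell$. This substitution is exactly what yields the $O(\log \ell)$ factor in the final cut bound, matching the statement of the theorem.
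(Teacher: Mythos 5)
Your high-level plan is the same as the paper's: a spreading-metric LP in the spirit of Bansal et al.\ and Hasan--Chwa, with the spreading constraint measured in $x$-mass so that the ball-growing charge becomes a harmonic sum over $O(\ell)$ relevant terminal centers (this is where $x(v)\ge 1$ on $T$ is used), yielding $O(\log\ell)$. The paper indeed uses two spreading constraints, one counting vertices against $s$ and one counting $x$-weight against $\ell$, which you will also need since the theorem demands both $|Y|\le 10s$ and $x(Y)\ge \ell/10$.

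However, two steps of your rounding do not go through as described. First, growing a single ball around a single random center does not give $\mathbb{E}[x(Y)]=\Omega(\ell)$: the LP metric only constrains $\sum_{uv\in E}d(u,v)$, so the fractional mass of $y$ can be split into many clusters that are pairwise far apart in $d$, and a single ball of radius $O(y_u)$ around any center then captures only a small fraction of $\ell$ (the spreading constraint gives an \emph{upper} bound $x(\mathrm{Ball}_r(u))\le \ell/(1-r)$, not a lower bound). The paper instead performs a CKR-style random-permutation ball carving producing many clusters per round, and then \emph{aggregates} clusters over many rounds, taking $Y$ to be a growing union until either $|Y|\ge s$ or $x(Y)\ge\ell/4$. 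Second, "standard conditioning on the good events" does not convert three separate expectation bounds into a simultaneous guarantee: Markov on $\Pr[|Y|>10s]$ and $\Pr[|\delta_G(Y)|\text{ large}]$ leaves constants that need not be smaller than the reverse-Markov lower bound on $\Pr[x(Y)=\Omega(\ell)]$. The paper resolves this with a profit function $f(C)=x(C)-\frac{\ell}{10s}|C|-\frac{\ell}{\Theta(LP\log\ell)}|\delta_G(C)|$, shows $\mathbb{E}[f(C)]>0$, and only keeps clusters with $f(C)>0$; such clusters satisfy the \emph{per-unit-weight} ratio bounds $|\delta_G(C)|\le O(\log\ell)\frac{x(C)}{\ell}\,opt$ and $|C|\le \frac{10s}{\ell}x(C)$, which is exactly what makes the union over rounds preserve all three guarantees. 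Your split-terminal intuition for the $\log\ell$ factor is sound as a heuristic, but as a formal reduction it does not control $|Y|$ (the number of \emph{original} vertices) and is not needed once the weighted spreading constraint is in place.
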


\begin{proof}

We start with the LP in~\cite{minmax} with some modifications. Let us fix $|T| = t$.

\begin{center}
\noindent\fbox{

  \begin{minipage}{0.95\textwidth}
  $$ \min \sum_{e=\{u,v\} \in E(G)} d(u,v)$$
  $$\sum_{u \in V(G)} \min\{d(u,v) , y_v\} \geq (n - s)y_v\;\forall v \in V(G)\;\;\;\text{\ldots(1)}$$
  $$\sum_{u \in V(G)}
  x(u) \min\{d(u,v) , y_v\} \geq (K -\ell)y_v\;\forall v \in V(G) \;\;\;\text{\ldots(2)}$$
  $$\sum_{v \in V(G)} x(v) y_v \geq \ell $$
  $$\sum_{v \in V(G)} y_v \leq s$$
  $$d(u,v) \leq d(u,w) + d(w,v)\; \forall u,v,w \in V(G)$$
  $$d(u,v) = d(v,u)\; \forall u,v \in V(G)$$
  $$d(u,v) \geq y_u - y_v\;\forall u,v \in V(G)$$
  $$ d(u,v), y_v \geq 0\; \forall u,v \in V(G)$$

  \end{minipage}
 
      }
\end{center}

It is clear that the LP is a relaxation - to see this, given the integral solution $S$, set $y_v = 1$ for each $v \in S$, and $0$ otherwise. Set $d(u,v) = 1$ if $u \in S$ and $v\notin S$ or $u \notin S$ and $v \in S$, and $d(u,v) = 0$ otherwise. The LP value is at most $opt$, where $opt$ denotes the size of the optimal cut $|\delta_G(S)|$.

\textbf{The rounding:} Having obtained the optimal values $d(u,v)$ from the optimal LP solution, we now describe our rounding procedure. Our goal will again be to produce an LP-separator as introduced in~\cite{minmax}, but with the improved approximation ratio $\OO(\log \ell)$. Before describing our rounding scheme, we begin with a crucial observation that will be used multiple times.

\begin{observation}\label{obs:ball_minmax}
For any vertex $v \in V(G)$ and $r < 1$, let $Ball_r(v) = \{u \in V(G) \mid d(u,v) \leq ry_v\}$. Then for every $v \in V(G)$, we must have $|Ball_r(v)| \leq \frac{s}{1- r}$ and $x(Ball_r(v)) \leq \frac{\ell}{1-r}$.
\end{observation}
\begin{proof}
Let $|Ball_r(v)| = z$. The LP spreading constraint (1) for vertex $v$ implies that we must have $z \cdot ry_v + (n - z) \cdot y_v \geq (n- s)y_v$. This in turn means that $z \leq \frac{s}{1-r}$.

Similarly, let $x(Ball_r(v)) = z'$. The second spreading constraint (2) for vertex $v$ implies that we must have $z' \cdot ry_v + (K - z') \cdot y_v \geq (K-\ell)y_v$ which gives $z' \leq \frac{\ell}{1-r}$.\end{proof}

For a set $X \subseteq V(G)$, we define $f(X) = x(X) -  \frac{\ell}{10s}|X| - \frac{\ell}{200LP\log \ell}|\delta_G(X)|$ where $LP$ is the optimal LP value.

\begin{algorithm}

\caption{Rounding the LP}
\begin{algorithmic}
\State $Y \leftarrow \emptyset$
\While {$|Y| \leq s$ and $x(Y) \leq \frac{\ell}{4}$}

\For {$j = 1$ to $\OO(n \log n)}$

\State Pick a threshold $\delta \in [0,1]$ uniformly at random.
\State Let $X = \{v \in T \mid y_v \in [\delta, 2\delta]\}$ and $\pi: [|X|] \rightarrow X$ be a random permutation of the elements of $X$. Let $U\leftarrow \emptyset$
\State Pick $r \in [0.05,0.1]$ uniformly at random.
\For{$i = 1,2 \ldots |X|$}
\State Let $C = \{v \in V(G) \setminus U \mid d(\pi(i),v) \leq {ry_{\pi(i)}}\}$
\State $U \leftarrow U \cup C$
\State $\SS \leftarrow \SS \cup \{C\}$
\EndFor
\State  Assign to $C$ one cluster of $\SS$ with probability $\frac{1}{n}$, return empty set with probability $1 - \frac{|\SS|}{n}$
\If {$f(C) > 0$}
\State $Y = Y \cup C$
\State \textbf{break}
\EndIf
\EndFor
\EndWhile

\end{algorithmic}
\label{alg:rounding_minmax}
\end{algorithm}

The algorithm runs for multiple iterations. \Cref{alg:rounding_minmax} describes the rounding scheme. We start with $Y = \emptyset$. In every iteration, as long as $|Y| \leq s$ and $x(Y) \leq \frac{\ell}{4}$, we proceed with the next iteration to find a cluster $C$  that satisfies $f(C) > 0$. We then set $Y = Y \cup C$, and repeat.

Here we note that at an intermediate step, once the set $Y$ is removed, the LP solution for the vertices $V(G) \setminus Y$ remains  ``almost-feasible'' for the graph $G \setminus Y$. More precisely, every constraint is still satisfied, except potentially the constraint $\sum_{v \in V(G \setminus Y)} x(v) \geq \ell$. However since $x(Y) \leq \frac{\ell}{4}$, it must be the case that $\sum_{v \in V(G \setminus Y)} x(v) \geq \frac{3\ell}{4}$. Therefore we will use this weaker constraint in our analysis at an intermediate step.

Next, we describe the analysis of the rounding algorithm for each iteration, where we find a cluster $C$.

Every time we pick a vertex $\pi(i)$ and separate out a cluster $C$, we will call $\pi(i)$ the \emph{center} of the cluster $C$. Also, we will say that two vertices $u$ and $v$ are separated while considering $\pi(i) \in T$ if exactly one of $u,v$ is in the cluster with center $\pi(i)$, and neither $u$ nor $v$ is in any cluster whose center is $\pi(1), \pi(2) \ldots \pi(i-1)$.  

\begin{lemma}
In each iteration of the outer for loop, the output set $C$ satisfies the following properties.
\begin{enumerate}
    \item For every vertex $v \in T$, we have $v \in C$ with probability at least $\frac{y_v}{2n}$. For any vertex $v \in V(G)$, $v \in C$ with probability at most $\frac{2y_v}{3n}$.
    \item $|C| \leq 2s$ and $x(C) \leq 2\ell$.
    \item The expected number of edges cut, $E[|\delta_G(C)|] \leq \frac{\OO(\log \ell)LP}{n}$.
\end{enumerate}
\end{lemma}

\begin{proof}
Clearly, for any $v \in T$ $y_v \in [\delta, 2 \delta]$ whenever $\delta \in [\frac{y_v}{2}, y_v]$ which happens with probability at least $\frac{y_v}{2}$. This means that with probability $\frac{y_v}{2}$, $v$ is in some cluster $W \in \SS$. But each cluster is picked with exactly probability $\frac{1}{n}$, and hence it must be the case that $v$ is in $C$ with probability at least $\frac{y_v}{2n}$. Now we show the other part. Suppose $v \in V(G)$ is in some cluster $W$ and $w$ is the center of the cluster $W$. We have $d(u,w) \leq 0.1y_w$. Using the LP constraints, it is easy to see that $0.9y_w \leq y_v \leq 1.1y_w$. But $y_w \in [\delta, 2\delta]$, and hence $0.9 \delta \leq y_v \leq 2.2 \delta$, or equivalently we must have $\frac{y_v}{2.2} \leq \delta \leq \frac{y_v}{0.9}$. But this happens with probability at most $\frac{2y_v}{3}$. Again, since the cluster containing $v$ is picked with probability exactly $\frac{1}{n}$, the probability that $v$ is in the chosen cluster is at most $\frac{2y_v}{3n}$.

(2) directly follows from \Cref{obs:ball_minmax} with $r = 0.1$.

Finally, we prove (3). Consider an edge $e = (u,v)$. This edge can be cut only if $u$ or $v$ is in some cluster $C$ of $\SS$. Without loss of generality suppose that $u$ is in some such cluster $C$ (the other case is symmetric). Let $w$ be the center of cluster $C$. 

As noted above, we must have $\frac{y_u}{2.2} \leq \delta \leq \frac{y_u}{0.9}$. Thus, the only values of $\delta$ for which the edge $\{u,v\}$ can be an edge in $\delta_G(C)$ must satisfy $\frac{y_u}{2.2} \leq \delta \leq \frac{y_u}{0.9}$ or $\frac{y_v}{2.2} \leq \delta \leq \frac{y_v}{0.9}$.  

Now, fix a value of $\delta$ such that either $\frac{y_u}{2.2} \leq \delta \leq \frac{y_u}{0.9}$ or $\frac{y_v}{2.2} \leq \delta \leq \frac{y_v}{0.9}$.

Let $X' \subseteq X$ be the set of vertices $w$ of $X$ that satisfy $d(w,u) \leq 0.1y_w $ or $d(w,v) \leq 0.1y_w$. Informally, these are the only vertices that can separate $u$ or $v$ from each other, thereby cutting the edge $\{u,v\}$. We show first that $|X'| \leq 10\ell$. Suppose to the contrary that $|X'| > 10\ell$. It is clear that at least $5\ell + 1$ vertices $w \in X'$ that satisfy one of the two inequalities $d(w,u) \leq 0.1y_w$ or $d(w,v) \leq 0.1y_w$. Without loss of generality, we assume that $5\ell + 1$ vertices $w \in X$ satisfy $d(w,u) \leq 0.1y_w$. Let $X''$ be the set of such vertices $w$. Fix some $w' \in X''$. 
By the triangle inequality, for every $w \in X''$ we have $y_w + 0.1y_w \geq y_u \geq y_w - 0.1y_w = 0.9y_w$. Thus $d(w,u) \leq \frac{y_u}{9}$. Also $d(w',u) \leq \frac{y_u}{9}$. By the triangle inequality on $d$, we must have $d(w',w) \leq d(w,u) + d(w,w') \leq \frac{2y_u}{9} \leq \frac{2\cdot1.1y_{w'}}{9} \leq \frac{y_w}{3}$. But this means $|Ball_{\frac{1}{3}}(w') \cap T| \geq 5\ell$, which in turn means $x(Ball_{\frac{1}{3}}(w')) \geq 5\ell$ which contradicts \Cref{obs:ball_minmax}.

Order the vertices $w$ of $X'$ in increasing order of $\frac{\min\{d(u,w), d(v,w)\}}{y_w}$, and let $Q = (t_1, t_2 \ldots t_{|X'|})$ denote this sequence. Notice that if some $t_i$ is considered before $t_j$ for some $i < j$ in $\pi$, then the edge $(u,v)$ will already be cut/removed in one cluster when $t_i$ is considered and hence we cannot cut this edge while considering $t_j$.

Also, once we obtain the clusters, the edge $(u,v)$ is cut only if we pick a cluster containing either $u$ or $v$, which happens with probability at most $\frac{2}{n}$.
\begin{align*}
    Pr[u,v\; \text{separated} | \delta]\ &\leq \frac{2}{n} \sum_{i = 1}^{|X'|} Pr[u,v\; \text{separated while considering $t_i$}]\\
                                  &\leq \frac{2}{n} \sum_{i = 1}^{|X'|} Pr[t_i \text{ appears before $t_1, t_2$} \ldots t_{i-1}\\
                                  &\text{and} \min\{d(u,t_i), d(v,t_i)\} \leq Cy_{t_i} \leq \max\{d(u,t_i), d(v, t_i)\}]\\
                                  &\leq \frac{2}{n} \sum_{i = 1}^{|X'|} \frac{|d(u,t_i) - d(v, t_i)|}{0.05y_{t_i}i}\\
                                  &\leq \frac{2}{n}  \sum_{i = 1}^{|X'|} \frac{d(u,v)}{0.05\delta i} \\
                                  &\leq \frac{2}{n} 20\log \ell \cdot \frac{d(u,v)}{\delta}
\end{align*}
Here, we used the fact that $|d(u,t_i) - d(v, t_i)| \leq d(u,v)$. The last step follows since $|X'| \leq 10\ell$. Finally, as observed before, the edge $\{u,v\}$ can be separated only when $\frac{y_u}{2.2} \leq \delta \leq \frac{y_u}{0.9}$ or $\frac{y_v}{2.2} \leq \delta \leq \frac{y_v}{0.9}$. We now obtain
\begin{align*}
Pr[u,v\; \text{separated}] &\leq \int_{\frac{y_u}{2.2}}^{\frac{y_u}{0.9}}Pr[u,v\; \text{separated} | \delta] Pr[\delta] d \delta + \int_{\frac{y_v}{2.2}}^{\frac{y_v}{0.9}}Pr[u,v\; \text{separated} | \delta] Pr[\delta] d \delta \\
&\leq 20 \log \ell \cdot \frac{2}{n}\left ( \int_{\frac{y_u}{2.2}}^{\frac{y_u}{0.9}}\frac{d(u,v)}{\delta} d \delta + \int_{\frac{y_v}{2.2}}^{\frac{y_v}{0.9}} \frac{d(u,v)}{\delta}d\delta \right) \\
&\leq \frac{100}{n} \log \ell \cdot d(u,v).
\end{align*}\end{proof}

The rest of the analysis is similar to~\cite{minmax}, however we present it again for completeness' sake. For a set $X \subseteq V(G)$, recall that we defined $f(X) = x(X) -  \frac{\ell}{10s}|X| - \frac{\ell}{200LP\log \ell}|\delta_G(X)|$ where $LP$ is the optimal LP value. Then if $C$ is the set returned by the algorithm, we must have $E[f(C)] = E[x(X)] - \frac{\ell}{200LP\log \ell}E[|\delta_G(C)|] - \frac{\ell}{10s}E[|X|] \geq  \sum_{v \in T}\frac{y_v}{2n} x(v)  - \frac{\ell}{200LP \log \ell}\frac{50 \log \ell LP}{n} - \frac{\ell}{10sn} \frac{2s}{3} \geq \frac{\ell}{100n}$, where we use the fact that $\sum_{v \in T} x(v) y_v  \geq \frac{3\ell}{4}$, and the fact that $E[|X|] \leq \sum_{v}\frac{2y_v}{3n} \leq \frac{2s}{3n}$. Also note that $f(C) \leq  2\ell$. It follows that $E[f(C)] \leq 2\ell \cdot Pr[f(C) > 0]$, which means $Pr[f(C) > 0] \geq \frac{E[f(C)]}{2\ell} \geq \frac{1}{200n}$. Thus repeating the algorithm $\OO(n\log n)$ times, we are guaranteed to find a set with $f(C) > 0$ with high probability. But $f(C) > 0$ means that $C$ satisfies  $\delta_G(C) \leq \OO(\log \ell) \frac{x(C)}{\ell} LP \leq \OO(\log \ell) \frac{x(C)}{\ell} OPT$. Also $f(C) > 0$ implies $x(C) \geq \frac{\ell}{10s}|C|$.

Recall that the algorithm keeps finding such a set $C$ and sets $Y = Y \cup C$ till we obtain $|Y| \geq s$ or $x(Y) \geq \frac{\ell}{4}$. If $|Y| \geq s$, since $f(C) > 0$ in each iteration, it is clear that $x(Y) \geq \frac{\ell}{10s} s = \frac{\ell}{10}$. Note that we must have $|Y| \leq 4s$, since $|C| \leq 2s$ for every set $C$ that we find.

Finally, note that $x(Y) \leq \frac{\ell}{4} + 2\ell = 3\ell$ since $x(C) \leq 2\ell$ for each $C$ that we find. This gives $|\delta_G(Y)| \leq \OO(\log \ell) \frac{x(Y)}{\ell} LP \leq \OO(\log \ell) LP \leq \OO(\log \ell) opt$ as desired.\end{proof}

The next theorem adapts the above theorem to the vertex version to obtain~\Cref{thm:lpvertex}. Here we only need a simpler version for our results, so we do not have the weight function $x$. Yet another marked difference is that we no longer return a small set. For convenience, we recall~\Cref{thm:lpvertex}.

\thmlpvertex*
\begin{proof}

The proof will essentially be a modified algorithm based on the edge version. Now we have an additional set of variables $b_v$ which indicate if vertex $v$ is in the cut. Thus, for the canonical integral solution, we have $y_v = 1$ for every vertex $v \in L$ and $b_v = 1$ for every vertex $v \in C$, and $d(u,v) = 1$ if $u \in L$ and $v \notin L$, or $u \in R$ and $v \notin R$.

\begin{center}
\noindent\fbox{

  \begin{minipage}{0.95\textwidth}
  $$ \min \sum_{v \in V(G)} b_v$$
  $$\sum_{u \in T} \min\{d(u,v) , y_v\} \geq (t - s)y_v\;\forall v \in T$$

  $$\sum_{v \in T} y_v \geq s$$
  $$d(u,v) \leq d(u,w) + d(w,v)\; \forall u,v,w \in V(G)$$
  $$d(u,v) \leq d(u,w) + b_v\; \forall u,v,w \in V(G), (w,v) \in E(G)$$
  $$d(u,v) = d(v,u)\; \forall u,v \in V(G)$$
  $$d(u,v) \geq y_u - y_v\;\forall u,v \in V(G)$$
  $$ d(u,v), y_v \geq 0\; \forall u,v \in V(G)$$

  \end{minipage}
       }
\end{center}

It is clear that the LP is a relaxation and the LP value is at most $opt$, where $opt$ denotes the size of the optimal cut ($opt = |C|$).

\textbf{The rounding:} Having obtained the optimal values $d(u,v),b_v,y_v$ from the optimal LP solution, we now describe our rounding procedure. Our goal will again be to produce an LP-separator as introduced in~\cite{minmax}, but with the improved approximation ratio $\OO(\log s)$. Before describing our rounding scheme, we begin with a crucial observation that will be used multiple times.

\begin{observation}\label{obs:ball}
For any vertex $v \in T$ and $r < 1$, let $Ball_r(v) = \{u \in V(G) \mid d(u,v) \leq ry_v\}$. Then for every $v \in T$, we must have $|Ball_r(v) \cap T| \leq \frac{s}{1- r}$.
\end{observation}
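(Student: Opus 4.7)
The plan is to mirror the argument used for Observation~\ref{obs:ball_minmax}, but using only the single spreading constraint available in the vertex LP, namely
$$\sum_{u \in T} \min\{d(u,v), y_v\} \geq (t-s)y_v \quad \forall v \in T,$$
where $t = |T|$. The edge version had two spreading constraints (one counting $|\cdot|$ and one counting $x(\cdot)$) and hence gave two conclusions; here we only need the terminal-count bound, so a single constraint suffices.

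Fix $v \in T$ and let $z = |Ball_r(v) \cap T|$. I may assume $y_v > 0$: otherwise the statement will not be invoked by the rounding (the algorithm only picks cluster centers with $y_v \in [\delta, 2\delta]$, so $y_v > 0$), and in any case if $y_v = 0$ the spreading constraint degenerates. The crucial observation is that for every $u \in Ball_r(v) \cap T$ we have $d(u,v) \leq r y_v < y_v$, so $\min\{d(u,v), y_v\} \leq r y_v$. For every other $u \in T$ we trivially have $\min\{d(u,v), y_v\} \leq y_v$.

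Now I upper-bound the LHS of the spreading constraint by splitting over $T$:
$$\sum_{u \in T} \min\{d(u,v), y_v\} \leq z \cdot r y_v + (t - z)\, y_v = \bigl(t - z(1-r)\bigr) y_v.$$
Combining with the LP spreading constraint and dividing by $y_v > 0$ gives $t - z(1-r) \geq t - s$, i.e., $z(1-r) \leq s$, so $|Ball_r(v) \cap T| \leq s/(1-r)$, as claimed.

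There is no real obstacle here; the argument is a one-line manipulation of the spreading constraint, directly analogous to the edge case. The only minor care point is handling the degenerate $y_v = 0$ case, which does not affect the use of the lemma in the rounding analysis.
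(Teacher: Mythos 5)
Your proof is correct and follows the same argument as the paper: upper-bound each term of the spreading constraint by $ry_v$ inside the ball and $y_v$ outside, then compare with the right-hand side $(t-s)y_v$ and divide by $y_v$. The extra remark about the degenerate $y_v=0$ case is a reasonable care point that the paper glosses over, but otherwise the two proofs are identical.
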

\begin{proof}
Let $|Ball_r(v) \cap T| = z$. The LP spreading constraint for vertex $v$ implies that we must have $z \cdot ry_v + (t - z) \cdot y_v \geq (t- s)y_v$. This in turn means that $z \leq \frac{s}{1-r}$.\end{proof}

\begin{algorithm}

\caption{Rounding the LP}
\begin{algorithmic}
\State $C \rightarrow \emptyset$.
\For{$j = 1$ to $\OO(n\log n)$}
\State Pick a threshold $\delta \in [0,1]$ uniformly at random.

\State Let $X = \{v \in T \mid y_v \in [\delta, 2\delta]\}$ and $\pi: [|X|] \rightarrow X$ be a random permutation of the elements of $X$. Let $U\leftarrow \emptyset$
\State Pick $r \in [0.05,0.1]$ uniformly at random.
\For{$i = 1,2 \ldots |X|$}
\State Let $C = \{v \in V(G) \setminus U \mid d(\pi(i),v) \leq {ry_{\pi(i)}}\}$
\State $U \leftarrow U \cup C$
\State $\SS \leftarrow \SS \cup \{C\}$
\EndFor

\If{\text{$|U \cap T|< \frac{|T|}{2}$}}
\Return $U' \leftarrow U$
\Else
\State Divide the clusters of $\SS$ into two groups, $\SS_1$ and $\SS_2$, such that each group has $\geq \frac{|(U \cap T)|}{3}$ terminals.
\State \Return the group of vertices $U'$ with at most $\frac{|T|}{2}$ terminals.
\EndIf
\If{$f(U') > 0$}
\State \textbf{break}
\EndIf
\EndFor

\end{algorithmic}
\label{alg:rounding}
\end{algorithm}

\Cref{alg:rounding} describes the rounding scheme. For a set $X \subseteq V(G)$ we define $f(X) = |X \cap T| - \frac{s}{2000LP\log s}|N_G(X)|$ where $LP$ is the optimal LP value.

Everytime we pick a vertex $\pi(i)$ and separate out a cluster $C$, we will call $\pi(i)$ the \emph{center} of the cluster $C$. Also, we will say that a vertex $u$ is separated while considering $\pi(i) \in T$ if some neighbour of $u$ is in $C$ but $u \notin C$, and neither $u$ nor its neighbours are in any cluster whose center is $\pi(1), \pi(2) \ldots \pi(i-1)$.  

\begin{lemma}
In each iteration, the output set $U'$ satisfies the following properties.
\begin{enumerate}
    \item $E[U' \cap T] \geq \frac{s}{6}$.
    \item The expected number of vertices cut, $E[|N_G(U')|] \leq {\OO(\log s)LP}$.
\end{enumerate}
\end{lemma}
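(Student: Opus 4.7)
For each terminal $v \in T$, whenever $\delta \in [y_v/2, y_v]$ (an event of probability $y_v/2$), we have $y_v \in [\delta, 2\delta]$ and hence $v \in X$. Every element of $X$ is placed into some cluster of $\SS$, either as the center of its own cluster or absorbed by an earlier center's ball during the greedy carving, so $v \in U$. By linearity of expectation and the LP constraint $\sum_v y_v \geq s$, we obtain $E[|U \cap T|] \geq \frac{1}{2}\sum_v y_v \geq s/2$. In both branches of the rounding, $|U' \cap T| \geq |U \cap T|/3$ deterministically: trivially when $U' = U$, and by the construction of $\SS_1, \SS_2$ in the \emph{else} branch. Therefore $E[|U' \cap T|] \geq s/6$.

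\noindent\textbf{Plan for (2).} The goal is to show $\Pr[v \in N_G(U')] = O(\log s) \cdot b_v$ for every $v \in V(G)$; linearity then gives $E[|N_G(U')|] \leq O(\log s) \sum_v b_v = O(\log s) \cdot LP$. Observe first that if $v \in N_G(U')$ then $v$ lies in a ``boundary annulus'' of some cluster: there is $(w,v) \in E(G)$ with $w$ inside a cluster around some center $\pi(i)$ (so $d(w,\pi(i)) \leq r y_{\pi(i)}$) while $v$ is not in that cluster (so $d(v,\pi(i)) > r y_{\pi(i)}$). The LP constraint $d(v,\pi(i)) \leq d(w,\pi(i)) + b_v$ then forces $d(v, \pi(i)) \in (r y_{\pi(i)},\, r y_{\pi(i)} + b_v]$. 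Following the edge analysis in \Cref{thm:sse}, fix $\delta$ and let $X'(\delta) = \{w \in X : d(v, w) \leq 0.1 y_w + b_v\}$ be the centers that could plausibly cut $v$. Order its elements $t_1, t_2, \ldots$ by increasing $d(v, t_i)/y_{t_i}$; symmetry of the random permutation $\pi$ gives $\Pr[t_i\text{ precedes }t_1,\ldots,t_{i-1}\text{ in }\pi] = 1/i$, and only under this event can $t_i$'s cluster be the first to cut $v$. Conditioned on it, the probability over $r \in [0.05, 0.1]$ that $v$ lands in the annulus of $t_i$ is at most $20 b_v / y_{t_i}$. Summing and using $y_{t_i} \geq \delta$,
\[
\Pr[v \in N_G(U') \mid \delta] \;\leq\; \sum_{i=1}^{|X'(\delta)|} \frac{1}{i} \cdot \frac{20 b_v}{y_{t_i}} \;=\; O\!\left(\frac{b_v \log |X'(\delta)|}{\delta}\right).
\]
I bound $|X'(\delta)| \leq O(s)$ via \Cref{obs:ball} applied to a suitable $d$-ball around a reference element $t^\star \in X'(\delta)$: for any $t_i \in X'(\delta)$, the triangle inequality gives $d(t_i, t^\star) \leq d(v, t_i) + d(v, t^\star) \leq 0.2 y_{t^\star} + 2 b_v$, and with $y_{t_i} \in [\delta, 2\delta]$ this places $X'(\delta) \cap T$ inside a constant-radius ball around $t^\star$. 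Integrating over the valid range of $\delta$---which, as in the edge case, has constant ratio because the LP triangle inequality $|y_v - y_t| \leq d(v,t) \leq 0.1 y_t + b_v$ pins $\delta$ to $y_v$ up to a constant---yields the desired $\Pr[v \in N_G(U')] = O(\log s) \cdot b_v$.

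\noindent\textbf{Main obstacle.} The ball-covering step that forces $|X'(\delta)| = O(s)$ requires $b_v$ to be small compared to $\delta$; when $b_v \gtrsim \delta$, the annulus thickness is no longer dominated by the ball radius, and both the covering bound and the constant-ratio range of $\delta$ (which needed $b_v \lesssim y_v$) break down. I would address this regime by a simple thresholding: vertices with $b_v$ above a small absolute constant are rounded directly into the separator, contributing $O(LP)$ in total, and for the remaining vertices the small-$b_v$ analysis above goes through without change. Verifying that this thresholding integrates cleanly with the cluster-by-cluster expected-value argument (and with the outer loop of \Cref{alg:rounding} that repeats until $f(U') > 0$) will be the most delicate part of the proof.
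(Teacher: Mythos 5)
Your part (1) and the skeleton of part (2) --- the annulus characterization of cut vertices via $d(v,\pi(i)) \in (r y_{\pi(i)},\, r y_{\pi(i)} + b_v]$, the ordering of candidate centers by $d(v,t_i)/y_{t_i}$ combined with the $1/i$ probability from the random permutation, the $\OO(s)$ bound on $|X'|$ via the spreading constraint, and the integration over a constant-ratio range of $\delta$ --- all match the paper's proof. The one place you diverge is the fix for the regime you correctly flag as the main obstacle, and that fix does not work as stated. The analysis breaks whenever $b_v \gtrsim \delta$; since the relevant range of $\delta$ is pinned to $\Theta(y_v)$, this is the regime $b_v \gtrsim y_v$, \emph{not} $b_v \geq c$ for an absolute constant $c$. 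A vertex with, say, $y_v = 1/n$ and $b_v = 10\, y_v$ has tiny $b_v$, survives your absolute-constant thresholding, and yet for it both the covering bound on $|X'(\delta)|$ and the constant-ratio localization of $\delta$ fail. Deterministically moving all vertices with $b_v \geq c$ into the separator therefore does not rescue the remaining vertices.

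The paper's fix is purely probabilistic and avoids any preprocessing: for each vertex $u$, split on whether $\delta \leq 100\, b_u$. Since $\delta$ is uniform on $[0,1]$, the event $\{\delta \leq 100\, b_u\}$ has probability at most $100\, b_u$, and one simply charges this entire event, i.e., $\Pr[u \text{ cut and } \delta \leq 100 b_u] \leq 100\, b_u = \OO(b_u)$. Conditioned on $\delta > 100\, b_u$, your analysis goes through verbatim: $d(u,w) \leq 0.1 y_w + b_u \leq 0.2 y_w$ pins $\delta$ to $[y_u/2.4,\, y_u/0.8]$, the ball observation gives $|X'| = \OO(s)$, and the integral over $\delta$ contributes $\OO(\log s)\, b_u$. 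Summing the two cases yields $\Pr[u \in N_G(U')] = \OO(\log s)\, b_u$, and linearity finishes (2). If you patch your write-up by adding this conditioning on $\delta$, your thresholding step becomes redundant and the interaction with the outer loop that worried you disappears.
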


\begin{proof}
Clearly, for any $v \in T$ $y_v \in [\delta, 2 \delta]$ whenever $\delta \in [\frac{y_v}{2}, y_v]$ which happens with probability at least $\frac{y_v}{2}$. This means that with probability at least $\frac{y_v}{2}$, $v$ is in some cluster $W \in \SS$. It follows that $E[U' \cap T] \geq \frac{1}{3} \sum_v \frac{y_v}{2} \geq \frac{s}{6}$.

Next, we prove (2). Consider a vertex $u$. First, suppose that $b_u \geq \frac{\delta}{100}$. For a given vertex $u$, the probability that this happens is at most $100b_u$. Now suppose that $b_u < \frac{\delta}{100}$, and let us consider the probability that the vertex $u$ is cut while considering the ball around the terminal vertex $w$.

In order to cut $u$ while growing the ball from $w$, there must be a neighbour $u'$ of $w$ inside the ball of $w$ of radius $0.1y_w$. Hence we must have $d(u,w) \leq 0.1y_w + b_w \leq 0.1y_w + \frac{\delta}{100} \leq 0.1y_w + 0.1y_w \leq 0.2y_w$. This means, using the triangle inequalities in the LP constraints, we must have $0.8y_w \leq y_u \leq 1.2y_w$. But $y_w \in [\delta, 2\delta]$, and hence $0.8 \delta \leq y_u \leq 2.4 \delta$, or equivalently
 $\frac{y_u}{2.4} \leq \delta \leq \frac{y_u}{0.8}$. Thus, the only values of $\delta$ for which the vertex $u$ can be cut must satisfy $\frac{y_u}{2.4} \leq \delta \leq \frac{y_u}{0.8}$.

Now, fix a value of $\delta$ such that either $\frac{y_u}{2.4} \leq \delta \leq \frac{y_u}{0.8}$.

Let $X' \subseteq X$ be the set of vertices $w$ of $X$ that satisfy $d(w,u) \leq 0.2y_w $ . Informally, these are the only vertices that can separate $u$, thereby adding $u$ as a cut vertex. We show first that $|X'| \leq 5s$. Suppose to the contrary that $|X'| > 5s$. Then at least $5s + 1$ vertices $w \in X'$ that satisfy  $d(w,u) \leq 0.2y_w$. Fix some $w' \in X'$. 
By the triangle inequality, for every $w \in X'$ we have $y_w + 0.2y_w \geq y_u \geq y_w - 0.2y_w = 0.8y_w$. Thus $d(w,u) \leq \frac{y_u}{4}$. Also $d(w',u) \leq \frac{y_u}{4}$. By the triangle inequality on $d$, we must have $d(w',w) \leq d(w,u) + d(w,w') \leq \frac{2y_u}{4} = \frac{y_u}{2} \leq 0.6y_w$. But this means $|Ball_{0.6}(w')| \geq 5s$, which contradicts \Cref{obs:ball}.

Order the vertices $w$ of $X'$ in increasing order of $\frac{d(u,w)}
{y_w}$, and let $Q = (t_1, t_2 \ldots t_{|X'|})$ denote this sequence. Notice that if some $t_i$ is considered before $t_j$ for some $i < j$ in $\pi$, then the vertex $u$ will already be cut/removed in one cluster when $t_i$ is considered and hence we cannot cut this edge while considering $t_j$.

\begin{align*}
    Pr[u\;\text{cut} | \delta]\ &\leq \sum_{i = 1}^{|X'|} Pr[u \text{\;cut while considering $t_i$}]\\
                                  &\leq  \sum_{i = 1}^{|X'|} Pr[t_i \;\text{appears before $t_1, t_2$} \ldots t_{i-1}\\
                                  &\text{and } d(u,t_i) - b_u \leq Cy_{t_i} \leq d(u,t_i)]\\
                                  &\leq \sum_{i = 1}^{|X'|} \frac{b_u}{0.05y_{t_i}i}\\
                                  &\leq  100\log s \cdot \frac{b_u}{\delta}
\end{align*}
The last step follows since $|X'| \leq 10s$. Finally, as observed before, the vertex $u$ can only be cut when $\frac{y_u}{2.4} \leq \delta \leq \frac{y_u}{0.8}$. We now obtain
\begin{align*}
Pr[u\;\text{cut}] &\leq \int_{\frac{y_u}{2.4}}^{\frac{y_u}{0.8}}Pr[u\;\text{cut} | \delta ] Pr[\delta] d \delta  \\
&\leq 100 \log s \cdot  \int_{\frac{y_u}{2.4}}^{\frac{y_u}{0.8}}\frac{b_u}{\delta} d \delta \\
&\leq 200\log s \cdot b_u.
\end{align*}\end{proof}

Finally, this above expression only bounds the probability that $u$ is cut assuming that $b_u \leq \frac{\delta}{100}$. To account for the other case, as discussed above, we note that the probability that $b_u \geq \frac{\delta}{100}$ is at most $100b_u$. Thus the total probability that $u$ is cut is at most $300\log s \cdot b_u$. The rest of the analysis is similar to~\cite{minmax}, however, we present it again for completeness' sake. For a set $X \subseteq V(G)$, recall that we defined  $f(X) = |X \cap T| - \frac{s}{2000LP\log s}|N_G(X)|$ where $LP$ is the optimal LP value. Then if $U'$ is the set returned by the algorithm, we must have $E[f(U')] = E[U' \cap T] - \frac{s}{2000LP\log s}E[N_G(U')] \geq \frac{s}{6} - \frac{3s}{20} \geq \frac{s}{60}$. Also note that $f(U') \leq n$. It follows that $E[f(U')] \leq n \cdot Pr[f(U') > 0]$, which means $Pr[f(U') > 0] \geq \frac{E[f(U')]}{n} \geq \frac{s}{60n}$. Thus repeating the algorithm $\OO(n\log n)$ times, we are guaranteed to find a set with $f(U') > 0$ with high probability. But $f(U') > 0$ means that $U'$ satisfies  $|N_G(U')| \leq \OO(\log s) \frac{LP}{s}|U' \cap T| \leq \OO(\log s)\frac{k}{s}|U' \cap T|$. Note that since we assumed $s = \Omega(k \log s)$, we must have $|N_G(U')| \leq \frac{1}{2}|U' \cap T|$. Finally, we have $|(U' \cup N_G(U')) \cap T| \leq \frac{1.5|T|}{2}$ and thus $|(V(G) \setminus (U' \cup N_G(U')) \cap T| \geq \frac{|T|}{4} = \Omega(U' \cap T)$. It follows that $U'$ must be $\OO(\frac{|N_G(U')|}{|(U' \cup N_G(U')) \cap T|}) \leq \OO(\log s)\phi$-terminal sparse.\end{proof}

\end{document}